\newcommand{\Z}{\mathbb{Z}}
\newcommand{\N}{\mathbb{N}}
\newcommand{\R}{\mathbb{R}}
\newcommand{\T}{\mathbb{T}}
\newcommand{\E}{\mathbb{E}}
\newcommand{\wtg}{\widetilde{G_0}}
\newcommand{\admissible}{(*)}
\newcommand{\nm}{|\!|}
\newcommand{\mcO}{\mathcal{O}}
\newcommand{\mcG}{\mathcal{G}}
\newcommand{\supp}{{\rm supp}}
\newtheorem{thm}{Theorem}[section]
\newtheorem{lem}[thm]{Lemma}
\newtheorem{cor}{\bf Corollary}[section]
\newtheorem{rmk}{\bf Remark}[section]
\theoremstyle{definition}
\numberwithin{equation}{section}
\keywords{Random Schr\"odinger operators, decaying  Bernoulli  potential, extended states, Green's function estimates, renormalization, random decoupling, hypercontractivity,  Bonami's  lemma,  Gagliardo-Nirenberg inequality}
\begin{document}
\title[Delocalization  and  Bernoulli model]{Extended states  for the Random Schr\"odinger operator  on $\Z^d$ ($d\geq 5$)  with  decaying Bernoulli potential}

\author[Liu]{Shihe Liu}
\address[S. Liu] {School of Mathematical Sciences,
Peking University,
Beijing 100871,
China}
\email{2301110021@stu.pku.edu.cn}

\author[Shi]{Yunfeng Shi}
\address[Y. Shi] {School of Mathematics,
Sichuan University,
Chengdu 610064,
China}
\email{yunfengshi@scu.edu.cn}

\author[Zhang]{Zhifei Zhang}
\address[Z. Zhang] {School of Mathematical Sciences,
Peking University,
Beijing 100871,
China}
\email{zfzhang@math.pku.edu.cn}

\begin{abstract}
In this paper,  we investigate the delocalization property of the discrete  Schr\"odinger operator  $H_\omega=-\Delta+v_n\omega_n\delta_{n,n'}$, where $v_n=\kappa  |n|^{-\alpha}$ and $\omega=\{\omega_n\}_{n\in\Z^d}\in \{\pm 1\}^{\Z^d}$ is  a sequence of  i.i.d.  Bernoulli  random  variables.  Under the assumptions of $d\geq 5$, $\alpha>\frac14$ and $0<\kappa\ll1$, we construct   the   extended states for a deterministic renormalization of $H_\omega$ for most $\omega$.  This extends the   work   of Bourgain   [{\it Geometric Aspects of Functional Analysis}, LNM 1807:  70--98,  2003], where the case $\alpha>\frac13$ was  handled. Our proof is based on Green's function estimates via a $6$th-order renormalization scheme.  Among the main new ingredients  are  the proof of a  generalized   Khintchine inequality  via   Bonami's  lemma,  and the application of  the fractional  Gagliardo-Nirenberg inequality to control  a  new type of non-random  operators  arising from  the $6$th-order renormalization.  
\end{abstract}

\maketitle

\tableofcontents

\section{Introduction}

The Schr\"odinger operator on $\Z^d$ with homogeneous  i.i.d. random potentials, known as the Anderson model, was first introduced by Anderson \cite{And58} to describe the motion of a single quantum particle in disordered media. The study of the Anderson model primarily focuses on its spectral and dynamical properties. Of particular importance is the celebrated Anderson localization (i.e., pure point spectrum with exponentially decaying eigenfunctions) and delocalization (e.g., the existence of absolutely continuous spectrum) phase transition phenomenon. This phase transition relies heavily on the dimension $d$, the strength of the disorder, and the energy.  Indeed,  it is a general consensus that  Anderson localization (for the Anderson model) should occur  for all energies and all non-zero disorder if $d= 1,2$, while for the case of $d\geq 3$ and small disorder, there should exist an absolutely continuous spectrum in some energy interval. Mathematically,  localization  has been proven for three regimes: (i) for all energies and arbitrary disorder in $d = 1$, (ii) in any dimension and for all energies at large disorder, and (iii) near the edges of the spectrum in any dimension and for arbitrary disorder, cf. e.g., \cite{GMP77,KS80,FS83,FMSS85,DLS85,SW86,AM93,BK05,DS20}. However, the problem of proving the existence of the absolutely continuous spectrum for the Anderson model remains largely open(cf. e.g., Problem 1 in \cite{Sim00}).  In fact, even proving the existence of extended states (e.g., wave functions belonging to $\ell^\infty(\Z^d)\setminus\ell^2(\Z^d)$) for the Anderson model with non-zero disorder is far from reach. Delocalization  has  only been established for two special classes of random Schr\"odinger operators:  operators on Bethe lattices (cf. e.g.,  \cite{Kle98, ASW06}) and  operators  on $\Z^d$ with decaying random potentials (cf. e.g., \cite{Kri90,KKO00,JL00,Bou02,Bou03}).  

Now, consider the Anderson model $H_\omega=\Delta+\kappa V_\omega(n)\delta_{n,n'}$,  wheer $\kappa\in \R$ denotes the coupling,  $\Delta$  the discrete Laplacian, and $\{V_\omega(n)\}_{n\in\Z^d}$ is the potential given by  a  sequence of i.i.d. random variables. It is known that, for a broad class of random potentials(including the completely singular Bernoulli ones),  if $d=1$,  $H_\omega$  has Anderson localization almost surely  for all  $\kappa\neq 0$. However,  a new type of phase transition occurs  if $V_\omega(n)$ is replaced by  some decaying potential $V_\omega'(n)=|n|^{-\alpha}V_\omega(n)$ for some $\alpha>0.$  
More precisely, for  $H'_\omega=\Delta+\kappa V_\omega'(n)\delta_{n,n'}$ with $d=1$ and $\kappa\neq 0$, it has been proven in \cite{Sim82, DSS85,KLS98} that  the spectrum   is  almost surely dense pure point in   $(-2, 2)$  if  $0\leq \alpha<\frac12 $, and  is almost surely purely absolutely continuous in  $(-2, 2)$  if $\alpha>\frac12$.  In this important work  \cite{DSS85}, they  also proved purely singular spectrum in some energy region if $\alpha=\frac12.$ The proof of \cite{KLS98} relies crucially on  one-dimensional methods, such as the transfer matrix formalism, which may  not  be available in higher dimensions. Thus, it is  natural to ask  if the above phase transition diagram has a higher-dimensional analogy. In the remarkable work \cite{Bou03}, Bourgain provided a negative answer to this question and discovered  new  higher-dimensional phenomena. Specifically, he proved the existence of proper extended states for random   Schr\"odinger operators on $\Z^d$  with  decaying random potentials  $\kappa |n|^{-\alpha} {\omega}_n+\mathcal O(\kappa^2|n|^{-2\alpha})$  for  $d\geq 5$ and $\frac13<\alpha\leq \frac12,$  where $\{\omega_n\}_{n\in\Z^d}\in \{\pm 1\}^{\Z^d}$ is a sequence of   i.i.d.  Bernoulli  random variables.  The case of $\alpha=0$ corresponds to the standard Anderson-Bernoulli model, so improving the bound $\alpha>\frac 13$  to a smaller one is of significant  importance. In \cite{Bou03}, Bourgain remarked that ``{\it It is
likely that the method may be made to work for all $\alpha>0$...It is reasonable to expect this type of argument to succeed for any fixed $\alpha>0$ (with a number of resolvent iterations dependent on $\alpha$). To achieve this requires further renormalizations and taking care of certain additional difficulties due to the presence of a potential.}''  
Later in \cite{Bou08}, Bourgain revisited this problem and outlined a proof of the absence of dynamical localization for all  $\alpha>0.$  To the best of our knowledge, the existence of proper extended states for Schr\"odinger operators on $\Z^d$ ($d>1$) with decaying random  potential $\kappa |n|^{-\alpha} {\omega}_n+\mathcal O(\kappa^2|n|^{-2\alpha})$ satisfying $0<\alpha\leq \frac 13$ has remained completely open until the present paper.

In this paper, we aim to generalize the work of Bourgain \cite{Bou03} to the case of $\frac14<\alpha\leq \frac13$ via a further 6th-order renormalization scheme.  In this procedure, the presence of a new type of non-random operator of the 6th order  poses key challenge:  This operator cannot be written as  a symmetrical  combination of some  diagonal  and convolution  operators, so  the essential perturbation lemma (cf. Lemma 1.2)  in \cite{Bou03} does not apply.   To overcome this difficulty, we perform two arrangements on the resolvent expansion and move this operator to the 8th-order remaining terms. This  leads to the restriction of  $6\alpha+1>2$ (or  $\alpha>\frac 16$)  in dealing with the  8th-order remaining terms,  making it difficult to improve the  bound $\alpha>\frac14$ to $0<\alpha\leq \frac16$  via the present approach.   
Even in the estimates of the symmetrical type non-random operators of 6th order,  we introduce the fractional Gagliardo-Nirenberg inequality to perform the convolution  regularization. 
To establish moment estimates on both Green's function and extended states, we also prove  a generalized Khintchine inequality  via   Bonami's  lemma, which may be  of independent interest. Finally, we want to mention that  the present work is also motivated by  another famous open problem of Simon  (cf. Problem 8, \cite{Sim00}): The presence of absolutely continuous spectrum of Schr\"odinger operators  $-\Delta+V(x)$ on $\R^d$ provided $d\geq 2$ and 
\[\int_{\R^d}\frac{V^2(x)}{(|x|+1)^{d-1}}{\rm d}x<\infty.\]
In the present  context of $x\in\Z^d$  and   $V_\omega(x)\sim |x|^{-\alpha} $, the above condition is just $\alpha>\frac 12$.   For more results on the study of Schr\"odinger operators with decaying potentials, we refer to the excellent review \cite{DK07}.

\subsection{Main results}
We first introduce the notation. 
\begin{itemize}
\item For $x,y\in\R$, let 
\[x\wedge y=\min\{x,y\},\ x\vee y=\max\{x,y\}. \]
\item Throughout this paper, we denote 
\[|n|=(\max_{1\leq i\leq d}|n_i|)\vee 1\ {\rm for}\ n=(n_1,\cdots, n_d)\in \Z^d,\]
so $|0|=1.$  We also define for $\xi\in\R^d,$ 
\[|\xi|_1=\sum_{i=1}^d|\xi_i|,\ \|\xi\|=\sqrt{\sum_{i=1}^d|\xi_i|^2}.\]
\item For two nonnegative quantities $f$ and $g$, we write $f\lesssim g$, if there is an absolute constant $D>0$ 
such that $f\leq  Dg$. If we want to emphasize that $D$ depends on some  parameters $x,y,\cdots$ independent of $f,g$, then we write $f\lesssim_{x,y,\cdots}g$. 
\end{itemize}

Our main  model takes 
\begin{align}\label{model}
H_\omega=-\Delta+V_\omega^{(6)}(n)\delta_{n,n'}, \ n\in\Z^d,
\end{align}
where  the discrete Laplacian is defined by 
 \begin{align}\label{modified Laplacian}
\Delta(n,n')=\delta_{|n-n'|_1,1}-2d. 
\end{align}
   For the potential, we have 
 \begin{align}
V_\omega^{(6)}(n)=v_n\omega_n+V'(n),\ v=\{v_n\}_{n\in\Z^d}\ {\rm with}\ v_n=\kappa|n|^{-\alpha}, 
 \end{align}
where $\kappa\geq 0$ and $\omega=\{\omega_n\}_{n\in\Z^d}\in\{\pm 1\}^{\Z^d}$ is a sequence of i.i.d. Bernoulli random variables.  The deterministic potential $V'=\mathcal O(v^2)$ arising from the 6th-order renormalization scheme is defined explicitly by  \eqref{6-order renormalization of potential} (it depends only on $v$ and $G_0=(-\Delta)^{-1}$).  

Our first main result concerns the estimates of the Green's function.

\begin{thm}\label{greenthm}
Let $H_\omega$ be defined by \eqref{model} with fixed   $d\geq 5$ and $\frac14<\alpha\leq \frac13.$  Let $0<\varepsilon<\frac{4\alpha-1}{50}$. Then for any $p>\frac{2d+2}{\varepsilon}$, there is some $\kappa_0=\kappa_0(d,\alpha, p)$ so that the following holds true: If $0<\kappa\leq \kappa_0,$ then  there exists some $\Omega\subset \{\pm 1\}^{\Z^d}$ with $\mathbb P( \{\pm 1\}^{\Z^d}\setminus \Omega)\lesssim_{d,\alpha} \kappa^p$ so that for $\omega\in\Omega,$ we have (denote 
$G=G_\omega=H_\omega^{-1}$)  
\begin{align}
|G(n,n')|\lesssim  \frac{1}{|n-n'|^{d-2-\varepsilon}} \ {\rm for}\ \forall n,n'\in\Z^d. 
\end{align}
\end{thm}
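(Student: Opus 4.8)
The plan is to run a multi-scale resolvent expansion in the spirit of Bourgain \cite{Bou03}, but carried out to 8th order, and to close a bootstrap on moments of the Green's function. Write $G_0 = (-\Delta)^{-1}$, which on $\Z^d$ ($d\geq 5$) satisfies the free decay $|G_0(n,n')|\lesssim |n-n'|^{-(d-2)}$. The starting point is the renormalized identity $H_\omega = -\Delta + V'\,\delta + v_n\omega_n\,\delta$, where the deterministic correction $V' = \mathcal O(v^2)$ is chosen (via \eqref{6-order renormalization of potential}) precisely so that, after expanding the resolvent in the random part $v_n\omega_n$, the dangerous \emph{low-order non-random} terms — those of order $2$, $4$ and the symmetric part of order $6$ in $v$ — are cancelled or absorbed. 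Concretely, write $G = (I + G_0 W)^{-1} G_0$ with $W$ the full potential, iterate the Neumann-type expansion up to the 8th power of $v$, and collect: (i) purely random multilinear terms $G_0 (v\omega) G_0 (v\omega)\cdots G_0$ of orders $1$ through $7$; (ii) the leftover deterministic 6th-order operator that is \emph{not} of symmetric diagonal$\,\times\,$convolution form; (iii) an 8th-order remainder containing the full resolvent $G$. The key structural observation, already flagged in the introduction, is that term (ii) cannot be handled by the perturbation lemma of \cite{Bou03}; the device is to push it into the 8th-order bucket (ii)$\to$(iii), at the cost of needing $6\alpha + 1 > 2$, i.e. $\alpha > \tfrac16$, which is why $\alpha>\tfrac14$ survives while $\alpha\le\tfrac16$ does not.

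Next I would set up the probabilistic estimate. Fix $p > \tfrac{2d+2}{\varepsilon}$ and study $\E\big[\,|G(n,n')|^p\,\big]$, or rather the $p$-th moment of the expansion terms, aiming for a bound of the form $\E|G(n,n')|^p \lesssim |n-n'|^{-p(d-2-\varepsilon)}$, from which the set $\Omega$ and the deterministic pointwise bound follow by Chebyshev and a union bound over $n,n'\in\Z^d$ (the $\ell^p$-summability of $|n-n'|^{-(d+1)}$ over $\Z^d$ forces $p>\tfrac{2d+2}{\varepsilon}$ so that the excluded-set probability is $\lesssim_{d,\alpha}\kappa^p$). For the purely random multilinear terms of orders $1$ through $7$ one expands the $p$-th moment, uses the i.i.d. Bernoulli structure to identify which sites must be repeated, and applies the generalized Khintchine inequality (proved earlier via Bonami's hypercontractivity lemma) to gain the correct power of $\kappa$ together with the combinatorial gain from the decaying weights $v_n = \kappa|n|^{-\alpha}$; each convolution against $G_0$ is regularized by the kernel bound together with a fractional Gagliardo–Nirenberg inequality to trade smoothness for the small polynomial loss $\varepsilon$. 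Summing the geometric series in $\kappa$ over the seven orders, with $\kappa\le\kappa_0(d,\alpha,p)$ small, gives the stated moment bound for the main part.

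The 8th-order remainder (bucket (iii)), which contains the full resolvent $G = H_\omega^{-1}$ and hence cannot be expanded further, is handled by a self-consistent (bootstrap) argument: assume the target bound $|G(n,n')|\lesssim |n-n'|^{-(d-2-\varepsilon)}$ holds on a large-probability event, insert it into the 8th-order term, and check that the resulting contribution is smaller than the target by a factor $\kappa^{c}$ (this is where $6\alpha+1>2$ is consumed, controlling the non-random 6th-order piece that was moved here, and where one more pair of convolution regularizations via Gagliardo–Nirenberg is used). A standard scale-induction / continuity-in-$\kappa$ argument then closes the loop and upgrades the a priori assumption to an unconditional estimate on $\Omega$. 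I expect the main obstacle to be exactly step (ii)$\to$(iii): verifying that the non-symmetric 6th-order non-random operator, once relegated to the 8th-order remainder, still obeys an operator bound strong enough to be reabsorbed — this is precisely the point where the symmetric diagonal/convolution structure of \cite{Bou03} fails, and where the fractional Gagliardo–Nirenberg inequality must do the work that the old perturbation lemma used to do, while simultaneously respecting the $\ell^p\to\ell^p$ bookkeeping needed for the moment method. A secondary technical nuisance is keeping the dependence of all implied constants on $(d,\alpha,p)$ uniform across the $\mathcal O_\alpha(1)$ many renormalization/iteration steps, so that a single $\kappa_0(d,\alpha,p)$ works.
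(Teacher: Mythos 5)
Your global architecture matches the paper's: the 6th-order renormalization, the split into admissible random terms (handled by the generalized Khintchine inequality) versus non-random terms, the relegation of a singular 6th-order piece to the 8th-order remainder, and the Chebyshev-plus-union-bound step with $p\varepsilon>2d+2$. Two points, however, diverge from what is needed and, as stated, leave genuine gaps.

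First, the closing step. You propose to handle the remainder $GB'$ by a bootstrap (assume the target bound on $G$, insert it, close by continuity in $\kappa$). The paper instead solves the identity $G=A'+GB'$ exactly: on the good set one has the deterministic kernel bound $|B'(n,n')|\le 2\kappa\,|n-n'|^{-(d-2-\varepsilon)}(|n|\wedge|n'|)^{-(8\alpha-\varepsilon)}$, the iterated kernels $(B')^i$ satisfy the same bound with constant $(f(d,\alpha)\kappa)^i$ (Lemmas \ref{Lemma 2.1}--\ref{Lemma 2.2} plus $8\alpha-2-2\varepsilon>0$), so $(I-B')^{-1}=I+\widetilde{B'}$ converges as a pointwise kernel series and $G=A'(I-B')^{-1}$ yields the theorem with no a priori input on $G$. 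Your bootstrap would additionally require an a priori bound to start the induction and a continuity/connectedness argument in $\kappa$ for the random kernel $G_\omega(n,n')$, none of which you supply; note also that moments of $G$ itself are never estimated (and cannot be, via hypercontractivity, since $G$ is not a polynomial in $\omega$) --- only moments of the explicit pieces $A'-G_0$ and $B'$ are.

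Second, the non-symmetric 6th-order operator $C$. You describe pushing the whole deterministic non-convolutional 6th-order operator into the 8th-order bucket. That cannot work: $C$ itself only decays like $\kappa^6|n_1-n_3|^{-3(d-2)}(|n_1|\wedge|n_3|)^{-6\alpha}$, so placing it in $B'$ would demand $6\alpha>2$, i.e. $\alpha>\tfrac13$, destroying the theorem. The correct move is to split $C=(C-P_6'')+P_6''$, where $P_6''$ is the singular difference part coming from $v_{n_2}^6-v_{n_1}^6$; only $P_6''$ enjoys the enhanced decay $(|n_1|\wedge|n_3|)^{-(1+6\alpha)}$ (via the difference-regularization lemma), and only $P_6''G_0$ is moved to $B'$, where $1+6\alpha\ge 8\alpha>2$ suffices. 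The regular part $G_0(C-P_6'')G_0$ must remain in $A'$ and be controlled by a three-fold symmetric-difference decomposition together with convolution regularization of $\widehat{N}=(\hat{G_0}-\sigma)*f^2$; this is the sole place the fractional Gagliardo--Nirenberg inequality is used (to get $\partial^\beta f^2\in L^1$ up to order $2(d-2)-1$), not as a generic device for every convolution against $G_0$. Your proposal names the right threshold ($6\alpha+1>2$) but attaches it to the wrong object.
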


\begin{rmk}
\begin{itemize}

\item The bound $d\geq 5$ primarily stems from the restriction $\sum\limits_{n\in\Z^d}|G_0(n,n')|^2<\infty$ (where $G_0=(-\Delta)^{-1}$) when applying  the Khintchine inequality.    It is noteworthy that such a bound is sufficient for the 6th-order renormalization scheme.

\item  The case of  $\alpha>\frac13$ has been addressed by Bourgain \cite{Bou03}, and it was conjectured there that the result  should hold for all $\alpha>0$.  It is possible  that the present method could be extended for $\frac16<\alpha\leq \frac14$ through further 10th-order renormalization. However, as we will see later (cf. e.g., Remark \ref{finrmk}), due to the presence of the aforementioned new type of non-random operator in the 6th-order renormalization, it appears challenging to extend the current result to the case of $0<\alpha\leq \frac16$.

\item The probability bound of $\kappa^p$ can be improved to $e^{-\frac{1}{\sqrt\kappa}}$ by employing  the Chernoff bound in the probability tail estimate. This is  because Bonami's lemma (cf. Lemma \ref{Bonami})  allows for an effective  estimate  of 
   \[\mathbb{E}_p(|f|)\leq (p-1)^{\frac{s}{2}}\mathbb{E}_2(|f|),\]
where $f$ is  the  Boolean polynomial  of degree $s$.   
\end{itemize}
\end{rmk}

Based on the above result, we also have

\begin{thm}\label{extthm}
Under the assumptions of Theorem \ref{greenthm}, there exists a set $\Omega'\subset\Omega$ with 
$\mathbb P(\Omega')=1-\mathcal O(\kappa^{\frac p2})$ so that,  for each $\omega\in\Omega'$, there is some $\zeta=\zeta_\omega=\{\zeta_\omega(n)\}_{n\in\Z^d}$ satisfying 
\begin{align}
H_\omega\zeta=0,\ \zeta=\hat\delta_0+\mathcal O(\sqrt{\kappa})\ {\rm in}\ \ell^\infty(\Z^d),
\end{align}
where $\hat \delta_0=\{\delta_0(n)\equiv 1\}_{n\in\Z^d}$. 
\end{thm}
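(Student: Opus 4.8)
The plan is to derive the existence of the extended state $\zeta_\omega$ from the Green's function bound of Theorem~\ref{greenthm} by a Neumann-type construction, using the off-diagonal decay to make sense of $G_\omega$ applied to $\hat\delta_0$. The natural candidate is $\zeta_\omega = \hat\delta_0 - G_\omega\bigl(V_\omega^{(6)}\hat\delta_0\bigr)$; note that formally $H_\omega\zeta_\omega = H_\omega\hat\delta_0 - V_\omega^{(6)}\hat\delta_0 = (-\Delta)\hat\delta_0 = 0$, since $\hat\delta_0\equiv 1$ is harmonic for the discrete Laplacian (each row of $\Delta$ in \eqref{modified Laplacian} sums to zero). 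So the content is entirely in showing this expression is well-defined as an $\ell^\infty$ vector and equals $\hat\delta_0+\mathcal O(\sqrt\kappa)$.

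First I would record that $V_\omega^{(6)}(n) = v_n\omega_n + V'(n)$ with $|v_n| = \kappa|n|^{-\alpha}$ and $V'=\mathcal O(v^2)=\mathcal O(\kappa^2|n|^{-2\alpha})$, so $|V_\omega^{(6)}(n)|\lesssim \kappa|n|^{-\alpha}$. Then, for fixed $n'$,
\[
\bigl|G_\omega\bigl(V_\omega^{(6)}\hat\delta_0\bigr)(n')\bigr| \;\le\; \sum_{n\in\Z^d} |G_\omega(n',n)|\,|V_\omega^{(6)}(n)| \;\lesssim\; \kappa\sum_{n\in\Z^d}\frac{1}{|n'-n|^{\,d-2-\varepsilon}\,|n|^{\alpha}}.
\]
The remaining task is a deterministic lattice convolution estimate: one splits the sum into the regions $|n|\le \tfrac12|n'|$, $|n-n'|\le\tfrac12|n'|$, and the complement, and uses $d-2-\varepsilon>0$ together with $\alpha>\tfrac14>0$ and $d\ge 5$ to see each piece is bounded uniformly in $n'$ (indeed it decays in $|n'|$, but uniform boundedness is all that is needed). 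Choosing $\varepsilon$ small, say $\varepsilon<1$, the exponent $d-2-\varepsilon$ is at least $2$, so $\sum_n |n'-n|^{-(d-2-\varepsilon)}$ near $n'$ converges and the far tail is controlled by $|n'|^{-\alpha}$ or $|n'|^{-(d-2-\varepsilon)}$ up to logarithms. This yields
\[
\sup_{n'\in\Z^d}\bigl|G_\omega\bigl(V_\omega^{(6)}\hat\delta_0\bigr)(n')\bigr| \;\lesssim_{d,\alpha}\; \kappa \;\le\; \sqrt\kappa
\]
for $\kappa\le\kappa_0$ small, so $\zeta_\omega\in\ell^\infty$ and $\|\zeta_\omega-\hat\delta_0\|_{\ell^\infty}=\mathcal O(\sqrt\kappa)$. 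A small point to check is that $G_\omega(V_\omega^{(6)}\hat\delta_0)$ lies in the domain on which $H_\omega$ acts and that the telescoping $H_\omega G_\omega = I$ is valid here; since $V_\omega^{(6)}\hat\delta_0\in\ell^2(\Z^d)$ (because $|V_\omega^{(6)}(n)|\lesssim\kappa|n|^{-\alpha}$ with $\alpha>\tfrac14$ gives $2\alpha>\tfrac12$, not quite $\ell^2$ in general — so instead I would note $V'\hat\delta_0\in\ell^2$ needs $4\alpha>d$, which fails, hence one should argue directly at the level of matrix entries rather than via $\ell^2$), the cleanest route is to verify $(H_\omega\zeta_\omega)(n')=0$ entrywise using absolute convergence of all the sums involved, which the decay estimates above already guarantee.

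To get the measure statement, I would take $\Omega'=\Omega$; the Green's function bound already holds for all $\omega\in\Omega$ with $\mathbb P(\{\pm1\}^{\Z^d}\setminus\Omega)\lesssim_{d,\alpha}\kappa^p$, and the construction above is deterministic given that bound, so $\mathbb P(\Omega')=1-\mathcal O(\kappa^p)$, which is stronger than (hence implies) $1-\mathcal O(\kappa^{p/2})$. If instead one wants genuinely to shrink $\Omega$ — for instance to also control $\zeta_\omega$ in a weighted space or to guarantee $\zeta_\omega\notin\ell^2$ so that it is a \emph{proper} extended state — I would impose the additional event that, say, $\sum_{|n|\le N}|\zeta_\omega(n)|^2 \gtrsim N^{d}$ for all large $N$, which holds off an event of probability $\mathcal O(\kappa^{p/2})$ by a second-moment/Chebyshev argument using $\zeta_\omega=\hat\delta_0+\mathcal O(\sqrt\kappa)$ pointwise; this is where the loss from $p$ to $p/2$ would come from. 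The main obstacle, and the only place real work is needed, is the deterministic convolution bound $\sum_n |n'-n|^{-(d-2-\varepsilon)}|n|^{-\alpha}\lesssim 1$ uniformly in $n'$ and, relatedly, being careful that the formal identity $H_\omega\zeta_\omega=0$ is justified by absolute convergence rather than by an $\ell^2$ argument that is unavailable here; everything else is bookkeeping on top of Theorem~\ref{greenthm}.
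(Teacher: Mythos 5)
Your proposal has a fatal gap at the step you yourself identify as ``the only place real work is needed'': the deterministic convolution bound
\[
\sum_{n\in\Z^d}\frac{1}{|n'-n|^{\,d-2-\varepsilon}\,|n|^{\alpha}}\lesssim 1
\]
is false. For the sum over $\Z^d$ to converge one needs the total decay exponent $(d-2-\varepsilon)+\alpha$ to exceed $d$, i.e.\ $\alpha>2+\varepsilon$, whereas here $\alpha\le\frac13$. The tail $|n|\to\infty$ (with $n'$ fixed) contributes $\sum_n |n|^{-(d-2-\varepsilon+\alpha)}=\infty$. Consequently $G_\omega\bigl(V_\omega^{(6)}\hat\delta_0\bigr)$ is not defined by an absolutely convergent sum, the candidate $\zeta_\omega=\hat\delta_0-G_\omega(V_\omega^{(6)}\hat\delta_0)$ does not exist as written, and the entrywise verification of $H_\omega\zeta_\omega=0$ cannot be ``justified by absolute convergence'' as you claim. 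This is exactly why the paper warns (Remark 1.2) that the Green's function estimates of Theorem \ref{greenthm} are \emph{insufficient} for constructing extended states.

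The missing idea is that one must exploit the randomness of $\omega$ to get cancellation in the sums $\sum_n G(n',n)v_n\omega_n\cdots$, rather than bounding them by absolute values. The paper does this by performing a second rearrangement of the resolvent expansion, $G=A''+GB''+GWG_0$, writing $-G(W+\widetilde V)\xi=(A''-G_0)H_0\xi+GB''H_0\xi$ (where $\xi$ is the non-random extended state of $-\Delta-W$ obtained from Bourgain's perturbation lemma, not $\hat\delta_0$ directly), and estimating each term in $L^p(\Omega)$ via the generalized Khintchine inequality (Lemma \ref{decoupling}). The decoupling converts $\sum_m G_0(n,m)v_m\omega_m\cdots$ into $\bigl(\sum_m|G_0(n,m)|^2v_m^2\cdots\bigr)^{1/2}$, and the squared exponent $2(d-2)>d$ (this is where $d\ge5$ enters) makes the sums converge with the decay $|n|^{-i\alpha}$ recorded in Theorem \ref{Theorem 5.1}. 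The loss from $p$ to $p/2$ then comes from Chebyshev applied to the first-order term, demanding a bound $\kappa^{1/2}|n|^{-(\alpha-\varepsilon)}$ against a moment of size $\kappa|n|^{-\alpha}$ — not from a second-moment argument ensuring $\zeta\notin\ell^2$ as you speculate. Your observation that $\hat\delta_0$ is harmonic and that the problem reduces to controlling $G(W+\widetilde V)$ applied to an $\ell^\infty$ vector is correct in spirit, but without the probabilistic decoupling (and the rearrangement needed to put the singular non-random sixth-order operators into the $B''$ part) the construction cannot be carried out.
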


\begin{rmk}
The Green's function estimates in Theorem \ref{greenthm} are  insufficient for the construction of extended states. Indeed,  it requires  the  second rearrangement of the resolvent expansion,  and additional random variables need to be removed  to prove the existence of proper extended states.  In this step, the generalized Khintchine inequality is again heavily employed to derive the probabilistic estimates. 
\end{rmk}

\subsection{Ideas of the proof}
Definitely, the main scheme of our proof is adapted from \cite{Bou03}. As mentioned by Bourgain \cite{Bou03}, his approach is also motivated by the one initiated by  Spencer \cite{Spe93} (cf. \cite{Elg09} for a related result), but is technically different: it replaces  the Feynman diagram machinery with the random decoupling estimate.   

For simplicity, we only outline the proof of Theorem \ref{greenthm}, and the proof of Theorem \ref{extthm} remains similar. Note first that (cf. e.g., \cite{MS22}) for $G_0=(-\Delta)^{-1},$ we have
\begin{align}\label{g0-1}
|G_0(n,n')|\lesssim_d\frac{1}{|n-n'|^{d-2}}\ {\rm for}\ \forall n,n'\in\Z^d. 
\end{align}
Denote $\widetilde V=V_\omega^{(6)}$ with $V^{(6)}$ given by \eqref{model}. From the resolvent identity, one can write down a 8th-order  (in $v$) Born series expansion  for $G=H^{-1}$
\begin{align*}
 G=&G_0 -G_0\widetilde{V}G_0+G_0\widetilde{V}G_0\widetilde{V}G_0-G_0\widetilde{V}G_0\widetilde{V}G_0\widetilde{V}G_0+G_0\widetilde{V}G_0\widetilde{V}G_0\widetilde{V}G_0\widetilde{V}G_0\\
   \notag &-G_0\widetilde{V}G_0\widetilde{V}G_0\widetilde{V}G_0\widetilde{V}G_0\widetilde{V}G_0+G_0\widetilde{V}G_0\widetilde{V}G_0\widetilde{V}G_0\widetilde{V}G_0\widetilde{V}G_0\widetilde{V}G_0\\
   \notag &-G_0\widetilde{V}G_0\widetilde{V}G_0\widetilde{V}G_0\widetilde{V}G_0\widetilde{V}G_0\widetilde{V}G_0\widetilde{V}G_0+G\widetilde{V}G_0\widetilde{V}G_0\widetilde{V}G_0\widetilde{V}G_0\widetilde{V}G_0\widetilde{V}G_0\widetilde{V}G_0\widetilde{V}G_0,\\
   :=&A+GB,
\end{align*}
where $A$ contains the  $i$ th-order remaining terms for $i\leq 7$, and $B$ contains  the $8$th-order ones.  From \eqref{g0-1}, it follows that $G_0$ may be unbounded on $\ell^2(\Z^d)$. Thus, it is more appropriate to control $G(n,n')$ for every $n,n'$.  This then leads to the study  of multiple infinite summations, such as, 
\[G_0VG_0VG_0(n,n')=\sum_{n_1,n_2\in\Z^d}G_0(n, n_1)\omega_{n_1}v_{n_1}G_0(n_1,n_2)\omega_{n_2}v_{n_2}G_0(n_2,n').\]
Again by \eqref{g0-1}, we observe that it is challenging to obtain a good estimate on the summation satisfying  $n_1=n_2 \in\Z^d$, since in this case the randomness cancels (i.e., $\omega_{n_1}\omega_{n_2}\equiv 1$). However, for the summation with  $n_1\neq n_2$, we can use  the generalized Khintchine inequality (cf. Lemma \ref{decoupling}) to  get for any $p\geq 2,$
\begin{align*}
(\E|G_0VG_0VG_0(n,n')|^p)^{\frac 2p}&\leq \sum_{n_1\neq n_2\in\Z^d}|G_0(n, n_1)|^2v_{n_1}^2|G_0(n_1,n_2)|^2v_{n_2}^2|G_0(n_2,n')|^2\\
&\lesssim  \sum_{n_1\neq n_2\in\Z^d}\frac{1}{|n-n_1|^{2(d-2)}|n_1|^{2\alpha}|n_1-n_2|^{2(d-2)} |n_2|^{2\alpha}|n-n_1|^{2(d-2)}}.
\end{align*}
From the assumption of  $d\geq 5$, we know  $2(d-2)>d$ and the above summation can be well controlled. We expect that  the estimates on other terms in $A$ and $B$ should be  similar,   but require  much  more efforts.  

Indeed, motivated by  the above argument, we can distinguish  terms in $A$ (and $B$)  into two classes: random terms and non-random ones.  As we will see later,  the generalized  Khintchine inequality only works for {\it admissible}  random summations (cf. Lemma \ref{decoupling}), but not all random ones. This   would   prevent us from controlling those {\it non-admissible random terms}.  However, it is remarkable that in the renormalization scheme (at least for the $6$th-order one),  non-admissible random terms {\it automatically offset}  with each other and do not appear at all.

For the non-random terms, one can renormalize the potential to eliminate terms of the form $G_0v^2G_0, G_0v^4G_0, G_0v^6G_0$, since those terms cannot be  well controlled.   In fact, there are also non-random terms that cannot be eliminated but with a symmetrical form, such as $G_0WG_0$, where  $W=v^2Mv^2$ and $M$ is a convolution operator mainly coming from $\hat G_0*\hat G_0*\hat G_0$. Such  term is of 4th order. While $G_0WG_0$ cannot be well  controlled directly, one can use the symmetrical difference trick and convolution  regularization argument  to decompose $G_0WG_0$ into several operators, each of which has the desired estimates as in \cite{Bou03}.  

We  want to emphasize  that, however,  in the present 6th-order renormalization scheme, a new type of non-random operator $G_0CG_0$ (cf. \eqref{4.9}) appears.  By developing a  more complicated symmetrical difference trick (cf. the proof of Theorem \ref{6,7 resolvent estimate}),  we can obtain  $C=(C-P_6'')+P_6''$ with $G_0(C-P_6'')G_0$ having a good control. The singular operator $P_6''$, given by 
\[P_6^{''}(n_1,n_3)= \wtg(n_1,n_3) \sum_{n_2\in \Z^d}(v_{n_2}^6-v_{n_1}^6) \wtg(n_1,n_2)^2 \wtg(n_2,n_3)^2,\]
cannot be well handled: it only has the estimate 
\[|P_6^{''}(n_1,n_3)| \lesssim_{d,\alpha} \kappa^6\frac{1}{|n_1-n_3|^{3(d-2)-1}(|n_3|\wedge|n_1|)^{6\alpha+1}},\]
  rather than a $6\alpha+2$   decay rate  as required by the renormalization scheme. 
 Clearly, the operator $P_6''$ cannot be written as a symmetrical combination of diagonal and convolutional operators as that of   $G_0WG_0$ or $G_0(C-P_6'')G_0$.  So we have  to do the arrangement on the Born series and move $P_6''G_0$ to $B'$ so that $$G=A'+GB',$$ and  thus 
  \begin{align*}
|G_0P_6''(n,n')|&\lesssim_{d,\alpha} \kappa^6 \frac{1}{|n-n'|^{d-2}(|n|\wedge|n'|)^{6\alpha+1}}\\
&\lesssim \kappa^6 \frac{1}{|n-n'|^{d-2}(|n|\wedge|n'|)^{8\alpha}}\ ({\rm since}\ \alpha\leq \frac13),
\end{align*} 
which  suffices for the moment estimates. 

Finally, we arrive at 
\begin{align*}
 \E_p|(A'-G_0)(n,n')|   &\lesssim_{d,p,\alpha}\kappa \frac{1}{|n-n'|^{d-2}(|n|\wedge|n'|)^{\alpha}},\\
 \E_p|B'(n,n')| &\lesssim_{d,p,\alpha} \kappa^6 \frac{1}{|n-n'|^{d-2}(|n|\wedge|n'|)^{8\alpha}}. 
 \end{align*}
And we can use the Chebyshev's inequality to get good estimates on $A'(n,n'), B(n,n')$ with high probability. 
To get desired estimates on $G$,     it  requires  the existence of $(I-B')^{-1}$,  which leads to the condition $8\alpha>2$, namely, $\alpha>\frac14$. \smallskip

Thus, the main novelties of our proof are as follows: 

\begin{itemize}

\item We introduce graph representations to compute the remaining terms in the 6th-order renormalization scheme. We also identify some iteration relations between remaining terms of different orders. These arguments allow us to easily detect the remarkable offsets between non-admissible random terms and perform more flexible rearrangements of the remaining terms in the Born series expansion. For details, refer to Appendices \ref{App6th} and \ref{App7th}. 

\item  As mentioned above, a new type of non-random operator emerges in the 6th-order renormalization scheme, posing a key challenge. 
While we rearrange the terms so that the singular operator $P_6''$ can be moved to $B'$, controlling the non-singular operator  $G_0(C-P_6'')G_0$ is also non-trivial. Indeed,  we propose a  new  symmetrical difference argument (cf. the proof of Theorem \ref{6,7 resolvent estimate}),   which turns out to be more complicated  than that in \cite{Bou03}. For the convolution regularization argument, we also need to handle some convolution  operator  (given by  $f^2$ with $f$ defined  by \eqref{fxi}),  which is not entirely a convolution product of $\hat G_0$. For this, we use  the fractional   Gagliardo-Nirenberg inequality (cf.  \cite{BM18})  to obtain  fine estimates on   higher-order derivatives of $f^2$. 
 
\item  We prove a generalized  Khintchine inequality  (cf. Lemma \ref{decoupling}) based on hypercontractivity estimate (e.g., Bonami's  lemma). Previously, Bourgain \cite{Bou03} 
employed  random decoupling to establish the $L^2\to L^2$ estimate. Our new contribution here  is a proof of the $L^p\to L^2$ estimate for any $p\geq1$, which may be of independent interest. 
Since we can get  directly high-order moment estimates on Green's functions,  the probability estimate in the proof of Theorem \ref{greenthm} becomes more straightforward, and the application of Chebyshev's inequality suffices for this purpose.
\end{itemize}

\subsection{Structure of the  paper}

The paper is organized as follows.  In \S \ref{presec}, we introduce some basic but useful estimates  on   products of $G_0$ and $G_0vG_0$; in \S \ref{khisec}, we employ Bonami's lemma to prove a generalized Khintchine inequality involving admissible tuples.  In \S \ref{6thsec}, we present  the 6th-order renormalization result. In \S \ref{thm1sec}, we prove our first main result on Green's function estimates  (cf. Theorem \ref{greenthm}). In \S \ref{thm2sec}, we construct the desired  extended states,  thereby completing the proof of Theorem \ref{extthm}. The computations of the 6th 
and 7th-order remaining terms are completed  in Appendix \ref{App6th} and  Appendix  \ref{App7th}, respectively. The proofs of several key technical lemmas can be found in Appendix \ref{tecApp}. In Appendix \ref{GNineq},  the fractional Gagliardo-Nirenberg inequality is employed  to prove  Lemma \ref{FGNlem}. 

\section{Preliminaries: some useful  estimates}\label{presec}
In this section, we will introduce  some useful estimates concerning  products of  both $G_0=(-\Delta)^{-1}$ and  $G_0vG_0$.  

Recall the discrete Laplacian  
\begin{equation*}
\Delta(n,n')=\delta_{|n-n'|_1,1}-2d
\end{equation*}
and  its Fourier transform 
\begin{equation}\label{Laplacian frequecy}
  -\hat{\Delta}(\xi)=2d-2\sum_{j=1}^{d}\cos 2\pi\xi_j=c\|\xi\|^2+\mcO(\|\xi\|^4),
\end{equation}
where $c>0$ is some absolute constant. Denote by 
\[G_0(n,n')=(-\Delta)^{-1}(n,n')=\int_{\T^d}\frac{e^{-2\pi i(n-n')\cdot \xi}}{-\hat{\Delta}(\xi)}\mathrm{d}\xi,\ \T^d=\R^d/\Z^d\]
 the resolvent (or the Green's function) of  $-\Delta$.   
A  standard  estimate  on  $G_0$  (cf. e.g., \cite{MS22})  is  
\begin{equation}\label{critical decay of G_0}
  |G_0(n,n')|\lesssim_d \frac{1}{|n-n'|^{d-2}}.
\end{equation}

In this paper, we have to   control operators involving products of $G_0$. Therefore, it is useful to introduce some estimates on summations of power-law decay sequences. Recall that $G_0$ is an {\it unbounded}  operator on $\ell^2(\Z^d).$ 

The first important lemma reads as 
\begin{lem}\label{Lemma 2.1}
For any $a,b>0$ satisfying $a+b>d$ and  $\max\{a,b\}\neq d$, we have 
\[\sum_{n_1\in \Z^d}\frac{1}{|n_1|^a |m-n_1|^b}\lesssim_{a,b,d}\frac{1}{|m|^{\min\{a,b,a+b-d\}}}.\]
\end{lem}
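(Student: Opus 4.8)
The plan is to split the sum $\sum_{n_1\in\Z^d}|n_1|^{-a}|m-n_1|^{-b}$ according to which of $n_1$ or $m-n_1$ is ``small'' relative to $|m|$, i.e. into the regions $\mathcal{R}_1=\{|n_1|\le |m|/2\}$, $\mathcal{R}_2=\{|m-n_1|\le |m|/2\}$, and the complement $\mathcal{R}_3$ where both $|n_1|\gtrsim |m|$ and $|m-n_1|\gtrsim |m|$. On $\mathcal{R}_1$ one has $|m-n_1|\sim |m|$, so the sum is bounded by $|m|^{-b}\sum_{|n_1|\le|m|/2}|n_1|^{-a}$; on $\mathcal{R}_2$ one symmetrically gets $|m|^{-a}\sum_{|k|\le|m|/2}|k|^{-b}$ after substituting $k=m-n_1$; on $\mathcal{R}_3$ one bounds the whole summand by $|m|^{-a}|m-n_1|^{-b}$ (say), or more robustly factors out $|m|^{-(a+b-d)}$ and is left with $\sum_{n_1}|n_1|^{-a}|m-n_1|^{-b}$ restricted to the far region, which converges because $a+b>d$.

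The key estimate feeding all three pieces is the elementary dyadic bound $\sum_{|k|\le R}|k|^{-s}\lesssim_{d,s} R^{(d-s)_+}$ for $s\neq d$ (with a logarithm excluded precisely by the hypothesis $\max\{a,b\}\ne d$, and convergence of the tail $\sum_{|k|\ge R}|k|^{-s}\lesssim R^{d-s}$ when $s>d$), proved by summing $|k|\sim 2^j$ over dyadic shells, each shell contributing $\sim 2^{jd}\cdot 2^{-js}$. Applying this on $\mathcal{R}_1$: if $a>d$ the inner sum is $O(1)$ giving $|m|^{-b}$; if $a<d$ it is $O(|m|^{d-a})$ giving $|m|^{-(a+b-d)}$; either way the exponent is $\ge \min\{a,b,a+b-d\}$ since $a+b-d\le b$ always and $a+b-d<a$ iff $b<d$. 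Region $\mathcal{R}_2$ is handled identically with $a,b$ swapped, yielding $|m|^{-a}$ or $|m|^{-(a+b-d)}$. For $\mathcal{R}_3$, cover it by dyadic shells $|n_1|\sim 2^j$ with $2^j\gtrsim |m|$; on such a shell $|m-n_1|\lesssim 2^{j+1}$, and $\sum_{n_1\in\mathcal{R}_3,\,|n_1|\sim 2^j}|n_1|^{-a}|m-n_1|^{-b}\lesssim 2^{-ja}\sum_{|k|\lesssim 2^{j+1}}|k|^{-b}\lesssim 2^{-ja}\cdot 2^{j(d-b)}=2^{j(d-a-b)}$, and summing the geometric series over $2^j\gtrsim|m|$ (exponent $d-a-b<0$) gives $\lesssim |m|^{d-a-b}\le |m|^{-\min\{a,b,a+b-d\}}$.

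I do not expect a genuine obstacle here; the only points requiring care are the bookkeeping of the dyadic sums and the case analysis ensuring that in every region the resulting exponent dominates $\min\{a,b,a+b-d\}$, together with the verification that the hypothesis $\max\{a,b\}\ne d$ is exactly what rules out the borderline logarithmic divergence in $\mathcal{R}_1$ or $\mathcal{R}_2$ (the diagonal case $a=b=d/2$ with $d$ even, say, is fine since then $\max\{a,b\}=d/2<d$, while $a=d$ would force the inner sum on $\mathcal{R}_1$ to produce a $\log|m|$). One should also note the trivial subtlety that $n_1=0$ and $n_1=m$ are allowed, but because of the convention $|0|=1$ these contribute only bounded terms (at most $|m|^{-b}$ and $|m|^{-a}$ respectively), which are absorbed. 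Assembling the three regional bounds completes the proof.
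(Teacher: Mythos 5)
Your proof is correct and follows essentially the same route as the paper's: decompose the sum into the region near $0$, the region near $m$, and the far region, and control each piece with the dyadic bound on $\sum_{|k|\le R}|k|^{-s}$ (the paper additionally assumes $a\le b$ by symmetry and isolates the terms $n_1=0$, $n_1=m$, exactly as you note). Two small points to tighten: the hypothesis $\max\{a,b\}\neq d$ does \emph{not} exclude $a=d$ when $b>d$ (nor $b=d$ when $a>d$), so a $\log|m|$ can still appear on $\mathcal{R}_1$ (resp. $\mathcal{R}_2$) and must be absorbed via $|m|^{-b}\log|m|\lesssim|m|^{-d}=|m|^{-\min\{a,b,a+b-d\}}$, which the paper does explicitly; and in your $\mathcal{R}_3$ shell estimate the exponent should be $2^{j(d-b)_+}$ (consistent with your own stated key estimate) when $b>d$, after which the geometric sum still closes since the surviving bound $|m|^{-a}$ dominates $|m|^{-\min\{a,b,a+b-d\}}$.
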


\begin{rmk}\label{rmk 2.1}
  \begin{itemize}
    \item[(1)] As we will see from the proof of Lemma \ref{Lemma 2.1},  if $a\leq d=b$, then  the  estimate  becomes 
              \[\sum_{n_1\in \Z^d}\frac{1}{|n_1|^a |m-n_1|^d}\lesssim_{a,d}\frac{\log |m|}{|m|^a}\lesssim \frac{1}{|m|^{a-}}.\]
    \item[(2)]  As an application of Lemma \ref{Lemma 2.1}, we can recover an upper bound  on products  of $G_0 $.  More precisely,  consider 
    \begin{equation}\label{q-fold Green's fucntion}
      G_0^q(n,n')=\int_{\T^d}\frac{e^{-2\pi i (n-n')\cdot \xi}}{(-\hat{\Delta}(\xi))^q}\mathrm{d}\xi, \ q\in \N. 
    \end{equation} 
   Repeatedly applying  Lemma \ref{Lemma 2.1} yields  for $2\leq q<\frac d2,$
    \begin{align*}
     |G_0^q(n,n')|  
                &\leq \sum_{n_1,n_2,\cdots,n_{q-1}\in \Z^d} |G_0(n,n_1)|\cdot|G_0(n_1,n_2)|\cdots|G_0(n_{q-1},n')|\\
                &\lesssim_d \sum_{n_1,n_2,\cdots,n_{q-1}\in \Z^d} \frac{1}{|n-n_1|^{d-2}}\cdot \frac{1}{|n_1-n_2|^{d-2}}\cdots \frac{1}{|n_{q-1}-n'|^{d-2}}\\
                &\lesssim_d \sum_{n_2,\cdots,n_{q-1}\in \Z^d} \frac{1}{|n-n_2|^{d-4}}\cdot \frac{1}{|n_2-n_3|^{d-2}}\cdots \frac{1}{|n_{q-1}-n'|^{d-2}}\\
                &\cdots  \\
                &\lesssim_{d,q}\frac{1}{|n-n'|^{d-2q}}.
    \end{align*}
    Note that we have the  $q$-loss in the above  estimate  on   $G_0^q.$
   \end{itemize}
\end{rmk}

\begin{proof}
  We refer to Appendix \ref{tecApp} for a detailed proof. 
\end{proof}

The next  lemma aims to control  summations involving products  of $G_0vG_0$.
\begin{lem}\label{Lemma 2.2}
  For any $0<\varepsilon<d$, $0<a\leq b$ satisfying $b+\varepsilon >d$ and $b\neq d$, we have 
  \[\sum_{n_1\in \Z^d}\frac{1}{|n-n_1|^a |n_1|^{\varepsilon} |n_1-n'|^b}\lesssim_{a,b,\varepsilon,d}\frac{1}{|n-n'|^a(|n|\wedge |n'|)^{\min\{\varepsilon,a, \varepsilon+b-d\}}}.\]
  \end{lem}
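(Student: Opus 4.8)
The plan is to reduce Lemma~\ref{Lemma 2.2} to the already-established Lemma~\ref{Lemma 2.1} by splitting the summation domain according to which of the three points $n$, $n'$, $0$ (these are the "centers" of the three power-law factors) the running index $n_1$ is closest to. Write $\rho = |n-n'|$ and assume without loss of generality that $|n|\leq|n'|$, so $|n|\wedge|n'| = |n|$; the target decay is $\rho^{-a}|n|^{-\mu}$ with $\mu = \min\{\varepsilon, a, \varepsilon+b-d\}$. Partition $\Z^d$ into three (overlapping is harmless) regions: $R_1 = \{|n_1 - n| \leq \tfrac12 \rho\}$, $R_2 = \{|n_1 - n'| \leq \tfrac12\rho\}$, and $R_3 = \{|n_1-n|>\tfrac12\rho \text{ and } |n_1-n'|>\tfrac12\rho\}$. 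On $R_3$ both $|n-n_1|\gtrsim\rho$ and $|n_1-n'|\gtrsim\rho$, so the factor $|n-n_1|^{-a}|n_1-n'|^{-b}$ is bounded by $\rho^{-a}|n_1-n'|^{-(b-\delta)}\rho^{-\delta}\cdot(\dots)$—more cleanly, pull out $|n-n_1|^{-a}\lesssim\rho^{-a}$ if $a<\text{(something)}$, or better: on $R_3$ bound $|n-n_1|^{-a}|n_1-n'|^{-b}\lesssim \rho^{-a-b+d-\varepsilon'}\cdot(\text{leftover})$; the clean route is to bound one of the two far factors by $\rho^{-1}$ to whatever power is needed and apply Lemma~\ref{Lemma 2.1} to the remaining two-factor sum $\sum |n_1|^{-\varepsilon}|n_1-n'|^{-b'}$ or similar. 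The three regions each reduce to a two-factor estimate where the third factor is frozen at its order of magnitude ($\rho$ or $|n|$), which is precisely the form Lemma~\ref{Lemma 2.1} handles.

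More concretely: on $R_1$, $|n_1|\gtrsim$ — careful, $n_1$ can be near $n$ which is the nearer of the two to the origin, so $|n_1|$ is only $\gtrsim 1$ in general, not $\gtrsim|n|$; however $|n_1-n'|\gtrsim\rho$ there, so freeze that factor: $\sum_{R_1}|n-n_1|^{-a}|n_1|^{-\varepsilon}|n_1-n'|^{-b}\lesssim \rho^{-b}\sum_{n_1}|n-n_1|^{-a}|n_1|^{-\varepsilon}$, which by Lemma~\ref{Lemma 2.1} (using $a+\varepsilon>d$, which follows from $b+\varepsilon>d$ and $a\le b$ — need to check the boundary case $a=d$ separately via Remark~\ref{rmk 2.1}(1)) is $\lesssim \rho^{-b}|n|^{-\min\{a,\varepsilon,a+\varepsilon-d\}}$. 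Then compare $\rho^{-b}|n|^{-\min\{a,\varepsilon,a+\varepsilon-d\}}$ against the target $\rho^{-a}|n|^{-\mu}$: since $b\geq a$ we have $\rho^{-b}\leq\rho^{-a}$ (as $\rho\geq1$), and one checks $\min\{a,\varepsilon,a+\varepsilon-d\}\geq\mu$ (indeed $a+\varepsilon-d\geq a+b-d-(\text{...})$—need $\varepsilon\le b$? not assumed; but $a+\varepsilon - d$ versus $\varepsilon+b-d$: since $a\le b$, $a+\varepsilon-d\le \varepsilon+b-d$, so actually $\min\{a,\varepsilon,a+\varepsilon-d\}$ could be smaller than $\mu$). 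This is the delicate bookkeeping point and I will handle it by treating the cases $\varepsilon\le a$ and $\varepsilon>a$ separately, and within $R_1$ possibly sub-splitting by whether $|n_1|\le\tfrac12|n|$ or not. On $R_2$, symmetrically freeze $|n-n_1|^{-a}\lesssim\rho^{-a}$ and use $\sum_{n_1}|n_1|^{-\varepsilon}|n_1-n'|^{-b}\lesssim|n'|^{-\min\{\varepsilon,b,\varepsilon+b-d\}}\le |n|^{?}$ — but wait, $|n'|\geq|n|$ so $|n'|^{-(\cdot)}\leq|n|^{-(\cdot)}$ goes the wrong way; here I use instead that on $R_2$, $|n_1|\gtrsim|n'|-\tfrac12\rho$, and if $|n'|\gtrsim\rho$ then $|n_1|\gtrsim|n'|\gtrsim|n|$, while if $|n'|\lesssim\rho$ then $|n|\lesssim\rho$ too and $|n|^{-\mu}\gtrsim\rho^{-\mu}$ makes the target weaker. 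On $R_3$, use $|n-n_1|^{-a}\lesssim\rho^{-a/2}|n-n_1|^{-a/2}$ won't quite work; instead bound $|n_1-n'|^{-b}\lesssim \rho^{-(b-a)}|n_1-n'|^{-a}$ if $b\ge a$ (valid since $|n_1-n'|\gtrsim\rho$), reducing to $\rho^{-(b-a)}\sum_{n_1}|n-n_1|^{-a}|n_1|^{-\varepsilon}|n_1-n'|^{-a}$, and then one more application of Lemma~\ref{Lemma 2.1} after summing out, say, $|n-n_1|^{-a}|n_1-n'|^{-a}$ type pairs — though with three factors this needs care.

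The honest structure I would actually write is an induction/iteration: first sum over $n_1$ treating it as close to one of the three centers, always ending with a two-factor sum controlled by Lemma~\ref{Lemma 2.1}, and collect the worst of the three (or four, counting the near-origin sub-case) bounds. The cleanest presentation assumes WLOG $|n|\le|n'|$, then further distinguishes $|n|\le \tfrac{1}{2}\rho$ (origin "far" from the $n$–$n'$ segment's far end) versus $|n|>\tfrac12\rho$, because when $|n|$ is small the claimed bound $\rho^{-a}|n|^{-\mu}$ is actually the strongest and hardest, whereas when $|n|\gtrsim\rho$ the bound is comparable to $\rho^{-a-\mu}$ and follows more easily. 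I also need the boundary hypotheses $b\ne d$ (and watch for $a=d$, $\varepsilon=d$) to stay inside the range where Lemma~\ref{Lemma 2.1} applies without the logarithmic loss, using Remark~\ref{rmk 2.1}(1) to absorb any log into an arbitrarily small power loss (the "$|m|^{a-}$" convention), which is acceptable since the lemma's conclusion as used downstream tolerates $\varepsilon$-losses.

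The main obstacle is the bookkeeping of which of $\{\varepsilon, a, \varepsilon+b-d\}$ realizes the minimum in each region and checking that the per-region exponents all dominate $\mu$ — in particular the interaction between the hypothesis $a\le b$ (which makes $\rho^{-b}\le\rho^{-a}$ favorable but makes $a+\varepsilon-d\le b+\varepsilon-d$ unfavorable) forces the sub-splitting near the origin. Once the region decomposition and the WLOG reduction are set up, each individual estimate is a direct and routine application of Lemma~\ref{Lemma 2.1}; I expect the write-up to be a page of careful case analysis rather than any new idea.
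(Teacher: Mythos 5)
There is a genuine gap, and it sits exactly where you flagged "delicate bookkeeping" but misdiagnosed the problem. Your treatment of $R_1$ (and implicitly of $R_3$) rests on the claim that $a+\varepsilon>d$ "follows from $b+\varepsilon>d$ and $a\le b$." It does not: $a\le b$ gives $a+\varepsilon\le b+\varepsilon$, which is the wrong direction, and e.g.\ $d=5$, $a=d-2$, $\varepsilon=4\alpha\le \tfrac43$, $b=d+2$ satisfies all the hypotheses of the lemma while $a+\varepsilon<d$. This is not a corner case — it occurs in the paper's actual applications (e.g.\ the estimate of $G_0v^4MG_0$, where $a=d-2$, $\varepsilon=4\alpha$, $b=d+2-$). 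When $a+\varepsilon\le d$ the two-factor sum $\sum_{n_1}|n-n_1|^{-a}|n_1|^{-\varepsilon}$ over all of $\Z^d$ diverges, so you cannot enlarge the $R_1$-sum to $\Z^d$ and invoke Lemma \ref{Lemma 2.1}; and your $R_3$ reduction to $\rho^{-(b-a)}\sum|n-n_1|^{-a}|n_1|^{-\varepsilon}|n_1-n'|^{-a}$ just reproduces the original problem with $b$ replaced by $a$, for which the hypothesis $a+\varepsilon>d$ would again be needed — so the iteration you sketch does not terminate. The one non-routine idea in the paper's proof is precisely the device for this regime: after freezing $|n_1-n'|^{-a}\lesssim|n-n'|^{-a}$ on the region $|n_1-n'|\ge\tfrac12|n-n'|$, the residual sum $\sum|n-n_1|^{-a}|n_1|^{-\varepsilon}|n_1-n'|^{-(b-a)}$ with $a\le b<d$ is split by H\"older's inequality with $p=b/a$, chosen so that both $pa$ and $q(b-a)$ lie in $(d-\varepsilon,d)$; each factor is then a two-index sum to which Lemma \ref{Lemma 2.1} applies. (The case $b>d$ is handled separately by splitting at $|n_1|=\tfrac12(|n|\wedge|n'|)$.) Your proposal contains no substitute for this interpolation step, so as written it fails whenever $a+\varepsilon\le d$.

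Apart from this, your overall architecture is close to the paper's: the paper also isolates $n_1\in\{0,n,n'\}$, then decomposes according to $|n_1-n|\ge\tfrac12|n-n'|$ versus $|n_1-n'|\ge\tfrac12|n-n'|$ (two overlapping regions rather than your three), and in the first region the bound $|n-n'|^{-a}|n'|^{-\min\{\varepsilon,\,b+\varepsilon-d\}}$ already implies the claim because $|n'|\ge|n|\wedge|n'|$ — your worry that "$|n'|^{-(\cdot)}\le|n|^{-(\cdot)}$ goes the wrong way" has the inequality backwards, so your $R_2$ discussion is more convoluted than necessary. The exponent comparison you were most worried about ($\min\{a,\varepsilon,a+\varepsilon-d\}$ versus $\mu$) is a symptom of the same underlying issue: when $a+\varepsilon-d\le 0$ that minimum is nonpositive and the putative bound is vacuous, which should have signalled that the reduction itself, not just the bookkeeping, needed a different idea.
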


\begin{rmk}
  \begin{itemize}
    \item[(1)] Especially, if  $0<\varepsilon<a=b<d,a+\varepsilon>d$,  then 
  \begin{equation}\label{2.19}
    \sum_{n_1\neq 0}\frac{1}{|n-n_1|^a|n_1|^{\varepsilon}|n_1-n'|^a}\lesssim_{a,\varepsilon,d}\frac{1}{|n-n'|^a (|n|\wedge|n'|)^{a+\varepsilon-d}}. 
  \end{equation}
 
   \item[(2)] If  $b=d$,  similar to  Remark  \ref{rmk 2.1}  (1),  we have 
  \[\sum_{n_1\in \Z^d}\frac{1}{|n-n_1|^a |n_1|^{\varepsilon} |n_1-n'|^d}\lesssim_{a,b,\varepsilon,d}\frac{1}{|n-n'|^a(|n|\wedge |n'|)^{\min\{a,\varepsilon\}-}}.\]
\end{itemize}
   
\end{rmk}

\begin{proof}
   We refer to  Appendix \ref{tecApp} for a detailed proof. 
 \end{proof}

\section{A generalized Khintchine inequality}\label{khisec}
This section is devoted to proving a generalized Khintchine inequality via  the {\it hypercontractivity} estimate (cf. e.g., \cite{Jan97,SS12,ODO14}), which plays an essential role in our estimates on the  Green's function.  In contrast, in \cite{Bou03}, Bourgain proposed  an  analogous inequality based on the  standard $L^2$-random {\it decoupling}. 
Our proof builds on Bonami's  lemma, which primarily focuses on Boolean functions estimates.

We first introduce Bonami's   lemma \cite{Bon70}. 

\begin{lem}[cf. \cite{Bon70, ODO14}]\label{Bonami}
  Let $m, s \in \N$  and let  $f(Y_1,Y_2,\cdots,Y_m)$  be  a real-valued  polynomial  in  i.i.d. Bernoulli random variables $Y_1,\cdots,Y_m\in\{\pm 1\}$ with  the degree $\deg (f)=s$. Then 
  \[\mathbb{E}(f^4)\leq 9^s (\mathbb{E}f^2)^2.\]
\end{lem}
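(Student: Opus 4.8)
\textbf{Proof proposal for Bonami's lemma (Lemma \ref{Bonami}).}

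The plan is to prove the inequality $\mathbb{E}(f^4)\leq 9^s(\mathbb{E}f^2)^2$ by induction on the number of variables $m$, treating the degree $s$ as the inductive parameter that must be tracked carefully. The base case $m=0$ is trivial ($f$ is constant), and $m=1$ is a direct computation: if $f(Y_1)=a+bY_1$ then $\mathbb{E}f^4=a^4+6a^2b^2+b^4\leq 9(a^2+b^2)^2=9(\mathbb{E}f^2)^2$, which already exhibits the constant $9$ and settles $s\leq 1$.

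For the inductive step I would single out the last variable and write $f(Y_1,\dots,Y_m)=g+Y_m h$, where $g=g(Y_1,\dots,Y_{m-1})$ has degree $\leq s$ and $h=h(Y_1,\dots,Y_{m-1})$ has degree $\leq s-1$; here $g$ and $h$ are independent of $Y_m$. Expanding and using $\mathbb{E}Y_m=\mathbb{E}Y_m^3=0$, $\mathbb{E}Y_m^2=\mathbb{E}Y_m^4=1$, one gets
\begin{align*}
\mathbb{E}(f^4)&=\mathbb{E}(g^4)+6\,\mathbb{E}(g^2h^2)+\mathbb{E}(h^4),\\
\mathbb{E}(f^2)&=\mathbb{E}(g^2)+\mathbb{E}(h^2).
\end{align*}
By the induction hypothesis (fewer variables), $\mathbb{E}(g^4)\leq 9^s(\mathbb{E}g^2)^2$ and $\mathbb{E}(h^4)\leq 9^{s-1}(\mathbb{E}h^2)^2$. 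The genuinely substantive point is the cross term $\mathbb{E}(g^2h^2)$: I would bound it via Cauchy--Schwarz as $\mathbb{E}(g^2h^2)\leq(\mathbb{E}g^4)^{1/2}(\mathbb{E}h^4)^{1/2}\leq (9^s)^{1/2}(9^{s-1})^{1/2}\,\mathbb{E}(g^2)\,\mathbb{E}(h^2)=9^{s-1/2}\,\mathbb{E}(g^2)\,\mathbb{E}(h^2)$. Substituting everything back,
\[
\mathbb{E}(f^4)\leq 9^s(\mathbb{E}g^2)^2+6\cdot 9^{s-1/2}\,\mathbb{E}(g^2)\mathbb{E}(h^2)+9^{s-1}(\mathbb{E}h^2)^2,
\]
and it remains to check this is $\leq 9^s\big(\mathbb{E}(g^2)+\mathbb{E}(h^2)\big)^2=9^s(\mathbb{E}g^2)^2+2\cdot 9^s\,\mathbb{E}(g^2)\mathbb{E}(h^2)+9^s(\mathbb{E}h^2)^2$. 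Comparing coefficients term by term, the $(\mathbb{E}g^2)^2$ terms match, the $(\mathbb{E}h^2)^2$ coefficient satisfies $9^{s-1}\leq 9^s$, and the cross term needs $6\cdot 9^{s-1/2}\leq 2\cdot 9^s$, i.e. $6/(2\cdot 9^{1/2})=6/6=1\leq 1$ — which holds with equality. So the induction closes.

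The main obstacle — really the only place any cleverness is needed — is the treatment of the cross term $6\,\mathbb{E}(g^2h^2)$: a naive bound loses too much, and one has to exploit that $\deg h\leq s-1$ (hence the sharper constant $9^{s-1}$ for $h$) precisely so that $6\cdot 9^{s-1/2}$ does not exceed $2\cdot 9^s$. This is exactly why the constant $9=3^2$ (rather than any smaller number) appears: the induction forces $6\leq 2\sqrt{c}$ with $c$ the per-degree constant, giving $c\geq 9$. An alternative, if one prefers to avoid the degree bookkeeping, is to prove the statement in the stronger two-function form $\mathbb{E}(g^2h^2)\leq 3^{\deg g+\deg h}(\mathbb{E}g^2)(\mathbb{E}h^2)$ directly by induction on the number of variables, from which the case $g=h=f$ recovers the claim; the inductive step is the same computation but the hypothesis is cleanly self-strengthening. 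Either route is routine once the cross-term constant is identified.
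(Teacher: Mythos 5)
Your proposal is correct and follows essentially the same route as the paper: the paper also splits off the last variable via $f = Y_m(D_mf) + (E_mf)$ (your $h$ and $g$), expands the fourth moment to get the $1$–$6$–$1$ coefficients, applies the inductive hypothesis with degree $s-1$ for the $D_mf$ part, and closes the cross term by Cauchy–Schwarz using exactly the $6\cdot 9^{s-1/2}\leq 2\cdot 9^{s}$ identity you isolate. The only differences are notational, plus your remark on the two-function variant, which the paper does not pursue.
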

\begin{proof}
 For completeness, we give a proof  here.   Define 
  \[(D_m f)(Y_1,Y_2,\cdots,Y_{m-1})=\frac{1}{2}(f(Y_1,\cdots,Y_{m-1},1)-f(Y_1,\cdots,Y_{m-1},-1)),\]
  \[(E_m f)(Y_1,Y_2,\cdots,Y_{m-1})=\frac{1}{2}(f(Y_1,\cdots,Y_{m-1},1)+f(Y_1,\cdots,Y_{m-1},-1)).\]
  Since $Y_1,\cdots, Y_m$ are i.i.d. Bernoulli random variables, we  know  that $Y_m$ is independent  of  $D_m f,E_m f$,  and  
  \begin{equation}\label{decomposition of Boolean function}
    f(Y_1,Y_2,\cdots,Y_m)=Y_m\cdot (D_m f) +(E_m f).
  \end{equation}
  The proof is based on an  induction on $m$. Indeed, when $m=0$, the polynomial $f$ is a constant and  Lemma \ref{Bonami} holds trivially. Now,  assume that Lemma \ref{Bonami}   holds for polynomials with  $m-1$ variables.  By using the decomposition \eqref{decomposition of Boolean function} and the independence property, we obtain 
  \begin{align*}
    \E(f^4) &= \E(Y_m\cdot D_m f +E_m f)^4 \\
       &=\E(Y_m^4)\E(D_m f)^4+ 4\E(Y_m^3)\E((D_m f)^3 \cdot (E_m f)) + 6\E(Y_m^2)\E((D_m f)^2 \cdot (E_m f)^2)\\
         &\ \ \  +4 \E(Y_m)\E((D_m f) \cdot (E_m f)^3) +\E(E_m f)^4\\
        &=\E(D_m f)^4+ 6\E((D_m f)^2 \cdot (E_m f)^2)+ \E(E_m f)^4.
  \end{align*}
  Similarly, 
  \begin{align*}
    \E(f^2)=\E(D_m f)^2+\E(E_m f)^2.
  \end{align*}
  Since   $f$ is a polynomial of degree $s$, $D_m f$ is a polynomial of degree $s-1$ and $E_m f$ is a polynomial of degree $s$. By  the induction assumption, we get  
  \begin{equation}\label{3.2}
    \E(D_m f)^4\leq 9^{s-1} (\E(D_m f)^2)^2,
  \end{equation}
  \begin{equation}\label{3.3}
    \E(E_m f)^4\leq 9^{s} (\E(E_m f)^2)^2.
  \end{equation}
Using the  Cauchy-Schwarz inequality implies 
  \begin{equation}\label{3.4}
    \E((D_m f)^2\cdot(E_m f)^2)\leq (\E(D_m f)^4) ^{\frac{1}{2}}\cdot (\E(E_m f)^4)^{\frac12}\leq \frac13 \cdot 9^s \E(D_m f)^2\cdot  \E(E_m f)^2.
  \end{equation}
 Combining  \eqref{3.2}, \eqref{3.3} and \eqref{3.4}  shows 
  \begin{align*}
    \E(f^4) &\leq 9^{s-1} (\E(D_m f)^2)^2 +2\cdot 9^s \E(D_m f)^2\cdot  \E(E_m f)^2 +9^{s} (\E(E_m f)^2)^2\\
     &\leq 9^s (\E f^2)^2. 
  \end{align*}
  This finishes the induction step (i.e., $m-1\to m$), and hence the proof. 
\end{proof}
As a  corollary of Lemma \ref{Bonami}, we have 
\begin{cor}\label{moment equivalence}
Under the assumptions of  Lemma \ref{Bonami}, we have for all $p\geq 1,$ 
\begin{align}\label{peqv}
\E_p|f| := (\E|f|^p)^{\frac{1}{p}}\lesssim_{p,s} \E_2|f|.
\end{align}
More generally, if  $\{Y_{n}\}_{n\in\Z^d}$ is  a sequences of  i.i.d. random Bernoulli variables and 
$$f=\sum_{n_1,\cdots, n_s\in\Z^d}a_{n_1,\cdots, n_s}Y_{n_1}\cdots Y_{n_s}\  {\rm with}\  a_{n_1,\cdots, n_s}\geq 0,$$ then the estimate   \eqref{peqv} remains true for this $f$. 

\end{cor}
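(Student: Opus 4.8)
\textbf{Proof proposal for Corollary \ref{moment equivalence}.}

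The plan is to deduce the $L^p \to L^2$ moment equivalence from Bonami's lemma (Lemma \ref{Bonami}) by a two-sided interpolation argument: one direction handles $p \geq 2$ by iterating the $L^4 \lesssim L^2$ bound, and the other handles $1 \leq p < 2$ by a reverse H\"older trick. First I would treat the finite-variable case \eqref{peqv}. For $p \le 2$ the inequality $\E_p|f| \le \E_2|f|$ is immediate from Jensen / the monotonicity of $L^p$ norms in a probability space, so the only content is $p > 2$. For $p \in (2,4]$, write $\E|f|^p \le (\E|f|^4)^{p/4-?}$... more cleanly: by log-convexity of $t \mapsto \log \E|f|^t$ (three-lines / H\"older), $\E_p|f|$ is controlled by a geometric mean of $\E_2|f|$ and $\E_4|f|$, and $\E_4|f| \le 9^{s/4}\,\E_2|f|$ by Lemma \ref{Bonami}; hence $\E_p|f| \lesssim_s \E_2|f|$ for all $p \le 4$. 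For general $p$, iterate: apply Bonami to $f$ to pass from degree-$s$ control at level $2$ to level $4$, then bootstrap through the dyadic scale $2, 4, 8, \dots, 2^k$ with $2^k \ge p$, picking up a constant $\prod_{j=1}^{k} 9^{s 2^{-j}} \le 9^{s}$ at each stage (the product of exponents telescopes), and interpolate once more between $2^{k-1}$ and $2^k$ to reach $p$. This yields $\E_p|f| \le C(p,s)\,\E_2|f|$ with $C$ depending only on $p$ and $s$, proving \eqref{peqv}.

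Next I would upgrade to the infinite-index statement $f = \sum_{n_1,\dots,n_s \in \Z^d} a_{n_1,\dots,n_s} Y_{n_1}\cdots Y_{n_s}$ with $a_{n_1,\dots,n_s}\ge 0$. The natural route is truncation plus a limiting argument. For $N \in \N$ let $f_N$ be the partial sum over $|n_i| \le N$ for all $i$; each $f_N$ is a genuine Boolean polynomial of degree $\le s$ (one must note that after collecting monomials—e.g. $Y_{n_i}^2 = 1$—the degree can only drop, so Lemma \ref{Bonami} applies with exponent $\le s$), so $\E_p|f_N| \lesssim_{p,s} \E_2|f_N|$ uniformly in $N$. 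The key monotonicity afforded by the sign condition $a \ge 0$ is that $\E|f_N|^2 = \sum a_{n_1,\dots,n_s}a_{m_1,\dots,m_s}\E(Y_{n_1}\cdots Y_{n_s}Y_{m_1}\cdots Y_{m_s})$ is a sum of \emph{nonnegative} terms (each expectation is $0$ or $1$), hence $\E_2|f_N| \uparrow \E_2|f| =: L < \infty$ (we may assume $L<\infty$, else there is nothing to prove). Then $\sup_N \E_p|f_N| \lesssim_{p,s} L < \infty$, so $\{f_N\}$ is bounded in $L^p$; since $f_N \to f$ almost surely (the series converges a.s. by the $L^2$ bound and the martingale/Kolmogorov three-series theorem, or simply because $f \in L^2$), Fatou's lemma gives $\E|f|^p \le \liminf_N \E|f_N|^p \lesssim_{p,s} L^p$, i.e. $\E_p|f| \lesssim_{p,s} \E_2|f|$.

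I expect the main obstacle to be bookkeeping rather than deep mathematics: precisely, (i) verifying that the constant in the dyadic iteration stays bounded by a fixed function of $(p,s)$ and does \emph{not} blow up as one climbs the dyadic ladder—this is exactly where the telescoping of the exponents $\sum_j s\, 2^{-j} \le s$ is essential—and (ii) the reduction of the infinite sum to finite Boolean polynomials, where one must be careful that repeated indices among $n_1,\dots,n_s$ do not inflate the degree beyond $s$ (they only deflate it, since $Y^2=1$), so that Bonami's lemma applies with a degree bound independent of $N$. The sign hypothesis $a_{n_1,\dots,n_s}\ge 0$ is used in exactly one place—to get monotone convergence of the $L^2$ norms of the truncations—and it is worth flagging that without it one would still get the conclusion whenever $f \in L^2$, but the clean truncation argument would require a little more care (cancellation could make $\E_2|f_N|$ non-monotone). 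Everything else is standard interpolation and Fatou.
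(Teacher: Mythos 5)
Your proposal is correct and follows essentially the same route as the paper: monotonicity of $L^p$ norms on a probability space for $p\le 2$, iterated application of Bonami's lemma along the dyadic scale $2,4,\dots,2^k$ followed by interpolation for $p>2$, and truncation to $f_N$ plus the nonnegativity of the coefficients and Fatou's lemma for the infinite-index case. One harmless bookkeeping slip: each dyadic step applies Bonami to $f^{2^j}$, a polynomial of degree $\le 2^j s$, so after extracting roots it contributes a fixed factor $9^{s/4}$ and the total constant grows like $9^{(k-1)s/4}$ rather than telescoping to $9^s$ — but this is immaterial since the corollary permits the constant to depend on $p$ and $s$, exactly as in the paper's $\lesssim_{k,s}$.
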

\begin{proof}
  If $1\leq p \leq  2$,  then we get by   H\"older inequality that 
  \[\E(|f|^p)\leq (\E|f|^2)^{\frac{p}{2}} \cdot (\E1)^{1-\frac{p}{2}} =(\E |f|^2)^{\frac{p}{2}}, \] 
  which implies \eqref{peqv} in this case. 
  
  If $p>2$, we first  consider the cases of $p=2^k,k=2,3,\cdots$. Note that $f^2$ is a polynomial of degree at most $2s$. By Lemma \ref{Bonami}, we have 
  \begin{align*}
    \E(f^8) = \E((f^2)^4) &\leq 9^{2s} (\E f^4)^2\\
       &\leq 9^{2s} (9^s (\E f^2)^2)^2\\
       &\lesssim_{s} (\E f^2)^{4}.
  \end{align*} 
   Repeatedly applying  Lemma \ref{Bonami}  yields  
  \[(\E f^{2^k})^{\frac{1}{2^k}}\lesssim_{k,s} (\E f^2)^{\frac{1}{2}}.\]
  Next,  using the standard interpolation inequality  gives (by $p\in [2^k, 2^{k+1}]$ if $p>2$)
  \[(\E |f|^{p})^{\frac{1}{p}}\lesssim_{p,s} (\E f^2)^{\frac{1}{2}}, \ p> 2. \]
  This proves \eqref{peqv} if $p>2$.

  Now, we  consider the   $\{Y_n\}_{n\in\Z^d}$ case. Denote for $N\geq1,$
  $$f_N=\sum_{|n_1|\leq N,\cdots, |n_s|\leq N}a_{n_1,\cdots, n_s}Y_{n_1}\cdots Y_{n_s}.$$
Then applying \eqref{peqv} to $f_N$ gives 
  \[\E_p|f_N| \lesssim_{p,s} \E_2|f_N| \leq \E_2|f|,\]
  where for  the last inequality, we used the  fact that  the  coefficient $a_{n_1,\cdots, n_s}\geq 0$  and 
  \[\E(Y_{n_1}^{d_1} Y_{n_2}^{d_2}\cdots Y_{n_k}^{d_k})=0 \ {\rm or} \ 1\ {\rm for}\ d_1,\cdots,d_k\in \N.\]
  So from Fatou's lemma, it   follows that $\E_p|f|\lesssim_{p,s} \E_2|f|, p\geq 1$.
\end{proof}
Next,  recall that $\{\omega_n\}_{n\in\Z^d}\in\{\pm 1\}^{\Z^d}$  is the i.i.d. random Bernoulli  variables. For a $s$-tuple $(n_1,n_2,\cdots,$ $n_s)$, we say that its randomness ``\textbf{cancels}'' if 
\[\mathbb{P}\big(\prod_{i=1}^{s}\omega_{n_i}=1\big)=1.\]
It's easy to see that the randomness of $(n_1,n_2,\cdots,n_s)$   cancels if and only if each $n_i$ ($1\leq i\leq s$)  is repeated  an  {\it even  number}  of times  in the $s$-tuple.  We say that $(n_1,n_2,\cdots,n_s)$ is ``\textbf{admissible}''  if for any  $1\leq s_1<s_2\leq s$, the randomness of sub-tuple $(n_{s_1},n_{s_1+1},\cdots,n_{s_2})$ does not cancel. We use the notation $\sum\limits^{(*)}\limits_{n_1,\cdots,n_s}$ to indicate a summation restricted to admissible $s$-tuples. 
We  then  introduce  the generalized Khintchine  inequality, which is a refined  version  of Lemma 2.2 of Bourgain  \cite{Bou03}.
\begin{lem}\label{decoupling}
Let $\{\omega_n\}_{n\in\Z^d}$ be a sequence of i.i.d. random Bernoulli variables. For $s\geq 1$ and $p\geq 2$, we have 
\begin{align}\label{Khintchine}
  \E_p\left|\sum_{n_1,\cdots,n_s}^{(*)} \omega_{n_1}\cdots\omega_{n_s} a^{(0)}_{n,n_1}a^{(1)}_{n_1,n_2}\cdots a^{(s)}_{n_s,n'}\right|\lesssim_{p,s} \left[\sum_{n_1,\cdots,n_s}|a^{(0)}_{n,n_1}a^{(1)}_{n_1,n_2}\cdots a^{(s)}_{n_s,n'}|^2   \right]^{\frac{1}{2}},
\end{align}  
where all $a_{m,n}^{(j)}\in\R.$
\end{lem}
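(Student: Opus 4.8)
The plan is to reduce the $L^p\to L^2$ bound to the $L^2\to L^2$ bound, and then prove the latter by exploiting the admissibility condition through a telescoping/conditioning argument over the last index $n_s$. First I would observe that the random variable
\[
F \;=\; \sum_{n_1,\dots,n_s}^{(*)} \omega_{n_1}\cdots\omega_{n_s}\, a^{(0)}_{n,n_1}a^{(1)}_{n_1,n_2}\cdots a^{(s)}_{n_s,n'}
\]
is a Boolean polynomial of degree at most $s$ in the i.i.d. Bernoulli variables $\{\omega_n\}$ (after collecting terms, each monomial $\prod_i \omega_{n_i}$ reduces to a squarefree monomial of degree $\le s$, since the admissibility constraint in particular forbids the full tuple's randomness from cancelling, but more importantly every monomial appearing has degree $\le s$). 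Hence Corollary \ref{moment equivalence} applies and gives $\E_p|F|\lesssim_{p,s}\E_2|F|$ for every $p\ge 2$; so it suffices to establish \eqref{Khintchine} with $p=2$, i.e. to bound $\E|F|^2$ by the stated sum of squares. (One subtlety: Corollary \ref{moment equivalence} as stated covers polynomials with a fixed finite number of variables, or infinite sums with nonnegative coefficients; here the coefficients need not be nonnegative, so I would first prove the $p=2$ bound for the finite truncation $F_N$ — restricting all $n_i$ to $|n_i|\le N$ — apply the moment equivalence to $F_N$, and pass to the limit via Fatou, exactly as in the proof of Corollary \ref{moment equivalence}.)

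For the $p=2$ bound, the key point is that admissibility makes the second moment essentially diagonal. Writing $\E|F_N|^2 = \sum_{(\vec n,\vec m)} \E[\omega_{n_1}\cdots\omega_{n_s}\omega_{m_1}\cdots\omega_{m_s}]\, a^{(0)}_{n,n_1}\cdots a^{(s)}_{n_s,n'}\,\overline{a^{(0)}_{n,m_1}\cdots a^{(s)}_{m_s,n'}}$ over pairs of admissible tuples, the expectation is $1$ exactly when the combined $2s$-tuple $(n_1,\dots,n_s,m_1,\dots,m_s)$ has every value occurring an even number of times, and $0$ otherwise. I would argue by induction on $s$ (conditioning on the last coordinate, in the spirit of the $D_m/E_m$ decomposition used for Bonami's lemma) that the admissibility of both $\vec n$ and $\vec m$ forces, whenever the expectation is nonzero, that $n_i = m_i$ for all $i$ — so the only surviving terms are the diagonal ones, giving $\E|F_N|^2 = \sum_{\vec n} |a^{(0)}_{n,n_1}\cdots a^{(s)}_{n_s,n'}|^2 \le \sum_{\vec n}|a^{(0)}_{n,n_1}\cdots a^{(s)}_{n_s,n'}|^2$, which is exactly the right-hand side. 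The inductive mechanism is: look at $n_s$ and $m_s$; the value $n_s$ must appear an even number of times in the union, so it is matched somewhere; if it is matched inside $\vec n$ (say $n_s = n_j$, $j<s$) or inside $\vec m$, one gets a short sub-tuple of a single admissible tuple whose randomness cancels, contradicting admissibility unless the match is forced to be $n_s=m_s$; after peeling off this matched pair one is left with admissible $(s-1)$-tuples (admissibility of a tuple is inherited by its initial segment) and applies the inductive hypothesis. I would spell out carefully that a sub-tuple $(n_{s_1},\dots,n_{s_2})$ "not cancelling" is exactly the statement that it is impossible for every value in it to repeat evenly, which is the combinatorial fact driving the contradictions.

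The main obstacle I anticipate is the bookkeeping in this induction: the matching of $n_s$ could in principle involve $m_i$ for $i<s$ rather than $m_s$, and ruling that out cleanly requires using admissibility of $\vec m$ at the right sub-tuple and may need me to track which coordinates have already been matched. A clean way to organize it is to process coordinates from the last index backward and maintain the invariant that, after matching $n_s\leftrightarrow m_s$, $n_{s-1}\leftrightarrow m_{s-1}$, etc., the yet-unmatched prefixes are still admissible tuples of equal length; the earlier indices can only match among themselves or across, and the "no cancelling sub-tuple" property repeatedly forces the across-match to be index-aligned. An alternative, perhaps slicker, route that avoids some of this is a martingale/orthogonality argument: order $\Z^d$ and write $F_N = \sum_{n_s} \omega_{n_s} b_{n_s}$ where $b_{n_s} = \sum^{(*)}_{n_1,\dots,n_{s-1}} \omega_{n_1}\cdots\omega_{n_{s-1}} a^{(0)}_{n,n_1}\cdots a^{(s)}_{n_s,n'}$, note by admissibility that $b_{n_s}$ does not involve $\omega_{n_s}$ (any tuple with $n_i = n_s$ for some $i<s$ together with the final $n_s$ would have a cancelling sub-tuple — wait, one must check the sub-tuple $(n_i,\dots,n_s)$ — and more care is needed), so $\E|F_N|^2 = \sum_{n_s} \E|b_{n_s}|^2$, and then iterate. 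Whichever bookkeeping scheme is cleanest, the conceptual content is identical: admissibility kills all off-diagonal contributions to the second moment, and Bonami/hypercontractivity then upgrades $L^2$ to $L^p$.
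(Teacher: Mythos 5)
Your reduction to $p=2$ via Bonami/hypercontractivity (with truncation and Fatou) is fine and is exactly what the paper does. The gap is in your $p=2$ argument: the combinatorial claim that drives it --- that for two admissible tuples $\vec n,\vec m$ with $\E[\omega_{n_1}\cdots\omega_{n_s}\omega_{m_1}\cdots\omega_{m_s}]=1$ one must have $n_i=m_i$ for all $i$, so that the second moment is exactly diagonal --- is false, already for $s=2$. Take $\vec n=(a,b)$ and $\vec m=(b,a)$ with $a\neq b$: both are admissible (for $s=2$ admissibility is just $n_1\neq n_2$), the combined $4$-tuple $(a,b,b,a)$ has every value with even multiplicity, yet $n_1\neq m_1$. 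Hence $\E|F|^2=\sum_{n_1\neq n_2}\bigl(|a_{n_1,n_2}|^2+a_{n_1,n_2}a_{n_2,n_1}\bigr)$, not the diagonal sum, and your claimed identity $\E|F_N|^2=\sum_{\vec n}|\cdots|^2$ fails. Worse, for $s\geq 3$ the off-diagonal pairings are not even permutations of a common value set: $\vec n=(a,b,a)$ and $\vec m=(c,b,c)$ with $a,b,c$ distinct are both admissible (the sub-tuples $(a,b)$, $(b,a)$, $(a,b,a)$ all retain the factor $\omega_b$), the combined $6$-tuple is fully even, and the values $a$ and $c$ that are internally paired within each tuple are completely unrelated. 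So the "inductive peeling from the last index" cannot work: the matching of $n_s$ genuinely need not be index-aligned, and admissibility does not produce the contradiction you rely on. The same example kills your alternative martingale route: with $n_3=a$ the coefficient $b_{n_3}=b_a$ contains the tuple $(a,b,a)$ and hence does depend on $\omega_{a}=\omega_{n_s}$ (admissibility only forbids the sub-tuple from \emph{fully} cancelling, not from containing a repeated value), so $\E|F|^2\neq\sum_{n_s}\E|b_{n_s}|^2$.

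What is missing is the machinery to control these genuinely off-diagonal contributions, and this is where the paper's proof does real work: it isolates the values $m_1,\dots,m_k$ of odd multiplicity within a single admissible $s$-tuple, splits the tuple into admissible sub-blocks $\nu^{(1)},\dots,\nu^{(l)}$ separated by those positions, uses $L^2$-orthogonality only in the variables $\omega_{m_1}\cdots\omega_{m_k}$ indexed by the \emph{set} $\{m_1,\dots,m_k\}$ (which forces a symmetrization over $S_k$ and a Cauchy--Schwarz to absorb the permutations --- precisely the $(a,b)$ versus $(b,a)$ phenomenon above), and then applies the induction hypothesis together with H\"older and Corollary \ref{moment equivalence} to the sub-blocks $\nu^{(i)}$ (which handles the internally paired values like $a$ versus $c$). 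The resulting bound is an inequality with a constant depending on $s$, not an identity. Your write-up correctly identifies that admissibility is the mechanism that makes the second moment controllable, but the way it enters is through this block decomposition and symmetrization, not through forcing index-aligned matchings; as written, your induction would terminate at a false base claim.
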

\begin{rmk}
 If  $s=1$, Lemma \ref{decoupling} is just the classical Khintchine  inequality.
\end{rmk}
\begin{proof}[Proof of Lemma \ref{decoupling}.]
  Without loss of generality, we can assume $a_{m,n}^{(j)}\geq 0$. Then by  Corollary \ref{moment equivalence}, it  suffices to prove \eqref{Khintchine}  for $p=2$, which will be completed by  induction on $s$ below.

  If $s=1$, then  by the orthogonality of $\{\omega_n\}_{n\in \Z^d}$ in $L^2$, we have
  \[  \E_2\left| \sum_{n_1\in \Z^d} \omega_{n_1} a^{(0)}_{n,n_1}a^{(1)}_{n_1,n'}\right|= \left[\sum_{n_1\in\Z^d}|a^{(0)}_{n,n_1}a^{(1)}_{n_1,n'}|^2   \right]^{\frac{1}{2}}.\]
  Now,  assume \eqref{Khintchine} holds  with $s$ replaced by $s'\leq s-1$ and $p=2$. Since in the summation ``$\sum^{(*)}$" no tuple $(n_1,\cdots,n_s)$'s randomness cancels, there are some  distinct   $m_1,m_2,\cdots,m_k$ (as a sub-tuple of $(n_1,\cdots,n_s)$), each of  which  is repeated an odd number of  times. Specify  all possible positions of those sits  as disjoint $I_1,I_2,\cdots,I_k$. That is to say, for $i=1,\cdots,k, $ 
  \[I_i=\{k:\ n_k=m_i\}\subset \{1,2,\cdots,s\}.\]
  After this specifying  of $I_1,\cdots,I_k$, the $s$-tuple $(n_1,\cdots, n_s)$ has the form of  
  \[(\nu^{(1)},m_{i_1},\cdots, \nu^{(2)},m_{i_2},\cdots),\]
  where $\nu^{(1)},\nu^{(2)},\cdots,\nu^{(l)}$ are admissible sub-tuples with indexes   determined by  $i_j\in I_j$ ($1\leq j\leq k$). By the Minkowski inequality, we obtain  
  \begin{align}\label{3.6}
     &\E_2\left| \sum_{n_1,\cdots,n_s}^{(*)} \omega_{n_1}\cdots\omega_{n_s}  a^{(0)}_{n,n_1}a^{(1)}_{n_1,n_2}\cdots a^{(s)}_{n_s,n'}\right|  \\
    \notag &\leq \sum_{I_1,\cdots,I_k}\E_2\left |\sum_{m_1,\cdots,m_k}\omega_{m_1}\cdots\omega_{m_k}   \underbrace{\left( \sum_{\nu^{(1)}=(n_{s_0},\cdots,n_{s_1})}^{(*)}\omega_{n_{s_0}}\cdots a_{m_{i_1},n_{s_0}}^{(s_0-1)}\cdots\right)\left(\sum_{\nu^{(2)}}^{(*)}\cdots\right)\cdots}_{A_{m_1,\cdots,m_k}} \right| \\
    \notag & = \sum_{I_1,\cdots,I_k} \E_2\left | \sum_{\{m_1,\cdots,m_k\} } \omega_{m_1}\cdots \omega_{m_k} (\sum_{\sigma\in S_k}A_{m_{\sigma(1)},\cdots,m_{\sigma(k)}}) \right|,
  \end{align}
  where $S_k, k\leq s$ denotes  the $k$-order permutation group, and $A_{m_1,\cdots,m_k}$ has  indeed   no randomness while each $\nu^{(i)}$ is admissible. Note that we have the orthogonality relation 
  \[\{m_1,\cdots,m_k\}\neq \{m_1',\cdots,m_k'\}\Rightarrow \E[(\omega_{m_1}\cdots\omega_{m_k})\cdot (\omega_{m_1'}\cdots\omega_{m_k'})]=0,\]
  \[\{m_1,\cdots,m_k\}= \{m_1',\cdots,m_k'\}\Rightarrow \E[(\omega_{m_1}\cdots\omega_{m_k})\cdot (\omega_{m_1'}\cdots\omega_{m_k'})]=1.\]
  Hence,
  \begin{align}
    \notag \eqref{3.6} &\leq \sum_{I_1,\cdots,I_k} \left[ \sum_{\{m_1,\cdots,m_k\}} (\sum_{\sigma\in S_k}A_{m_{\sigma(1)},\cdots,m_{\sigma(k)}})^2   \right]^{\frac12} \\
     \notag &\leq \sum_{I_1,\cdots,I_k} \left[ (\# S_k)\cdot \sum_{\{m_1,\cdots,m_k\}} \sum_{\sigma\in S_k}(A_{m_{\sigma(1)},\cdots,m_{\sigma(k)}})^2   \right]^{\frac12} \\
      \notag &\leq (\# S_s)^{\frac12} \cdot \sum_{I_1,\cdots,I_k} \left[  \sum_{m_1,\cdots,m_k} (A_{m_{1},\cdots,m_{k}})^2   \right]^{\frac12} \\
       \label{3.7}  &\leq (\# S_s)^{\frac12} \cdot \sum_{I_1,\cdots,I_k}  \left[  \sum_{m_1,\cdots,m_k}   \E\bigg | \big[\sum_{\nu^{(1)}}^{(*)}\cdots\big]\cdots \big[\sum_{\nu^{(l)}}^{(*)}\cdots\big]\bigg |^2     \right]^{\frac12},  
  \end{align}
 where  for the second inequality, we  apply the  Cauchy-Schwarz inequality, and for the third inequality,  we use  $k\leq s$. We continue to control  \eqref{3.7} by using H\"older's inequality and  Corollary \ref{moment equivalence}, and  get (since $l\leq s$)
  \begin{align}
   \notag \eqref{3.7} &\lesssim_s \sum_{I_1,\cdots,I_k}  \left[  \sum_{m_1,\cdots,m_k}   \bigg (\E_{2l}|\sum_{\nu^{(1)}}^{(*)}\cdots|\bigg )^2\cdots\bigg (\E_{2l}|\sum_{\nu^{(l)}}^{(*)}\cdots|\bigg )^2     \right]^{\frac12}\\
    \label{3.8}   &\lesssim_s\sum_{I_1,I_2,\cdots,I_k}  \left[  \sum_{m_1,\cdots,m_k}   \bigg (\E_2|\sum_{\nu^{(1)}}^{(*)} \cdots|\bigg )^2 \cdots \bigg (\E_2|\sum_{\nu^{(l)}}^{(*)}\cdots|\bigg )^2     \right]^{\frac12}. 
  \end{align}
  Finally, by the induction assumptions, we have 
  \[\left (\E_2 |\sum_{\nu^{(i)}}^{(*)}\cdots|  \right)^2\leq \sum_{\nu^{(i)}}|a^{(\cdots)}_{m_{\cdot},n_{\cdot}}\cdots|^2\]
  and thus,
  \begin{align*}
    \eqref{3.8} &\lesssim_s \sum_{I_1,\cdots,I_k} \left[  \sum_{m_1,\cdots,m_k}   \sum_{\nu^{(1)}}\cdots\sum_{\nu^{(l)}} |a^{(0)}_{n,n_1}a^{(1)}_{n_1,n_2}\cdots a^{(s)}_{n_s,n'}|^2 \right]^{\frac{1}{2}} \\
    &\lesssim_s  \left[\sum_{n_1,\cdots,n_s}|a^{(0)}_{n,n_1}a^{(1)}_{n_1,n_2}\cdots a^{(s)}_{n_s,n'}|^2 \right]^{\frac12},
  \end{align*}
where   for the third  inequality,   we enlarge the summation by recalling  $a^{(j)}_{m,n}\geq 0$. 
\end{proof}

\section{The 6th-order  renormalization}\label{6thsec}

In this section, we will introduce  a 6th-order renormalization result via iterating the resolvent identity. Previously, Bourgain \cite{Bou03} performed a 4th-order renormalization, which allowed him to construct extended states provided $\alpha>\frac13$.   

\subsection{The $4$th-order renormalization of Bourgain}\label{Bousub}

For convenience, we use the  notation  from \cite{Bou03}.  We first  recall  the 4th-order  renormalization result of Bourgain \cite{Bou03}. We have  
\[V(n)=V_{\omega}(n)=v_n\omega_n,v_n=\kappa |n|^{-\alpha},\]
\[\sigma= G_0(0,0), \rho=2\sigma^3-\hat{K}(0),\hat{K}(\xi)=\hat{G_0}* \hat{G_0}* \hat{G_0}(\xi),\]
\[\widetilde{G_0}(n,n')=G_0(n,n')-\sigma \delta_{n,n'},\]
where $G_0$ is the Green's function of $-\Delta.$
Define further 
\[M_4(n_1,n_2) = \widetilde{G_0}(n_1,n_2)^3, W_4= v^2 M_4 v^2,\]
\[M= M_4 -(\sigma^3-\rho) , W=v^2 M v^2=W_4-(\sigma^3-\rho) v^4, \]
where $W$ arises  from the 4-tuples $(n_1,n_2,n_1,n_2),n_1\neq n_2\in\Z^d$.
\begin{figure}[htp]
  \centering

  \tikzset{every picture/.style={line width=0.75pt}} 

\begin{tikzpicture}[x=0.75pt,y=0.75pt,yscale=-0.7,xscale=0.7]

\draw  [fill={rgb, 255:black, 208; green, 2; blue, 27 }  ,fill opacity=1 ] (171,167) .. controls (171,164.32) and (173.17,162.14) .. (175.86,162.14) .. controls (178.54,162.14) and (180.71,164.32) .. (180.71,167) .. controls (180.71,169.68) and (178.54,171.86) .. (175.86,171.86) .. controls (173.17,171.86) and (171,169.68) .. (171,167) -- cycle ;
\draw  [fill={rgb, 255:black, 208; green, 2; blue, 27 }  ,fill opacity=1 ] (289,165.86) .. controls (289,163.17) and (291.17,161) .. (293.86,161) .. controls (296.54,161) and (298.71,163.17) .. (298.71,165.86) .. controls (298.71,168.54) and (296.54,170.71) .. (293.86,170.71) .. controls (291.17,170.71) and (289,168.54) .. (289,165.86) -- cycle ;
\draw  [fill={rgb, 255:black, 208; green, 2; blue, 27 }  ,fill opacity=1 ] (407,165.86) .. controls (407,163.17) and (409.17,161) .. (411.86,161) .. controls (414.54,161) and (416.71,163.17) .. (416.71,165.86) .. controls (416.71,168.54) and (414.54,170.71) .. (411.86,170.71) .. controls (409.17,170.71) and (407,168.54) .. (407,165.86) -- cycle ;
\draw  [fill={rgb, 255:black, 208; green, 2; blue, 27 }  ,fill opacity=1 ] (525,164.71) .. controls (525,162.03) and (527.17,159.86) .. (529.86,159.86) .. controls (532.54,159.86) and (534.71,162.03) .. (534.71,164.71) .. controls (534.71,167.4) and (532.54,169.57) .. (529.86,169.57) .. controls (527.17,169.57) and (525,167.4) .. (525,164.71) -- cycle ;
\draw    (175.86,167) .. controls (232,110.36) and (344.86,95.36) .. (411.86,165.86) ;
\draw    (293.86,165.86) .. controls (350,109.21) and (462.86,94.21) .. (529.86,164.71) ;

\draw (171,189) node [anchor=north west][inner sep=0.75pt]   [align=left] {$n_1$};
\draw (287,188) node [anchor=north west][inner sep=0.75pt]   [align=left] {$n_2$};
\draw (407,189) node [anchor=north west][inner sep=0.75pt]   [align=left] {$n_1$};
\draw (524,188) node [anchor=north west][inner sep=0.75pt]   [align=left] {$n_2$};
\end{tikzpicture}
\end{figure}\\ 
 The symbol of $M_4$ is   \[\hat{M_4}(\xi)=(\hat{G_0}-\sigma)* (\hat{G_0}-\sigma)*(\hat{G_0}-\sigma)(\xi)=\hat{K}(\xi)-\sigma^3.\]
Also, we have the  diagonal operator
\[D_4(n_1)=v^2_{n_1}\bigg[ \sum_{n_2\in\Z^d}v^2_{n_2} \wtg(n_1,n_2)^4\bigg],\]
where    $VD_4 (n_1)$ arises from the 5-tuples $(n_1,n_2,n_1,n_2,n_1),n_1\neq n_2\in\Z^d.$ 
\begin{figure}[htp]
  \centering

\tikzset{every picture/.style={line width=0.75pt}} 

\begin{tikzpicture}[x=0.75pt,y=0.75pt,yscale=-0.7,xscale=0.7]

\draw  [fill={rgb, 255:black, 208; green, 2; blue, 27 }  ,fill opacity=1 ] (112,157) .. controls (112,154.32) and (114.17,152.14) .. (116.86,152.14) .. controls (119.54,152.14) and (121.71,154.32) .. (121.71,157) .. controls (121.71,159.68) and (119.54,161.86) .. (116.86,161.86) .. controls (114.17,161.86) and (112,159.68) .. (112,157) -- cycle ;
\draw  [fill={rgb, 255:black, 208; green, 2; blue, 27 }  ,fill opacity=1 ] (230,155.86) .. controls (230,153.17) and (232.17,151) .. (234.86,151) .. controls (237.54,151) and (239.71,153.17) .. (239.71,155.86) .. controls (239.71,158.54) and (237.54,160.71) .. (234.86,160.71) .. controls (232.17,160.71) and (230,158.54) .. (230,155.86) -- cycle ;
\draw  [fill={rgb, 255:black, 208; green, 2; blue, 27 }  ,fill opacity=1 ] (348,155.86) .. controls (348,153.17) and (350.17,151) .. (352.86,151) .. controls (355.54,151) and (357.71,153.17) .. (357.71,155.86) .. controls (357.71,158.54) and (355.54,160.71) .. (352.86,160.71) .. controls (350.17,160.71) and (348,158.54) .. (348,155.86) -- cycle ;
\draw  [fill={rgb, 255:black, 208; green, 2; blue, 27 }  ,fill opacity=1 ] (466,154.71) .. controls (466,152.03) and (468.17,149.86) .. (470.86,149.86) .. controls (473.54,149.86) and (475.71,152.03) .. (475.71,154.71) .. controls (475.71,157.4) and (473.54,159.57) .. (470.86,159.57) .. controls (468.17,159.57) and (466,157.4) .. (466,154.71) -- cycle ;
\draw    (116.86,157) .. controls (173,100.36) and (285.86,85.36) .. (352.86,155.86) ;
\draw    (234.86,155.86) .. controls (267.67,122.75) and (319.87,103.87) .. (370.95,107.97) .. controls (407.26,110.89) and (443.02,125.42) .. (470.86,154.71) ;
\draw    (352.86,155.86) .. controls (409,99.21) and (521.86,84.21) .. (588.86,154.71) ;
\draw  [fill={rgb, 255:black, 208; green, 2; blue, 27 }  ,fill opacity=1 ] (584,154.71) .. controls (584,152.03) and (586.17,149.86) .. (588.86,149.86) .. controls (591.54,149.86) and (593.71,152.03) .. (593.71,154.71) .. controls (593.71,157.4) and (591.54,159.57) .. (588.86,159.57) .. controls (586.17,159.57) and (584,157.4) .. (584,154.71) -- cycle ;

\draw (112,179) node [anchor=north west][inner sep=0.75pt]   [align=left] {$n_1$};
\draw (228,178) node [anchor=north west][inner sep=0.75pt]   [align=left] {$n_2$};
\draw (348,179) node [anchor=north west][inner sep=0.75pt]   [align=left] {$n_1$};
\draw (465,178) node [anchor=north west][inner sep=0.75pt]   [align=left] {$n_2$};
\draw (584,178) node [anchor=north west][inner sep=0.75pt]   [align=left] {$n_1$};
\end{tikzpicture}
\end{figure}\\
 Recalling Lemma \ref{Khintchine}, we use the notation
$$(A_0V_\omega A_1V_\omega\cdots A_s)^{(*)}$$
to indicate that, when writing out the matrix product as a sum over multi-indices, we do restrict the sum to the {\it admissible}  multi-indices  generated by    $\omega=\{\omega_n\}_{n\in\Z^d}$.  Define  the renormalized potentials
\begin{align}\label{V4}
V^{(0)}_{\omega}=V_{\omega},\ V^{(2)}_{\omega}=V_{\omega}+\sigma v^2,\ V^{(4)}_{\omega}=V_{\omega}+\sigma v^2-\rho v^4,
\end{align}
and the corresponding renormalized random Schr\"odinger operator 
$$H^{(4)}=-\Delta+V_\omega^{(4)}\delta_{n,n'}.$$
Denote by $G$ the Green's function of  $H^{(4)}$, namely,  $G=(H^{(4)})^{-1}.$
Below,  we hide  the dependence of  potentials on $\omega$  for simplicity.  Moreover, we label the terms which have no randomness with  a box, i.e.,  $\boxed{{\rm TERM}}$.  By ``order'' of remaining terms we mean that  in $v$. Then 
iterating the resolvent identity 
$$G=G_0-GV^{(4)}G_0$$
and taking account of  cancellations in the expansion lead to 
\begin{align*}
G&=\mathcal{R}_5+G X_6,
\end{align*}
where $X_6$ denotes the $6$th-order remaining terms (are all ``admissible'') and 
\begin{align}\label{R5}
\mathcal{R}_5=&G_0-G_0VG_0+(G_0VG_0VG_0)^{(*)} \\
\nonumber&+\sigma^2 G_0v^2 V G_0-(G_0VG_0VG_0VG_0)^{(*)}\\
\nonumber&  -\sigma^2(G_0v^2VG_0 VG_0)^{(*)}-\sigma^2 (G_0 V G_0 v^2V G_0)^{\admissible} 
 \nonumber+(G_0 VG_0 VG_0 VG_0 VG_0)^{\admissible} \\
\nonumber& + \boxed{G_0 W G_0}\\
 \nonumber& +2\sigma \rho G_0 v^4VG_0 +G_0 V D_4 G_0  -G_0 V \widetilde{G_0} W G_0-G_0 W\widetilde{G_0} V G_0  +\sigma^2 (G_0 v^2V G_0 V G_0 VG_0)^{\admissible}\\
\nonumber &  +\sigma^2 (G_0 V G_0 v^2 V G_0 V G_0)^{\admissible}  +\sigma^2 (G_0 V G_0 V G_0 v^2V G_0)^{(*)} - (G_0 V G_0 V G_0 V G_0 V G_0 V G_0)^{\admissible}. 
\end{align}
The above  5th-order remaining terms are mainly  obtained from \cite{Bou03} (cf. (5.5)--(5.6)). Here we only  make  additional simplifications  for those terms.

\subsection{The 6th-order renormalization}\label{8thsub}

In the following, we aim to perform further $6$th-order renormalization through the renormalized potential given by 
\begin{align}\label{6-order renormalization of potential}
 \nonumber V^{(6)}_{\omega}&= V_{\omega} +\sigma v^2-\rho v^4 +(4\eta-3\sigma^5+5\sigma^2 \rho)v^6+ R_6,\\
  &=V_\omega^{(4)}+(4\eta-3\sigma^5+5\sigma^2 \rho)v^6+ R_6,
\end{align}
where $V_\omega^{(4)}$ is given by \eqref{V4} and 
\begin{align*}
\eta&=(\hat{G_0}-\sigma)*\big((\hat{G_0}-\sigma)*(\hat{G_0}-\sigma)\big)^2(0),\\
R_6 (n_1)&= v^2_{n_1}\cdot \big(\widetilde{G_0} W \widetilde{G_0}  \big)(n_1,n_1),
\end{align*}
arising  from the 6-tuples $(n_1,n_2,n_3,n_2,n_3,n_1)$ with $n_1\neq n_2\neq n_3\in\Z^d.$ 
\begin{figure}[htbp]
  \centering

\tikzset{every picture/.style={line width=0.75pt}} 

\begin{tikzpicture}[x=0.75pt,y=0.75pt,yscale=-0.5,xscale=0.5]

\draw  [fill={rgb, 255:black, 208; green, 2; blue, 27 }  ,fill opacity=1 ] (40,156) .. controls (40,153.32) and (42.17,151.14) .. (44.86,151.14) .. controls (47.54,151.14) and (49.71,153.32) .. (49.71,156) .. controls (49.71,158.68) and (47.54,160.86) .. (44.86,160.86) .. controls (42.17,160.86) and (40,158.68) .. (40,156) -- cycle ;
\draw  [fill={rgb, 255:black, 208; green, 2; blue, 27 }  ,fill opacity=1 ] (158,154.86) .. controls (158,152.17) and (160.17,150) .. (162.86,150) .. controls (165.54,150) and (167.71,152.17) .. (167.71,154.86) .. controls (167.71,157.54) and (165.54,159.71) .. (162.86,159.71) .. controls (160.17,159.71) and (158,157.54) .. (158,154.86) -- cycle ;
\draw  [fill={rgb, 255:black, 208; green, 2; blue, 27 }  ,fill opacity=1 ] (276,154.86) .. controls (276,152.17) and (278.17,150) .. (280.86,150) .. controls (283.54,150) and (285.71,152.17) .. (285.71,154.86) .. controls (285.71,157.54) and (283.54,159.71) .. (280.86,159.71) .. controls (278.17,159.71) and (276,157.54) .. (276,154.86) -- cycle ;
\draw  [fill={rgb, 255:black, 208; green, 2; blue, 27 }  ,fill opacity=1 ] (394,153.71) .. controls (394,151.03) and (396.17,148.86) .. (398.86,148.86) .. controls (401.54,148.86) and (403.71,151.03) .. (403.71,153.71) .. controls (403.71,156.4) and (401.54,158.57) .. (398.86,158.57) .. controls (396.17,158.57) and (394,156.4) .. (394,153.71) -- cycle ;
\draw    (162.86,154.86) .. controls (195.67,121.75) and (247.87,102.87) .. (298.95,106.97) .. controls (335.26,109.89) and (371.02,124.42) .. (398.86,153.71) ;
\draw    (280.86,154.86) .. controls (337,98.21) and (449.86,83.21) .. (516.86,153.71) ;
\draw  [fill={rgb, 255:black, 208; green, 2; blue, 27 }  ,fill opacity=1 ] (512,153.71) .. controls (512,151.03) and (514.17,148.86) .. (516.86,148.86) .. controls (519.54,148.86) and (521.71,151.03) .. (521.71,153.71) .. controls (521.71,156.4) and (519.54,158.57) .. (516.86,158.57) .. controls (514.17,158.57) and (512,156.4) .. (512,153.71) -- cycle ;
\draw  [fill={rgb, 255:black, 208; green, 2; blue, 27 }  ,fill opacity=1 ] (625.14,152.57) .. controls (625.14,149.89) and (627.32,147.71) .. (630,147.71) .. controls (632.68,147.71) and (634.86,149.89) .. (634.86,152.57) .. controls (634.86,155.25) and (632.68,157.43) .. (630,157.43) .. controls (627.32,157.43) and (625.14,155.25) .. (625.14,152.57) -- cycle ;
\draw    (44.86,156) .. controls (185,23.36) and (510,20.36) .. (630,152.57) ;

\draw (40,178) node [anchor=north west][inner sep=0.75pt]   [align=left] {$n_1$};
\draw (156,177) node [anchor=north west][inner sep=0.75pt]   [align=left] {$n_2$};
\draw (276,178) node [anchor=north west][inner sep=0.75pt]   [align=left] {$n_3$};
\draw (393,177) node [anchor=north west][inner sep=0.75pt]   [align=left] {$n_2$};
\draw (512,177) node [anchor=north west][inner sep=0.75pt]   [align=left] {$n_3$};
\draw (625,175) node [anchor=north west][inner sep=0.75pt]   [align=left] {$n_1$};

\end{tikzpicture}

\end{figure}\\ 
Indeed, if  
\[H=-\Delta+\widetilde{V},\ G=H^{-1},\]
then we obtain by  iterating the resolvent identity
\begin{equation}\label{resolvent identity}
  G=G_0-G\widetilde{V}G_0
\end{equation}
that 
\begin{align}\label{8 order expansion}
   G=&G_0 -G_0\widetilde{V}G_0+G_0\widetilde{V}G_0\widetilde{V}G_0-G_0\widetilde{V}G_0\widetilde{V}G_0\widetilde{V}G_0+G_0\widetilde{V}G_0\widetilde{V}G_0\widetilde{V}G_0\widetilde{V}G_0\\
   \notag &-G_0\widetilde{V}G_0\widetilde{V}G_0\widetilde{V}G_0\widetilde{V}G_0\widetilde{V}G_0+G_0\widetilde{V}G_0\widetilde{V}G_0\widetilde{V}G_0\widetilde{V}G_0\widetilde{V}G_0\widetilde{V}G_0\\
   \notag &-G_0\widetilde{V}G_0\widetilde{V}G_0\widetilde{V}G_0\widetilde{V}G_0\widetilde{V}G_0\widetilde{V}G_0\widetilde{V}G_0+G\widetilde{V}G_0\widetilde{V}G_0\widetilde{V}G_0\widetilde{V}G_0\widetilde{V}G_0\widetilde{V}G_0\widetilde{V}G_0\widetilde{V}G_0. 
\end{align}

Before presenting our main theorem in this section, we first introduce some notation and computations on $r$th-order ($r\geq 8$) remaining terms in the 6th-order renormalization scheme. We will repeatedly use this argument to do some rearrangements, which will play an important role in  both  Green's function estimates and the construction of extended states in the rest of the paper.

Let $H=-\Delta+V_\omega^{(6)}$ with $V_\omega^{(6)}$ given by \eqref{6-order renormalization of potential}, and let $G=H^{-1}$.  Denote 
$$\Delta_{2k} V= V^{(2k)}_{\omega}-V^{(2k-2)}_{\omega},$$
which  is exactly the $2k$th  order renormalized  potential, where $V_\omega^{(2k)}$ ($0\leq k\leq 3$) are defined  by \eqref{V4} and \eqref{6-order renormalization of potential}. 
So we get 
\begin{equation*}
\widetilde{V}=V^{(6)}_{\omega}=V+\Delta_2 V +\Delta_4 V +\Delta_6 V. 
\end{equation*}
From now on, we use  the following notation:  denote by  $\boxed{G_0,i},0\leq i \leq 7$  the exactly  $i$th  order remaining terms , and by  $\boxed{G,i},0\leq i \leq 7$  the  terms  with the first $G_0$ in $\boxed{G_0,i}$  replaced by $G$.  For example, we have 
\begin{align*}
&\boxed{G_0,2}=(G_0V G_0 VG_0)^{\admissible},\ \boxed{G,2}=(G VG_0 VG_0)^{\admissible},\\
&\boxed{G_0,3}=\sigma^2 G_0v^2 V G_0-(G_0VG_0VG_0VG_0)^{(*)},\\
&\boxed{G,3}=\sigma^2 Gv^2 V G_0-(GVG_0VG_0VG_0)^{(*)}.
\end{align*}
 We can write  
  \begin{align}\label{D.6}
    G-G_0=\sum_{i=1}^{7}\boxed{G_0,i}+(r{\rm th \ order \ remaining \ terms})\ (r\geq 8).
  \end{align}
From now on, we label the  $r$th-order terms with  $r\geq 8$ by a \uwave{$\qquad$}.  We begin with an important lemma.  
\begin{lem}\label{8thlem}
  For $2\leq i\leq 7$, we have 
  \begin{align}\label{D.2}
    \boxed{G_0,i} =-G_0 V \boxed{G_0,i-1}-G_0 \Delta_2 V \boxed{G_0,i-2}-\cdots -G_0 \Delta_{\lfloor i\rfloor_{e}} V \boxed{G_0,i-\lfloor i\rfloor_{e}},
  \end{align}
  where $\lfloor i\rfloor_e$ denotes  the biggest even number less than $i$. Similarly, 
  \begin{align}\label{D.3}
    \boxed{G,i} =-G V \boxed{G_0,i-1}-G \Delta_2 V \boxed{G_0,i-2}-\cdots -G \Delta_{\lfloor i\rfloor_{e}} V \boxed{G_0,i-\lfloor i\rfloor_{e}}. 
  \end{align}
Moreover, we have 
\begin{align}\label{8thterm}
  G
      =&\sum_{i=0}^{7}\boxed{G_0,i}\\
     \nonumber &-\uwave{\text{$G V \boxed{G_0,7}$}} -\uwave{\text{$G\Delta_2 V\cdot\sum_{i=6}^{7}\boxed{G_0,i}$}}-\uwave{\text{$G\Delta_4 V\cdot\sum_{i=4}^{7}\boxed{G_0,i}$}}-\uwave{\text{$G\Delta_6 V\cdot\sum_{i=2}^{7}\boxed{G_0,i}$}}. 
\end{align}

\end{lem}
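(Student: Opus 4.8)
The three displayed identities \eqref{D.2}, \eqref{D.3}, \eqref{8thterm} are all bookkeeping statements about the Born series \eqref{8 order expansion}, so the strategy is purely combinatorial/algebraic: expand, regroup, and match. First I would fix the notation precisely: each $\boxed{G_0,i}$ is by definition the sum of all order-$i$ (in $v$) \emph{admissible} remaining terms produced from the renormalized Born expansion for $H=-\Delta+V^{(6)}_\omega$, with the convention that the leftmost resolvent is $G_0$; replacing that leftmost $G_0$ by $G$ gives $\boxed{G,i}$. The key structural input is the single-step resolvent identity $G=G_0-G\widetilde V G_0$ with $\widetilde V = V+\Delta_2V+\Delta_4V+\Delta_6V$, where $\Delta_{2k}V=\mathcal O(v^{2k})$. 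Iterating this is what generates \eqref{8 order expansion}; the point of the lemma is that the same iteration, read one layer at a time, gives a recursion on the order index.

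For \eqref{D.2}: I would argue that any admissible remaining term of order $i\ge 2$ arises by peeling off the first potential factor. Starting from $G=G_0-G_0\widetilde V G_0-\cdots$ (or, at the level of the already-once-renormalized expansion $\mathcal R_5+GX_6$), the first ``$\widetilde V$'' inserted after the leading $G_0$ contributes exactly one of the pieces $V,\Delta_2V,\Delta_4V,\Delta_6V$, of respective orders $1,2,4,6$; the rest of the string is then, by definition, an order-$(i-1)$, $(i-2)$, $(i-4)$, or $(i-6)$ admissible remaining term with leading $G_0$, i.e. $\boxed{G_0,i-1}$, etc. Summing over which piece was peeled off, and noting that only even shifts $\ge 2$ survive beyond the first factor (since $\Delta_{2k}V$ starts at order $2$ and the maximal even order one can subtract from $i$ and still land in $\{0,\dots,i-1\}$ is $\lfloor i\rfloor_e$), yields precisely the right-hand side of \eqref{D.2}. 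One must check that admissibility is preserved under this peeling — this is where the ``non-admissible terms automatically offset'' phenomenon alluded to in the introduction is used, and it is the only non-formal point. The identity \eqref{D.3} is then immediate: replacing the outermost $G_0$ by $G$ in \eqref{D.2} commutes with the whole argument since that $G_0$ is never touched by the peeling.

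For \eqref{8thterm}: start from \eqref{D.6}, $G-G_0=\sum_{i=1}^7\boxed{G_0,i}+(\text{$r$th-order terms, }r\ge8)$, and I would compute the $r\ge 8$ part explicitly. The $r\ge 8$ remainder is exactly the ``$G\widetilde V(\text{order }\ge 7)$'' tail in \eqref{8 order expansion}; peeling off its first potential factor as above and using \eqref{D.3} (with $G$ in front) to identify each resulting block, the tail splits as $\boxed{G,8}+\boxed{G,9}+\cdots$, but grouped by which piece of $\widetilde V$ was peeled: $GV$ acting on everything of order $\ge 7$ (only $\boxed{G_0,7}$ qualifies, since order-$8$ and higher with leading $G_0$ are themselves higher-order remainders already absorbed), $G\Delta_2V$ acting on order $\ge 6$, i.e. $\sum_{i=6}^7\boxed{G_0,i}$, $G\Delta_4V$ on $\sum_{i=4}^7\boxed{G_0,i}$, and $G\Delta_6V$ on $\sum_{i=2}^7\boxed{G_0,i}$. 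Collecting signs (each peeled potential carries a $-1$ from the resolvent identity) gives the four underwaved terms of \eqref{8thterm}, and adding back $G_0+\sum_{i=1}^7\boxed{G_0,i}=\sum_{i=0}^7\boxed{G_0,i}$ finishes it.

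\textbf{Main obstacle.} The routine part is the sign/degree bookkeeping. The genuine difficulty — and the step I would spend the most care on — is verifying that truncating the Born series at the ``$G\widetilde V(\text{order}\ge7)$'' stage and re-expanding produces \emph{exactly} the admissible blocks $\boxed{G_0,i}$ with no spurious non-admissible pieces left over: one has to invoke that the non-admissible strings cancel among themselves order by order (the offset mechanism), so that restricting each $\boxed{G_0,j}$ to admissible tuples loses nothing. Making this rigorous is precisely what the graph-representation computations in Appendices \ref{App6th} and \ref{App7th} are for, so in the proof I would state the cancellation as the induction hypothesis feeding \eqref{D.2} and defer its verification to those appendices.
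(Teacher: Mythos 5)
Your proposal follows essentially the same route as the paper: both \eqref{D.2} and \eqref{D.3} come from peeling the leftmost potential factor off the resolvent identity $G=G_0-G_0\widetilde{V}G$ and matching orders, and \eqref{8thterm} is obtained by iterating this together with the companion identity $G=G_0-G\widetilde{V}G_0$ (the paper's \eqref{D.7}) so as to collect precisely the products $G\,(\text{piece of }\widetilde{V})\,\boxed{G_0,j}$ of total order $\geq 8$ with $j\leq 7$ --- exactly the four groups you list. One remark: the ``main obstacle'' you single out (admissibility and the offset of non-admissible terms) plays no role in this lemma, which is a purely order-by-order algebraic recursion on the Born series; that cancellation mechanism is needed only to compute what $\boxed{G_0,6}$ and $\boxed{G_0,7}$ actually are, i.e.\ in Theorem \ref{6ththm} and Appendices \ref{App6th}--\ref{App7th}.
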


\begin{proof}
  When $i=2$, \eqref{D.3} can be verified directly.
 By the resolvent identity, for $G=(-\Delta+\widetilde{V})^{-1}$ ($\widetilde{V}=V^{(6)}$), we have 
  \begin{align*}
    G=G_0-G_0\widetilde{V}G=G_0-G_0\widetilde{V}G_0+G_0\widetilde{V}(G_0-G).
  \end{align*}
If  $i\geq 3$, the  $i$th-order terms  can only be  generated by $G_0\widetilde{V}(G_0-G)$.  From  
  \begin{align}\label{D.5}
    G_0\widetilde{V}(G_0-G) =&-G_0 V (G-G_0)-G_0 \Delta_2 V(G-G_0) \\
        \notag         & -G_0\Delta_4 V (G-G_0)-G_0 \Delta_6 V (G-G_0)
  \end{align}
   and by  substituting  \eqref{D.6} into \eqref{D.5} to  extracting  the $i$th order terms, it follows that \eqref{D.2} holds true. Then, replacing  the first $G_0$ in all terms in \eqref{D.2} implies  \eqref{D.3}.

Next, the resolvent identity also has the form of  $G=G_0-G\widetilde{V}G_0$. This implies  
\begin{align}\label{D.7}
  \boxed{G,i}&=\boxed{G_0,i}-G\widetilde{V}\boxed{G_0,i}\\
  \notag &=\boxed{G_0,i}-G V \boxed{G_0,i}-G\Delta_2 V \boxed{G_0,i}-G\Delta_4 V \boxed{G_0,i}-G\Delta_6 V \boxed{G_0,i}. 
\end{align}
Note that $\boxed{G_0,0}=G_0,\boxed{G,1}=-GVG_0$.  Using the resolvent identity yields 
\begin{align*}
  G&=G_0-G\widetilde{V}G_0\\
   &=G_0-GVG_0-G\Delta_2 V G_0-G\Delta_4 V G_0-G\Delta_6 VG_0\\
   &=\boxed{G_0,0}+\boxed{G,1}-G\Delta_2 V\boxed{G_0,0}-G\Delta_4 V\boxed{G_0,0}-G\Delta_6 V\boxed{G_0,0}\\
   &\overset{{\rm by}\ \eqref{D.7}}{=}
   \boxed{G_0,0}+\boxed{G_0,1}+(-G V\boxed{G_0,1}-G\Delta_2 V\boxed{G_0,0})\\
   &\qquad \qquad  -G\Delta_2 V\boxed{G_0,1}-G\Delta_4 V\boxed{G_0,1}-G\Delta_6 V\boxed{G_0,1}\\
   &\qquad \qquad  -G\Delta_4 V\boxed{G_0,0}-G\Delta_6 V\boxed{G_0,0}\\
   &\overset{{\rm by}\ \eqref{D.3}}{=}\boxed{G_0,0}+\boxed{G_0,1}+\boxed{G,2}\\
   &\qquad \qquad  -G\Delta_2 V\boxed{G_0,1}-G\Delta_4 V\boxed{G_0,1}-G\Delta_6 V\boxed{G_0,1}\\
   &\qquad \qquad  -G\Delta_4 V\boxed{G_0,0}-G\Delta_6 V\boxed{G_0,0}\\
   &\overset{{\rm by}\ \eqref{D.7}}{=}
   \boxed{G_0,0}+\boxed{G_0,1}+\boxed{G_0,2}\\
   &\qquad \qquad +(-G V\boxed{G_0,2}-G\Delta_2 V\boxed{G_0,1}) \\
   &\qquad \qquad  -G\Delta_2 V\boxed{G_0,2}-G\Delta_4 V\boxed{G_0,2}-\uwave{\text{$G\Delta_6 V\boxed{G_0,2}$}}\\
   &\qquad \qquad  -G\Delta_4 V\boxed{G_0,1}-G\Delta_6 V\boxed{G_0,1}\\
   &\qquad \qquad  -G\Delta_4 V\boxed{G_0,0}-G\Delta_6 V\boxed{G_0,0}\\
   &\overset{{\rm by}\ \eqref{D.3}}{=}\boxed{G_0,0}+\boxed{G_0,1}+\boxed{G_0,2}+\boxed{G,3}\\
   &\qquad \qquad  -G\Delta_2 V\boxed{G_0,2}-G\Delta_4 V\boxed{G_0,2}-\uwave{\text{$G\Delta_6 V\boxed{G_0,2}$}}\\
   &\qquad \qquad  -G\Delta_4 V\boxed{G_0,1}-G\Delta_6 V\boxed{G_0,1}\\
   &\qquad \qquad  -G\Delta_4 V\boxed{G_0,0}-G\Delta_6 V\boxed{G_0,0}\\
   &\overset{{\rm by}\ \eqref{D.7}}{=}\boxed{G_0,0}+\boxed{G_0,1}+\boxed{G_0,2}+\boxed{G_0,3}\\
   &\qquad -G V\boxed{G_0,3} -G\Delta_2 V\boxed{G_0,3}-\uwave{\text{$G\Delta_4 V\boxed{G_0,3}$}}-\uwave{\text{$G\Delta_6 V\boxed{G_0,3}$}}\\
   &\qquad \qquad  -G\Delta_2 V\boxed{G_0,2}-G\Delta_4 V\boxed{G_0,2}-\uwave{\text{$G\Delta_6 V\boxed{G_0,2}$}}\\
   &\qquad \qquad  -G\Delta_4 V\boxed{G_0,1}-G\Delta_6 V\boxed{G_0,1}\\
   &\qquad \qquad  -G\Delta_4 V\boxed{G_0,0}-G\Delta_6 V\boxed{G_0,0}\\
   &\overset{{\rm via\ reusing \ \eqref{D.3}\ and \ \eqref{D.7}}}{\cdots}\\
   &=\sum_{i=0}^{7}\boxed{G_0,i}-\uwave{\text{$G V \boxed{G_0,7}$}} -\uwave{\text{$G\Delta_2 V\cdot\sum_{i=6}^{7}\boxed{G_0,i}$}}-\uwave{\text{$G\Delta_4 V\cdot\sum_{i=4}^{7}\boxed{G_0,i}$}}-\uwave{\text{$G\Delta_6 V\cdot\sum_{i=2}^{7}\boxed{G_0,i}$}}.
\end{align*}
This proves  \eqref{8thterm}.
\end{proof}
 
Our main theorem in this section is 
\begin{thm}\label{6ththm}
Let $H=-\Delta+V_\omega^{(6)}$ with $V_\omega^{(6)}$ given by \eqref{6-order renormalization of potential}, and let $G=H^{-1}$. Then 
\begin{align}
G&=\mathcal{R}_5+\boxed{G_0,6}+\boxed{G_0,7}+GB\\
\label{resolvent}&:= A+GB,
\end{align}
where $\mathcal R_5$ is given by \eqref{R5}  and
\begin{itemize}
  \item[(i)]  {\bf (The 6th-order remaining terms)}
      \begin{align}
        \notag \boxed{G_0,6}=& \\
       \label{4.4}    &2\sigma\rho \big((G_0 v^4 VG_0 V G_0)^{\admissible} + (G_0  VG_0  v^4 V G_0)^{\admissible}  \big)      +\sigma^4 (G_0 v^2 VG_0 v^2 VG_0)^{\admissible} \\
        \label{4.5}   &- \sigma^2\big(     (G_0 v^2 V G_0 V G_0 V G_0 VG_0)^{\admissible} + (G_0 V G_0 v^2 V G_0 V G_0 VG_0)^{\admissible} \\
       \notag    & + (G_0 V G_0 V G_0 v^2 V G_0 VG_0)^{\admissible}        +  (G_0 V G_0 V G_0 V G_0 v^2 VG_0)^{\admissible}        \big)\\
         \label{4.6}  &+ (G_0 V G_0 V G_0 VG_0 VG_0 V G_0 VG_0)^{\admissible} \\
          \label{4.7} &+\big(   (G_0 W \widetilde{G_0} V G_0 V G_0)^{\admissible}  +(G_0 V \wtg W \wtg V G_0)^{\admissible} +(G_0 V G_0 V \wtg W G_0)^{\admissible}      \big)\\
          \label{4.8} &-\big( (G_0 VD_4 G_0 V G_0)^{\admissible}+(G_0 V G_0 VD_4 G_0)^{\admissible}   \big)\\
        \label{4.9}   &+ 4\boxed{ G_0 C G_0}-2\sigma^2(\boxed{G_0 v^2 W G_0}+\boxed{G_0 W v^2 G_0}),
      \end{align}
      where the new type of non-random operator in the above representation is 
      \begin{align}  \label{nonconvolution term}
        C(n_1,n_3)=v^2_{n_1}v^2_{n_3}\wtg(n_1,n_3) \sum_{n_2}\wtg(n_1,n_2)^2 v^2_{n_2} \wtg(n_2,n_3)^2-\eta\delta_{n_1,n_3} v^6_{n_1}
      \end{align}  
      arising from the 6-tuples,  such as $(n_1,n_2,n_3,n_1,n_2,n_3),n_1\neq n_2\neq n_3\in\Z^d$ (there are also   tuples of other forms producing \eqref{nonconvolution term}, of which the details  can be found in the Appendix \ref{App6th}).\\
      \begin{figure}[htbp]
        \centering

\tikzset{every picture/.style={line width=0.75pt}} 

\begin{tikzpicture}[x=0.75pt,y=0.75pt,yscale=-0.7,xscale=0.7]

\draw  [fill={rgb, 255:black, 208; green, 2; blue, 27 }  ,fill opacity=1 ] (57.89,151.22) .. controls (57.89,148.41) and (59.82,146.14) .. (62.21,146.14) .. controls (64.59,146.14) and (66.52,148.41) .. (66.52,151.22) .. controls (66.52,154.02) and (64.59,156.29) .. (62.21,156.29) .. controls (59.82,156.29) and (57.89,154.02) .. (57.89,151.22) -- cycle ;
\draw  [fill={rgb, 255:black, 208; green, 2; blue, 27 }  ,fill opacity=1 ] (162.79,150.02) .. controls (162.79,147.22) and (164.72,144.95) .. (167.11,144.95) .. controls (169.49,144.95) and (171.42,147.22) .. (171.42,150.02) .. controls (171.42,152.82) and (169.49,155.09) .. (167.11,155.09) .. controls (164.72,155.09) and (162.79,152.82) .. (162.79,150.02) -- cycle ;
\draw  [fill={rgb, 255:black, 208; green, 2; blue, 27 }  ,fill opacity=1 ] (267.69,150.17) .. controls (267.69,147.37) and (269.62,145.1) .. (272,145.1) .. controls (274.39,145.1) and (276.32,147.37) .. (276.32,150.17) .. controls (276.32,152.97) and (274.39,155.24) .. (272,155.24) .. controls (269.62,155.24) and (267.69,152.97) .. (267.69,150.17) -- cycle ;
\draw  [fill={rgb, 255:black, 208; green, 2; blue, 27 }  ,fill opacity=1 ] (372.58,148.83) .. controls (372.58,146.03) and (374.52,143.76) .. (376.9,143.76) .. controls (379.29,143.76) and (381.22,146.03) .. (381.22,148.83) .. controls (381.22,151.63) and (379.29,153.9) .. (376.9,153.9) .. controls (374.52,153.9) and (372.58,151.63) .. (372.58,148.83) -- cycle ;
\draw  [fill={rgb, 255:black, 208; green, 2; blue, 27 }  ,fill opacity=1 ] (477.48,148.83) .. controls (477.48,146.03) and (479.42,143.76) .. (481.8,143.76) .. controls (484.18,143.76) and (486.12,146.03) .. (486.12,148.83) .. controls (486.12,151.63) and (484.18,153.9) .. (481.8,153.9) .. controls (479.42,153.9) and (477.48,151.63) .. (477.48,148.83) -- cycle ;
\draw  [fill={rgb, 255:black, 208; green, 2; blue, 27 }  ,fill opacity=1 ] (582.38,147.78) .. controls (582.38,144.98) and (584.31,142.71) .. (586.7,142.71) .. controls (589.08,142.71) and (591.02,144.98) .. (591.02,147.78) .. controls (591.02,150.59) and (589.08,152.86) .. (586.7,152.86) .. controls (584.31,152.86) and (582.38,150.59) .. (582.38,147.78) -- cycle ;
\draw    (62.21,151.22) .. controls (122.71,87.29) and (313.71,83.29) .. (376.9,148.83) ;
\draw    (167.11,151.22) .. controls (227.61,87.29) and (418.61,83.29) .. (481.8,148.83) ;
\draw    (272,150.17) .. controls (332.51,86.24) and (523.51,82.24) .. (586.7,147.78) ;

\draw (57,175) node [anchor=north west][inner sep=0.75pt]   [align=left] {$n_1$};
\draw (163,173) node [anchor=north west][inner sep=0.75pt]   [align=left] {$n_2$};
\draw (268,174) node [anchor=north west][inner sep=0.75pt]   [align=left] {$n_3$};
\draw (373,172) node [anchor=north west][inner sep=0.75pt]   [align=left] {$n_1$};
\draw (476,171) node [anchor=north west][inner sep=0.75pt]   [align=left] {$n_2$};
\draw (582,171) node [anchor=north west][inner sep=0.75pt]   [align=left] {$n_3$};

\end{tikzpicture}

      \end{figure}

  \item[(ii)] {\bf (The 7th-order remaining terms)}  
    \begin{align} 
      \notag \boxed{G_0,7}=& \\
     \label{4.11} &2\sigma G_0 V R_6 G_0 +(8\eta\sigma -7\sigma^6+12\sigma^3\rho)G_0 v^6 V G_0\\
      \label{4.12} &-2\sigma\rho \big( (G_0 v^4 VG_0 VG_0 VG_0)^{\admissible}+ (G_0 VG_0 v^4 VG_0 VG_0)^{\admissible} \\
                \notag   &+(G_0  VG_0 VG_0 v^4 VG_0)^{\admissible}       \big)\\
      \label{4.13} &-\sigma^4\big((G_0 v^2 VG_0 v^2 VG_0 VG_0)^{\admissible}+(G_0 v^2 VG_0 VG_0 v^2 VG_0)^{\admissible} \\
            \notag    &+(G_0  VG_0 v^2 VG_0 v^2 VG_0)^{\admissible}\big)\\
       \label{4.14}  &+\sigma^2 \big((G_0 v^2 VG_0 VG_0 VG_0 VG_0 VG_0)^{\admissible}+(G_0  VG_0 v^2 VG_0 VG_0 VG_0 VG_0)^{\admissible} \\
         \notag & +(G_0  VG_0 VG_0 v^2 VG_0 VG_0 VG_0)^{\admissible}+(G_0 VG_0 VG_0 VG_0 v^2 VG_0 VG_0)^{\admissible} \\
         \notag &  +(G_0  VG_0 VG_0 VG_0 VG_0 v^2 VG_0)^{\admissible}\big)  \\  
 \label{4.15} &+\sigma^2(G_0 W \wtg v^2V G_0 +G_0 v^2 V \wtg W G_0) \\
     \label{4.16} &+2\sigma^2 (G_0 W v^2 \wtg VG_0+ G_0 v^2 W \wtg V G_0 +G_0 V \wtg v^2 W G_0+ G_0 V \wtg W v^2 G_0) \\
     \label{4.17} &-3\sigma ^2 G_0 v^2 V D_4 G_0 -2\sigma^2 G_0 V D_6^{(1)} G_0 \\
     \label{4.18} &-(G_0 VG_0 VG_0 VG_0 VG_0 VG_0 VG_0 VG_0)^{\admissible}\\
     \label{4.19} &-\big((G_0 W \wtg V G_0 V G_0 VG_0)^{\admissible}+ (G_0 V \wtg W \wtg V G_0 VG_0)^{\admissible} \\
      \notag &\qquad + (G_0 V G_0 V \wtg W \wtg VG_0)^{\admissible}+(G_0 V G_0 V G_0 V \wtg WG_0)^{\admissible} \big)\\
     \label{4.20} &+((G_0 V D_4 G_0 V G_0V G_0)^{\admissible} +(G_0 V G_0 VD_4 G_0V G_0)^{\admissible}+(G_0 V G_0 V G_0 VD_4 G_0)^{\admissible})\\
     \label{4.21} &+(G_0 v^2 M_4 v^2 V M_4 v^2G_0)-G_0 D_7 G_0 \\
     \label{4.22} &-4(G_0C\wtg V G_0+G_0 V\wtg C G_0) \\
     \label{4.23} &+4 G_0 V D^{(2)}_6 G_0 \\
     \label{4.24} &+(G_0 V S G_0 +G_0 S^{\top} V G_0), 
    \end{align}
   where the new type of  operators (as compared with the $4$th-order renormalization of \cite{Bou03}) in the  above representation  are 
    \begin{equation*}
      D_6^{(1)}(n_1)=v^2_{n_1}\bigg[ \sum_{n_2\in\Z^d}v^4_{n_2} \wtg(n_1,n_2)^4\bigg]
    \end{equation*}
    arising from the 7-tuples $(n_1,n_2,n_2,n_2,n_1,n_2,n_1),(n_1,n_2,n_1,n_2,n_2,n_2,n_1),n_1\neq n_2\in \Z^d$,  and {(we emphasize that $D_7$ is a random diagonal operator)}
     \begin{equation*}
      D_7 = v^4_{n_1} \bigg[\sum_{n_2\in \Z^d} v^3_{n_2}\omega_{n_2} \wtg(n_1,n_2)^6  \bigg]
     \end{equation*}
    arising from the 7-tuples $(n_1,n_2,n_2,n_2,n_1,n_2,n_1),(n_1,n_2,n_1,n_2,n_1,n_2,n_1),n_1\neq n_2\in \Z^d$,  and 
    \begin{equation*}
      D_6^{(2)}(n_1)= v^2_{n_1}\sum_{n_2,n_3\in\Z^d}\bigg[  v^2_{n_2 } v^2_{n_3} \wtg(n_1,n_2)^2 \wtg(n_1,n_3)^2\wtg(n_2,n_3)^2  \bigg]
    \end{equation*}
    arising from the 7-tuples such as $(n_1,n_2,n_3,n_1,n_2,n_3,n_1),n_1\neq n_2\neq n_3\in \Z^d$ (there are also other tuples producing this term), and 
    \begin{equation*}
      S(n_1,n_3)=v^2_{n_1}v^2_{n_3}\wtg (n_1,n_3)^2 \sum_{n_2} \wtg(n_1,n_2)^3 v^2_{n_2} \wtg(n_2,n_3)
    \end{equation*}
    arising from the 7-tuple $(n_1,n_2,n_1,n_2,n_3,n_1,n_3),n_1\neq n_2\neq n_3\in \Z^d$, with $S^{\top}$ denoting  the transposed operator of $S$.
    
   \item[(iii)]{\bf (The 8th-order remaining terms)} 
    \begin{align}\label{B}
  &B= -\uwave{\text{$V \boxed{G_0,7}$}} -\uwave{\text{$\Delta_2 V\cdot\sum_{i=6}^{7}\boxed{G_0,i}$}}-\uwave{\text{$\Delta_4 V\cdot\sum_{i=4}^{7}\boxed{G_0,i}$}}-\uwave{\text{$\Delta_6 V\cdot\sum_{i=2}^{7}\boxed{G_0,i}$}}.
\end{align}

   \end{itemize}
\end{thm}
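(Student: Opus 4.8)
The plan is to prove Theorem \ref{6ththm} as a bookkeeping result: unfold the resolvent identity to 8th order, collect and cancel terms order-by-order using the renormalized potential \eqref{6-order renormalization of potential}, and package the $r\geq 8$ remainder via Lemma \ref{8thlem}. The structure of the proof is dictated by the decomposition $\widetilde V = V + \Delta_2 V + \Delta_4 V + \Delta_6 V$ together with the identity \eqref{8thterm}, which already isolates the 8th-order remaining terms as $GB$ with $B$ given by \eqref{B}. So part (iii) is essentially immediate once \eqref{8thterm} is invoked, and the entire content is the explicit computation of $\boxed{G_0,6}$ and $\boxed{G_0,7}$, i.e.\ parts (i) and (ii).

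First I would fix the convention that $\mathcal R_5 = \sum_{i=0}^{5}\boxed{G_0,i}$, which requires matching the expansion \eqref{R5} of Bourgain against the grouping by order $i$; since $\mathcal R_5$ is quoted from \cite{Bou03} (with minor simplifications), I would treat this identification as given and reduce the theorem to the two equalities for $\boxed{G_0,6}$ and $\boxed{G_0,7}$. Next I would apply the recursion \eqref{D.2} from Lemma \ref{8thlem}: for $i=6$,
\[
\boxed{G_0,6} = -G_0 V \boxed{G_0,5} - G_0\Delta_2 V\boxed{G_0,4} - G_0\Delta_4 V \boxed{G_0,3} - G_0\Delta_6 V\boxed{G_0,2},
\]
and similarly for $i=7$ with $\lfloor 7\rfloor_e = 6$. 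Each $\boxed{G_0,j}$ for $j\leq 5$ is known from $\mathcal R_5$, and each $\Delta_{2k}V$ is read off from \eqref{V4}, \eqref{6-order renormalization of potential}: $\Delta_2 V = \sigma v^2$, $\Delta_4 V = -\rho v^4$, $\Delta_6 V = (4\eta - 3\sigma^5 + 5\sigma^2\rho)v^6 + R_6$. Substituting and expanding the matrix products as sums over multi-indices, one must then sort the resulting terms into: purely admissible random terms (the $(\cdots)^{\admissible}$ pieces in \eqref{4.4}–\eqref{4.8}, \eqref{4.12}–\eqref{4.21} etc.), the non-random (boxed) terms $\boxed{G_0 C G_0}$, $\boxed{G_0 v^2 W G_0}$ and their 7th-order analogues $D_6^{(1)}, D_6^{(2)}, S$, and—crucially—the non-admissible random terms that must offset. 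The graph-representation computations deferred to Appendices \ref{App6th} and \ref{App7th} are exactly where the tuple bookkeeping (which tuples $(n_1,\dots,n_6)$ or $(n_1,\dots,n_7)$ contribute to $C$, to $W$, to $D_4$, etc., and with what combinatorial coefficient) is carried out; I would cite those appendices for the detailed coefficient matching rather than redo it inline.

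The main obstacle is verifying that all \emph{non-admissible} random terms cancel. When one expands $\boxed{G_0,5}$ and $\boxed{G_0,4}$ (which themselves contain $(\cdots)^{\admissible}$ restrictions) and multiplies by a further $V$ or $\sigma v^2$, the product of an admissible string with one more random factor need not be admissible; collinear index coincidences reintroduce non-admissible strings, and one must show these are precisely compensated by the diagonal renormalization terms $R_6$, $D_4$, $D_6^{(1)}$, $D_6^{(2)}$, $D_7$ and by the constants $\eta$, $\rho$, $\sigma$ hidden inside $\Delta_{2k}V$. Concretely: the $-\eta\delta_{n_1,n_3}v_{n_1}^6$ subtraction in \eqref{nonconvolution term} and the $(4\eta - 3\sigma^5 + 5\sigma^2\rho)v^6$ term in $\Delta_6 V$ are tuned so that, after collecting, no uncontrollable $G_0 v^{2k} G_0$-type term and no non-admissible random term survives at 6th order; the analogous tuning at 7th order produces the messier list \eqref{4.11}–\eqref{4.24}. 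I would organize this cancellation check by grouping terms according to their "skeleton" — the pattern of repeated indices — and, for each skeleton that would otherwise be non-admissible, exhibit the matching renormalization contribution with the correct sign and multiplicity, referring to the graph-theoretic iteration relations of Appendix \ref{App6th}/\ref{App7th} for the multiplicities. Once every non-admissible skeleton is accounted for and every $G_0 v^{\text{even}} G_0$ term is seen to be absorbed by $R_6$ or the $v^6$ coefficient, what remains is exactly the claimed lists \eqref{4.4}–\eqref{4.9} and \eqref{4.11}–\eqref{4.24}, and part (iii) follows from \eqref{8thterm} by substituting $\Delta_{2k}V$ explicitly, completing the proof.
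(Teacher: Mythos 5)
Your overall strategy --- reduce parts (i)--(ii) to the combinatorial computation of $\boxed{G_0,6}$ and $\boxed{G_0,7}$ and obtain part (iii) from Lemma \ref{8thlem} --- is exactly the paper's, whose proof of this theorem consists of invoking Appendices \ref{App6th}--\ref{App7th} for the 6th- and 7th-order terms and \eqref{8thterm} for $B$. However, the one formula you commit to explicitly is wrong, and not cosmetically. Matching orders in $v$ (each $\Delta_{2k}V$ has order $2k$), the recursion for $i=6$ must read
\begin{align*}
\boxed{G_0,6}=-G_0V\boxed{G_0,5}-G_0\Delta_2V\boxed{G_0,4}-G_0\Delta_4V\boxed{G_0,2}-G_0\Delta_6V\boxed{G_0,0}.
\end{align*}
Your version pairs $\Delta_4V$ with $\boxed{G_0,3}$ (a 7th-order product) and $\Delta_6V$ with $\boxed{G_0,2}$ (an 8th-order product, which in fact belongs to $B$; see \eqref{B}), and it omits $-G_0\Delta_6V\boxed{G_0,0}=-G_0\Delta_6VG_0$. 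That omitted term is precisely the counterterm that absorbs the uncontrollable non-random contributions $\bigl((\sigma^5+\rho\sigma^2)-4\sigma^2(\sigma^3-\rho)+4\eta\bigr)G_0v^6G_0$ and $G_0R_6G_0$ produced by index coincidences in the other three products; without it no amount of sorting can yield \eqref{4.9}, and the tuning of \eqref{6-order renormalization of potential} that you correctly identify as the heart of the matter never enters the computation. (The paper's own statement of \eqref{D.2} invites this misreading: ``biggest even number less than $i$'' should be ``not exceeding $i$'', as the worked case $i=2$ inside the proof of Lemma \ref{8thlem} shows.) The same systematic shift would corrupt your $i=7$ computation.

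A second, smaller caveat concerns your plan to ``cite the appendices for the detailed coefficient matching.'' Appendices \ref{App6th}--\ref{App7th} do not iterate the recursion on the already-sorted boxes $\boxed{G_0,j}$, $j\le 5$; they extract the 6th- and 7th-order terms directly from the Born series \eqref{8 order expansion} with $\widetilde V=V^{(4)}$ substituted (the raw sums $\eqref{A1}\sim\eqref{A6}$), classify summation tuples by characteristic graphs, and thereby \emph{derive} the $v^6$- and $R_6$-renormalization rather than verify it. So the multiplicities computed there are attached to raw Born-series monomials, not to products of the form $G_0\Delta_{2k}V\boxed{G_0,i-2k}$ with admissibility restrictions already imposed; importing them into your recursion-based bookkeeping requires a translation you have not supplied. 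Either adopt the appendices' organization wholesale, or carry out the coincidence analysis for each product yourself. With the corrected recursion and that understanding, the plan is sound.
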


\begin{rmk}
 In \eqref{resolvent}, we  can  also rewrite 
   \begin{equation}\label{A}
    A=\sum_{i=0}^{7}\boxed{G_0,i},
\end{equation}
 which   is independent of $G.$
\end{rmk}

\begin{proof}[Proof of Theorem \ref{6ththm}]
The computations of $\boxed{G_0,6},\ \boxed{G_0,7}$ are based on certain graph representations, of which the details  can be found in Appendixes \ref{App6th} and \ref{App7th}. Once those computations were finished, the derivation  of $B$ just  follows directly from Lemma \ref{8thlem} (cf. \eqref{8thterm}). 
\end{proof}

\section{Green's function estimates: Proof of Theorem \ref{greenthm}}\label{thm1sec}
In this section, we aim to establish the Green's function estimates and complete the proof of Theorem \ref{greenthm}.  The proof relies on the scheme introduced by Bourgain \cite{Bou03} for dealing with the 4th-order renormalization: For the admissible terms in the expansion (cf. Theorem \ref{6ththm}), one can use  the hypercontractivity estimates (cf. Lemma \ref{decoupling}) to deduce the probabilistic bounds. For the non-random terms,  we can take advantage of both  the convolution regularization  argument and the symmetrical difference  trick to obtain the desired estimates. However, in the present 6th-order renormalization context, a new type of non-random operator emerges, which is not a symmetrical combination of diagonal operators and convolutional operators. To address this difficulty, we introduce an approach mainly based on a partial symmetrical difference reduction, rearrangement, and the application of the fractional Gagliardo-Nirenberg inequality \cite{BM18}. 

 From now on,  we assume $d\geq 5, \frac{1}{4}<\alpha\leq \frac{1}{3}$, since the case of $\alpha>\frac{1}{3}$ has been handled by Bourgain  \cite{Bou03}.  Recalling \eqref{resolvent}, it is necessary to control remaining terms of orders up to $8$.

\subsection{Estimates on $i$th-order  remaining terms for $i\leq 5$}
Now we begin with controlling lower orders  (i.e., less than $5$) remaining terms. We have 
\begin{thm}\label{0-5 remaining term expactation}
For $0\leq i\leq 5$ and $p\geq 1$, we have 
\begin{equation}\label{i order remaining expactation}
  \mathbb{E}_p\left |\boxed{G_0,i}(n,n')\right |\lesssim_{d,p,\alpha}\kappa^i \frac{1}{(|n|\wedge|n'|)^{i\alpha}|n-n'|^{d-2}}. 
\end{equation}
\end{thm}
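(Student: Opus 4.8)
The plan is to prove \eqref{i order remaining expactation} by going term by term through the explicit list making up $\boxed{G_0,i}$, as read off from $\mathcal{R}_5$ in \eqref{R5}, for each $i$ with $0\le i\le 5$. Throughout, $p$ enters only through the constant: the case $1\le p<2$ reduces to $p=2$ by H\"older's inequality, Lemma \ref{decoupling} is stated for all $p\ge2$, and the non-random terms below have $p$-independent $\E_p$, so it is enough to keep track of $p=2$ in the random estimates. The case $i=0$ is \eqref{critical decay of G_0}. For $1\le i\le 5$ the terms split into two classes: the \emph{admissible random terms} --- those of the form $(G_0VG_0\cdots VG_0)^{\admissible}$, possibly with diagonal factors $v^2,v^4$ inserted and possibly with a single $\wtg$ or $D_4$ factor adjacent to a (unique) random $V$ --- and the \emph{renormalization terms} assembled from $W=v^2Mv^2$, namely $\boxed{G_0WG_0}$ at order $4$ and $2\sigma\rho\,G_0v^4VG_0$, $G_0VD_4G_0$, $G_0V\wtg WG_0$, $G_0W\wtg VG_0$ at order $5$.

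For an admissible random term I first apply the generalized Khintchine inequality (Lemma \ref{decoupling}): writing the matrix product as a sum over admissible multi-indices and pulling out the $\omega$'s, Lemma \ref{decoupling} bounds its $L^2$-norm by the deterministic $\ell^2$-sum
\[
\Bigl[\sum_{n_1,\dots,n_s}\frac{v_{n_1}^{2r_1}\cdots v_{n_s}^{2r_s}}{|n-n_1|^{2(d-2)}\,|n_1-n_2|^{2(d-2)}\cdots|n_s-n'|^{2(d-2)}}\Bigr]^{\frac12},
\]
where $r_1+\dots+r_s=i$ records the total power of $v$ (a plain $V$ contributes $r_j=1$, a block $v^2V$ contributes $r_j=3$, and so on). Each $G_0^2$-factor carries exponent $2(d-2)>d$ (this is the role of $d\ge5$), and each site $n_j$ an $\varepsilon=2r_j\alpha$-weight; summing out $n_s,n_{s-1},\dots,n_1$ one at a time --- using Lemma \ref{Lemma 2.2} at each site carrying a $v$-weight and Lemma \ref{Lemma 2.1} with $a=b=2(d-2)$ at the sites in between (which loses nothing since $2(d-2)>d$) --- and feeding the running minimum forward via $(|n_j|\wedge|n'|)^{-t\alpha}\le|n_j|^{-t\alpha}+|n'|^{-t\alpha}$, one arrives at $\kappa^{2i}\,|n-n'|^{-2(d-2)}\,(|n|\wedge|n'|)^{-2i\alpha}$ for the $\ell^2$-sum, and hence at \eqref{i order remaining expactation} after taking the square root. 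The assumptions $d\ge5$ and $\alpha\le\tfrac13$ are used precisely to ensure that in every invocation of Lemma \ref{Lemma 2.2} the minimum $\min\{\varepsilon,a,\varepsilon+b-d\}$ is attained at the $\alpha$-term (at each stage the accumulated weight is at most $2i\alpha\le\tfrac{10}{3}<2(d-2)$, and $\varepsilon<\varepsilon+b-d$), so no logarithmic loss occurs and the $|n-n'|^{-(d-2)}$ factor survives intact. A term with an extra $\wtg$ (resp.\ $D_4$) next to the random $V$ is dealt with in the same way, after one further application of Lemma \ref{Lemma 2.1} (resp.\ after the preliminary bound $|D_4(n_1)|\lesssim\kappa^4|n_1|^{-4\alpha}$, again from Lemma \ref{Lemma 2.1}); this covers $2\sigma\rho\,G_0v^4VG_0$ and $G_0VD_4G_0$ as well.

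The $W$-terms must use the structure of $W$. The diagonal part $W(n_1,n_1)=-(\sigma^3-\rho)v_{n_1}^4$ cannot be estimated separately, since $\sum_{n_1}|n-n_1|^{-(d-2)}v_{n_1}^4|n_1-n'|^{-(d-2)}$ converges to the desired profile only when $4\alpha>2$, i.e.\ $\alpha>\tfrac12$ (Lemma \ref{Lemma 2.2} with $b=d-2,\ \varepsilon=4\alpha$). The constant $\rho$ was defined exactly so that the convolution operator $M=M_4-(\sigma^3-\rho)$ has vanishing symbol at the origin, $\hat M(0)=\sum_{n_2}M(n_1,n_2)=0$; combined with the evenness $M(n_1,n_2)=M(n_1,2n_1-n_2)$ this lets one write, for any profile $f$,
\[
\sum_{n_2}M(n_1,n_2)f(n_2)=\frac12\sum_{n_2}M(n_1,n_2)\bigl(f(n_2)+f(2n_1-n_2)-2f(n_1)\bigr),
\]
a symmetric second difference which, for $f(m)=v_m^2G_0(m,n')$, is $O(|n_1-n_2|^2)$ times the relevant second-derivative bounds on $v^2$ and on $G_0$ (with the trivial triangle-inequality bound used in the range where $|n_1-n_2|$ is comparable to $|n_1|$ or to $|n_1-n'|$). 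Since $|M(n_1,n_2)|\lesssim|n_1-n_2|^{-3(d-2)}$, the weight $|n_1-n_2|^2$ is summable against $M$ over dyadic shells exactly when $d\ge5$ (convergence of $\sum_{\rho}\rho^{\,d+2-3(d-2)}=\sum_\rho\rho^{\,8-2d}$), and one is left with
\[
|G_0WG_0(n,n')|\lesssim\kappa^4\sum_{n_1}\frac{1}{|n-n_1|^{d-2}|n_1|^{4\alpha}}\Bigl(\frac{1}{|n_1|^2|n_1-n'|^{d-2}}+\frac{1}{|n_1|\,|n_1-n'|^{d-1}}+\frac{1}{|n_1-n'|^{d}}\Bigr),
\]
each summand of which Lemma \ref{Lemma 2.2} controls by $\kappa^4|n-n'|^{-(d-2)}(|n|\wedge|n'|)^{-4\alpha}$; the two borderline summands (with $|n_1-n'|^{-(d-1)}$ and $|n_1-n'|^{-d}$) are handled by the convolution-regularization of \cite{Bou03}, i.e.\ by expressing the derivatives of $G_0$ as convolutions of truncated copies of $\hat G_0$ so as to avoid the logarithm of Remark \ref{rmk 2.1}(1). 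This is precisely Bourgain's symmetric-difference/convolution-regularization decomposition; the order-$5$ terms $G_0V\wtg WG_0$ and $G_0W\wtg VG_0$ reduce to the order-$4$ case once $\wtg WG_0$ (resp.\ $G_0W\wtg$) is reduced to a power-law profile by the same device and Lemma \ref{decoupling} is applied once more.

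The step I expect to be the main obstacle is the $W$-terms: showing that the symmetric second difference really gains two full orders of decay uniformly in $n_1,n_2,n'$ (in particular in the regimes where the naive Taylor bound is not available and one must fall back on the crude bound), and carrying out the convolution regularization carefully enough to kill the borderline logarithms. The bookkeeping for the many admissible random terms is lengthy but entirely mechanical once Lemmas \ref{decoupling}, \ref{Lemma 2.1} and \ref{Lemma 2.2} are in place. Finally, I stress that the genuinely new non-random operator $C$ of \eqref{nonconvolution term} --- which is \emph{not} of the symmetric shape $G_0WG_0$, and which is what forces in the fractional Gagliardo--Nirenberg input --- first appears only at order $i=6$, so it does not enter the present theorem.
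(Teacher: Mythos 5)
Your treatment of the random/admissible terms ($i\le 3$, the admissible parts of $i=4,5$, and the $D_4$-terms) is exactly the paper's: Lemma~\ref{decoupling} to pass to the deterministic $\ell^2$-sum, then Lemmas~\ref{Lemma 2.1} and \ref{Lemma 2.2} site by site, with $d\ge5$ entering through $2(d-2)>d$; likewise your reduction of $G_0V\wtg WG_0$ and $G_0W\wtg VG_0$ to a power-law bound on $G_0W\wtg$ followed by one more decoupling is what the paper does. Where you genuinely diverge is the core non-random term $G_0WG_0$. The paper symmetrizes on the \emph{potential} side, writing $v_{n_1}^2v_{n_2}^2=\tfrac12(v_{n_1}^4+v_{n_2}^4)-\tfrac12(v_{n_1}^2-v_{n_2}^2)^2$ to get the decomposition \eqref{Decompose G0WG0}: the difference piece $P_4$ gains $|n_1-n_2|^2/(|n_1|+|n_2|)^2$ from Lemma~\ref{difference property}, and the symmetric pieces $G_0v^4MG_0$, $G_0Mv^4G_0$ are closed using the Fourier-side convolution regularization $|MG_0(n_1,n_2)|\lesssim|n_1-n_2|^{-(d+2-)}$, so every application of Lemma~\ref{Lemma 2.2} has $b=d+2->d$ strictly and no borderline case arises. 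You instead exploit $\hat M(0)=0$ and the evenness of $M$ to form a second difference of $f(m)=v_m^2G_0(m,n')$ in physical space. This is a legitimate alternative (your identity is correct, and your crude bounds in the regimes $|n_1-n_2|\gtrsim|n_1|$ or $\gtrsim|n_1-n'|$ do close because $3(d-2)-d=2d-6\ge4$), but it buys you a harder endgame: the pure second-derivative-of-$G_0$ summand produces exactly the excluded case $b=d$ of Lemma~\ref{Lemma 2.2}, i.e.\ a logarithmic loss, and you must also justify the discrete Taylor bound for lattice second differences of $G_0$ along the reflected segment. You flag both points and defer them to Bourgain's convolution regularization, which is indeed the right repair --- but note that once you invoke that tool you are essentially re-deriving the paper's bound on $MG_0$, so the paper's route (regularize first, symmetrize the weights second) is the more economical order of operations. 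Everything else in your proposal, including the observation that the operator $C$ only enters at order $6$, is consistent with the paper.
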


\begin{proof}
It  suffices to  control each term in $\boxed{G_0,i}(n,n')$. 

  When $i=0$,  the  remaining term is just $G_0$ which  is deterministic. In this case, we have 
  \[\E_p |G_0(n,n')|=|G_0(n,n')|\lesssim \frac{1}{|n-n'|^{d-2}}.\]
 \\ 

If  $i=1,2,3$, due to $-G_0VG_0=-(G_0 VG_0)^{\admissible}$, we can apply the decoupling Lemma \ref{decoupling} to get the desired estimates.  For example,  for the 3th-order remaining term
$$(G_0VG_0 VG_0 VG_0)^{\admissible},$$
 we get by Lemma \ref{decoupling} that
  \begin{align*}
   &\ \ \  \E_p\left |  (G_0VG_0 VG_0 VG_0)^{\admissible}(n,n')  \right | \\
    &=\kappa^3 \E_p\left |  \sum_{(m_1,m_2,m_3)}^{\admissible}\omega_{m_1}\omega_{m_2}\omega_{m_3} G_0(n,m_1)v_{m_1}G_0(m_1,m_2)v_{m_2}G_0(m_2,m_3)v_{m_3}G_0(m_3,n')    \right |\\
    &\lesssim_{p} \kappa^3 \left( \sum_{m_1,m_2,m_3\in\Z^d}\frac{1}{|n-m_1|^{2(d-2)}|m_1|^{2\alpha}|m_1-m_2|^{2(d-2)}|m_2|^{2\alpha}\cdots|m_3-n'|^{2(d-2)}} \right)^{\frac{1}{2}}.
  \end{align*}
 Hence,  by $d\geq 5\Rightarrow 2(d-2)> d$, we can apply Lemma \ref{Lemma 2.2} to get 
  \begin{align*}
    &\ \ \ \E_p\left |  (G_0VG_0 VG_0 VG_0)^{\admissible}(n,n')  \right | \\
    &\lesssim_{d,p,\alpha} \kappa^3 \left( \sum_{m_2,m_3\in\Z^d}\frac{1}{|n-m_2|^{2(d-2)}(|m_2|\wedge|n|)^{2\alpha}|m_2|^{2\alpha}\cdots|m_3-n'|^{2(d-2)}} \right)^{\frac{1}{2}}\\
    &\lesssim_{d,p,\alpha} \kappa^3 \left( \sum_{m_2,m_3\in\Z^d}\frac{1}{|n-m_2|^{2(d-2)}} (\frac{1}{|m_2|^{2\alpha}}+\frac{1}{|n|^{2\alpha}}) \frac{1}{|m_2|^{2\alpha}\cdots|m_3-n'|^{2(d-2)}} \right)^{\frac{1}{2}}\\
    &\overset{{\rm \ applying \ repeatedly\  \ Lemma\ \ref{Lemma 2.2}}}{\cdots}\\
    &\lesssim_{d,p,\alpha} \kappa^3 \frac{1}{|n-n'|^{d-2}(|n|\wedge|n'|)^{3\alpha}}. 
  \end{align*}
  The other remaining terms of orders at most $3$ can be controlled similarly  (and can be easier to handle). 

 If  $i=4$, the admissible remaining terms (i.e., random terms) of exactly  4th order  can be controlled similarly  to those of orders less than $3$. However,  the deterministic term $G_0 W G_0$ needs to be controlled very carefully, since  we cannot use the decoupling lemma to gain the regularization, namely, the estimate from $|n-n'|^{-(d-2)}\to |n-n'|^{-2(d-2)}$. Instead of applying Lemma 1.2  in  \cite{Bou03}, we directly  estimate  this term via the  symmetrical difference  regularization and  convolution regularization (in the Fourier space)  arguments  originating from \cite{Bou03}.  We need the following lemma:
 
 \begin{lem}[Difference regularization]\label{difference property}
    For $\alpha>0$, we have 
    \[||n_1|^{-\alpha}-|n_2|^{-\alpha}|\lesssim_{\alpha} \frac{|n_1-n_2|}{(|n_1|+|n_2|)\cdot (|n_1|\wedge|n_2|)^{\alpha}}.\]
  \end{lem} 
  \begin{proof}
     We refer to  Appendix \ref{tecApp} for a detailed proof. 
     \end{proof}
     
Recall that $M$ is a convolution operator. Since 
  \[v^2_{n_1}v^2_{n_2}=\frac12(v^4_{n_1}+v^4_{n_2})-\frac{1}{2}(v^2_{n_1}-v^2_{n_2})^2,\]
  we have the decomposition
  \begin{equation}\label{Decompose G0WG0}
    G_0 W G_0=\frac{1}{2}(G_0v^4 MG_0+G_0 M v^4 G_0)-\frac{1}{2}G_0 P_4 G_0. 
  \end{equation}
Then by Lemma \ref{difference property},
  \begin{align}
    | P_4(n_1,n_2) | & =|(v^2_{n_1}-v^2_{n_2})^2 M(n_1,n_2)|\\
    \notag &\lesssim_{\alpha} \kappa^4 \frac{|n_1-n_2|^2}{(|n_1|+|n_2|)^2(|n_1|\wedge|n_2|)^{4\alpha}|n_1-n_2|^{3(d-2)}}. 
  \end{align}
  Moreover, by the convolution regularization  argument in  \cite[(1.7)--(1.9)]{Bou03} and \cite[(3.9)--(3.10)]{Bou03}, we have 
  \[|G_0 M(n_1,n_2)|\lesssim \frac{1}{|n_1-n_2|^{d+2-}},\ |M G_0 (n_1,n_2)|\lesssim \frac{1}{|n_1-n_2|^{d+2-}}.\]
  We remark that this regularization estimate is performed in the Fourier space via  controlling (derivatives of)   $\hat M*\hat G_0.$
  So, we get by applying Lemma \ref{Lemma 2.2} that
  \begin{align}\label{estimate on convolution part}
    |G_0 v^4 MG_0(n,n')|& \lesssim \kappa^4 \sum_{n_1\in Z^d}\frac{1}{|n-n_1|^{d-2}|n_1|^{4\alpha}|n_1-n'|^{d+2-}}\\
     \notag      &\lesssim_{d,\alpha} \kappa^4 \frac{1}{|n-n'|^{d-2}(|n|\wedge|n'|)^{4\alpha}}.
  \end{align}
The term $|G_0 M v^4 G_0(n,n')|$ has the same estimate. Moreover, we have 
  \begin{align}\label{estimate on difference part}
 |G_0 P_4 G_0(n,n')|&\lesssim_{d,\alpha}\kappa^4 \sum_{n_1,n_2\in \Z^d} \frac{1}{|n-n_1|^{d-2}(|n_1|+|n_2|)^2(|n_1|\wedge|n_2|)^{4\alpha}|n_1-n_2|^{3d-8}|n_2-n'|^{d-2}} \\
    \notag & \lesssim_{d,\alpha}\kappa^4 \sum_{n_1,n_2\in \Z^d} \frac{1}{|n-n_1|^{d-2}(|n_1|+|n_2|)^2}(\frac{1}{|n_1|^{4\alpha}}+\frac{1}{|n_2|^{4\alpha}})\frac{1}{|n_1-n_2|^{3d-8}|n_2-n'|^{d-2}} \\
    \notag &\lesssim_{d,\alpha}\kappa^4 \sum_{n_1,n_2\in \Z^d} \frac{1}{|n-n_1|^{d-2}|n_1|^{2+4\alpha}|n_1-n_2|^{3d-8}|n_2-n'|^{d-2}} \\
    \notag &\qquad +\kappa^4 \sum_{n_1,n_2\in \Z^d} \frac{1}{|n-n_1|^{d-2}|n_1-n_2|^{3d-8}|n_2|^{2+4\alpha}|n_2-n'|^{d-2}} \\
    \notag &\lesssim_{d,\alpha}\kappa^4 \left( \sum_{n_1\in \Z^d} \frac{1}{|n-n_1|^{d-2}|n_1|^{2+4\alpha}|n_1-n'|^{d-2}}+ \sum_{n_2\in \Z^d} \frac{1}{|n-n_2|^{d-2}|n_2|^{2+4\alpha}|n_2-n'|^{d-2}}\right)\\
    \notag &\lesssim_{d,\alpha} \kappa^4 \frac{1}{|n-n'|^{d-2}(|n|\wedge|n'|)^{4\alpha}},
  \end{align}
  where for the fourth inequality, we apply Lemma \ref{Lemma 2.1} and $d\geq 5 \Rightarrow 3d-8> d$, and for the fifth inequality,  we use Lemma \ref{Lemma 2.2} and $d\geq 5, \alpha\leq \frac13$ (this implies $d-2>4\alpha$). 
Taking account of  all above estimates   and  the decomposition \eqref{Decompose G0WG0} yields 
  \begin{equation}\label{decay of G0WG0}
      |G_0 WG_0(n,n')|\lesssim_{d,\alpha} \kappa^4 \frac{1}{|n-n'|^{d-2}(|n|\wedge|n'|)^{4\alpha}}. 
  \end{equation}
  Finally, combining with the moment estimates  on other random remaining terms  of  4th-order   proves  \eqref{i order remaining expactation} for $i=4$.

  When $i=5$, there are only random remaining  terms.  They can be estimated  directly by applying  Lemma \ref{decoupling}  similar to those of orders  $i=1,2,3$, except for terms like $G_0 V D_4 G_0,G_0V \wtg W G_0$ and $G_0W \wtg V G_0$. 
  On one hand, recall that
  \[D_4(n_1)=v^2_{n_1}\bigg[ \sum_{n_2\in\Z^d}v^2_{n_2} \wtg(n_1,n_2)^4\bigg].\]
  By Lemma \ref{Lemma 2.1},  we have since  $4(d-2)>d,$ 
  \[|D_4(n_1)|\lesssim_{d,\alpha}\kappa^4\frac{1}{|n_1|^{4\alpha}}.\]
  Hence by  applying the (decoupling) Lemma \ref{decoupling}, we get 
  \begin{align*}
    \E_p\left | G_0 VD_4 G_0 (n,n')\right | &\lesssim_{d,\alpha,p} \left(  \sum_{n_1\in \Z^d} |G_0(n,n_1)|^2|v_{n_1}D_4(n_1)|^2 |G_0(n,n')|^2 \right)^{\frac{1}{2}} \\
    \notag & \lesssim_{d,\alpha,p} \kappa^5 \left(  \sum_{n_1\in \Z^d} \frac{1}{|n-n_1|^{2(d-2)} |n_1|^{10\alpha} |n_1-n'|^{2(d-2)}}  \right) ^{\frac{1}{2}}\\
    \notag & \lesssim_{d,\alpha,p} \kappa^5 \frac{1}{|n-n'|^{d-2}(|n|\wedge|n'|)^{5\alpha}},
  \end{align*}
  where for the last inequality,  we use  Lemma \ref{Lemma 2.2} and $d\geq 5 \Rightarrow 2(d-2)> d$. 
  On the other hand,  by applying again  Lemma \ref{decoupling}, we have 
  \begin{align} \label{4.35}
    &\ \ \ \E_p\left |  G_0 W \wtg V G_0(n,n')  \right |\lesssim_{d,\alpha,p} \left(  \sum_{n_1\in \Z^d} (G_0 W \wtg (n,n_1))^2 v^2_{n_1} G_0(n_1,n')^2 \right)^{\frac{1}{2}}.
  \end{align}
  Now recall  that 
  \[G_0 W \wtg=G_0 WG_0 -\sigma G_0 W.\]
By  \eqref{decay of G0WG0} and 
  \begin{align*}
    |G_0 W(n,n')|&\lesssim \kappa^4 \sum_{n_1\in \Z^d} \frac{1}{|n-n_1|^{d-2}|n_1|^{2\alpha}|n_1-n'|^{3(d-2)}|n'|^{2\alpha}}\\
    \notag & \lesssim_{d,\alpha}\kappa^4 \frac{1}{|n-n'|^{d-2}(|n|\wedge|n'|)^{4\alpha}} \ ({\rm again \ by\  Lemma\  \ref{Lemma 2.2}}),  
  \end{align*}
  we have 
  \[|G_0 W \wtg (n,n')|\lesssim_{d,\alpha} \kappa^4 \frac{1}{|n-n'|^{d-2}(|n|\wedge|n'|)^{4\alpha}}.\]
  Hence, we continue to estimate \eqref{4.35} and obtain 
  \begin{align*} 
    &\ \ \ \E_p\left |  G_0 W \wtg V G_0(n,n')  \right |   \lesssim_{d,\alpha,p} \kappa^5 \left(  \sum_{n_1\in \Z^d} \frac{1}{|n-n_1|^{2(d-2)}(|n|\wedge|n_1|)^{8\alpha}|n_1|^{2\alpha}|n_1-n'|^{2(d-2)}} \right)^{\frac{1}{2}} \\
  \notag &\lesssim_{d,\alpha,p} \kappa^5 \frac{1}{|n-n'|^{d-2}(|n|\wedge|n'|)^{5\alpha}}. 
  \end{align*}
 The estimate on $G_0V \wtg W G_0$ remains the same.  Thus, we have proven  \eqref{i order remaining expactation} for $i=5$. 
\end{proof}

\subsection{Estimates on $i$th-order  remaining terms for $i=6,7$}

In this subsection, we aim to control remaining terms of orders $6$ and $7$. 

As we will see below,  it is  the  non-random and non-convolutional operator  $G_0CG_0$ (cf.  \eqref{4.9}) that plays a central role in the estimates. Recall that 
  \begin{align*}
  C&=C_6-\eta v^6,\ C_6(n_1,n_3)=v^2_{n_1}v^2_{n_3}\wtg(n_1,n_3) \sum_{n_2}\wtg(n_1,n_2)^2 v^2_{n_2} \wtg(n_2,n_3)^2.
  \end{align*}
  Visually, the operator $C$ involves the following summation graph.  Here in the  graph, the blue number on the edge represents the order of matrix $\wtg$, and the $v^2$ in each vertex means the corresponding $v^2$ occurring in the summation.  \\ 
  \begin{figure}[htbp]
    \centering

\tikzset{every picture/.style={line width=0.75pt}} 

\begin{tikzpicture}[x=0.75pt,y=0.75pt,yscale=-0.5,xscale=0.5]

\draw  [fill={rgb, 255:black, 208; green, 2; blue, 27 }  ,fill opacity=1 ] (159.4,70.29) .. controls (159.4,67.49) and (161.33,65.22) .. (163.71,65.22) .. controls (166.1,65.22) and (168.03,67.49) .. (168.03,70.29) .. controls (168.03,73.09) and (166.1,75.36) .. (163.71,75.36) .. controls (161.33,75.36) and (159.4,73.09) .. (159.4,70.29) -- cycle ;
\draw  [fill={rgb, 255:black, 208; green, 2; blue, 27 }  ,fill opacity=1 ] (259.79,115.02) .. controls (259.79,112.22) and (261.72,109.95) .. (264.11,109.95) .. controls (266.49,109.95) and (268.42,112.22) .. (268.42,115.02) .. controls (268.42,117.82) and (266.49,120.09) .. (264.11,120.09) .. controls (261.72,120.09) and (259.79,117.82) .. (259.79,115.02) -- cycle ;
\draw  [fill={rgb, 255:black, 208; green, 2; blue, 27 }  ,fill opacity=1 ] (381.4,114.29) .. controls (381.4,111.49) and (383.33,109.22) .. (385.71,109.22) .. controls (388.1,109.22) and (390.03,111.49) .. (390.03,114.29) .. controls (390.03,117.09) and (388.1,119.36) .. (385.71,119.36) .. controls (383.33,119.36) and (381.4,117.09) .. (381.4,114.29) -- cycle ;
\draw  [fill={rgb, 255:black, 208; green, 2; blue, 27 }  ,fill opacity=1 ] (323.08,221.29) .. controls (323.08,218.49) and (325.01,216.22) .. (327.4,216.22) .. controls (329.78,216.22) and (331.71,218.49) .. (331.71,221.29) .. controls (331.71,224.09) and (329.78,226.36) .. (327.4,226.36) .. controls (325.01,226.36) and (323.08,224.09) .. (323.08,221.29) -- cycle ;
\draw  [fill={rgb, 255:black, 208; green, 2; blue, 27 }  ,fill opacity=1 ] (481.4,67.29) .. controls (481.4,64.49) and (483.33,62.22) .. (485.71,62.22) .. controls (488.1,62.22) and (490.03,64.49) .. (490.03,67.29) .. controls (490.03,70.09) and (488.1,72.36) .. (485.71,72.36) .. controls (483.33,72.36) and (481.4,70.09) .. (481.4,67.29) -- cycle ;
\draw    (163.71,70.29) -- (264.11,115.02) ;
\draw    (264.11,115.02) -- (327.4,221.29) ;
\draw    (327.4,221.29) -- (385.71,114.29) ;
\draw    (264.11,115.02) -- (385.71,114.29) ;
\draw    (385.71,114.29) -- (485.71,67.29) ;
\draw (128,60) node [anchor=north west][inner sep=0.75pt]   [align=left] {$n$};
\draw (503,60) node [anchor=north west][inner sep=0.75pt]   [align=left] {$n'$};
\draw (254,133) node [anchor=north west][inner sep=0.75pt]   [align=left] {$n_1,v^2$};
\draw (318,232) node [anchor=north west][inner sep=0.75pt]   [align=left] {$n_2,v^2$};
\draw (386,129) node [anchor=north west][inner sep=0.75pt]   [align=left] {$n_3,v^2$};
\draw (217,70) node [anchor=north west][inner sep=0.75pt]   [align=left] {\textcolor[rgb]{0.29,0.56,0.89}{1}};
\draw (323,95) node [anchor=north west][inner sep=0.75pt]   [align=left] {\textcolor[rgb]{0.29,0.56,0.89}{1}};
\draw (429,74) node [anchor=north west][inner sep=0.75pt]   [align=left] {\textcolor[rgb]{0.29,0.56,0.89}{1}};
\draw (284,173) node [anchor=north west][inner sep=0.75pt]   [align=left] {\textcolor[rgb]{0.29,0.56,0.89}{2}};
\draw (358.56,170.79) node [anchor=north west][inner sep=0.75pt]   [align=left] {\textcolor[rgb]{0.29,0.56,0.89}{2}};
\end{tikzpicture}
  \end{figure}
  \ \\ Our aim is to use the symmetric difference trick, \cite[(1.10)]{Bou03}, to decompose this  diagram into \\
    \begin{figure}[htbp]
    \centering

    \tikzset{every picture/.style={line width=0.75pt}} 

    \begin{tikzpicture}[x=0.75pt,y=0.75pt,yscale=-0.7,xscale=0.7]
    
    \draw  [fill={rgb, 255:black, 208; green, 2; blue, 27 }  ,fill opacity=1 ] (26.71,71.59) .. controls (26.71,69.63) and (27.89,68.05) .. (29.34,68.05) .. controls (30.79,68.05) and (31.97,69.63) .. (31.97,71.59) .. controls (31.97,73.55) and (30.79,75.14) .. (29.34,75.14) .. controls (27.89,75.14) and (26.71,73.55) .. (26.71,71.59) -- cycle ;
    \draw  [fill={rgb, 255:black, 208; green, 2; blue, 27 }  ,fill opacity=1 ] (87.8,102.88) .. controls (87.8,100.92) and (88.98,99.34) .. (90.43,99.34) .. controls (91.88,99.34) and (93.06,100.92) .. (93.06,102.88) .. controls (93.06,104.84) and (91.88,106.43) .. (90.43,106.43) .. controls (88.98,106.43) and (87.8,104.84) .. (87.8,102.88) -- cycle ;
    \draw  [fill={rgb, 255:black, 208; green, 2; blue, 27 }  ,fill opacity=1 ] (161.81,102.37) .. controls (161.81,100.41) and (162.98,98.82) .. (164.43,98.82) .. controls (165.88,98.82) and (167.06,100.41) .. (167.06,102.37) .. controls (167.06,104.33) and (165.88,105.91) .. (164.43,105.91) .. controls (162.98,105.91) and (161.81,104.33) .. (161.81,102.37) -- cycle ;
    \draw  [fill={rgb, 255:black, 208; green, 2; blue, 27 }  ,fill opacity=1 ] (126.32,177.2) .. controls (126.32,175.24) and (127.49,173.66) .. (128.95,173.66) .. controls (130.4,173.66) and (131.57,175.24) .. (131.57,177.2) .. controls (131.57,179.16) and (130.4,180.75) .. (128.95,180.75) .. controls (127.49,180.75) and (126.32,179.16) .. (126.32,177.2) -- cycle ;
    \draw  [fill={rgb, 255:black, 208; green, 2; blue, 27 }  ,fill opacity=1 ] (222.66,69.49) .. controls (222.66,67.54) and (223.83,65.95) .. (225.28,65.95) .. controls (226.74,65.95) and (227.91,67.54) .. (227.91,69.49) .. controls (227.91,71.45) and (226.74,73.04) .. (225.28,73.04) .. controls (223.83,73.04) and (222.66,71.45) .. (222.66,69.49) -- cycle ;
    \draw    (29.34,71.59) -- (90.43,102.88) ;
    \draw    (90.43,102.88) -- (128.95,177.2) ;
    \draw    (128.95,177.2) -- (164.43,102.37) ;
    \draw    (90.43,102.88) -- (164.43,102.37) ;
    \draw    (164.43,102.37) -- (225.28,69.49) ;
    \draw  [fill={rgb, 255:black, 208; green, 2; blue, 27 }  ,fill opacity=1 ] (284.06,72.59) .. controls (284.06,70.63) and (285.24,69.05) .. (286.69,69.05) .. controls (288.14,69.05) and (289.32,70.63) .. (289.32,72.59) .. controls (289.32,74.55) and (288.14,76.14) .. (286.69,76.14) .. controls (285.24,76.14) and (284.06,74.55) .. (284.06,72.59) -- cycle ;
    \draw  [fill={rgb, 255:black, 208; green, 2; blue, 27 }  ,fill opacity=1 ] (345.15,103.88) .. controls (345.15,101.92) and (346.33,100.34) .. (347.78,100.34) .. controls (349.23,100.34) and (350.41,101.92) .. (350.41,103.88) .. controls (350.41,105.84) and (349.23,107.43) .. (347.78,107.43) .. controls (346.33,107.43) and (345.15,105.84) .. (345.15,103.88) -- cycle ;
    \draw  [fill={rgb, 255:black, 208; green, 2; blue, 27 }  ,fill opacity=1 ] (419.15,103.37) .. controls (419.15,101.41) and (420.33,99.82) .. (421.78,99.82) .. controls (423.23,99.82) and (424.41,101.41) .. (424.41,103.37) .. controls (424.41,105.33) and (423.23,106.91) .. (421.78,106.91) .. controls (420.33,106.91) and (419.15,105.33) .. (419.15,103.37) -- cycle ;
    \draw  [fill={rgb, 255:black, 208; green, 2; blue, 27 }  ,fill opacity=1 ] (383.67,178.2) .. controls (383.67,176.24) and (384.84,174.66) .. (386.29,174.66) .. controls (387.75,174.66) and (388.92,176.24) .. (388.92,178.2) .. controls (388.92,180.16) and (387.75,181.75) .. (386.29,181.75) .. controls (384.84,181.75) and (383.67,180.16) .. (383.67,178.2) -- cycle ;
    \draw  [fill={rgb, 255:black, 208; green, 2; blue, 27 }  ,fill opacity=1 ] (480.01,70.49) .. controls (480.01,68.54) and (481.18,66.95) .. (482.63,66.95) .. controls (484.08,66.95) and (485.26,68.54) .. (485.26,70.49) .. controls (485.26,72.45) and (484.08,74.04) .. (482.63,74.04) .. controls (481.18,74.04) and (480.01,72.45) .. (480.01,70.49) -- cycle ;
    \draw    (286.69,72.59) -- (347.78,103.88) ;
    \draw    (347.78,103.88) -- (386.29,178.2) ;
    \draw    (386.29,178.2) -- (421.78,103.37) ;
    \draw    (347.78,103.88) -- (421.78,103.37) ;
    \draw    (421.78,103.37) -- (482.63,70.49) ;
    \draw  [fill={rgb, 255:black, 208; green, 2; blue, 27 }  ,fill opacity=1 ] (537.06,72.59) .. controls (537.06,70.63) and (538.24,69.05) .. (539.69,69.05) .. controls (541.14,69.05) and (542.32,70.63) .. (542.32,72.59) .. controls (542.32,74.55) and (541.14,76.14) .. (539.69,76.14) .. controls (538.24,76.14) and (537.06,74.55) .. (537.06,72.59) -- cycle ;
    \draw  [fill={rgb, 255:black, 208; green, 2; blue, 27 }  ,fill opacity=1 ] (598.15,103.88) .. controls (598.15,101.92) and (599.33,100.34) .. (600.78,100.34) .. controls (602.23,100.34) and (603.41,101.92) .. (603.41,103.88) .. controls (603.41,105.84) and (602.23,107.43) .. (600.78,107.43) .. controls (599.33,107.43) and (598.15,105.84) .. (598.15,103.88) -- cycle ;
    \draw  [fill={rgb, 255:black, 208; green, 2; blue, 27 }  ,fill opacity=1 ] (672.15,103.37) .. controls (672.15,101.41) and (673.33,99.82) .. (674.78,99.82) .. controls (676.23,99.82) and (677.41,101.41) .. (677.41,103.37) .. controls (677.41,105.33) and (676.23,106.91) .. (674.78,106.91) .. controls (673.33,106.91) and (672.15,105.33) .. (672.15,103.37) -- cycle ;
    \draw  [fill={rgb, 255:black, 208; green, 2; blue, 27 }  ,fill opacity=1 ] (636.67,178.2) .. controls (636.67,176.24) and (637.84,174.66) .. (639.29,174.66) .. controls (640.75,174.66) and (641.92,176.24) .. (641.92,178.2) .. controls (641.92,180.16) and (640.75,181.75) .. (639.29,181.75) .. controls (637.84,181.75) and (636.67,180.16) .. (636.67,178.2) -- cycle ;
    \draw  [fill={rgb, 255:black, 208; green, 2; blue, 27 }  ,fill opacity=1 ] (733.01,70.49) .. controls (733.01,68.54) and (734.18,66.95) .. (735.63,66.95) .. controls (737.08,66.95) and (738.26,68.54) .. (738.26,70.49) .. controls (738.26,72.45) and (737.08,74.04) .. (735.63,74.04) .. controls (734.18,74.04) and (733.01,72.45) .. (733.01,70.49) -- cycle ;
    \draw    (539.69,72.59) -- (600.78,103.88) ;
    \draw    (600.78,103.88) -- (639.29,178.2) ;
    \draw    (639.29,178.2) -- (674.78,103.37) ;
    \draw    (600.78,103.88) -- (674.78,103.37) ;
    \draw    (674.78,103.37) -- (735.63,70.49) ;
    
    \draw (5.65,61.99) node [anchor=north west][inner sep=0.75pt]   [align=left] {$n$};
    \draw (233.45,61.99) node [anchor=north west][inner sep=0.75pt]   [align=left] {$n'$};
    \draw (80.76,113.05) node [anchor=north west][inner sep=0.75pt]   [align=left] {$n_1,v^6$};
    \draw (119.7,182.29) node [anchor=north west][inner sep=0.75pt]   [align=left] {$n_2$};
    \draw (161.08,110.25) node [anchor=north west][inner sep=0.75pt]   [align=left] {$n_3$};
    \draw (59.81,68.99) node [anchor=north west][inner sep=0.75pt]   [align=left] {\textcolor[rgb]{0.29,0.56,0.89}{1}};
    \draw (124.31,86.47) node [anchor=north west][inner sep=0.75pt]   [align=left] {\textcolor[rgb]{0.29,0.56,0.89}{1}};
    \draw (183.95,69.69) node [anchor=north west][inner sep=0.75pt]   [align=left] {\textcolor[rgb]{0.29,0.56,0.89}{1}};
    \draw (100.58,141.03) node [anchor=north west][inner sep=0.75pt]   [align=left] {\textcolor[rgb]{0.29,0.56,0.89}{2}};
    \draw (145.95,139.48) node [anchor=north west][inner sep=0.75pt]   [align=left] {\textcolor[rgb]{0.29,0.56,0.89}{2}};
    \draw (263,62.99) node [anchor=north west][inner sep=0.75pt]   [align=left] {$n$};
    \draw (490.8,62.99) node [anchor=north west][inner sep=0.75pt]   [align=left] {$n'$};
    \draw (338.11,114.05) node [anchor=north west][inner sep=0.75pt]   [align=left] {$n_1$};
    \draw (377.05,183.29) node [anchor=north west][inner sep=0.75pt]   [align=left] {$n_2,v^6$};
    \draw (418.43,111.25) node [anchor=north west][inner sep=0.75pt]   [align=left] {$n_3$};
    \draw (317.16,69.99) node [anchor=north west][inner sep=0.75pt]   [align=left] {\textcolor[rgb]{0.29,0.56,0.89}{1}};
    \draw (381.66,87.47) node [anchor=north west][inner sep=0.75pt]   [align=left] {\textcolor[rgb]{0.29,0.56,0.89}{1}};
    \draw (441.3,70.69) node [anchor=north west][inner sep=0.75pt]   [align=left] {\textcolor[rgb]{0.29,0.56,0.89}{1}};
    \draw (357.93,142.03) node [anchor=north west][inner sep=0.75pt]   [align=left] {\textcolor[rgb]{0.29,0.56,0.89}{2}};
    \draw (403.3,140.48) node [anchor=north west][inner sep=0.75pt]   [align=left] {\textcolor[rgb]{0.29,0.56,0.89}{2}};
    \draw (516,62.99) node [anchor=north west][inner sep=0.75pt]   [align=left] {$n$};
    \draw (743.8,62.99) node [anchor=north west][inner sep=0.75pt]   [align=left] {$n'$};
    \draw (591.11,114.05) node [anchor=north west][inner sep=0.75pt]   [align=left] {$n_1$};
    \draw (630.05,183.29) node [anchor=north west][inner sep=0.75pt]   [align=left] {$n_2$};
    \draw (671.43,111.25) node [anchor=north west][inner sep=0.75pt]   [align=left] {$n_3,v^6$};
    \draw (570.16,69.99) node [anchor=north west][inner sep=0.75pt]   [align=left] {\textcolor[rgb]{0.29,0.56,0.89}{1}};
    \draw (634.66,87.47) node [anchor=north west][inner sep=0.75pt]   [align=left] {\textcolor[rgb]{0.29,0.56,0.89}{1}};
    \draw (694.3,70.69) node [anchor=north west][inner sep=0.75pt]   [align=left] {\textcolor[rgb]{0.29,0.56,0.89}{1}};
    \draw (610.93,142.03) node [anchor=north west][inner sep=0.75pt]   [align=left] {\textcolor[rgb]{0.29,0.56,0.89}{2}};
    \draw (656.3,140.48) node [anchor=north west][inner sep=0.75pt]   [align=left] {\textcolor[rgb]{0.29,0.56,0.89}{2}};

    \end{tikzpicture}

  \end{figure}
\ \\ 
  
 Denote  first 
\[\boxed{G_0,6}_r=\boxed{G_0,6}-4G_0P_6^{''}G_0,\] 
\[\boxed{G_0,7}_r=\boxed{G_0,6}+4G_0 VG_0P_6^{''}G_0+4G_0P_6^{''}G_0VG_0,\]
where $P_6^{''}$ is a {\it singular part}  extracted from operator $C$, with 
\[P_6^{''}(n_1,n_3)= \wtg(n_1,n_3) \sum_{n_2\in \Z^d}(v_{n_2}^6-v_{n_1}^6) \wtg(n_1,n_2)^2 \wtg(n_2,n_3)^2.\]
The main theorem in this subsection is 
\begin{thm}\label{6,7 resolvent estimate}
  We have the following estimates: 
 \begin{itemize}
 \item[(1)]For  $C=(C-P^{''}_6)+P_6^{''}$,  
  \begin{align*}
    |P_6^{''}(n_1,n_3)| &\lesssim_{d,\alpha} \kappa^6\frac{1}{|n_1-n_3|^{3(d-2)-1}(|n_3|\wedge|n_1|)^{1+6\alpha}},\\
  |G_0 (C-P_6^{''}) G_0(n,n')| &\lesssim_{d,\alpha}\kappa^6 \frac{1}{|n-n'|^{d-2}(|n|\wedge|n'|)^{6\alpha}}. 
  \end{align*}
  \item[(2)] For $i=6,7$ and $p\geq 1$,
  \begin{equation}\label{moment estimate on 6,7 resolvent}
    \mathbb{E}_p \big| \ \boxed{G_0,i}_r (n,n')\big| \lesssim_{d,p,\alpha}\kappa^i \frac{1}{(|n|\wedge|n'|)^{i\alpha}|n-n'|^{d-2}}.  
  \end{equation}
\end{itemize}
\end{thm}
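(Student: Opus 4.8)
The plan is to prove item (1) first — it is purely deterministic and contains the essential new work — and then to deduce item (2) by combining it with the generalized Khintchine inequality (Lemma~\ref{decoupling}) applied to the admissible random terms in \eqref{4.4}--\eqref{4.9} and \eqref{4.11}--\eqref{4.24}. For the bound on $P_6''$: since $v_n^6=\kappa^6|n|^{-6\alpha}$, Lemma~\ref{difference property} applied with exponent $6\alpha$ gives
\[
|v_{n_2}^6-v_{n_1}^6|\lesssim_\alpha\kappa^6\,\frac{|n_1-n_2|}{(|n_1|+|n_2|)(|n_1|\wedge|n_2|)^{6\alpha}}
\lesssim\kappa^6\Big(|n_1|^{-(1+6\alpha)}+|n_2|^{-(1+6\alpha)}\Big)\,|n_1-n_2|.
\]
Using $|\wtg(n_1,n_2)|\lesssim|n_1-n_2|^{-(d-2)}$ to absorb the extra $|n_1-n_2|$ into one factor $\wtg(n_1,n_2)^2$, the $n_2$-sum defining $P_6''$ is controlled by Lemma~\ref{Lemma 2.1} (for the $|n_1|^{-(1+6\alpha)}$ part, which factors out) and by Lemma~\ref{Lemma 2.2} (for the $|n_2|^{-(1+6\alpha)}$ part, whose $\min\{\cdot\}$ equals $1+6\alpha$ when $d\ge5$, $\alpha\le\frac13$), producing $|n_1-n_3|^{-(2(d-2)-1)}(|n_1|\wedge|n_3|)^{-(1+6\alpha)}$ after multiplying by $|\wtg(n_1,n_3)|\lesssim|n_1-n_3|^{-(d-2)}$ and using $|n_1|^{-(1+6\alpha)}\le(|n_1|\wedge|n_3|)^{-(1+6\alpha)}$. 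This is the claimed bound.

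The estimate on $G_0(C-P_6'')G_0$ is the main obstacle. Here $C-P_6''$ equals the operator $C_6$ attached to the summation graph with vertices $n_1,n_2,n_3$ each carrying $v^2$ (with $n_2$ joined to $n_1$ and $n_3$ by order-$2$ edges and $n_1$ joined to $n_3$ by an order-$1$ edge), minus the diagonal counterterm $\eta v^6$ contained in $C$, minus $P_6''$. I would run the symmetrical difference trick \cite[(1.10)]{Bou03} on the weight $v_{n_1}^2v_{n_2}^2v_{n_3}^2$, but always pushed to \emph{second} order — writing $v_{n_i}^2v_{n_j}^2=\tfrac12(v_{n_i}^4+v_{n_j}^4)-\tfrac12(v_{n_i}^2-v_{n_j}^2)^2$ and then $v_{n_i}^4v_{n_j}^2=\tfrac23v_{n_i}^6+\tfrac13v_{n_j}^6-\tfrac13(2v_{n_i}^2+v_{n_j}^2)(v_{n_i}^2-v_{n_j}^2)^2$ — so that $C-P_6''$ splits into: (i) operators equal to $v^6$ at a single vertex composed with a genuine \emph{convolution} operator built from $\wtg$ and $\widehat{\wtg^{\,2}}$ — for instance $v_{n_1}^6(T-\eta\delta)(n_1,n_3)$ with $T(n_1,n_3)=\wtg(n_1,n_3)\sum_{n_2}\wtg(n_1,n_2)^2\wtg(n_2,n_3)^2$, whose symbol vanishes at the origin precisely because $\widehat{T}(0)=\eta$; and (ii) second-order difference remainders carrying a factor $(v_{n_i}^2-v_{n_j}^2)^2\lesssim\kappa^4|n_i-n_j|^2(|n_i|+|n_j|)^{-2}(|n_i|\wedge|n_j|)^{-4\alpha}$. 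For (i) I would pass to the Fourier side and perform the convolution regularization of \cite[(1.7)--(1.9),(3.9)--(3.10)]{Bou03}, obtaining the improved decay $|n-n'|^{-(d+2)+}$ for $G_0$ composed with the convolution operator; the difference from Bourgain's situation is that the relevant symbol is $f^{\,2}$ with $f$ (up to normalization) the torus-convolution $\hat G_0*\hat G_0-\sigma^2$ rather than a product of $\hat G_0$'s, so it has only limited classical smoothness and its higher (fractional) derivatives would be estimated by the fractional Gagliardo--Nirenberg inequality \cite{BM18}, i.e.\ by Lemma~\ref{FGNlem}; inserting this decay into a single-vertex summation carrying the $|n_1|^{-6\alpha}$ weight and closing with Lemma~\ref{Lemma 2.2} (its $\min\{\cdot\}$ now equals $6\alpha$) yields the bound with $(|n|\wedge|n'|)^{-6\alpha}$. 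For (ii), the $|n_i-n_j|^2$ is absorbed into the two order-$2$ edges (reducing their combined decay from $4(d-2)$ to $4(d-2)-2$, still $>d$ for $d\ge5$), $(|n_i|+|n_j|)^{-2}$ is split as $|n_i|^{-2}+|n_j|^{-2}$, and the ensuing single- and double-vertex summations close by Lemmas~\ref{Lemma 2.1} and~\ref{Lemma 2.2}; the extra ``$+2$'' in the vertex exponents is exactly what makes the relevant $\min\{\cdot\}$ equal $6\alpha$ rather than $6\alpha-1$ — this is why one must go to second order, why the un-improvable first-order remainder $P_6''$ is subtracted off, and why the would-be $q$-loss on the diagonal is killed by the $\eta v^6$ counterterm.

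For item (2): write $\boxed{G_0,6}_r=\boxed{G_0,6}-4G_0P_6''G_0$ and split $\boxed{G_0,6}$ using \eqref{4.4}--\eqref{4.9}. The admissible random terms of \eqref{4.4}--\eqref{4.6} are treated exactly as in the cases $i\le5$ of Theorem~\ref{0-5 remaining term expactation}: apply Lemma~\ref{decoupling} to reduce $\E_p$ to the $\ell^2$-norm of products of $|G_0|$'s and $v$'s, then close by repeated use of Lemmas~\ref{Lemma 2.1}, \ref{Lemma 2.2} and $2(d-2)>d$. The admissible terms of \eqref{4.7}--\eqref{4.8} involving $W$ and $D_4$ are handled the same way after inserting the pointwise bounds $|W(n_1,n_2)|\lesssim\kappa^4|n_1-n_2|^{-3(d-2)}(|n_1|\wedge|n_2|)^{-4\alpha}$ (off the diagonal), $|D_4(n_1)|\lesssim\kappa^4|n_1|^{-4\alpha}$ and $|G_0W\wtg(n,n')|\lesssim\kappa^4|n-n'|^{-(d-2)}(|n|\wedge|n'|)^{-4\alpha}$ already established inside Theorem~\ref{0-5 remaining term expactation}. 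The deterministic block \eqref{4.9}, after subtracting $4G_0P_6''G_0$, becomes $4G_0(C-P_6'')G_0$ — controlled by item (1) — together with $-2\sigma^2(G_0v^2WG_0+G_0Wv^2G_0)$; for the latter, writing $W=v^2Mv^2$ and applying the same second-order difference trick to $v_{n_1}^4v_{n_2}^2$ reduces it to $v^6$-at-one-vertex pieces composed with the convolution $M$ (handled by $|G_0M|,|MG_0|\lesssim|n-n'|^{-(d+2)+}$) plus second-order difference remainders closed by Lemmas~\ref{Lemma 2.1}, \ref{Lemma 2.2}.

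The case $i=7$ is identical in structure, using \eqref{4.11}--\eqref{4.24}: the new diagonal operators $D_6^{(1)},D_6^{(2)}$ satisfy $|D_6^{(1)}(n_1)|,|D_6^{(2)}(n_1)|\lesssim\kappa^6|n_1|^{-6\alpha}$ by Lemma~\ref{Lemma 2.1} and enter admissible terms treated by Lemma~\ref{decoupling}; $D_7$ is a \emph{random} diagonal operator, so one applies Lemma~\ref{decoupling} to the $n_2$-summation defining it before inserting it; the non-random operators $C$, $S$ and their transposes in \eqref{4.21}--\eqref{4.24} are treated by the same difference/convolution machinery as in item (1) (for $S$, the extra $\wtg$-edge and extra $v^2$ leave enough room that a single difference together with the convolution regularization suffices); and the correction terms $G_0VG_0P_6''G_0$, $G_0P_6''G_0VG_0$ relating $\boxed{G_0,7}_r$ to $\boxed{G_0,7}$ are bounded by Lemma~\ref{decoupling} after grouping the factors so that the $P_6''$ estimate of item (1) can be used. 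The single genuinely new difficulty throughout is item (1)'s convolution regularization of the symbol $f^{\,2}$, i.e.\ the appeal to the fractional Gagliardo--Nirenberg inequality.
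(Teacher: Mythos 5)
Your proposal follows essentially the same route as the paper's proof: the second‑order symmetric differences on $v_{n_1}^2v_{n_2}^2v_{n_3}^2$ reproducing the paper's splitting of $C$ into $v^6$‑at‑one‑vertex pieces plus quadratic difference remainders, the first‑order extraction of $P_6''$ coming from the non‑convolutional middle‑vertex piece (the paper's $C^{(2)}=C^{(1)}+P_6''$), the Fourier‑side convolution regularization of $\widetilde N$ with derivatives of $f^2$ controlled by the fractional Gagliardo--Nirenberg inequality, and the decoupling lemma together with Lemmas~\ref{Lemma 2.1}--\ref{Lemma 2.2} (and direct Cauchy--Schwarz for the non‑admissible terms $G_0D_7G_0$, $G_0VSG_0$) for item (2). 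This matches the paper's argument in both structure and the key technical inputs.
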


\begin{proof}[Proof of Theorem \ref{6,7 resolvent estimate}]
  (1) The proof is   based on combining the   symmetrical  difference regularization and convolution regularization arguments. 
   Indeed, by \\
  \begin{align*}
    v^2_{n_1}v_{n_2}^2v_{n_3}^2 
    =&\frac{1}{6}\bigg(2(v_{n_1}^6+v_{n_2}^6+v_{n_3}^6)\\
    & -(v_{n_1}^4-v_{n_3}^4)(v_{n_1}^2-v_{n_3}^2)-(v_{n_2}^4-v_{n_3}^4)(v_{n_2}^2-v_{n_3}^2)-(v_{n_1}^4-v_{n_2}^4)(v_{n_1}^2-v_{n_2}^2) \\
    & -v_{n_3}^2(v_{n_1}^2-v_{n_2}^2)^2-v_{n_2}^2(v_{n_1}^2-v_{n_3}^2)^2-v_{n_1}^2(v_{n_2}^2-v_{n_3}^2)^2\bigg), 
  \end{align*}
  we obtain the decomposition
  \begin{equation}\label{decomposition of G_0CG_0, initial}
    G_0 C G_0=\frac{1}{3}G_0(C^{(1)}+C^{(2)}+C^{(3)})G_0-\frac{1}{6}G_0(P_{1,2}+P_{1,3}+P_{2,3}+\widetilde{P}_{1,2}+\widetilde{P}_{2,3}+\widetilde{P}_{1,3})G_0
  \end{equation}
  with 
  \[C^{(i)}(n_1,n_3)=\wtg(n_1,n_3) \sum_{n_2} v_{n_i}^6\wtg(n_1,n_2)^2  \wtg(n_2,n_3)^2-\eta v^6_{n_1}\delta_{n_1,n_3},\ i=1,2,3,\]
  \[P_{1,2}=\wtg(n_1,n_3) \sum_{n_2} v_{n_3}^2(v_{n_1}^2-v_{n_2}^2)^2\wtg(n_1,n_2)^2 \wtg(n_2,n_3)^2,\]
  \[P_{1,3}=\wtg(n_1,n_3) \sum_{n_2} v_{n_2}^2(v_{n_1}^2-v_{n_3}^2)^2\wtg(n_1,n_2)^2 \wtg(n_2,n_3)^2,\]
  \[P_{2,3}=\wtg(n_1,n_3) \sum_{n_2} v_{n_1}^2(v_{n_2}^2-v_{n_3}^2)^2\wtg(n_1,n_2)^2 \wtg(n_2,n_3)^2,\]
  \[\widetilde{P}_{{i,j}}=\wtg(n_1,n_3) \sum_{n_2} (v_{n_i}^4-v_{n_j}^4)(v_{n_i}^2-v_{n_j}^2)\wtg(n_1,n_2)^2 \wtg(n_2,n_3)^2,\ 1\leq i\neq j \leq 3.\]
  We first control $P_{i,j},\ \widetilde{P}_{i,j}$ which depends on difference regularization lemma (cf. Lemma \ref{difference property}). Applying  Lemma \ref{difference property} implies 
  \begin{align}\label{4.43}
    |P_{1,2}(n_1,n_3)|& \lesssim \frac{\kappa^6}{|n_1-n_3|^{d-2}|n_3|^{2\alpha}}\sum_{n_2\in \Z^d }\frac{|n_1-n_2|^2}{|n_1-n_2|^{2(d-2)}(|n_1|+|n_2|)^2(|n_2|\wedge|n_2|)^{4\alpha}|n_2-n_3|^{2(d-2)}}\\
    \notag   &\lesssim \frac{\kappa^6}{|n_1-n_3|^{d-2}|n_3|^{2\alpha}}\sum_{n_2\in \Z^d } \frac{1}{|n_1-n_2|^{2(d-2)-2}}(\frac{1}{|n_1|^{2+4\alpha}}+\frac{1}{|n_2|^{2+4\alpha}})\frac{1}{|n_2-n_3|^{2(d-2)}}\\
    \notag &\lesssim_{d,\alpha}\kappa^6\frac{1}{|n_1-n_3|^{3(d-2)-2}(|n_3|\wedge|n_1|)^{2+6\alpha}},
  \end{align}
  where for the third inequality, we apply Lemma \ref{Lemma 2.1}, Lemma \ref{Lemma 2.2} together with $d\geq 5,  \alpha \leq \frac13$ (this implies $ 2(d-2)>d, 2(d-2)-2 \geq 2 + 4\alpha$). Similarly, we have 
  \begin{align}
    |P_{2,3}(n_1,n_3)| & \lesssim_{d,\alpha}\kappa^6\frac{1}{|n_1-n_3|^{3(d-2)-2}(|n_3|\wedge|n_1|)^{2+6\alpha}}
  \end{align}
  and 
  \begin{align}
    |P_{1,3}(n_1,n_3)| & \lesssim \frac{\kappa^6|n_1-n_3|^2}{|n_1-n_3|^{d-2}(|n_1|+|n_3|)^2(|n_1|\wedge|n_3|)^{4\alpha}}\sum_{n_2\in \Z^d }\frac{1}{|n_1-n_2|^{2(d-2)}|n_2|^{2\alpha}|n_2-n_3|^{2(d-2)}}\\
    \notag &\lesssim_{d,\alpha}\kappa^6\frac{1}{|n_1-n_3|^{3(d-2)-2}(|n_3|\wedge|n_1|)^{2+6\alpha}}
  \end{align}
  Moreover,  similar to the proof of  \eqref{4.43}, we obtain 
  \begin{align}
    |\widetilde{P}_{1,2}(n_1,n_3)| & \lesssim \frac{\kappa^6}{|n_1-n_3|^{d-2}}\sum_{n_2\in \Z^d }\frac{|n_1-n_2|^2}{|n_1-n_2|^{2(d-2)}(|n_1|+|n_2|)^2(|n_2|\wedge|n_2|)^{6\alpha}|n_2-n_3|^{2(d-2)}}\\
    \notag &\lesssim_{d,\alpha}\kappa^6\frac{1}{|n_1-n_3|^{3(d-2)-2}(|n_3|\wedge|n_1|)^{2+6\alpha}}
  \end{align}
  and 
  \begin{align}
    |\widetilde{P}_{2,3}(n_1,n_3)| & \lesssim_{d,\alpha}\kappa^6\frac{1}{|n_1-n_3|^{3(d-2)-2}(|n_3|\wedge|n_1|)^{2+6\alpha}},
  \end{align}
  \begin{align}\label{4.48}
    |\widetilde{P}_{1,3}(n_1,n_3)| & \lesssim_{d,\alpha}\kappa^6\frac{1}{|n_1-n_3|^{3(d-2)-2}(|n_3|\wedge|n_1|)^{2+6\alpha}}. 
  \end{align}
  Hence, by $\eqref{4.43}\sim \eqref{4.48}$, if we denote 
  \[P'_6=P_{1,2}+P_{1,3}+P_{2,3}+\widetilde{P}_{1,2}+\widetilde{P}_{1,3}+\widetilde{P}_{2,3},\]
  we have 
  \begin{equation}\label{P6'}
    |P'_6(n_1,n_3)|  \lesssim_{d,\alpha}\kappa^6\frac{1}{|n_1-n_3|^{3(d-2)-2}(|n_3|\wedge|n_1|)^{2+6\alpha}}
  \end{equation}  
  and 
  \begin{equation}\label{decomposition of G_0CG_0}
    G_0 C G_0=\frac{1}{3}G_0(C^{(1)}+C^{(2)}+C^{(3)})G_0-\frac{1}{6}G_0P'_6G_0.
  \end{equation} 
  At this stage,  similar to the proof of  \eqref{estimate on difference part}, we  get (since $d\geq 5, \alpha\leq \frac13$ implies $d>6\alpha+2$)
  \begin{align}\label{estimate on difference part, P'6}
    |G_0 P'_6 G_0(n,n')| & \lesssim_{d,\alpha}\kappa^6 \sum_{n_1,n_3\in\Z^d}\frac{1}{|n-n_1|^{d-2}|n_1-n_3|^{3(d-2)-2}(|n_3|\wedge|n_1|)^{2+6\alpha}|n_3-n'|^{d-2}} \\
    \notag &\lesssim_{d,\alpha}\kappa^6 \frac{1}{|n-n'|^{d-2}(|n|\wedge|n'|)^{6\alpha}}. 
  \end{align}\\
  
  Next, it remains to control $G_0 C^{(1)}G_0,G_0 C^{(2)}G_0,G_0 C^{(3)}G_0$,  which is mainly based on the convolution regularization argument.  More precisely, we have
  \[G_0 C^{(1)}G_0=G_0 v^6 \widetilde{N}G_0,G_0 C^{(3)}G_0=G_0\widetilde{N}v^6G_0,\]
  where 
  $$N(n_1,n_3)=\wtg(n_1,n_3) \sum_{n_2}\wtg(n_1,n_2)^2  \wtg(n_2,n_3)^2,\ \widetilde{N}=N-\eta. $$
  We first estimate $G_0\widetilde{N}$ and $\widetilde{N}G_0$, which relies on  the following lemma.  
 \begin{lem}\label{FGNlem}
   Let $d\geq 5$. Then 
   \begin{itemize}
  \item[(1)] For $|\beta|_1<d-2$ and $p\geq 1$ satisfying that $p(2+|\beta|_1)<d$, we have $\partial^{\beta}\hat{G_0}\in L^p$. 
  \item[(2)]  For $|\beta|_1\leq 2(d-2)-1$, we have that $\partial^{\beta}f^2\in L^1$, where 
\begin{align}\label{fxi}
f(\xi)=[(\hat{G_0}-\sigma)*(\hat{G_0}-\sigma)](\xi),\sigma=G_0(0,0)=\int_{\T^d}\hat{G_0}(\xi){\rm d}\xi.
\end{align}
\end{itemize}
\end{lem}
\begin{rmk}
  \begin{itemize}
    \item[(1)] Especially, by taking  $p=1$, we have 
    \[\partial^{\beta}\hat{G_0}\in L^1,\  |\beta|_1<d-2.\]
    \item[(2)] A direct corollary of this lemma is  
           \[\partial^{\beta} \hat{K}=\partial^{\frac{\beta}{3}} \hat{G_0}*\partial^{\frac{\beta}{3}} \hat{G_0}*\partial^{\frac{\beta}{3}} \hat{G_0}\in L^1,\ |\beta|_1<3(d-2)\]
           and 
           \[\partial^{\beta} \hat{N}\in L^1,\ |\beta|_1<3(d-2)-1.\]
          So consider 
      \[\widehat{N}(\xi)=(\hat{G_0}-\sigma)*\big[(\hat{G_0}-\sigma)*(\hat{G_0}-\sigma)\big]^2(\xi)\]
     and  take $\eta=\widehat{N}(0)$.  Since  $\widehat{N}(\xi)$ is a an even symmetric function in $\xi_1,\cdots, \xi_d$,  by Lemma \ref{Lemma 2.1}, we have 
      \begin{align*}
        |N(n_1,n_3)|& \lesssim \frac{1}{|n_1-n_3|^{d-2}}\sum_{n_2\in \Z^d}\frac{1}{|n_1-n_2|^{2(d-2)}|n_2-n_3|^{2(d-2)}}\\
             &\lesssim_{d} \frac{1}{|n_1-n_3|^{3(d-2)}}.
      \end{align*}
      Moreover, this lemma  implies 
      \[\partial^{\beta}\widehat{N}\in L^1 \quad {\rm for} \quad |\beta|_1<3(d-2)-1.\]
      From the above analysis,  we have 
      \[\widehat{N}(\xi)-\eta=c\|\xi\|^2 +\mcO(\|\xi\|^4) \]
      and 
     \[ {{\widehat{G_0}(\xi)\widehat{N}(\xi)-\eta}}=c +\frac{\mcO(\|\xi\|^4)}{\|\xi\|^2+\mcO(\|\xi\|^4)},\]
     of which the  $((3(d-2)-1)-)${\rm th}  order weak derivatives  belong  to  $L^1$. 
      As a result, if we require that $3(d-2)-1\geq d+2$ (this needs $d\geq 5$), the standard Fourier analysis argument as in  \cite[(1.8)--(1.9)]{Bou03} will ensure that
      \begin{equation}\label{good estimate of G_0N}
      |G_0(N-\eta)(n,n')|\leq |n-n'|^{-(d+2-)}. 
      \end{equation}
Thus, we have obtained for $d\geq 5,$
 \begin{align}
     \label{G_0 tN} &|G_0 \widetilde{N}(n,n')|\lesssim \frac{1}{|n-n'|^{d+2-}}, \\
     \label{tNG_0} &| \widetilde{N}G_0(n,n')|\lesssim \frac{1}{|n-n'|^{d+2-}}.
  \end{align}

  \end{itemize}
\end{rmk}
\begin{proof}
 The proof is based on the fractional Gagliardo-Nirenberg inequality \cite{BM18}, and we refer to Appendix \ref{GNineq} for details. 
\end{proof}
We continue to the estimates. By \eqref{G_0 tN} and \eqref{tNG_0}, we have
  \begin{align}\label{4.54}
    |G_0 C^{(1)}G_0(n,n')| &\lesssim \kappa^6\sum_{n_1\in \Z^d}\frac{1}{|n-n_1|^{d-2}|n_1|^{6\alpha}|n_1-n'|^{d+2-}}\\
     \notag &\lesssim_{d,\alpha}\kappa^6 \frac{1}{|n-n'|^{d-2}(|n|\wedge|n'|)^{6\alpha}}
  \end{align}
  and 
  \begin{align}\label{4.55}
    |G_0 C^{(3)}G_0(n,n')| &\lesssim_{d,\alpha} \kappa^6 \frac{1}{|n-n'|^{d-2}(|n|\wedge|n'|)^{6\alpha}}.
  \end{align}
The main obstacle is the term $G_0 C^{(2)}G_0$, which  comes from the following summation graph:  \\
  \begin{figure}[htbp]
    \centering
    \tikzset{every picture/.style={line width=0.75pt}} 

    \begin{tikzpicture}[x=0.75pt,y=0.75pt,yscale=-0.7,xscale=0.7]
    
    \draw  [fill={rgb, 255:black, 208; green, 2; blue, 27 }  ,fill opacity=1 ] (229.06,71.59) .. controls (229.06,69.63) and (230.24,68.05) .. (231.69,68.05) .. controls (233.14,68.05) and (234.32,69.63) .. (234.32,71.59) .. controls (234.32,73.55) and (233.14,75.14) .. (231.69,75.14) .. controls (230.24,75.14) and (229.06,73.55) .. (229.06,71.59) -- cycle ;
    \draw  [fill={rgb, 255:black, 208; green, 2; blue, 27 }  ,fill opacity=1 ] (290.15,102.88) .. controls (290.15,100.92) and (291.33,99.34) .. (292.78,99.34) .. controls (294.23,99.34) and (295.41,100.92) .. (295.41,102.88) .. controls (295.41,104.84) and (294.23,106.43) .. (292.78,106.43) .. controls (291.33,106.43) and (290.15,104.84) .. (290.15,102.88) -- cycle ;
    \draw  [fill={rgb, 255:black, 208; green, 2; blue, 27 }  ,fill opacity=1 ] (364.15,102.37) .. controls (364.15,100.41) and (365.33,98.82) .. (366.78,98.82) .. controls (368.23,98.82) and (369.41,100.41) .. (369.41,102.37) .. controls (369.41,104.33) and (368.23,105.91) .. (366.78,105.91) .. controls (365.33,105.91) and (364.15,104.33) .. (364.15,102.37) -- cycle ;
    \draw  [fill={rgb, 255:black, 208; green, 2; blue, 27 }  ,fill opacity=1 ] (328.67,177.2) .. controls (328.67,175.24) and (329.84,173.66) .. (331.29,173.66) .. controls (332.75,173.66) and (333.92,175.24) .. (333.92,177.2) .. controls (333.92,179.16) and (332.75,180.75) .. (331.29,180.75) .. controls (329.84,180.75) and (328.67,179.16) .. (328.67,177.2) -- cycle ;
    \draw  [fill={rgb, 255:black, 208; green, 2; blue, 27 }  ,fill opacity=1 ] (425.01,69.49) .. controls (425.01,67.54) and (426.18,65.95) .. (427.63,65.95) .. controls (429.08,65.95) and (430.26,67.54) .. (430.26,69.49) .. controls (430.26,71.45) and (429.08,73.04) .. (427.63,73.04) .. controls (426.18,73.04) and (425.01,71.45) .. (425.01,69.49) -- cycle ;
    \draw    (231.69,71.59) -- (292.78,102.88) ;
    \draw    (292.78,102.88) -- (331.29,177.2) ;
    \draw    (331.29,177.2) -- (366.78,102.37) ;
    \draw    (292.78,102.88) -- (366.78,102.37) ;
    \draw    (366.78,102.37) -- (427.63,69.49) ;
    
    \draw (208,61.99) node [anchor=north west][inner sep=0.75pt]   [align=left] {$n$};
    \draw (435.8,61.99) node [anchor=north west][inner sep=0.75pt]   [align=left] {$n'$};
    \draw (283.11,113.05) node [anchor=north west][inner sep=0.75pt]   [align=left] {$n_1$};
    \draw (322.05,182.29) node [anchor=north west][inner sep=0.75pt]   [align=left] {$n_2,v^6$};
    \draw (363.43,110.25) node [anchor=north west][inner sep=0.75pt]   [align=left] {$n_3$};
    \draw (262.16,68.99) node [anchor=north west][inner sep=0.75pt]   [align=left] {\textcolor[rgb]{0.29,0.56,0.89}{1}};
    \draw (326.66,86.47) node [anchor=north west][inner sep=0.75pt]   [align=left] {\textcolor[rgb]{0.29,0.56,0.89}{1}};
    \draw (386.3,69.69) node [anchor=north west][inner sep=0.75pt]   [align=left] {\textcolor[rgb]{0.29,0.56,0.89}{1}};
    \draw (302.93,141.03) node [anchor=north west][inner sep=0.75pt]   [align=left] {\textcolor[rgb]{0.29,0.56,0.89}{2}};
    \draw (348.3,139.48) node [anchor=north west][inner sep=0.75pt]   [align=left] {\textcolor[rgb]{0.29,0.56,0.89}{2}};
  
    \end{tikzpicture}

  \end{figure}
  \ \\
This  triangle structure between vertices $n_1$ and $n_3$ is {\it non-convolutional} due to the presence of $v^6$ at the  vertex $n_2$. So,  we cannot construct a symmetric difference  for this graph, but    take  a direct difference. Specifically,  from  \[v^6_{n_2}=v^6_{n_1}+(v_{n_2}^6-v_{n_1}^6),\] 
  it follows that 
  \begin{equation}
    G_0 C^{(2)} G_0 =G_0 C^{(1)} G_0 +G_0 P_6^{''}G_0. 
  \end{equation}
  This is why we introduce the error term $P_6^{''}$ of such form.  Then  by Lemma \ref{difference property}, we get 
  \begin{align}\label{P6''}
    |P_6^{''}(n_1,n_3)| &= |\wtg(n_1,n_3) \sum_{n_1\in \Z^d}(v_{n_1}^6-v_{n_2}^6) \wtg(n_1,n_2)^2 \wtg(n_2,n_3)^2 |\\
    \notag & \lesssim_{d,\alpha} \frac{\kappa^6}{|n_1-n_3|^{d-2}}\sum_{n_2\in \Z^d }\frac{|n_1-n_2|}{|n_1-n_2|^{2(d-2)}(|n_1|+|n_2|)(|n_2|\wedge|n_2|)^{6\alpha}|n_2-n_3|^{2(d-2)}}\\
    \notag &\lesssim_{d,\alpha} \kappa^6\frac{1}{|n_1-n_3|^{3(d-2)-1}(|n_3|\wedge|n_1|)^{1+6\alpha}},
  \end{align}
  which  is the first conclusion  in  Theorem \ref{6,7 resolvent estimate} (1). 
  
  Next,  by summarizing all  the estimates  \eqref{estimate on difference part, P'6}, \eqref{4.54} and \eqref{4.55} and combining  the relation
  \[C-P_6^{''}=\frac{1}{3}(2C^{(1)}+C^{(3)})-\frac{1}{6}P_6',\]
  we obtain 
  \begin{equation}
    |G_0 (C-P_6^{''}) G_0(n,n')|\lesssim_{d,\alpha} \kappa^6\frac{1}{|n-n'|^{d-2}(|n|\wedge|n'|)^{6\alpha}}, 
  \end{equation}
 which  is the second conclusion in Theorem \ref{6,7 resolvent estimate} (1).
  \medskip

  (2) Now let's control  the refined remaining term $\boxed{G_0,i}_r$ for  $i=6,7$.\\

  When $i=6$, as  compared with the  initial $\boxed{G_0,6}$, 
  \begin{align}\label{refined 6 nonrandom term}
     \notag \boxed{G_0,6}_r= &\eqref{4.4}\sim \eqref{4.8}\\
          & + 4\boxed{ G_0 (C-P_6^{''}) G_0}-2\sigma^2(\boxed{G_0 v^2 W G_0}+\boxed{G_0 W v^2 G_0}). 
  \end{align}
  Using the relation
  \[\wtg W \wtg =G_0 W G_0 -\sigma G_0 W -\sigma WG_0 +\sigma^2 W\]
  shows 
  \begin{equation}\label{decay of wtg W wtg}
    |\wtg W \wtg(n,n')|\lesssim_{d,\alpha} \kappa^4 \frac{1}{|n-n'|^{d-2}(|n|\wedge|n'|)^{4\alpha}}. 
  \end{equation}
  Combining  \eqref{decay of wtg W wtg}  with the previous arguments  in  the estimates of $i$th order remaining terms for $i=1,2,3,4,5$ concludes the  $\E_p$   bound on  $\eqref{4.4}\sim\eqref{4.8}$ is just \eqref{moment estimate on 6,7 resolvent}. So it only  needs  to control the non-random term \eqref{refined 6 nonrandom term}. For this, by Theorem \ref{6,7 resolvent estimate} (1), we already have
  \begin{equation}
    |G_0 (C-P_6^{''}) G_0(n,n')|\lesssim_{d,\alpha} \kappa^6\frac{1}{|n-n'|^{d-2}(|n|\wedge|n'|)^{6\alpha}}. 
  \end{equation}
It  suffices  to estimate  $G_0(v^2 W +W v^2)G_0$ in \eqref{refined 6 nonrandom term}. Rewrite it as 
  \[ G_0 (v^2 W +W v^2)G_0= G_0(v^4 M v^2 +v^2 M v^4)G_0.\]
  By the symmetric difference 
  \[ v_{n_1}^4 v_{n_2}^2+v^2_{n_1} v^4_{n_2}=v^6_{n_1}+v^6_{n_2}-(v^4_{n_1}-v^4_{n_2})(v_{n_1}^2-v_{n_2}^2),\]
 we have 
  \begin{equation}\label{Decompose of G_0 (v^2 W +W v^2)G_0}
    G_0 (v^2 W +W v^2)G_0 =(G_0 v^6 M G_0 +G_0 M v^6 G_0)-G_0 P_6 G_0. 
  \end{equation}  
  So  by Lemma \ref{difference property},
  \begin{align}
    |P_6(n_1,n_2)|&=|(v^4_{n_1}-v^4_{n_2})(v_{n_1}^2-v_{n_2}^2) M(n_1,n_2)|\\
    \notag &\lesssim_{d,\alpha}\kappa^6 \frac{|n_1-n_2|^2}{(|n_1|+|n_2|)^2 (|n_1|\wedge|n_2|)^{6\alpha}|n_1-n_2|^{3(d-2)}}. 
  \end{align}
Similar to the proof of  \eqref{estimate on convolution part} and \eqref{estimate on difference part}, we obtain
  \[|G_0 v^6 M G_0(n,n')|\lesssim_{d,\alpha}\kappa^6 \frac{1}{|n-n'|^{d-2}(|n|\wedge|n'|)^{6\alpha}},\]
  \[|G_0  Mv^6 G_0(n,n')|\lesssim_{d,\alpha}\kappa^6 \frac{1}{|n-n'|^{d-2}(|n|\wedge|n'|)^{6\alpha}},\]
  \[|G_0 P_6 G_0(n,n')|\lesssim_{d,\alpha}\kappa^6 \frac{1}{|n-n'|^{d-2}(|n|\wedge|n'|)^{6\alpha}}.\]
  Thus by \eqref{Decompose of G_0 (v^2 W +W v^2)G_0}, we get  
  \begin{equation}
    | (G_0v^2 WG_0 +G_0W v^2G_0)(n,n')| \lesssim_{d,\alpha}\kappa^6 \frac{1}{|n-n'|^{d-2}(|n|\wedge|n'|)^{6\alpha}}. 
  \end{equation}
  Taking account of all above estimates concludes the  estimate \eqref{moment estimate on 6,7 resolvent} for $i=6$.\smallskip

  When $i=7$, 
 we have 
  \begin{align}
    \boxed{G_0,7}_r  =&[\eqref{4.11}\sim \eqref{4.21}] +[\eqref{4.23}\sim \eqref{4.24} ]\\
    \notag  &-4(G_0 C \wtg V G_0+G_0 V \wtg C G_0)+4G_0 VG_0P_6^{''}G_0+4G_0P_6^{''}G_0VG_0\\
    \notag =&[\eqref{4.11}\sim\eqref{4.21}] +[\eqref{4.23}\sim \eqref{4.24} ]\\
    \label{4.63} & +4\sigma(G_0 C V G_0 +G_0 V C G_0)-4[G_0 (C-P_6^{''})  G_0 V G_0+G_0 V G_0 (C-P_6^{''}) G_0]
  \end{align} 
  First, we explain why the $\E_p$ bounds  of terms $\eqref{4.11}\sim \eqref{4.20}$ and \eqref{4.23} can be controlled by   \eqref{moment estimate on 6,7 resolvent}. Indeed, we have 
  \begin{itemize}
    \item For the diagonal operators  $D_4,R_6,D_6^{(1)},D_6^{(2)}$, we obtain 
             \[|D_4(n_1)|\lesssim_{d,\alpha}\kappa^4 \frac{1}{|n_1|^{4\alpha}}\]
             and
             \begin{align*}
              |R_6(n_1)|&=|v^2_{n_1}\cdot (\wtg W \wtg)(n_1,n_1)|\\
                 &\lesssim \kappa^6 \frac{1}{|n_1|^{2\alpha}} \sum_{n_2,n_3\in \Z^d}\frac{1}{|n_1-n_2|^{d-2}|n_2|^{2\alpha}|n_2-n_3|^{3(d-2)}|n_3|^{2\alpha}|n_3-n_1|^{d-2}}\\
                 &\overset{{\rm by\  Lemma\  \ref{Lemma 2.2}}}{\lesssim_{d,\alpha }} \kappa^6 \frac{1}{|n_1|^{2\alpha}} \sum_{n_2\in \Z^d}\frac{1}{|n_1-n_2|^{2(d-2)}|n_2|^{2\alpha}(|n_1|\wedge|n_2|)^{2\alpha}} \\
                 &\lesssim_{d,\alpha}  \kappa^6 \frac{1}{|n_1|^{6\alpha}},
             \end{align*}
             and similarly,
             \[|D^{(1)}_6(n_1)|,|D^{(2)}_6(n_1)|\lesssim_{d,\alpha}\kappa^6 \frac{1}{|n_1|^{6\alpha}}.\]

    \item From  \eqref{decay of wtg W wtg}, we have 
            \begin{align*}
              &\ \ \ |G_0 W G_0(n,n')|,|G_0 W \wtg(n,n')|,|\wtg W G_0(n,n')|,|\wtg W \wtg(n,n')|\\
              &\lesssim_{d,\alpha} \kappa^4 \frac{1}{|n-n'|^{d-2}(|n|\wedge|n'|)^{4\alpha}}. 
        \end{align*}         
  \end{itemize}
  Again, using the  above facts  together with the previous decoupling arguments  in  the cases of orders $i=1,2,3,4,5$ shows 
  \[\E_p\left |   \bigg(\eqref{4.11}\sim \eqref{4.20} + \eqref{4.23} \bigg)(n,n')     \right | \lesssim_{d,p,\alpha}\kappa^7 \frac{1}{|n-n|^{d-2}(|n|\wedge|n'|)^{7\alpha}}.\]
  \\
Now,  let's handle the terms \eqref{4.21}, \eqref{4.63} and \eqref{4.24}. 
  For the modified terms \eqref{4.63}, on one hand,  we have  by Lemma \ref{Lemma 2.2},
  \begin{align*}
    |C(n_1,n_3)|&\lesssim \kappa^6 \frac{1}{|n_1|^{2\alpha}|n_1-n_3|^{d-2}|n_3|^{2\alpha}}\sum_{n_2\in \Z^d}\frac{1}{|n_1-n_2|^{2(d-2)}|n_2|^{2\alpha}|n_2-n_3|^{2(d-2)}}\\
              &\lesssim_{d,\alpha} \kappa^6 \frac{1}{|n_1-n_3|^{3(d-2)}|n_1|^{2\alpha}|n_3|^{2\alpha}(|n_1|\wedge|n_3|)^{2\alpha}}\\
              &\lesssim_{d,\alpha} \kappa^6 \frac{1}{|n_1-n_3|^{3(d-2)}(|n_1|\wedge|n_3|)^{6\alpha}}. 
  \end{align*}
  So the operator $G_0C$ (also $CG_0$) can be controlled again via Lemma \ref{Lemma 2.2}: 
  \begin{align*}
    |G_0C(n,n')|&\lesssim_{d,\alpha}\kappa^6 \sum_{n_1\in \Z^d}\frac{1}{|n-n_1|^{d-2}|n_1-n'|^{3(d-2)}(|n_1|\wedge|n'|)^{6\alpha}} \\
        &\lesssim_{d,\alpha}\kappa^6  \frac{1}{|n-n'|^{d-2}(|n|\wedge|n'|)^{6\alpha}}. 
  \end{align*}
  Hence, using   (decoupling) Lemma \ref{decoupling} yields  
  \begin{align*}
    \E_p\left | G_0CV G_0(n,n') \right | & \lesssim_{d,p,\alpha}\kappa^7 \bigg(    \sum_{n_1\in \Z^d} \frac{1}{|n-n_1|^{2(d-2)}(|n|\wedge|n_1|)^{12\alpha}|n_1|^{2\alpha}|n_1-n'|^{2(d-2)}} \bigg)\\
    &\lesssim_{d,\alpha}\kappa^7  \frac{1}{|n-n'|^{d-2}(|n|\wedge|n'|)^{7\alpha}}. 
  \end{align*}
  Similarly,  
  \begin{align*}
   \E_p \left | G_0V CG_0(n,n') \right |_p &\lesssim_{d,p,\alpha}\kappa^7  \frac{1}{|n-n'|^{d-2}(|n|\wedge|n'|)^{7\alpha}}. 
  \end{align*}
  On the other hand, by Theorem \ref{6,7 resolvent estimate} (1) and  Lemma \ref{decoupling}, we get 
  \[ \E_p\left | G_0(C-P_6^{''})G_0 V G_0(n,n') \right | \lesssim_{d,p,\alpha}\kappa^7  \frac{1}{|n-n'|^{d-2}(|n|\wedge|n'|)^{7\alpha}},\]
  \[ \E_p\left | G_0 V G_0(C-P_6^{''})G_0(n,n') \right |  \lesssim_{d,p, \alpha}\kappa^7  \frac{1}{|n-n'|^{d-2}(|n|\wedge|n'|)^{7\alpha}}.\]
  Thus we have established  the desired upper bound   for \eqref{4.63}.\\
 For term \eqref{4.21}, we first estimate  $G_0 v^2 M_4 v^2 V M_4 v^2 G_0$. We have by Lemma \ref{Lemma 2.2},
  \begin{align*}
    |G_0 v^2 M_4(n,n')|&\lesssim \kappa^2 \sum_{n_1\in \Z^d}\frac{1}{|n-n_1|^{d-2}|n_1|^{2\alpha}|n_1-n'|^{3(d-2)}}\\
        &\lesssim_{d,\alpha} \kappa^2 \frac{1}{|n-n'|^{d-2}(|n|\wedge|n'|)^{2\alpha}}, 
  \end{align*}
  and similarly, $|  M_4 v^2 G_0(n,n')|$ has the  the same estimate. Then by  Lemma \ref{decoupling}, we obtain 
  \[ \E_p\left | G_0 v^2 M_4 v^2 V M_4 v^2 G_0(n,n') \right | \lesssim_{d,p,\alpha}\kappa^7  \frac{1}{|n-n'|^{d-2}(|n|\wedge|n'|)^{7\alpha}}.\]
  Next, we estimate the second part $G_0 D_7 G_0$ of \eqref{4.21}.  {We remark  that $D_7$ is a random diagonal operator, so we cannot  renormalize it.} Recall that 
  \begin{equation*}
    D_7 = v^4_{n_1} \bigg[\sum_{n_2\in \Z^d} v^3_{n_2}\omega_{n_2} \wtg(n_1,n_2)^6 \bigg].
   \end{equation*}
 Using directly Lemma \ref{decoupling} shows 
   \begin{align*}
     \E_p\left | G_0 D_7 G_0(n,n')\right |
     = &\E_p\left | \sum_{n_1,n_2 \in\Z^d} \omega_{n_2} G_0(n,n_1) v_{n_1}^4 \wtg(n_1,n_2)^6 v^3_{n_2} G_0(n_1,n') \right |\\
          \lesssim_{p}& \left(   \sum_{n_2\in \Z^d} v_{n_2}^6 \big(   \sum_{n_1\in \Z^d}G_0(n,n_1) v_{n_1}^4 \wtg(n_1,n_2)^6 G_0(n_1,n')   \big)^2 \right)^{\frac{1}{2}}\\
          \lesssim_{p} &\kappa^7 \left(   \sum_{n_2\in \Z^d} \frac{1}{|n_2|^{6\alpha}} \big(   \sum_{n_1\in \Z^d}\frac{1}{|n-n_1|^{d-2}|n_1|^{4\alpha}|n_1-n_2|^{6(d-2)}|n_1-n'|^{d-2}}  \big)^2 \right)^{\frac{1}{2}}\\
          \lesssim_{p} &\kappa^7 \biggl(   \sum_{n_2\in \Z^d} \frac{1}{|n_2|^{6\alpha}} \big(   \sum_{n_1\in \Z^d}\frac{1}{|n-n_1|^{2(d-2)}|n_1|^{4\alpha}|n_1-n_2|^{6(d-2)}}  \big) \\
         &  \cdot \big(   \sum_{n_1\in \Z^d}\frac{1}{|n'-n_1|^{2(d-2)}|n_1|^{4\alpha}|n_1-n_2|^{6(d-2)}}  \big) \biggl)^{\frac12}\\
          \lesssim_{d,p,\alpha} &\kappa^7 \left( \sum_{n_2\in \Z^d} \frac{1}{|n_2|^{6\alpha}} \cdot \frac{1}{|n-n_2|^{2(d-2)}(|n|\wedge|n_2|)^{4\alpha}}\cdot \frac{1}{|n'-n_2|^{2(d-2)}(|n'|\wedge|n_2|)^{4\alpha}}  \right)^{\frac{1}{2}}\\
          \lesssim_{d,p,\alpha}& \kappa^7 \frac{1}{|n-n'|^{d-2}(|n|\wedge|n'|)^{7\alpha}},
   \end{align*}
   where for the third inequality above, we apply the Cauchy-Schwarz inequality,  and for the fourth and fifth inequalities  we use  Lemma \ref{Lemma 2.2}. Thus, putting  all above  estimates  together concludes the  desired bound on  \eqref{4.21}.\\
 For the each  term  in \eqref{4.24},  {it cannot be written as  a summation about admissible tuples.}  Similar to  the estimate on  $G_0 D_7 G_0$, we can directly use the  decoupling Lemma \ref{decoupling}. For example,  
   \begin{align*}
   &\ \ \  \E_p\left | G_0 V S G_0 (n,n') \right |\\
    =&\E_p \left |  \sum_{n_1,n_2,n_3\in \Z^d}  G_0(n,n_1) \omega_{n_1} v^3_{n_1} v^2_{n_2} v^2_{n_3}  \wtg(n_1,n_3)^2  \wtg(n_1,n_2)^3 \wtg(n_2,n_3) G_0 (n_3,n')         \right | \\
    \lesssim_p& \left(\sum_{n_1\in \Z^d} G_0 (n,n_1)^2 v^6_{n_1} \big( \sum_{n_2,n_3\in \Z^d} G_0(n_3,n')v^2_{n_2}v^2_{n_3}\wtg(n_2,n_3)\wtg(n_1,n_3)^2\wtg(n_1,n_2)^3 \big)^2 \right)^{\frac{1}{2}}\\
    \lesssim_p& \left(\sum_{n_1\in \Z^d} G_0 (n,n_1)^2 v^6_{n_1} \big[ \sum_{n_2\in \Z^d} \big(\sum_{n_3\in \Z^d}G_0(n_3,n')v^2_{n_3}\wtg(n_2,n_3)\wtg(n_1,n_3)^2 \big) v^2_{n_2}\wtg(n_1,n_2)^3 \big]^2 \right)^{\frac{1}{2}}\\
    \lesssim_p& \kappa^7 \biggl(\sum_{n_1\in \Z^d} \frac{1}{|n-n_1|^{2(d-2)}|n_1|^{6\alpha}}\\
    & \cdot \big[ \sum_{n_2\in \Z^d} \big(\sum_{n_3\in \Z^d}\frac{1}{|n_3-n'|^{d-2}|n_3|^{2\alpha}|n_2-n_3|^{d-2}|n_1-n_3|^{2(d-2)}} \big) \frac{1}{|n_2|^{2\alpha}|n_2-n_1|^{3(d-2)}} \big]^2 \biggl)^{\frac{1}{2}}.
  \end{align*}
   Applying  the  Cauchy-Schwarz inequality  for the summation about $n_3$ implies   (again by Lemma \ref{Lemma 2.2})
   \begin{align*}
    &\ \ \ \sum_{n_3\in \Z^d}\frac{1}{|n_3-n'|^{d-2}|n_3|^{2\alpha}|n_2-n_3|^{d-2}|n_1-n_3|^{2(d-2)}}\\
    &\leq (\sum_{n_3\in \Z^d}\frac{1}{|n_3-n'|^{2(d-2)}|n_3|^{2\alpha}|n_3-n_1|^{2(d-2)}})^{\frac{1}{2}} \cdot (\sum_{n_3\in \Z^d}\frac{1}{|n_3-n_2|^{2(d-2)}|n_3|^{2\alpha}|n_3-n_1|^{2(d-2)}})^{\frac{1}{2}}\\
    &\lesssim_{d,\alpha} \frac{1}{|n_1-n'|^{d-2}(|n_1|\wedge|n'|)^{\alpha}} \cdot  \frac{1}{|n_1-n_2|^{d-2}(|n_1|\wedge|n_2|)^{\alpha}}.
   \end{align*}
   This enables  us to continue the estimate: 
   \begin{align*}
    \ \ \ &\E_p\left | G_0 V S G_0 (n,n') \right | \\
 \lesssim_p&   \kappa^7 \biggl(\sum_{n_1\in Z^d} \frac{1}{|n-n_1|^{2(d-2)}|n_1|^{6\alpha}}\\
    & \cdot \big[ \sum_{n_2\in \Z^d} \frac{1}{|n_1-n'|^{d-2}(|n_1|\wedge|n'|)^{\alpha}} \cdot  \frac{1}{|n_1-n_2|^{d-2}(|n_1|\wedge|n_2|)^{\alpha}}\cdot \frac{1}{|n_2|^{2\alpha}|n_2-n_1|^{3(d-2)}} \big]^2 \biggl)^{\frac{1}{2}}\\
    =\kappa^7& \biggl( \sum_{n_1\in \Z^d} \frac{1}{|n-n_1|^{2(d-2)}|n_1-n'|^{2(d-2)}|n_1|^{6\alpha}(|n_1|\wedge|n'|)^{2\alpha}} \\
    & \cdot [\sum_{n_2\in \Z^d} \frac{1}{|n_2-n_1|^{4(d-2)}|n_2|^{2\alpha}(|n_2|\wedge|n_1|)^{\alpha}}]^2 \biggl)^{\frac{1}{2}} \\
    \overset{{\rm by\ Lemma\  \ref{Lemma 2.1}}}{\lesssim_{d,p,\alpha}}& \kappa^7 \biggl( \sum_{n_1\in \Z^d} \frac{1}{|n-n_1|^{2(d-2)}|n_1-n'|^{2(d-2)}|n_1|^{6\alpha}(|n_1|\wedge|n'|)^{2\alpha}} \cdot \frac{1}{|n_1|^{6\alpha}}\biggl)^{\frac{1}{2}}\\
    \overset{{\rm by\ Lemma\  \ref{Lemma 2.2}}}{\lesssim_{d,p,\alpha}}& \kappa^7 \frac{1}{|n-n'|^{d-2}(|n|\wedge|n'|)^{7\alpha}}. 
   \end{align*}
   Similarly, $G_0 S^{\top} V G_0$ has  the same estimate. Thus, we get the upper bound on  \eqref{4.24}.\\
   Combining all the above  estimates concludes the estimate \eqref{moment estimate on 6,7 resolvent} for $i=7$.
\end{proof}

\subsection{Rearrangement of  \eqref{resolvent}.}
Before we estimate the resolvent \eqref{resolvent}, we need to rearrange the decomposition. The main aim of the rearrangement is to remove the {\it singular part}  (i.e., $G_0 P_6''$) from $A$ to $B$. We will use the iteration technique in the proof of Lemma \ref{8thlem}. 

Recall that 
\begin{align*}
\boxed{G_0,6}_r&=\boxed{G_0,6}-4G_0P_6^{''}G_0,\\
\boxed{G_0,7}_r&=\boxed{G_0,6}+4G_0 VG_0P_6^{''}G_0+4G_0P_6^{''}G_0VG_0.
\end{align*}
Denote again by  $\boxed{G,6}_r$ (resp. $\boxed{G,7}_r$) the terms with the first $G_0$ replaced by $G$ in $\boxed{G_0,6}_r$ (resp. $\boxed{G_0,7}_r$). 
Repeatedly  using \eqref{D.7} and \eqref{D.3}   (start  with the $5$th order expansion) leads to  
\begin{align*}
  G = &\sum_{i=0}^{5} \boxed{G_0,i} + \boxed{G,6} \\
    &-G \Delta_2 V \boxed{G_0,5} \uwave{\text{$-G \Delta_4 V\boxed{G_0,5}--G \Delta_6 V\boxed{G_0,5}$}}\\
    & \uwave{\text{$-G \Delta_4 V\boxed{G_0,4}-G \Delta_6 V\boxed{G_0,4}$}}\\
    & -G \Delta_4 V \boxed{G_0,3} \uwave{\text{$-G \Delta_6 V\boxed{G_0,3}-G \Delta_6 V\boxed{G_0,2}$}}-G \Delta_6 V \boxed{G_0,1}\\
    = &\sum_{i=0}^{5} \boxed{G_0,i} + \boxed{G,6}_r + 4G P_6^{''}G_0 \\
    &    -G \Delta_2 V \boxed{G_0,5}-G \Delta_4 V \boxed{G_0,3}-G \Delta_6 V \boxed{G_0,1}\\
& \uwave{\text{ $ -G\Delta_6 V \sum_{i=2}^{5}\boxed{G_0,i}-G\Delta_4 V \sum_{i=4}^{5}\boxed{G_0,i}$}}\\
   =&\cdots\\
    =&\sum_{i=0}^{5} \boxed{G_0,i} + \boxed{G_0,6}_r +\boxed{G_0,7}_r + 4G P_6^{''}G_0 -4GP_6^{''}G_0VG_0 \\
    & \uwave{\text{ $-V\boxed{G_0,7}_r-G\Delta _2V (\boxed{G_0,6}_r+\boxed{G_0,7}_r)  -G\Delta_6 V (\sum_{i=2}^{5}\boxed{G_0,i}+\boxed{G_0,6}_r+\boxed{G_0,7}_r)$}}\\
    & \uwave{\text{$-G\Delta_4 V (\sum_{i=4}^{5}\boxed{G_0,i}+\boxed{G_0,6}_r +\boxed{G_0,7}_r)$}}.
\end{align*}
Thus, we rearrange the decomposition \eqref{resolvent} as 
\begin{equation}\label{rearrange 1}
  G= A'+GB'
\end{equation}
with 
\begin{align}
 \label{A'} A'=&\sum_{i=0}^{5} \boxed{G_0,i} + \boxed{G_0,6}_r +\boxed{G_0,7}_r,\\ 
\label{B'} B'= &4P_6^{''}G_0 -4P_6^{''}G_0VG_0 -V\boxed{G_0,7}_r-\Delta _2V (\boxed{G_0,6}_r+\boxed{G_0,7}_r) \\
  \notag &-\Delta_4 V (\sum_{i=4}^{5}\boxed{G_0,i}+\boxed{G_0,6}_r +\boxed{G_0,7}_r)\\
  \notag &-\Delta_6 V (\sum_{i=2}^{5}\boxed{G_0,i}+\boxed{G_0,6}_r+\boxed{G_0,7}_r).
\end{align}

\subsection{Green's function estimates}
In this subsection, we will finish the proof of Theorem \ref{greenthm}.

\begin{proof}[Proof of Theorem \ref{greenthm}]
We aim to control   \eqref{rearrange 1}.  From  $\eqref{A'}$, Theorem \ref{0-5 remaining term expactation} and Theorem \ref{6,7 resolvent estimate},  it follows that (the first term $G_0$ in $A'$ is deterministic) 
\begin{equation}\label{moment A'}
  \E_p|(A'-G_0)(n,n')|   \lesssim_{d,p,\alpha}\sum_{i=1}^{7} \kappa^i \frac{1}{|n-n'|^{d-2}(|n|\wedge|n'|)^{i\alpha}}\lesssim_{d,p,\alpha}\kappa \frac{1}{|n-n'|^{d-2}(|n|\wedge|n'|)^{\alpha}}. 
\end{equation}
Moreover, for the estimate of $B'$,  we first apply Theorem \ref{6,7 resolvent estimate} (1)  to get 
\begin{align*}
  |P_6^{''}G_0(n,n')| & \lesssim_{d,\alpha} \kappa^6 \sum_{n_1\in \Z^d}\frac{1}{|n-n_1|^{3d-7}(|n|\wedge|n_1|)^{1+6\alpha}|n_1-n'|^{d-2}}\\
         &\lesssim_{d,\alpha} \kappa^6 \frac{1}{|n-n'|^{d-2}(|n|\wedge|n'|)^{1+6\alpha}}\ ({\rm by\  Lemma\  \ref{Lemma 2.2}})\\
         &\lesssim_{d,\alpha} \kappa^6 \frac{1}{|n-n'|^{d-2}(|n|\wedge|n'|)^{8\alpha}},
\end{align*}
where for the last inequality,  we  use the fact   $d\geq5, \frac{1}{4}<\alpha\leq \frac{1}{3}\Rightarrow d-2\geq 1+6\alpha >8\alpha$. Then by  Lemma \ref{decoupling} and Lemma \ref{Lemma 2.2}, 
\begin{align*}
  \E_p\left |  P_6^{''}G_0 V G_0(n,n')  \right |&\lesssim_p  \left( \sum_{n_1\in \Z^d} (P^{''}_6G_0(n,n_1))^2 v^2_{n_1} G_0(n_1,n')^2  \right)^{\frac{1}{2}}\\
   &\lesssim_{d,p,\alpha} \kappa^7 \left(  \sum_{n_1\in \Z^d}  \frac{1}{|n-n_1|^{2(d-2)}(|n|\wedge|n_1|)^{16\alpha}|n_1|^{2\alpha}|n_1-n'|^{2(d-2)}}     \right)^{\frac{1}{2}}\\
   &\lesssim_{d,p,\alpha} \kappa^7 \frac{1}{|n-n'|^{d-2}(|n|\wedge|n'|)^{9\alpha}}.
\end{align*} 
This  together with Theorem \ref{0-5 remaining term expactation} and Theorem \ref{6,7 resolvent estimate} (2) implies, for example, 
\begin{align*}
  &\ \ \ \E_p\left | \Delta_2 V  (\boxed{G_0,6}_r+\boxed{G_0,7}_r)(n,n') \right | \\
  &\lesssim_{d,p,\alpha} \kappa^2\frac{1}{|n|^{2\alpha}}\left(\kappa^6\frac{1}{|n-n'|^{d-2}(|n|\wedge|n'|)^{6\alpha}}+\kappa^7\frac{1}{|n-n'|^{d-2}(|n|\wedge|n'|)^{7\alpha}}\right) \\
  &\lesssim_{d,p,\alpha} \kappa^8\frac{1}{|n-n'|^{d-2}(|n|\wedge|n'|)^{8\alpha}}.
\end{align*}
Combining all the above estimates yields 
\begin{equation}\label{moment B'}
  \E_p|B'(n,n')| \lesssim_{d,p,\alpha} \kappa^6 \frac{1}{|n-n'|^{d-2}(|n|\wedge|n'|)^{8\alpha}} \leq \kappa^2 \frac{1}{|n-n'|^{d-2}(|n|\wedge|n'|)^{8\alpha}},
\end{equation}
where for the second inequality, we require  that $\kappa$ is sufficiently small:  $0<\kappa\leq c(d,p,\alpha) \ll 1$. \smallskip

Next, we apply the  Chebyshev's inequality to control  $B'(n,n')$  provided  some $\omega$ were removed.  Since  $\frac{1}{4}<\alpha\leq \frac{1}{3}$, we  choose a small  $\varepsilon $ such that $$0<100\varepsilon<8\alpha-2<1.$$ Using  \eqref{moment B'} together with the Chebyshev's inequality concludes that for any fixed $n,n'\in \Z^d,$
\begin{align}\label{4.69}
  &\ \ \ \mathbb{P}\left(|B'(n,n')|>\kappa \frac{(|n|\vee |n'|)^{\varepsilon}}{|n-n'|^{d-2}(|n|\wedge|n'|)^{8\alpha}}\right) \\
 \notag &\leq \left(\kappa \frac{(|n|\vee |n'|)^{\varepsilon}}{|n-n'|^{d-2}(|n|\wedge|n'|)^{8\alpha}}\right)^{-p} \E|B(n,n')|^p\\
 \notag &\leq \left(\kappa \frac{1}{(|n|\vee|n'|)^{\varepsilon}}\right)^p.
\end{align}
Hence,
\begin{align}\label{Omega1}
\mathbb{P}(\Omega^{(1)})\geq 1-\kappa^p\sum_{n,n'\in \Z^d} \frac{1}{(|n|\vee|n'|)^{p\varepsilon}}, 
\end{align}
where 
\begin{align*}
\Omega^{(1)}&:=\bigcap_{n, n'\in\Z^d}\Omega^{(1)}_{n,n'},\\
 \Omega^{(1)}_{n,n'}&:= \left\{\omega\in\{\pm 1\}^{\Z^d}:\ |B'(n,n')|\leq \kappa \frac{(|n|\vee |n'|)^{\varepsilon}}{|n-n'|^{d-2}(|n|\wedge|n'|)^{8\alpha}}\right\}. 
\end{align*}
If  we choose $p$ sufficiently large such that $p\varepsilon>2d+2$, then 
\begin{align*}
  \sum_{n,n'\in \Z^d} \frac{1}{(|n|\vee|n'|)^{p\varepsilon}} &\leq \sum_{n,n'\in \Z^d} \frac{1}{(|n|\vee|n'|)^{2d+2}}\\
   &\leq  \sum_{n,n'\in \Z^d} \frac{1}{|n|^{d+1}|n'|^{d+1}}<\infty.
\end{align*}
Hence, with high probability (i.e., $\mathbb{P}(\{\pm 1\}^{\Z^d}\setminus\Omega^{(1)})\lesssim_d\kappa^p$), we have 
\begin{align*}
  |B'(n,n')| &\leq \kappa \frac{(|n|\vee |n'|)^{\varepsilon}}{|n-n'|^{d-2}(|n|\wedge|n'|)^{8\alpha}}\leq \kappa \frac{(|n|\wedge |n'|+|n-n'|)^{\varepsilon}}{|n-n'|^{d-2}(|n|\wedge|n'|)^{8\alpha}}.
\end{align*}
From  $2ab\geq  ab+1 \geq a+b,a,b\in \Z_+$,  it follows that 
\begin{equation} \label{Bou trick 1}
  2(|n_1|\wedge|n_2|)|n_1-n_2|\geq (|n_1|\wedge|n_2|) +|n_1-n_2|,
\end{equation}
which implies 
\begin{align}\label{B' high prob}
  |B'(n,n')| &\leq 2\kappa \frac{1}{|n-n'|^{d-2-\varepsilon}(|n|\wedge|n'|)^{8\alpha-\varepsilon} }\ {\rm for} \ \forall n,n'\in \Z^d.
\end{align} 
So, with high probability, \eqref{B' high prob} holds. Based on  this fact, we can  show that for $0<\kappa\ll1,$
\begin{equation}\label{Neumann}
  (I-B')^{-1} = I+\sum_{i=1}^{\infty} (B')^i = I+\widetilde{B'},
\end{equation} 
where $I$ denotes the identity operator. 
Indeed, using  \eqref{B' high prob}  yields for $\omega\notin\Omega^{(1)},$
\begin{align*}
  |(B')^2(n,n')|\leq& (2\kappa)^2 \sum_{n_1\in \Z^d}\frac{1}{|n-n_1|^{d-2-\varepsilon}(|n|\wedge|n_1|)^{8\alpha-\varepsilon}}\cdot \frac{1}{|n_1-n'|^{d-2-\varepsilon}(|n_1|\wedge|n'|)^{8\alpha-\varepsilon}}\\
   \leq& (2\kappa)^2 \biggl( \sum_{n_1\in \Z^d} \frac{1}{|n|^{8\alpha-\varepsilon}}  \frac{1}{|n'|^{8\alpha-\varepsilon}}  \frac{1}{|n-n_1|^{d-2-\varepsilon}|n_1-n'|^{d-2-\varepsilon}}\\
   &  + \sum_{n_1\in \Z^d} \frac{1}{|n|^{8\alpha-\varepsilon}}    \frac{1}{|n-n_1|^{d-2-\varepsilon}|n_1|^{8\alpha-\varepsilon}|n_1-n'|^{d-2-\varepsilon}}\\ 
   &   + \sum_{n_1\in \Z^d} \frac{1}{|n'|^{8\alpha-\varepsilon}}    \frac{1}{|n-n_1|^{d-2-\varepsilon}|n_1|^{8\alpha-\varepsilon}|n_1-n'|^{d-2-\varepsilon}}\\ 
   &  + \sum_{n_1\in \Z^d}    \frac{1}{|n-n_1|^{d-2-\varepsilon}|n_1|^{(8\alpha-\varepsilon)+(2+\varepsilon)}|n_1-n'|^{d-2-\varepsilon}}\biggr)\ ({\rm since }\ 8\alpha-2-2\varepsilon>0)\\
 \overset{{\rm by\ Lemmas\ \ref{Lemma 2.1}, \ref{Lemma 2.2}}}{\lesssim_{d,\alpha}}  & (2\kappa)^2 \biggl( \frac{1}{|n|^{8\alpha-\varepsilon}}  \frac{1}{|n'|^{8\alpha-\varepsilon}} \frac{1}{|n-n'|^{d-4-2\varepsilon}}\\
   & \qquad\qquad + \frac{1}{|n-n'|^{d-2-\varepsilon}(|n|\wedge|n'|)^{8\alpha-\varepsilon}}\biggr).
\end{align*}
Moreover, we have 
\begin{align*}
  \frac{1}{|n|^{8\alpha-\varepsilon}}  \frac{1}{|n'|^{8\alpha-\varepsilon}} \frac{1}{|n-n'|^{d-4-2\varepsilon}}&=\frac{|n-n'|^{2+\varepsilon}}{(|n|\vee|n'|)^{8\alpha-\varepsilon}}\frac{1}{|n-n'|^{d-2-\varepsilon}(|n|\wedge|n'|)^{8\alpha-\varepsilon}}\\
  &\leq \left(\frac{|n|+|n'|}{|n|\vee|n'|}\right)^{2+\varepsilon} \frac{1}{(|n|\vee|n'|)^{8\alpha-2-2\varepsilon}} \frac{1}{|n-n'|^{d-2-\varepsilon}(|n|\wedge|n'|)^{8\alpha-\varepsilon}}\\
  &\leq 2^{2+\varepsilon} \frac{1}{|n-n'|^{d-2-\varepsilon}(|n|\wedge|n'|)^{8\alpha-\varepsilon}}.
\end{align*}
Therefore, we obtain for some constant $f(d,\alpha)>0$ depending only on $d, \alpha$ that 
\begin{equation}\label{4.74}
  |(B')^2(n,n')| \leq (f(d,\alpha)\kappa)^2 \frac{1}{|n-n'|^{d-2-\varepsilon}(|n|\wedge|n'|)^{8\alpha-\varepsilon}}. 
\end{equation}
Iterating the estimate leading to  \eqref{4.74} shows 
\begin{equation}\label{4.75}
  |(B')^i(n,n')| \leq (f(d,\alpha) \kappa)^i \frac{1}{|n-n'|^{d-2-\varepsilon}(|n|\wedge|n'|)^{8\alpha-\varepsilon}}. 
\end{equation} 
This  implies  that if $0<\kappa<c(d,\alpha)\ll 1$, then we have  
\begin{align}\label{4.76}
  |\widetilde{B'}(n,n')|  &\leq \left(\sum_{i=1}^{\infty} (f(d,\alpha)\kappa)^i \right) \frac{1}{|n-n'|^{d-2-\varepsilon}(|n|\wedge|n'|)^{8\alpha-\varepsilon}}\\
  \notag & \lesssim_{d,\alpha} \kappa \frac{1}{|n-n'|^{d-2-\varepsilon}(|n|\wedge|n'|)^{8\alpha-\varepsilon}}.
\end{align}
As a result, $(I-B)^{-1}$ given by \eqref{Neumann} is well defined, while it is not a bounded linear operator on $\ell^2(\Z^d)$.

Now we deal with the operator $A'$.  Similarly, we define 
\begin{align*}
\Omega^{(2)}&:=\bigcap_{n, n'\in\Z^d}\Omega^{(2)}_{n,n'},\\
 \Omega^{(2)}_{n,n'}&:= \left\{\omega\in\{\pm 1\}^{\Z^d}:\ |A'(n,n')-G_0(n,n')|\leq   \frac{(|n|\vee |n'|)^{\varepsilon}}{|n-n'|^{d-2}(|n|\wedge|n'|)^{\alpha}}\right\}. 
\end{align*}
Again, by the Chebyshev's  inequality and \eqref{moment A'},  we obtain 
\begin{align*}
  \mathbb P(\Omega^{(2)})\geq 1-(f(d,\alpha)\kappa)^p \sum_{n,n'\in \Z^d} \frac{1}{(|n|\vee|n'|)^{p\varepsilon}}. 
\end{align*}
Now by $p\varepsilon>2d+2$, we get 
\begin{align*}
  \mathbb P(\{\pm 1\}^{\Z^d}\setminus\Omega^{(2)})\lesssim_{d,\alpha} \kappa^p.  
\end{align*}
and for $\omega\in\Omega^{(2)},$
\begin{align*}
  |A'(n,n')| &\leq \frac{1}{|n-n'|^{d-2}}+ \frac{(|n|\vee |n'|)^{\varepsilon}}{|n-n'|^{d-2}(|n|\wedge|n'|)^{\alpha}}\ {\rm for}\ \forall n,n'\in \Z^d.
\end{align*} 
Similar to the proof of  \eqref{Bou trick 1}, we have
\begin{align}\label{A' high prob}
  |A'(n,n')| &\lesssim \frac{1}{|n-n'|^{d-2-\varepsilon}}\ {\rm for}\  \forall n,n'\in \Z^d. 
\end{align} 
Hence,  for
\begin{align}
\omega\in\Omega:=\Omega^{(1)}\cap\Omega^{(2)},
\end{align}
we have  by \eqref{A' high prob} and \eqref{4.76}  that 
\begin{align*}
  |A' \widetilde{B'}(n,n')|\lesssim_{d,\alpha}&\kappa \sum_{n_1\in \Z^d} \frac{1}{|n-n_1|^{d-2-\varepsilon}}\cdot \frac{1}{|n_1-n'|^{d-2-\varepsilon}(|n_1|\wedge|n'|)^{8\alpha-\varepsilon}}\\
    \lesssim_{d,\alpha}&\kappa \bigg(\frac{1}{|n'|^{8\alpha-\varepsilon}}\sum_{n_1}\frac{1}{|n-n_1|^{d-2-\varepsilon}|n_1-n'|^{d-2-\varepsilon}}\\
     & + \sum_{n_1\in \Z^d}\frac{1}{|n-n_1|^{d-2-\varepsilon}|n_1|^{8\alpha-\varepsilon}|n_1-n'|^{d-2-\varepsilon}}      \bigg)\\
     \overset{{\rm br\ Lemmas\ \ref{Lemma 2.1},\ \ref{Lemma 2.2}}}{\lesssim_{d,\alpha}}& \kappa \frac{1}{|n'|^{8\alpha-\varepsilon}} \frac{1}{|n-n'|^{d-4-2\varepsilon}} +\kappa \frac{1}{|n-n'|^{d-2-\varepsilon}(|n|\wedge|n'|)^{8\alpha-2-2\varepsilon}}\\
      \lesssim_{d,\alpha}& \kappa \frac{1}{|n'|^{8\alpha-\varepsilon}} \frac{1}{|n-n'|^{d-4-2\varepsilon}}  +\kappa \frac{1}{|n-n'|^{d-2-\varepsilon}}\ ({\rm since}\ 8\alpha-2-2\varepsilon>0).
\end{align*}
From   \eqref{rearrange 1},    it follows that   for $\omega\in\Omega$ and $n,n'\in \Z^d,$
\begin{align*}
  |G(n,n')|& \leq |A'(n,n')|+|A'\widetilde{B'}(n,n')|\\
  &\lesssim \frac{1}{|n-n'|^{d-2-\varepsilon}} + \frac{1}{|n'|^{8\alpha-\varepsilon}} \frac{1}{|n-n'|^{d-4-2\varepsilon}}.
\end{align*}
Moreover, since $G$ is self-adjoint, we get  
\begin{align*}
  |G(n,n')|
  &\lesssim \frac{1}{|n-n'|^{d-2-\varepsilon}} + \frac{1}{(|n|\vee|n'|)^{8\alpha-\varepsilon}} \frac{|n-n'|^{8\alpha-\varepsilon}}{|n-n'|^{d-4+8\alpha-3\varepsilon}}\\
  &\lesssim\frac{1}{|n-n'|^{d-2-\varepsilon}} +  2^{8\alpha-\varepsilon} \frac{1}{|n-n'|^{d-2+8\alpha-3\varepsilon}} \\
      &\lesssim \frac{1}{|n-n'|^{d-2-\varepsilon}}.
\end{align*}
This concludes the proof of Theorem \ref{greenthm}. 
\end{proof}

\section{Construction of extended states: Proof of Theorem \ref{extthm}}\label{thm2sec}

In this section, we construct extended states for the renormalized operator H, thereby completing the proof of Theorem \ref{extthm}.

While we employ the  perturbation lemma  (cf. \cite[Lemma 1.2]{Bou03})  to  control  $(-\Delta+W)^{-1}$ originating from the $4$th-order renormalization, we cannot  apply  this lemma  to handle the Green's function of $-\Delta+C$ coming  from the  6th-order renormalization. This is because the operator $C$ does not have the symmetry form required in  \cite[Lemma 1.2]{Bou03}. Instead, we incorporate these 6th-order non-random terms into the decomposition $G = A'' + GB''$ via the second rearrangement.  

In the following, we first perform the rearrangement. Then, we construct the extended states via Green's function estimates together with the decoupling lemma (cf. Lemma \ref{decoupling}). 

\subsection{The second rearrangement of  \eqref{resolvent}}
We first  decompose the operators $C,v^2 W$ and $Wv^2$ as follows:
\begin{itemize}
   \item $v^2 W+W v^2 = 2M v^6 +Q_6^{(1)}$, with 
      \[Q_6^{(1)}(n_1,n_2)=[v^2_{n_2}(v^4_{n_1}-v^4_{n_2})+v^4_{n_2}(v^2_{n_1}-v^4_{n_2})]M(n_1,n_2).\]
      By Lemma \ref{difference property}, we obtain  
      \begin{align}\label{5.2}
        |Q_6^{(1)}(n_1,n_2)| & \lesssim_{d,\alpha} \kappa^6 \frac{1}{|n_1-n_2|^{3(d-2)-1}(|n_1|\wedge|n_2|)^{6\alpha+1}}. 
      \end{align}
  \item $C=\widetilde{N}v^6+Q_6^{(2)}$, with 
     \[Q_6^{(2)}(n_1,n_3)=-\frac{1}{6} P_6^{'}(n_1,n_3) +P_6^{''}(n_1,n_3)+\frac{2}{3}(v^6_{n_1}-v^6_{n_3}) \widetilde{N}(n_1,n_3),\]
     where   $\widetilde{N},P'_6$ and $P_6^{''}$  are given in   the proof of Lemma \ref{6,7 resolvent estimate} (1). By Lemma \ref{difference property}, \eqref{P6'} and \eqref{P6''},   we have 
     \begin{align}\label{5.3}
      |Q_6^{(2)}(n_1,n_2)| & \lesssim_{d,\alpha} \kappa^6 \frac{1}{|n_1-n_2|^{3(d-2)-2}(|n_1|\wedge|n_2|)^{6\alpha+1}}. 
    \end{align}
\end{itemize} 
Using the same notation as  in Subsection \ref{8thsub}, we denote 
\begin{align*}
\boxed{G_0,6}_e &=\boxed{G_0,6}+2\sigma^2 G_0 Q_6^{(1)} G_0-4 G_0 Q^{(2)}_6 G_0,\\
\boxed{G_0,7}_e &=\boxed{G_0,7}-2\sigma^2 G_0 V G_0 Q_6^{(1)}G_0 + 4G_0 V G_0 Q_6^{(2)}G_0 +4 G_0 Q_6^{(2)} G_0 V G_0,\\
\end{align*}
and
\begin{align*}
\boxed{G_0,4}_E&= \boxed{G_0,4}-G_0 W G_0 = \boxed{G_0,4}-\boxed{G_0,0} W G_0,\\
\boxed{G_0,5}_E&= \boxed{G_0,5}+G_0 V G_0 W G_0=\boxed{G_0,5}-\boxed{G_0,1} W G_0,\\
\boxed{G_0,6}_E &= \boxed{G_0,6}_e -(G_0VG_0VG_0)^{\admissible}W G_0=\boxed{G_0,6}_e-\boxed{G_0,2} W G_0,\\
  \boxed{G_0,7}_E &= \boxed{G_0,7}_e +(G_0V G_0 VG_0VG_0)^{\admissible}W G_0-\sigma^2 G_0v^2 V G_0 W G_0 \\
    & =\boxed{G_0,7}_e-\boxed{G_0,3} W G_0.
\end{align*}
With those modifications, the above terms in fact become:
\begin{itemize}
  \item \begin{align}\label{5.4}
        \boxed{G_0,4}_E = &-\sigma^2(G_0v^2VG_0 VG_0)^{(*)}-\sigma^2 (G_0 V G_0 v^2V G_0)^{\admissible} \\
          \notag & +(G_0 VG_0 VG_0 VG_0 VG_0)^{\admissible}.  
        \end{align}
  \item \begin{align}\label{5.5}
          \boxed{G_0,5}_E = & -2\sigma \rho G_0 v^4VG_0 +G_0 V D_4 G_0  +\sigma G_0 V  W G_0 -G_0 W\widetilde{G_0} V G_0 \\
          \notag & +\sigma^2 (G_0 v^2V G_0 V G_0 VG_0)^{\admissible} +\sigma^2 (G_0 V G_0 v^2 V G_0 V G_0)^{\admissible} \\
          \notag & +\sigma^2 (G_0 V G_0 V G_0 v^2V G_0)^{(*)} - (G_0 V G_0 V G_0 V G_0 V G_0 V G_0)^{\admissible}.
        \end{align}
  \item \begin{align}\label{5.6}
        \boxed{G_0,6}_E = & [\eqref{4.4}\sim \eqref{4.6}]\\
         \notag & + \big[   (G_0 W \widetilde{G_0} V G_0 V G_0)^{\admissible}  +(G_0 V \wtg W \wtg V G_0)^{\admissible} -\sigma (G_0 V G_0 V  W G_0)^{\admissible} \big]\\
         \notag & +\eqref{4.8}\\
         \notag & + 4\boxed{ G_0 \widetilde{N} v^6 G_0}-4\sigma^2\boxed{G_0 M v^6 G_0}.
        \end{align}
  \item \begin{align}\label{5.7}
        \boxed{G_0,7}_E =& [\eqref{4.11}\sim \eqref{4.14}] \\
         \notag & +\sigma^2(G_0 W \wtg v^2 V G_0-\sigma G_0 v^2 V W G_0 )\\
          \notag & +2\sigma G_0 (W v^2+v^2 W)\wtg  V G_0 - 2\sigma^3   G_0 V (W v^2+v^2 W)  G_0 +2\sigma^2 G_0 V G_0 M v^6 G_0 \\
          \notag & +[\eqref{4.17}\sim \eqref{4.18}] \\
          \notag & -\big[(G_0 W \wtg V G_0 V G_0 VG_0)^{\admissible}+ (G_0 V \wtg W \wtg V G_0 VG_0)^{\admissible} \\
          \notag &\qquad + (G_0 V G_0 V \wtg W \wtg VG_0)^{\admissible}-\sigma (G_0 V G_0 V G_0 V  WG_0)^{\admissible} \big]\\
          \notag & +[\eqref{4.20}\sim \eqref{4.21}]\\
          \notag & +4\sigma(G_0 CV G_0 + G_0 V C G_0)-4(G_0 \widetilde{N} v^6 G_0 V G_0 + G_0 V G_0 \widetilde{N} v^6 G_0 )\\
          \notag & +[\eqref{4.23}\sim \eqref{4.24}].
        \end{align}
\end{itemize}
By using the same argument  in Subsection \ref{8thsub}, we first rearrange (via repeatedly applying \eqref{D.3} and \eqref{D.7}) the ``bad terms''  involving $Q_6^{(1)}$ and $Q_6^{(2)}$ as: 
\begin{align*}
  G = &\sum_{i=0}^{5}\boxed{G_0,i} +\boxed{G,6} +\cdots \\
   =&\sum_{i=0}^{5}\boxed{G_0,i} +\boxed{G,6}_e +4 G Q_6^{(2)} G_0 -2\sigma^2  G Q_6^{(1)} G_0 -G\Delta V_2 \boxed{G_0,5}-\cdots \\
   \overset{{\rm by \ \eqref{D.3}}}{=}& \sum_{i=0}^{5}\boxed{G_0,i} +\boxed{G_0,6}_e  +4 G Q_6^{(2)} G_0 -2\sigma^2  G Q_6^{(1)} G_0  \\
   & -GV \boxed{G_0,6}_e -\uwave{\text{$G \Delta_2 V \boxed{G_0,6}_e$}} -\uwave{\text{$G \Delta_4 V \boxed{G_0,6}_e$}} -\uwave{\text{$G \Delta_6 V \boxed{G_0,6}_e$}} \\
   &   -G\Delta_2 V \boxed{G_0,5}-\cdots \\
   =&\cdots\\
 = &\sum_{i=0}^{5}\boxed{G_0,i}  +\boxed{G_0,6}_e + \boxed{G,7}_e \\
   &  +(4 G Q_6^{(2)} G_0  -2\sigma^2  G Q_6^{(1)} G_0 - 4 G Q_6^{(2)} G_0 V G_0) \\
   &  \uwave{\text{$-GV  \boxed{G_0,7} _e $}}- \uwave{\text{$G\Delta_2 V  (\boxed{G_0,7}_e + \boxed{G_0,6}_e)$}} \\
   &   \uwave{\text{$-G\Delta_4 V (\boxed{G_0,7}_e +\boxed{G_0,6}_e+ \boxed{G_0,5} +\boxed{G_0,4}) $}}  \uwave{\text{$-G\Delta_6 V (\boxed{G_0,7}_e +\boxed{G_0,6}_e +\sum_{i=2}^{5}\boxed{G_0,i}) $}}. 
\end{align*}
Next we rearrange the ``bad terms''  involving $W$. Continuing  replacing all boxed terms  with $\boxed{G_0,i}_E,\ i=4,5,6,7$ leads to  
\begin{align}
  \nonumber G =&\sum_{i=0}^{3}\boxed{G_0,i}+ \sum_{i=4}^{7}\boxed{G_0,i}_E \\
   \nonumber &  +(4 G Q_6^{(2)} G_0  -2\sigma^2  G Q_6^{(1)} G_0 - 4 G Q_6^{(2)} G_0 V G_0) \\
 \nonumber   & \uwave{\text{$-GV  \boxed{G_0,7} _E $}}- \uwave{\text{$G\Delta_2 V  (\boxed{G_0,7}_E + \boxed{G_0,6}_E)$}} \\
 \nonumber   &   \uwave{\text{$-G\Delta_4 V (\sum_{i=4}^{7}\boxed{G_0,i}_E) $}}  \uwave{\text{$-G\Delta_6 V  (\boxed{G_0,2}+\boxed{G_0,3}+\sum_{i=4}^{7}\boxed{G_0,i}_E) $}}\\
 \nonumber &+\bigg[\boxed{G_0,0}+\boxed{G_0,1}+\boxed{G_0,2}+\boxed{G_0,3} \\
 \nonumber   &  -GV\boxed{G_0,3}-G\Delta_2 V (\boxed{G_0,3}+\boxed{G_0,2})\\
   \nonumber & -G(\Delta_4 V+\Delta_6V)\cdot (\sum_{i=0}^{3}\boxed{G_0,i})\bigg] WG_0\\
 \label{rearrange 2} =&A''+GB''+GWG_0, 
\end{align}
where 
\begin{align}
 \label{A''} A^{''}= &\sum_{i=0}^{3}\boxed{G_0,i} +\sum_{i=4}^{7}\boxed{G_0,i}_E, \\
 \label{B''} B^{''}  =& (4  Q_6^{(2)} G_0  -2\sigma^2  Q_6^{(1)} G_0 - 4 Q_6^{(2)} G_0 V G_0) \\
  \notag & -GV  \boxed{G_0,7} _E-G\Delta_2 V  (\boxed{G_0,7}_E + \boxed{G_0,6}_E) \\
  \notag & -G\Delta_4 V (\sum_{i=4}^{7}\boxed{G_0,i}_E) -G\Delta_6 V(\boxed{G_0,2}+\boxed{G_0,3}+\sum_{i=4}^{7}\boxed{G_0,i}_E). 
\end{align}

\subsection{Construction of  extended states} 
In this subsection, we aim to construct the desired extended states. We first recall some arguments of Bourgain \cite{Bou03}. 
From  \eqref{rearrange 2} and Lemma 1.2 of \cite{Bou03}, we obtain
\begin{align*}
  G &= (A^{''}+GB^{''})(1-WG_0)^{-1}\\
  &=(A^{''}+GB^{''}) H_0 (H_0-W)^{-1}. 
\end{align*}
We denote 
$$H_0=-\Delta, \ H_0'=H_0-W,\ G_0'=(H_0')^{-1},$$
and hence,
\begin{equation}\label{rearrange 3}
  G=(A^{''}+GB^{''}) H_0 G_0'. 
\end{equation}
Indeed,  by Lemma 1.2 of \cite{Bou03} again, we have 
  \[|G_0'(n,n')|\lesssim_{d,\alpha} \frac{1}{|n-n'|^{d-2}}.\]
 With the rearrangement \eqref{rearrange 3}, we can construct the extended states  as follows.  
As in \cite{Bou03}, we denote $\hat{\delta_0}=\{\hat\delta_0(n)\equiv 1\}_{n\in\Z^d}\in \ell^{\infty}(\Z^d)$. 
Then  $H_0 \hat{\delta_0} = -\Delta \hat{\delta_0} =0$ and $\hat{\delta_0}$ is an extended state of $H_0$.  
 As in (4.6) of  \cite{Bou03}, starting from  $\hat{\delta_0}$ gives   the extended state   $\xi\in \ell^{\infty}(\Z^d)$  of $H_0'$, namely,  
\[H_0'\xi=0,\ \xi=\hat{\delta_0}+\mcO(\kappa)\ {\rm in}\ \ell^\infty(\Z^d).\]
It is important that $\xi$ is non-random. 
Moreover, as in  (4.11) of  \cite{Bou03},  starting from  $\xi$ gives 
\[\zeta = \xi-G(W+\widetilde{V})\xi,\ \ H\zeta =0.\]
By the resolvent identity, 
\begin{equation}\label{resolvent 2}
  G= G_0' -G(W+\widetilde{V}) G_0' \Rightarrow H= H_0' (1-G(W+\widetilde{V}))^{-1}. 
\end{equation} 
 Thus, we only need to prove that (with high probability) $\zeta =\xi+\mathcal O(\kappa)\  {\rm in}\  \ell^{\infty}(\Z^d)$.

From   \eqref{resolvent 2} and \eqref{rearrange 3}, we obtain 
\begin{align}\label{5.11}
  -G(W+\widetilde{V}) \xi =(G-G_0') H_0'  \xi 
  =(A^{''}-G_0)H_0\xi+ GB^{''}H_0\xi. 
\end{align}
We have 
\begin{thm}\label{Theorem 5.1}
  Let  $p\geq 1$. For $i=1,2,3$, we have  
  \begin{equation}\label{5.12}
      \E _p  \left |\boxed{G_0,i} H_0 \xi(n)\right | \lesssim_{d,p,\alpha} \kappa^i \frac{1}{|n|^{i\alpha}} .
  \end{equation}
 For  $i=4,5,6,7$, we have
  \begin{equation}\label{5.13}
      \E_p  \left |\boxed{G_0,i}_E H_0 \xi(n)\right | \lesssim_{d,p,\alpha}\kappa^i \frac{1}{|n|^{i\alpha}} .
  \end{equation}
\end{thm}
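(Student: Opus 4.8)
\textbf{Proof proposal for Theorem \ref{Theorem 5.1}.}

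The plan is to reduce the estimate on $\boxed{G_0,i}\,H_0\xi(n)$ (and $\boxed{G_0,i}_E\,H_0\xi(n)$) to the weighted summation estimates already used for the Green's function, exploiting three facts: first, $H_0\xi$ is a fixed (non-random) vector; second, $\xi = \hat\delta_0 + \mathcal{O}(\kappa)$ in $\ell^\infty$ with $\xi$ non-random, and since $H_0'\xi = 0$ we have $H_0\xi = W\xi$, so that $H_0\xi$ is itself a potential-weighted vector with the decay of $v^2$ times a convolution operator; third, each $\boxed{G_0,i}$ (resp. $\boxed{G_0,i}_E$) terminates in a $G_0$ acting on the right, which we will pair with $H_0\xi = W\xi$ to produce $G_0 W\xi$, a vector with good decay. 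Concretely, since $W = v^2 M v^2$ with $M$ a convolution operator and $\xi$ bounded, one has $|W\xi(n)| \lesssim \kappa^4 |n|^{-4\alpha}\sum_{m}|M(n,m)|\,|m|^{-2\alpha}\cdot \mathcal{O}(1)$, which by Lemma \ref{Lemma 2.1} (using $3(d-2) > d$ for $d\ge 5$) gives $|W\xi(n)|\lesssim_{d,\alpha}\kappa^4 |n|^{-4\alpha}$ — or better, using the convolution regularization $|G_0 M(n,n')|\lesssim |n-n'|^{-(d+2-)}$ from Lemma \ref{FGNlem}, one controls $G_0 W\xi$ directly by $\kappa^4|n|^{-4\alpha}$ as well. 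Thus the right endpoint $G_0\cdot(H_0\xi)$ behaves like $G_0$ paired against a summable, decaying source, and the role played by the ``$|n'|$'' variable in the Green's function estimates is now played by the source decay.

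The first step is to record the identity $H_0\xi = W\xi$ and the bound $|(G_0 W)\xi(n)|\lesssim_{d,\alpha}\kappa^4|n|^{-4\alpha}$ (and more generally $|(\wtg W)\xi(n)|\lesssim_{d,\alpha}\kappa^4|n|^{-4\alpha}$), via Lemma \ref{Lemma 2.1}, Lemma \ref{Lemma 2.2} and the convolution regularization estimates already established in the proof of Theorem \ref{0-5 remaining term expactation}. The second step, for $i = 1,2,3$: since $\boxed{G_0,i}$ is (up to renormalization constants) a sum of admissible products $G_0 V G_0 \cdots V G_0$ (possibly with one $v^2$ inserted), we replace the trailing $G_0$ acting on $H_0\xi$ by the vector $w := G_0(H_0\xi)$ wait — more carefully, we keep the last $G_0$ and note $G_0 H_0\xi$ is not quite what appears; instead the product ends in $G_0$ and then hits $H_0\xi = W\xi$, so the relevant object is $G_0 W\xi$ with the bound just recorded. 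Then we apply the generalized Khintchine inequality (Lemma \ref{decoupling}) exactly as in Theorem \ref{0-5 remaining term expactation}, with the final factor $|a^{(i)}_{n_i,n'}|^2$ replaced by the squared entries leading into $w(n_i)$, i.e. we bound $\E_p|\boxed{G_0,i}H_0\xi(n)|^2$ by $\sum_{n_1,\dots,n_i}|G_0(n,n_1)v_{n_1}\cdots v_{n_i}|^2\,|(G_0 W\xi)(n_i)|^2$ and sum using Lemma \ref{Lemma 2.2} iteratively; each $v_{n_j}$ contributes $\kappa^2 |n_j|^{-2\alpha}$ and collapses via Lemma \ref{Lemma 2.2} to an $(|n|\wedge\cdots)^{-2\alpha}$ factor, ultimately producing $\kappa^i|n|^{-i\alpha}$ (the $\kappa^4|n_i|^{-4\alpha}$ from $G_0W\xi$ being absorbed since only the leading power $\kappa^i$, $|n|^{-i\alpha}$ is claimed, and indeed for $i\le 3$ the factor from $W\xi$ only helps). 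The third step, for $i = 4,5,6,7$: use the explicit decompositions \eqref{5.4}--\eqref{5.7} of $\boxed{G_0,i}_E$ — these are precisely the terms of $\boxed{G_0,i}$ with the ``bad'' pieces $G_0 W G_0$, $G_0 W v^2 G_0$, $G_0 C G_0$ etc. stripped out or replaced by their convolution-regularized surrogates ($\wtg N v^6$, $M v^6$, $\wtg W$, $D_4, D_6^{(j)}, R_6$, $S$, $D_7$) — and estimate each resulting term by the same mechanism: admissible random products go through Lemma \ref{decoupling} with the trailing $G_0 W\xi$ factor; the surviving non-random operators ($\wtg N v^6 G_0$, $M v^6 G_0$, $\wtg W G_0$) and diagonal operators ($D_4, R_6, D_6^{(1)}, D_6^{(2)}$, with bounds $\lesssim \kappa^{2k}|n_1|^{-2k\alpha}$ from Theorem \ref{6,7 resolvent estimate}'s proof) are handled by Lemma \ref{Lemma 2.1}/\ref{Lemma 2.2} together with the convolution regularization. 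Since by construction $A''$ and the $\boxed{G_0,i}_E$ no longer contain the singular operators $P_6''$, $Q_6^{(j)}$ (those were pushed into $B''$), every term that survives has exactly the structure controlled in \S\ref{thm1sec}, and the proof is a termwise repetition.

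The main obstacle is \emph{not} any single hard inequality but the bookkeeping: one must verify that in each of \eqref{5.4}--\eqref{5.7} every term ends (on the right) either in a plain $G_0$ — so that pairing with $H_0\xi = W\xi$ yields the well-behaved $G_0 W\xi$ vector — or in a diagonal operator $D_\bullet$ or $R_6$ hit by $G_0$, which is equally fine, and that \emph{no} surviving term ends in a singular convolution like $P_6''G_0$ or $S G_0$ in a way that would resist the argument (the $S$-term ends in $G_0$, so it is acceptable, exactly as in the proof of Theorem \ref{6,7 resolvent estimate}). A secondary subtlety is that one genuinely needs $H_0\xi = W\xi$ rather than, say, trying to compute $H_0\hat\delta_0 = 0$: the point is that $\xi$ solves $H_0'\xi = (H_0 - W)\xi = 0$, so $H_0\xi = W\xi$ decays like $\kappa^4|n|^{-4\alpha}$, which is what makes the final contractions in Lemma \ref{Lemma 2.2} close with the claimed powers. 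Once these structural checks are in place, the estimates \eqref{5.12}--\eqref{5.13} follow by the same chain of Lemma \ref{decoupling} $\to$ Lemma \ref{Lemma 2.2} reductions used throughout \S\ref{thm1sec}, with the source vector $G_0W\xi$ playing the role of the boundary factor at $n'$.
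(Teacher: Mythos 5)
Your overall architecture (reduce to the decoupling lemma with a modified boundary factor at the right endpoint, then close via Lemmas \ref{Lemma 2.1}, \ref{Lemma 2.2}) is the right one, but the boundary factor you build it on is wrong, and this is a genuine gap. You propose to pair the trailing $G_0$ of each term with $H_0\xi=W\xi$ and to use the vector $G_0W\xi$ with the claimed bound $|G_0W\xi(n)|\lesssim\kappa^4|n|^{-4\alpha}$. That bound is unobtainable in the regime of the theorem: the correct estimate $|W\xi(m)|\lesssim\kappa^4|m|^{-4\alpha}$ feeds into
$\sum_{m}|G_0(n,m)|\,|W\xi(m)|\lesssim\kappa^4\sum_m|n-m|^{-(d-2)}|m|^{-4\alpha}$,
and Lemma \ref{Lemma 2.1} requires $(d-2)+4\alpha>d$, i.e.\ $\alpha>\tfrac12$, whereas here $\tfrac14<\alpha\le\tfrac13$; the sum diverges (the $v^2$ sitting between $G_0$ and $M$ in $G_0v^2Mv^2\xi$ also prevents you from invoking the convolution regularization of $G_0M$). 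More tellingly, to the extent $G_0W\xi$ makes sense at all, it equals $G_0H_0\xi=\xi=\hat\delta_0+\mathcal O(\kappa)$, which does not decay at all — so the vector you want to use as a decaying source is identically of size $1$.

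The observation that actually makes the theorem work is the opposite of yours: multiplying by $H_0$ \emph{annihilates} the trailing $G_0$, since $G_0H_0=I$. Thus $\boxed{G_0,i}\,H_0\xi$ is the product with its last $G_0$ deleted, ending in $V\xi$, $W\xi$, $D_4\xi$, $\widetilde{N}v^6\xi$, $Mv^6\xi$, $S\xi$, $D_7\xi$, and so on; each of these terminal sums converges because the relevant kernels ($W$, $M$, $\widetilde N$, $S$, $\wtg^{\,6}$, or $G_0\widetilde N$ after regularization) decay at rate at least $3(d-2)-2>d$ (or $(d+2)-$), and $\xi$ only needs to be bounded. With that boundary factor, the decoupling Lemma \ref{decoupling} and Lemma \ref{Lemma 2.2} close exactly as you outline — this is the paper's proof. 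Your ``secondary subtlety'' is therefore backwards: the decay of $H_0\xi=W\xi$ is not what rescues the final contraction; what rescues it is that $H_0$ removes the last free $G_0$, so that no under-summable convolution of the form $\sum_m|n-m|^{-(d-2)}|m|^{-4\alpha}$ ever appears. Once you replace $G_0W\xi$ by $\xi$ (for terms ending in $VG_0$) or by $W\xi$, $D_\bullet\xi$, etc.\ (for terms ending in $WG_0$, $D_\bullet G_0$, \dots), the rest of your termwise bookkeeping for \eqref{5.4}--\eqref{5.7} is sound.
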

\begin{proof} [Proof of  Theorem \ref{Theorem 5.1}]
  When $i=1,2,3$,  we directly apply the decoupling lemma  (cf. Lemma \ref{decoupling}). For example,   we have 
  \begin{align*}
    &\ \ \  \E_p\left |  (G_0VG_0 VG_0 V\xi)^{\admissible}(n)  \right |  \\
     &=\kappa^3  \E_p\left |   \sum_{(m_1,m_2,m_3)}^{\admissible}\omega_{m_1}\omega_{m_2}\omega_{m_3} G_0(n,m_1)v_{m_1}G_0(m_1,m_2)v_{m_2}G_0(m_2,m_3)v_{m_3} \xi(m_3)  \right |\\
     &\lesssim_{p} \kappa^3 \left( \sum_{m_1,m_2,m_3\in\Z^d}\frac{1}{|n-m_1|^{2(d-2)}|m_1|^{2\alpha}|m_1-m_2|^{2(d-2)}|m_2|^{2\alpha}|m_2-m_3|^{2(d-2)}|m_3|^{2\alpha}} \right)^{\frac{1}{2}}\\
     \overset{{\rm Lemma\  \ref{Lemma 2.2}}}{\lesssim_{d,p,\alpha}}&\kappa^3 \left( \sum_{m_2,m_3\in\Z^d}\frac{1}{(|n|\wedge |m_2|)^{2\alpha}|n-m_2|^{2(d-2)}|m_2|^{2\alpha}|m_2-m_3|^{2(d-2)}|m_3|^{2\alpha}} \right)^{\frac{1}{2}}\\
     &\cdots \\
     &\lesssim_{d,p,\alpha} \kappa^3 \frac{1}{|n|^{3\alpha}}.
   \end{align*}
   Hence, we can prove \eqref{5.12} for $i=1,2,3$.

   When $i=4$,  from  \eqref{5.4}, applying Lemma \ref{decoupling} as in the case of  $i=1,2,3$  implies  that \eqref{5.13} holds for $i=4$.

   When $i=5$,  we first have 
   \begin{itemize}
    \item $|D_4(n)|\lesssim_{d,\alpha} \kappa^4\frac{1}{|n|^{4\alpha}}.$
    \item $|G_0 W \wtg (n,n')|\lesssim_{d,\alpha}\kappa^4 \frac{1}{|n-n'|^{d-2}(|n|\wedge|n'|)^{4\alpha}}$ and $|W(n,n')|\lesssim \kappa^4 \frac{1}{|n-n'|^{3(d-2)}(|n|\wedge|n'|)^{4\alpha}}$.  
   \end{itemize}
   So, using   Lemma \ref{decoupling} shows,  for example, 
   \begin{align}
    &\ \ \ \left\|  (G_0V W \xi)(n)  \right\|_p \\
    \notag&\lesssim_{d,p,\alpha} \kappa^5 \left(\sum_{n_1\in \Z^d}\frac{1}{|n-n_1|^{2(d-2)}|n_1|^{2\alpha}} (\sum_{n_2\in \Z^d}\frac{\xi(n_2)}{|n_1-n_2|^{3(d-2)}(|n_1|\wedge|n_2|)^{4\alpha}})^2\right) \\
    \notag &\lesssim_{d,p,\alpha} \kappa^5 \frac{1}{|n|^{5\alpha}}. 
   \end{align}
   Thus, we can prove \eqref{5.13} for $i=5$.

When $i=6$, by recalling  \eqref{5.6},  Lemma \ref{decoupling} and using  the two facts   in the case of  $i=5$, it suffices to deal with  $G_0 \widetilde{N} v^6\xi(n)$ and $G_0Mv^6 \xi(n)$. Since we have shown   \[|G_0 \widetilde{N}(n,n')|,|\widetilde{N} G_0(n,n')|\lesssim\frac{1}{|n-n'|^{(d+2)-}},\]
   then   by Lemma \ref{Lemma 2.1}, 
   \begin{align}
    |G_0 \widetilde{N} v^6\xi(n)|,\ |G_0 M v^6\xi(n)| & \lesssim_{d,\alpha} \sum_{n_1\in \Z^d} \frac{1}{|n-n_1|^{(d+2)-}|n_1|^{6\alpha}} \\
    \notag & {\lesssim_{d,\alpha}} \kappa^6 \frac{1}{|n|^{6\alpha}}. 
   \end{align}
   Hence, we have proven  \eqref{5.13} for $i=6$.

   When $i=7$,  we have the form of  \eqref{5.7}. Thanks to
   \begin{align*}
       &|G_0 (v^2 W+ W v^2)G_0(n,n')|,\ |G_0 (v^2 W+ W v^2) (n,n') |,\ |(v^2 W+ W v^2)G_0 (n,n')|,\\
       & |G_0C(n,n')|,\ |CG_0(n,n')|,\ |G_0 \widetilde{N}v^6 G_0(n,n')| \lesssim_{d,\alpha}\kappa^6 \frac{1}{|n-n'|^{d-2}(|n|\wedge|n'|)^{6\alpha}}, \\
   & |C(n,n')|,\ |\widetilde{N}v^6(n,n')|\lesssim_{d,\alpha} \kappa^6 \frac{1}{|n-n'|^{3(d-2)-2}(|n|\wedge|n'|)^{6\alpha}}, 
   \end{align*}
 and Lemma \ref{decoupling}, it suffices to  consider the terms $\eqref{4.21}\cdot H_0 \xi$ and $\eqref{4.24}\cdot H_0 \xi$. 
   For the term $(G_0 v^2 M_4 v^2 V M_4 v^2\xi)(n) $, we have shown in the proof of Theorem \ref{6,7 resolvent estimate}  that 
   \[|G_0 v^2 M_4(n,n')| \lesssim_{d,\alpha}\frac{1}{|n-n'|^{d-2}(|n|\wedge|n'|)^{2\alpha}}.\]
   Then applying  Lemma \ref{decoupling} yields 
   \begin{align*}
    &\ \ \ \E_p\left |  (G_0 v^2 M_4 v^2 V M_4 v^2\xi)(n)  \right | \\
    \notag &\lesssim_p \left(\sum_{n_1\in \Z^d} |G_0 v^2 M_4(n,n_1)|^2 v_{n_1}^6 \cdot \big|\sum_{n_2\in \Z^d} M_4v^2\xi(n_1,n_2) \big|^2 \right)^{\frac{1}{2}}\\
    \notag &\lesssim_{d,p,\alpha} \kappa^7 \left(\sum_{n_1\in \Z^d}\frac{1}{|n-n_1|^{2(d-2)} (|n|\wedge|n_1|)^{4\alpha}|n_1|^{10\alpha}}\right) \\
    \notag &\lesssim_{d,p,\alpha} \kappa^7 \frac{1}{|n|^{7\alpha}}\ ({\rm by\  Lemma\  \ref{Lemma 2.1}}). 
   \end{align*}
   For the term $G_0 D_7 \xi (n)$,  combining    Lemma \ref{decoupling},  Lemma \ref{Lemma 2.2} and Cauhcy-Schwarz inequality  gives  
   \begin{align*}
    \E_p\left | G_0 D_7 \xi (n)\right | &= \E_p\left | \sum_{n_1,n_2 \in\Z^d} \omega_{n_2} G_0(n,n_1) v_{n_1}^4 \wtg(n_1,n_2)^6 v^3_{n_2} \xi(n_2) \right |\\
         & \lesssim_{p} \left(   \sum_{n_2\in \Z^d} v_{n_2}^6 \big(   \sum_{n_1\in \Z^d}G_0(n,n_1) v_{n_1}^4 \wtg(n_1,n_2)^6   \big)^2 \right)^{\frac{1}{2}}\\
         & \lesssim_{p} \kappa^7 \left(   \sum_{n_2\in \Z^d} \frac{1}{|n_2|^{6\alpha}} \big(   \sum_{n_1\in \Z^d}\frac{1}{|n-n_1|^{d-2}|n_1|^{4\alpha}|n_1-n_2|^{6(d-2)}}  \big)^2 \right)^{\frac{1}{2}}\\
         & \lesssim_{p} \kappa^7 \biggl(   \sum_{n_2\in \Z^d} \frac{1}{|n_2|^{6\alpha}} \big(   \sum_{n_1\in \Z^d}\frac{1}{|n-n_1|^{2(d-2)}|n_1|^{8\alpha}|n-n_2|^{6(d-2)}} \big) \cdot \big(   \sum_{n_1\in \Z^d}\frac{1}{|n_1-n_2|^{6(d-2)}}  \big) \biggl)^{\frac12}\\
         & \lesssim_{d,p,\alpha} \kappa^7 \left( \sum_{n_2\in \Z^d} \frac{1}{|n_2|^{6\alpha}} \cdot \frac{1}{|n-n_2|^{2(d-2)}(|n|\wedge|n_2|)^{8\alpha}} \right)^{\frac{1}{2}}\\
         & \lesssim_{d,p,\alpha} \kappa^7 \frac{1}{|n|^{7\alpha}}.
   \end{align*}
   Finally, for the term $G_0V S \xi(n)$ and $G_0 S^{\top}V \xi (n)$,  similar to the above proof, for example, we have   
   \begin{align*}
    &\E_p\left | G_0 V S \xi (n)  \right | =\E_p \left |  \sum_{n_1,n_2,n_3\in \Z^d}  G_0(n,n_1) \omega_{n_1} v^3_{n_1} v^2_{n_2} v^2_{n_3}  \wtg(n_1,n_3)^2  \wtg(n_1,n_2)^3 \wtg(n_2,n_3)  \xi(n_3)         \right |\\
    &\lesssim_p \left(\sum_{n_1\in \Z^d} G_0 (n,n_1)^2 v^6_{n_1} \big( \sum_{n_2,n_3\in \Z^d} v^2_{n_2}v^2_{n_3}\wtg(n_2,n_3)\wtg(n_1,n_3)^2\wtg(n_1,n_2)^3 \big)^2 \right)^{\frac{1}{2}}\\
    &\lesssim_p \left(\sum_{n_1\in \Z^d} G_0 (n,n_1)^2 v^6_{n_1} \big[ \sum_{n_2\in \Z^d} \big(\sum_{n_3\in \Z^d}v^2_{n_3}\wtg(n_2,n_3)\wtg(n_1,n_3)^2 \big) v^2_{n_2}\wtg(n_1,n_2)^3 \big]^2 \right)^{\frac{1}{2}}\\
    &\lesssim_p \kappa^2 \left(\sum_{n_1\in \Z^d} G_0 (n,n_1)^2 v^6_{n_1} \big[ \sum_{n_2\in \Z^d} \big(\sum_{n_3\in \Z^d}\frac{1}{|n_2-n_3|^{d-2}|n_3|^{2\alpha}|n_3-n_1|^{2(d-2)}} \big) v^2_{n_2}\wtg(n_1,n_2)^3 \big]^2 \right)^{\frac{1}{2}}\\
    &\lesssim_{d,p,\alpha} \kappa^7 \left(\sum_{n_1\in \Z^d} \frac{1}{|n-n_1|^{2(d-2)}|n_1|^{6\alpha}} \big[ \sum_{n_2\in \Z^d} \frac{1}{|n_2-n_1|^{d-2}(|n_1|\wedge|n_2|)^{2\alpha}} \cdot \frac{1}{|n_2|^{2\alpha}|n_2-n_1|^{3(d-2)}} \big]^2 \right)^{\frac{1}{2}}\\
    &\lesssim_{d,p,\alpha} \kappa^7 \left(\sum_{n_1\in \Z^d} \frac{1}{|n-n_1|^{2(d-2)}|n_1|^{6\alpha}} \cdot \frac{1}{|n_1|^{8\alpha}} \right)^{\frac{1}{2}}\\
    & \lesssim_{d,p,\alpha} \kappa^7 \frac{1}{|n|^{7\alpha}}.
  \end{align*}
  Summarizing all the above estimates leads to  \eqref{5.13} for $i=7$.
  \end{proof}

We are ready to prove Theorem \ref{extthm}. 

\begin{proof}[Proof of Theorem \ref{extthm}]

From Theorem \ref{Theorem 5.1}, it follows  that 
\begin{equation}\label{A'' moment}
 \E_p\left | (A^{''}-G_0)H_0 \xi (n)\right | \lesssim_{d,p,\alpha}\sum_{i=1}^{7} \kappa^i \frac{1}{|n|^{i\alpha}}\lesssim_{d,p,\alpha} \kappa\frac{1}{|n|^{\alpha}}. 
\end{equation}
Denote 
\begin{align*}
  B^{''} & = (4  Q_6^{(2)} G_0  -2\sigma^2  Q_6^{(1)} G_0 - 4 Q_6^{(2)} G_0 V G_0)  +\widetilde{B}^{''}. 
\end{align*}
Then applying Theorem \ref{Theorem 5.1} again gives desired estimate on  $GB^{''}H_0 \hat{\delta_0}$ in \eqref{5.11}.  So, we have   
\begin{equation}\label{wtB'' moment}
  \E_p\left | \widetilde{B}^{''}H_0 \xi (n)\right | \lesssim_{d,p,\alpha} \kappa^8 \frac{1}{|n|^{8\alpha}}.
\end{equation}
It remains  to control the additional terms generated from the rearrangement. Recalling  the estimates  \eqref{5.2} and \eqref{5.3} on the non-symmetrical  differences  $Q_6^{(1)}$ and $Q_6^{(2)}$, we get by Lemma \ref{Lemma 2.1} that
\begin{align*}
    |Q_6^{(1)} G_0 H_0 \xi (n)|& =|Q^{(1)}_6 \hat{\delta_0}(n) | \\
    &\lesssim_{d,\alpha}\kappa^6 \sum_{n_1\in \Z^d}\frac{1}{|n-n_1|^{3(d-2)-1}(|n|\wedge|n_1|)^{6\alpha+1}} \\
    &\lesssim_{d,\alpha}\kappa^6 \frac{1}{|n|^{6\alpha+1}}.
\end{align*}
Similarly,   
\begin{align*}
  &|Q_6^{(2)} G_0 H_0 \hat{\delta_0}(n)|\lesssim_{d,\alpha}\kappa^6 \frac{1}{|n|^{6\alpha+1}}. 
\end{align*}
In  the above estimates, it requires  $d\geq 5$, which implies $3(d-2)-2> d$. Note that we have  \[|Q_6^{(2)} G_0(n,n')| \lesssim_{d,\alpha}\kappa^6 \frac{1}{|n-n'|^{d-2}(|n|\wedge|n'|)^{6\alpha+1}}.\]
Combining the above estimates  yields  
\begin{align*}
  \E_p\left |Q_6^{(2)} G_0 V\xi (n)\right | &\lesssim_{d,p,\alpha}\kappa^6 \left( \sum_{n_1\in \Z^d} \frac{1}{|n-n_1|^{2(d-2)}(|n|\wedge|n_1|)^{2(6\alpha+1)}|n_1|^{2\alpha}} \right)^{\frac{1}{2}} \\
   &\lesssim_{d,p,\alpha} \kappa^6 \frac{1}{|n|^{7\alpha+1}}.
\end{align*}
Putting all the above estimates together shows 
\begin{equation}\label{B'' moment}
  \left\| B^{''}H_0 \xi (n)\right\|_p \lesssim_{d,p,\alpha} \kappa^6 \frac{1}{|n|^{\min\{8\alpha,6\alpha+1\}}}. 
\end{equation}

Now, keep in mind  that $\frac{1}{4}<\alpha\leq \frac{1}{3}$. Similar to the  proof  of  $\eqref{4.69}\sim\eqref{A' high prob}$,  we can apply  Chebyshev's inequality by choosing   $0<100\varepsilon<\min\{8\alpha-2,6\alpha-1\}=8\alpha-2$. From moment estimates  \eqref{A'' moment} and \eqref{B'' moment}, it follows  that with  probability $1-\mcO(\kappa^{\frac{p}{2}})$ (mainly coming from   $(A^{''}-G_0)H_0\xi(n)$),  
\[  | (A^{''}-G_0)H_0 \xi (n)|\lesssim_{d,\alpha} \kappa^{\frac{1}{2}}\frac{1}{|n|^{\alpha-\varepsilon}} \  {\rm for}\  \forall  n\in \Z^d,\]
\[  | B^{''}H_0 \xi (n) | \leq \kappa^2 \frac{1}{|n|^{8\alpha-\varepsilon}}\ {\rm for}\  \forall n\in \Z^d.\]
Applying Theorem \ref{greenthm} shows that with high probability ($1-\mcO(\kappa^{{p}})$),
\[|G(n,n')|\lesssim \frac{1}{|n-n'|^{d-2-\varepsilon}}  \ {\rm for}\  \forall  n,n'\in \Z^d.\]
Hence,
\begin{align*}
  |GB^{''}H_0 \xi (n)| & \lesssim \kappa^2 \sum_{n_1\in Z^d} \frac{1}{|n-n_1|^{d-2-\varepsilon}|n_1|^{8\alpha-\varepsilon}}\\
     & \leq \kappa \frac{1}{|n|^{8\alpha-2-2\varepsilon}}.
\end{align*}
Finally, from  $\alpha-\varepsilon>0$ and $8\alpha-2-2\varepsilon>0$, it follows that with  probability $1-\mathcal O(\kappa^{\frac p2}),$
\[|-G(W+\widetilde{V}) \xi|=\mcO(\sqrt{\kappa})\  {\rm in}\  \ell^{\infty}(\Z^d),\]
which implies 
\[\zeta=\hat\delta_0+\mcO(\sqrt{\kappa})\  {\rm in}\  \ell^{\infty}(\Z^d),\ H\zeta=0.\]

We finish the proof of Theorem \ref{extthm}.
\end{proof}

\begin{rmk}\label{finrmk}
  Finally, we emphasize the presence of some  new phenomena and obstacles  compared with \cite{Bou03} when expanding the resolvent to higher order terms. If one wishes to relax the condition $\alpha>\frac{1}{4}$ to $\alpha>0$, these issues should  be addressed.  
  
  \begin{itemize}
    \item[(1)] The  presence of 7th-order  remaining terms \eqref{4.21} and \eqref{4.24}  indicates  that in higher-order expansions, not all terms can be expressed as summations over admissible tuples. Thus, one should  improve  Lemma \ref{decoupling} further in order to  handle  these  random but not admissible   terms.
    
    \item[(2)] From  \eqref{B'' moment},  it  requires $6\alpha+1>2$ (i.e., $\alpha>\frac{1}{6}$) to ensure convergence. This may prevent us from relaxing the condition $\alpha>\frac14$ to $\alpha>0$. Although the key lemma of \cite{Bou03} (cf. Lemma 1.2]) might seem applicable, the operator $C$ in the 6th-order remaining  terms  does not match the form $cMd+dMc$ as required in that lemma. Therefore,  Lemma 1.2 in \cite{Bou03} cannot be  directly applied. Moreover, our  proof of Theorem \ref{6,7 resolvent estimate} shows that a non-symmetrical  difference operator  $P_6^{''}$ always exists, yielding only a power-law decay rate of $6\alpha+1$ instead of the $6\alpha+2$ one as in the symmetrical differences case.
  \end{itemize} 
\end{rmk}

\appendix{}
\section{Computation of the  6th-order renormalization via graphs}\label{App6th}
In this section, we will compute the 6th-order renormalization via graphs representation. 

From now on,  when we use the notation $n_1,n_2,n_3,\cdots$ in some tuple, we always assume that $n_i\neq n_j$ for $i\neq j$ in $\Z^d$. And if we use  $m_1,m_2,\cdots$, the relationship of $m_i,m_j$ ($i\neq j$)  may  not  be determined.

Suppose that we have found  the 4th-order renormalized  potential
\[V^{(4)}=V+\sigma v^2-\rho v^4\]
as in \cite{Bou03}.  Substitute $\widetilde{V}=V^{(4)}$ into  \eqref{8 order expansion}. Then the terms with orders less than 5 can be found in the Subsection \ref{Bousub}. 

Now consider the 6th-order terms in \eqref{8 order expansion}: 
\begin{align}
\label{A1}  & -\sigma\rho (G_0v^2G_0 v^4 G_0+G_0v^4 G_0 v^2 G_0) \\
\label{A2}  &+\rho (G_0 v^4 G_0 VG_0VG_0+ G_0 V G_0 v^4 G_0V G_0+G_0VG_0VG_0 v^4 G_0)\\
\label{A3}  &-\sigma^3 G_0 v^2 G_0 v^2 G_0 v^2 G_0 \\
\label{A4} & +\sigma^2 ( G_0 v^2 G_0 v^2 G_0VG_0VG_0 +G_0 v^2 G_0 V G_0 v^2 G_0VG_0+ G_0 v^2 G_0 V G_0VG_0v^2G_0 \\
 \notag & \qquad +G_0 V G_0 v^2 G_0 v^2G_0VG_0+G_0 V G_0 v^2 G_0VG_0v^2 G_0+G_0 V G_0 V G_0v^2 G_0v^2 G_0 ) \\
 \label{A5} &-\sigma (G_0 v^2 G_0 V G_0VG_0VG_0 V G_0+G_0 V G_0 v^2 G_0VG_0VG_0 V G_0 \\
  \notag & \qquad+G_0 V G_0 V G_0 v^2 G_0VG_0 V G_0 +G_0 V G_0 V G_0VG_0 v^2 G_0 V G_0 \\
  \notag& \qquad +G_0 V G_0 V G_0VG_0VG_0 v^2 G_0 )\\
  \label{A6} &+G_0 V G_0 V G_0VG_0VG_0 V G_0 V G_0.
\end{align}
We will associate each term  in $\eqref{A1}\sim \eqref{A6}$ with a  graph. For a $s$-tuple $(m_1,m_2,\cdots,m_s)\in (\Z^d)^s$, we define its  {\bf characteristic graph} to be $(\mathcal V, \mathcal E)$, with $\mathcal V=\{1,2,\cdots,s\}\subset\Z$, and 
\[\mathcal E\subset \{(i,i+1):\ i=1,\cdots ,s-1\}.\]
Additionally, we label the edge $(i,i+1)\in \mathcal E$ with a  \textbf{solid} line if $m_i=m_{i+1}$, and with a \textbf{dotted} line if $m_i\neq m_{i+1}$. For example, the characteristic graph of the tuple $(n_1,n_1,n_2,n_3,n_3,n_1,n_1)$ is \\
\begin{figure}[htbp]
  \centering

  \tikzset{every picture/.style={line width=0.75pt}} 

  \begin{tikzpicture}[x=0.75pt,y=0.75pt,yscale=-0.7,xscale=0.7]
  
  \draw  [fill={rgb, 255:red, 208; green, 2; blue, 27 }  ,fill opacity=1 ] (8.89,154) .. controls (8.89,151.32) and (10.82,149.14) .. (13.21,149.14) .. controls (15.59,149.14) and (17.52,151.32) .. (17.52,154) .. controls (17.52,156.68) and (15.59,158.86) .. (13.21,158.86) .. controls (10.82,158.86) and (8.89,156.68) .. (8.89,154) -- cycle ;
  \draw  [fill={rgb, 255:red, 208; green, 2; blue, 27 }  ,fill opacity=1 ] (113.79,152.86) .. controls (113.79,150.17) and (115.72,148) .. (118.11,148) .. controls (120.49,148) and (122.42,150.17) .. (122.42,152.86) .. controls (122.42,155.54) and (120.49,157.71) .. (118.11,157.71) .. controls (115.72,157.71) and (113.79,155.54) .. (113.79,152.86) -- cycle ;
  \draw  [fill={rgb, 255:red, 208; green, 2; blue, 27 }  ,fill opacity=1 ] (218.69,152.86) .. controls (218.69,150.17) and (220.62,148) .. (223,148) .. controls (225.39,148) and (227.32,150.17) .. (227.32,152.86) .. controls (227.32,155.54) and (225.39,157.71) .. (223,157.71) .. controls (220.62,157.71) and (218.69,155.54) .. (218.69,152.86) -- cycle ;
  \draw  [fill={rgb, 255:red, 208; green, 2; blue, 27 }  ,fill opacity=1 ] (323.58,151.71) .. controls (323.58,149.03) and (325.52,146.86) .. (327.9,146.86) .. controls (330.29,146.86) and (332.22,149.03) .. (332.22,151.71) .. controls (332.22,154.4) and (330.29,156.57) .. (327.9,156.57) .. controls (325.52,156.57) and (323.58,154.4) .. (323.58,151.71) -- cycle ;
  \draw  [fill={rgb, 255:red, 208; green, 2; blue, 27 }  ,fill opacity=1 ] (428.48,151.71) .. controls (428.48,149.03) and (430.42,146.86) .. (432.8,146.86) .. controls (435.18,146.86) and (437.12,149.03) .. (437.12,151.71) .. controls (437.12,154.4) and (435.18,156.57) .. (432.8,156.57) .. controls (430.42,156.57) and (428.48,154.4) .. (428.48,151.71) -- cycle ;
  \draw  [fill={rgb, 255:red, 208; green, 2; blue, 27 }  ,fill opacity=1 ] (529.06,150.57) .. controls (529.06,147.89) and (531,145.71) .. (533.38,145.71) .. controls (535.77,145.71) and (537.7,147.89) .. (537.7,150.57) .. controls (537.7,153.25) and (535.77,155.43) .. (533.38,155.43) .. controls (531,155.43) and (529.06,153.25) .. (529.06,150.57) -- cycle ;
  \draw    (13.21,154) -- (118.11,152.86) ;
  \draw    (327.9,151.71) -- (432.8,151.71) ;
  \draw  [fill={rgb, 255:red, 208; green, 2; blue, 27 }  ,fill opacity=1 ] (635.48,149.71) .. controls (635.48,147.03) and (637.42,144.86) .. (639.8,144.86) .. controls (642.18,144.86) and (644.12,147.03) .. (644.12,149.71) .. controls (644.12,152.4) and (642.18,154.57) .. (639.8,154.57) .. controls (637.42,154.57) and (635.48,152.4) .. (635.48,149.71) -- cycle ;
  \draw    (534.9,149.71) -- (639.8,149.71) ;
  \draw  [dash pattern={on 0.84pt off 2.51pt}]  (118.11,152.86) -- (223,152.86) ;
  \draw  [dash pattern={on 0.84pt off 2.51pt}]  (223,152.86) -- (327.9,151.71) ;
  \draw  [dash pattern={on 0.84pt off 2.51pt}]  (432.8,151.71) -- (533.38,150.57) ;

  \end{tikzpicture}

\end{figure}
\\
What's more, if in a tuple, one cannot determine whether $m_i=m_{i+1}$ or not,  then we discard $(i,i+1)$ from the edge set $\mathcal E$ (i.e.,  there is no edge between $i$ and $i+1$) and say there is a \textbf{vacuum} edge between $i$ and $i+1$. In fact, if there is no edge between $i$ and $i+1$, one may think  that there can be both a solid line ($m_i=m_{i+1}$) and a dotted  line ($m_i\neq m_{i+1}$) between $i$ and $i+1$, and hence they cause a counteraction.

We say a characteristic graph is \textbf{complete}, if $(i,i+1)\in \mathcal E$ for all $1\leq i\leq s-1$. Namely,  $(\mathcal V, \mathcal E)$ is complete if and only if  the relationship of  all adjacent pairs  is determined in the corresponding tuple. Indeed, each incomplete graph can be viewed as a union of some complete graphs, with the vacuum edge replaced by solid or dotted  line.

In a complete characteristic graph, a {\bf connected  component} is a connected  segment  made up of only vertices and solid edges. For example, the above characteristic graph of $(n_1,n_1,n_2,n_3,n_3,n_1,n_1)$ has four connected  components. The number of vertices in a connected  component is called the \textbf{length} of this  connected  component. 

Now, let's consider the terms $\eqref{A1}\sim \eqref{A6}$. For example, the first term in \eqref{A1}, $G_0 v^2 G_0 v^4 G_0$  corresponds  to the summation tuple $(m_1,m_1,m_2,m_2,m_2,m_2)$ and we don't know the relationship between $m_1$ and $m_2$. So its characteristic graph is \\
\begin{figure}[htbp]
  \centering

  \tikzset{every picture/.style={line width=0.75pt}} 

  \begin{tikzpicture}[x=0.75pt,y=0.75pt,yscale=-0.7,xscale=0.7]
  
  \draw  [fill={rgb, 255:red, 208; green, 2; blue, 27 }  ,fill opacity=1 ] (60.89,154) .. controls (60.89,151.32) and (62.82,149.14) .. (65.21,149.14) .. controls (67.59,149.14) and (69.52,151.32) .. (69.52,154) .. controls (69.52,156.68) and (67.59,158.86) .. (65.21,158.86) .. controls (62.82,158.86) and (60.89,156.68) .. (60.89,154) -- cycle ;
  \draw  [fill={rgb, 255:red, 208; green, 2; blue, 27 }  ,fill opacity=1 ] (165.79,152.86) .. controls (165.79,150.17) and (167.72,148) .. (170.11,148) .. controls (172.49,148) and (174.42,150.17) .. (174.42,152.86) .. controls (174.42,155.54) and (172.49,157.71) .. (170.11,157.71) .. controls (167.72,157.71) and (165.79,155.54) .. (165.79,152.86) -- cycle ;
  \draw  [fill={rgb, 255:red, 208; green, 2; blue, 27 }  ,fill opacity=1 ] (270.69,152.86) .. controls (270.69,150.17) and (272.62,148) .. (275,148) .. controls (277.39,148) and (279.32,150.17) .. (279.32,152.86) .. controls (279.32,155.54) and (277.39,157.71) .. (275,157.71) .. controls (272.62,157.71) and (270.69,155.54) .. (270.69,152.86) -- cycle ;
  \draw  [fill={rgb, 255:red, 208; green, 2; blue, 27 }  ,fill opacity=1 ] (375.58,151.71) .. controls (375.58,149.03) and (377.52,146.86) .. (379.9,146.86) .. controls (382.29,146.86) and (384.22,149.03) .. (384.22,151.71) .. controls (384.22,154.4) and (382.29,156.57) .. (379.9,156.57) .. controls (377.52,156.57) and (375.58,154.4) .. (375.58,151.71) -- cycle ;
  \draw  [fill={rgb, 255:red, 208; green, 2; blue, 27 }  ,fill opacity=1 ] (480.48,151.71) .. controls (480.48,149.03) and (482.42,146.86) .. (484.8,146.86) .. controls (487.18,146.86) and (489.12,149.03) .. (489.12,151.71) .. controls (489.12,154.4) and (487.18,156.57) .. (484.8,156.57) .. controls (482.42,156.57) and (480.48,154.4) .. (480.48,151.71) -- cycle ;
  \draw  [fill={rgb, 255:red, 208; green, 2; blue, 27 }  ,fill opacity=1 ] (581.06,150.57) .. controls (581.06,147.89) and (583,145.71) .. (585.38,145.71) .. controls (587.77,145.71) and (589.7,147.89) .. (589.7,150.57) .. controls (589.7,153.25) and (587.77,155.43) .. (585.38,155.43) .. controls (583,155.43) and (581.06,153.25) .. (581.06,150.57) -- cycle ;
  \draw    (65.21,154) -- (170.11,152.86) ;
  \draw    (379.9,151.71) -- (484.8,151.71) ;
  \draw    (275,152.86) -- (379.9,151.71) ;
  \draw    (484.8,151.71) -- (585.38,150.57) ;

  \end{tikzpicture}

\end{figure}\\
and this graph can be separated into two complete characteristic graphs\\
\begin{figure}[htbp]
  \centering

\tikzset{every picture/.style={line width=0.75pt}} 

\begin{tikzpicture}[x=0.75pt,y=0.75pt,yscale=-0.7,xscale=0.7]

\draw  [fill={rgb, 255:red, 208; green, 2; blue, 27 }  ,fill opacity=1 ] (13.89,152.38) .. controls (13.89,150.22) and (14.89,148.47) .. (16.13,148.47) .. controls (17.37,148.47) and (18.38,150.22) .. (18.38,152.38) .. controls (18.38,154.54) and (17.37,156.29) .. (16.13,156.29) .. controls (14.89,156.29) and (13.89,154.54) .. (13.89,152.38) -- cycle ;
\draw  [fill={rgb, 255:red, 208; green, 2; blue, 27 }  ,fill opacity=1 ] (68.41,151.46) .. controls (68.41,149.3) and (69.41,147.55) .. (70.65,147.55) .. controls (71.89,147.55) and (72.89,149.3) .. (72.89,151.46) .. controls (72.89,153.62) and (71.89,155.37) .. (70.65,155.37) .. controls (69.41,155.37) and (68.41,153.62) .. (68.41,151.46) -- cycle ;
\draw  [fill={rgb, 255:red, 208; green, 2; blue, 27 }  ,fill opacity=1 ] (122.92,151.46) .. controls (122.92,149.3) and (123.93,147.55) .. (125.17,147.55) .. controls (126.4,147.55) and (127.41,149.3) .. (127.41,151.46) .. controls (127.41,153.62) and (126.4,155.37) .. (125.17,155.37) .. controls (123.93,155.37) and (122.92,153.62) .. (122.92,151.46) -- cycle ;
\draw  [fill={rgb, 255:red, 208; green, 2; blue, 27 }  ,fill opacity=1 ] (177.44,150.54) .. controls (177.44,148.38) and (178.44,146.63) .. (179.68,146.63) .. controls (180.92,146.63) and (181.93,148.38) .. (181.93,150.54) .. controls (181.93,152.7) and (180.92,154.45) .. (179.68,154.45) .. controls (178.44,154.45) and (177.44,152.7) .. (177.44,150.54) -- cycle ;
\draw  [fill={rgb, 255:red, 208; green, 2; blue, 27 }  ,fill opacity=1 ] (231.95,150.54) .. controls (231.95,148.38) and (232.96,146.63) .. (234.2,146.63) .. controls (235.44,146.63) and (236.44,148.38) .. (236.44,150.54) .. controls (236.44,152.7) and (235.44,154.45) .. (234.2,154.45) .. controls (232.96,154.45) and (231.95,152.7) .. (231.95,150.54) -- cycle ;
\draw  [fill={rgb, 255:red, 208; green, 2; blue, 27 }  ,fill opacity=1 ] (284.23,149.62) .. controls (284.23,147.46) and (285.23,145.71) .. (286.47,145.71) .. controls (287.71,145.71) and (288.71,147.46) .. (288.71,149.62) .. controls (288.71,151.78) and (287.71,153.53) .. (286.47,153.53) .. controls (285.23,153.53) and (284.23,151.78) .. (284.23,149.62) -- cycle ;
\draw    (16.13,152.38) -- (70.65,151.46) ;
\draw    (179.68,150.54) -- (234.2,150.54) ;
\draw    (125.17,151.46) -- (179.68,150.54) ;
\draw    (234.2,150.54) -- (286.47,149.62) ;
\draw    (70.65,151.46) -- (122.92,151.46) ;
\draw  [fill={rgb, 255:red, 208; green, 2; blue, 27 }  ,fill opacity=1 ] (369.89,148.38) .. controls (369.89,146.22) and (370.89,144.47) .. (372.13,144.47) .. controls (373.37,144.47) and (374.38,146.22) .. (374.38,148.38) .. controls (374.38,150.54) and (373.37,152.29) .. (372.13,152.29) .. controls (370.89,152.29) and (369.89,150.54) .. (369.89,148.38) -- cycle ;
\draw  [fill={rgb, 255:red, 208; green, 2; blue, 27 }  ,fill opacity=1 ] (424.41,147.46) .. controls (424.41,145.3) and (425.41,143.55) .. (426.65,143.55) .. controls (427.89,143.55) and (428.89,145.3) .. (428.89,147.46) .. controls (428.89,149.62) and (427.89,151.37) .. (426.65,151.37) .. controls (425.41,151.37) and (424.41,149.62) .. (424.41,147.46) -- cycle ;
\draw  [fill={rgb, 255:red, 208; green, 2; blue, 27 }  ,fill opacity=1 ] (478.92,147.46) .. controls (478.92,145.3) and (479.93,143.55) .. (481.17,143.55) .. controls (482.4,143.55) and (483.41,145.3) .. (483.41,147.46) .. controls (483.41,149.62) and (482.4,151.37) .. (481.17,151.37) .. controls (479.93,151.37) and (478.92,149.62) .. (478.92,147.46) -- cycle ;
\draw  [fill={rgb, 255:red, 208; green, 2; blue, 27 }  ,fill opacity=1 ] (533.44,146.54) .. controls (533.44,144.38) and (534.44,142.63) .. (535.68,142.63) .. controls (536.92,142.63) and (537.93,144.38) .. (537.93,146.54) .. controls (537.93,148.7) and (536.92,150.45) .. (535.68,150.45) .. controls (534.44,150.45) and (533.44,148.7) .. (533.44,146.54) -- cycle ;
\draw  [fill={rgb, 255:red, 208; green, 2; blue, 27 }  ,fill opacity=1 ] (587.95,146.54) .. controls (587.95,144.38) and (588.96,142.63) .. (590.2,142.63) .. controls (591.44,142.63) and (592.44,144.38) .. (592.44,146.54) .. controls (592.44,148.7) and (591.44,150.45) .. (590.2,150.45) .. controls (588.96,150.45) and (587.95,148.7) .. (587.95,146.54) -- cycle ;
\draw  [fill={rgb, 255:red, 208; green, 2; blue, 27 }  ,fill opacity=1 ] (640.23,145.62) .. controls (640.23,143.46) and (641.23,141.71) .. (642.47,141.71) .. controls (643.71,141.71) and (644.71,143.46) .. (644.71,145.62) .. controls (644.71,147.78) and (643.71,149.53) .. (642.47,149.53) .. controls (641.23,149.53) and (640.23,147.78) .. (640.23,145.62) -- cycle ;
\draw    (372.13,148.38) -- (426.65,147.46) ;
\draw    (535.68,146.54) -- (590.2,146.54) ;
\draw    (481.17,147.46) -- (535.68,146.54) ;
\draw    (590.2,146.54) -- (642.47,145.62) ;
\draw  [dash pattern={on 0.84pt off 2.51pt}]  (426.65,147.46) -- (478.92,147.46) ;

\draw (318,140) node [anchor=north west][inner sep=0.75pt]   [align=left] {and};

\end{tikzpicture}

\end{figure}

Rewrite the summations $\eqref{A1}\sim \eqref{A6}$ with their characteristic graphs. Then we get the  formal  graphs representation 
\begin{align*}
 \label{A1 graph} \tag{A.1'}  - &\sigma\rho (\raisebox{-0.6 ex}{\includegraphics[width=4cm]{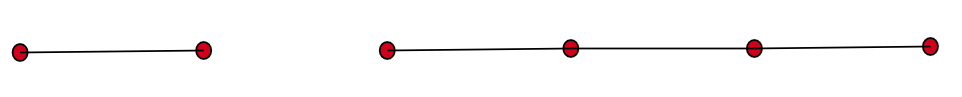}}+\raisebox{-0.6 ex}{\includegraphics[width=4cm]{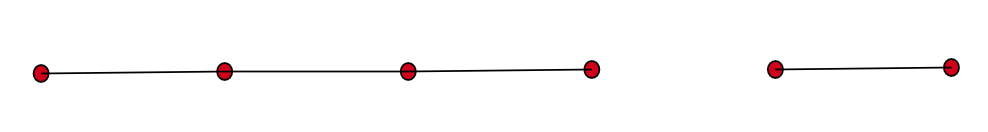}}) \\
 \label{A2 graph} \tag{A.2'} + &\rho (\raisebox{-0.6 ex}{\includegraphics[width=4cm]{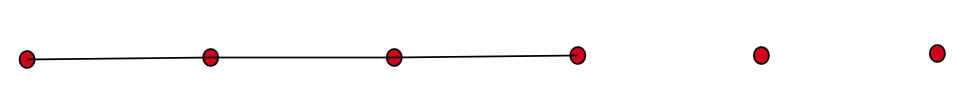}}+\raisebox{-0.6 ex}{\includegraphics[width=4cm]{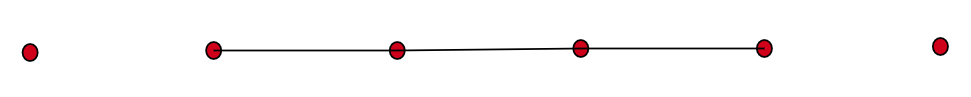}}+\raisebox{-0.6 ex}{\includegraphics[width=4cm]{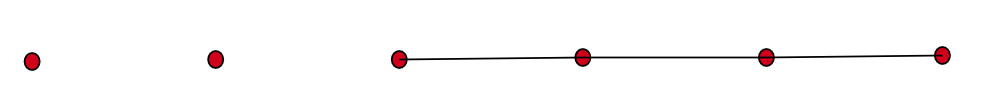}}) \\
 \label{A3 graph} \tag{A,3'} -&\sigma^3 (\raisebox{-0.6 ex}{\includegraphics[width=4cm]{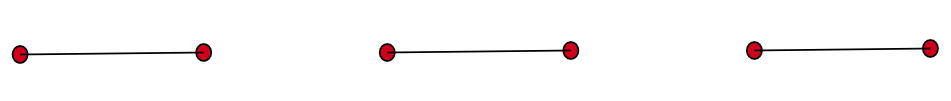}})\\
 \label{A4 graph} \tag{A.4'} +&\sigma^2 (\raisebox{-0.6 ex}{\includegraphics[width=4cm]{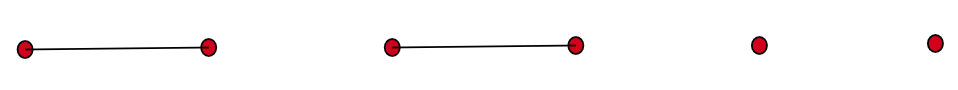}}+\raisebox{-0.6 ex}{\includegraphics[width=4cm]{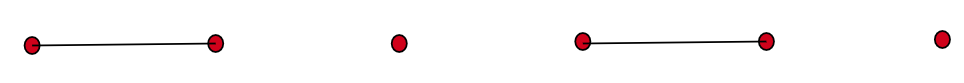}}+ \raisebox{-0.6 ex}{\includegraphics[width=4cm]{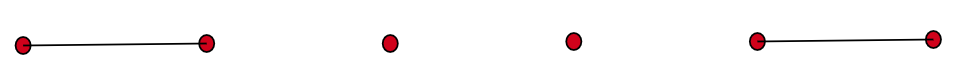}}\\
      &\qquad +\raisebox{-0.6 ex}{\includegraphics[width=4cm]{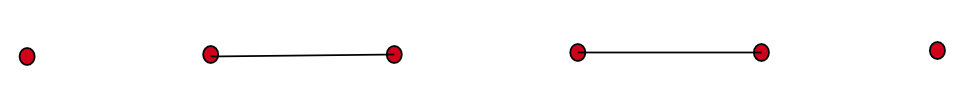}}  +\raisebox{-0.6 ex}{\includegraphics[width=4cm]{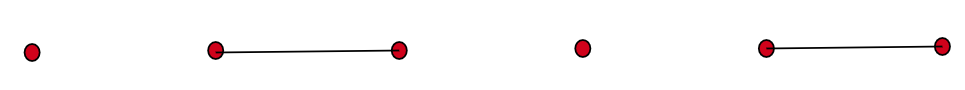}}+\raisebox{-0.6 ex}{\includegraphics[width=4cm]{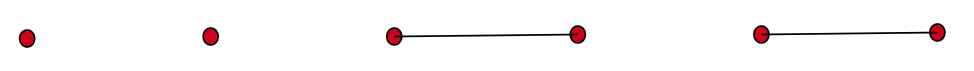}})\\
  \label{A5 graph} \tag{A.5'} -&\sigma(\raisebox{-0.6 ex}{\includegraphics[width=4cm]{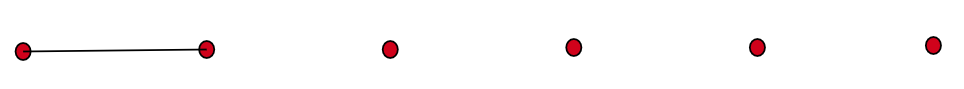}} +\raisebox{-0.6 ex}{\includegraphics[width=4cm]{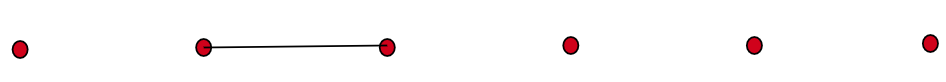}}\\
    &   \qquad +\raisebox{-0.6 ex}{\includegraphics[width=4cm]{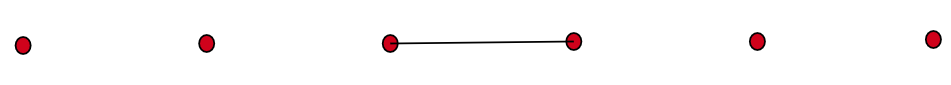}}+\raisebox{-0.6 ex}{\includegraphics[width=4cm]{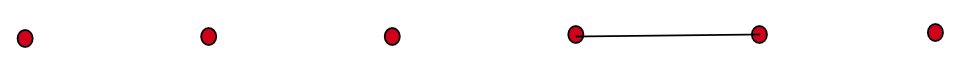}}+\raisebox{-0.6 ex}{\includegraphics[width=4cm]{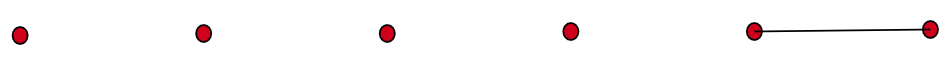}})\\
    \label{A6 graph}  \tag{A.6'} +&(\raisebox{-0.6 ex}{\includegraphics[width=4cm]{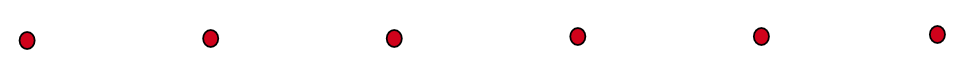}}).
\end{align*}\\

From now on, let's investigate the 6th-order terms with complete characteristic graphs.
\subsection{Terms without randomness} First, we consider those 6th-order terms  in $\eqref{A1}\sim \eqref{A6}$ that have  no randomness, namely, those terms with their 6-tuple summations  being  cancelled.   Then all vertexes  will appear in an  even number of  times in the summations. We discuss:\smallskip

\noindent\textcircled{1}{\it The cancelled summation tuple  has only  one vertex }.  In this case, it must be $(n_1,n_1,n_1,n_1,n_1,n_1)$, which corresponds to  $G_0 v^6 G_0$ with its complete characteristic graph\\
\begin{figure}[htbp]
  \centering

\tikzset{every picture/.style={line width=0.75pt}} 

\begin{tikzpicture}[x=0.75pt,y=0.75pt,yscale=-0.5,xscale=0.5]

\draw  [fill={rgb, 255:red, 208; green, 2; blue, 27 }  ,fill opacity=1 ] (60.89,154) .. controls (60.89,151.32) and (62.82,149.14) .. (65.21,149.14) .. controls (67.59,149.14) and (69.52,151.32) .. (69.52,154) .. controls (69.52,156.68) and (67.59,158.86) .. (65.21,158.86) .. controls (62.82,158.86) and (60.89,156.68) .. (60.89,154) -- cycle ;
\draw  [fill={rgb, 255:red, 208; green, 2; blue, 27 }  ,fill opacity=1 ] (165.79,152.86) .. controls (165.79,150.17) and (167.72,148) .. (170.11,148) .. controls (172.49,148) and (174.42,150.17) .. (174.42,152.86) .. controls (174.42,155.54) and (172.49,157.71) .. (170.11,157.71) .. controls (167.72,157.71) and (165.79,155.54) .. (165.79,152.86) -- cycle ;
\draw  [fill={rgb, 255:red, 208; green, 2; blue, 27 }  ,fill opacity=1 ] (270.69,152.86) .. controls (270.69,150.17) and (272.62,148) .. (275,148) .. controls (277.39,148) and (279.32,150.17) .. (279.32,152.86) .. controls (279.32,155.54) and (277.39,157.71) .. (275,157.71) .. controls (272.62,157.71) and (270.69,155.54) .. (270.69,152.86) -- cycle ;
\draw  [fill={rgb, 255:red, 208; green, 2; blue, 27 }  ,fill opacity=1 ] (375.58,151.71) .. controls (375.58,149.03) and (377.52,146.86) .. (379.9,146.86) .. controls (382.29,146.86) and (384.22,149.03) .. (384.22,151.71) .. controls (384.22,154.4) and (382.29,156.57) .. (379.9,156.57) .. controls (377.52,156.57) and (375.58,154.4) .. (375.58,151.71) -- cycle ;
\draw  [fill={rgb, 255:red, 208; green, 2; blue, 27 }  ,fill opacity=1 ] (480.48,151.71) .. controls (480.48,149.03) and (482.42,146.86) .. (484.8,146.86) .. controls (487.18,146.86) and (489.12,149.03) .. (489.12,151.71) .. controls (489.12,154.4) and (487.18,156.57) .. (484.8,156.57) .. controls (482.42,156.57) and (480.48,154.4) .. (480.48,151.71) -- cycle ;
\draw  [fill={rgb, 255:red, 208; green, 2; blue, 27 }  ,fill opacity=1 ] (581.06,150.57) .. controls (581.06,147.89) and (583,145.71) .. (585.38,145.71) .. controls (587.77,145.71) and (589.7,147.89) .. (589.7,150.57) .. controls (589.7,153.25) and (587.77,155.43) .. (585.38,155.43) .. controls (583,155.43) and (581.06,153.25) .. (581.06,150.57) -- cycle ;
\draw    (65.21,154) -- (170.11,152.86) ;
\draw    (170.11,154) -- (275,152.86) ;
\draw    (275,152.86) -- (379.9,151.71) ;
\draw    (379.9,152.86) -- (484.8,151.71) ;
\draw    (484.8,151.71) -- (585.38,150.57) ;

\end{tikzpicture}

\caption{The characteristic graph of $G_0 v^6 G_0$.}
\label{111111}
\end{figure}
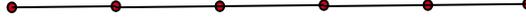

Now let's figure out  what kind of incomplete graphs  in $\eqref{A1 graph}\sim \eqref{A6 graph}$ will contain the above complete graph (cf. FIGURE \ref{111111}). Two basic rules  are 
\begin{itemize}
  \item[{\bf Rule 1}:] A complete graph (denoted by $\mcG_{complete}$)  is contained in an incomplete graph (denoted by $\mcG_{incomplete}$)  if and only if the   solid  (and dotted) edges set   of $\mcG_{incomplete}$ is a subset of the solid (and dotted) edges set  of $\mcG_{complete}$. 
  \item[{\bf Rule 2}:] Every solid edge  corresponds  to  a $\sigma=G_0(n,n)=G_0(0,0)$ in the summation, and every dotted  edge  corresponds to a $\wtg=G_0-\sigma$. Moreover, since  we can view vacuum edge as the coexist of both solid and dotted  edges, the vacuum edge corresponds exactly to $G_0$. Hence, if one wants  to replace a vacuum edge with a solid one (resp. a dotted edge), the corresponding $G_0$ should  to be replaced with  $\sigma$ (resp. $\wtg$) in the term.
\end{itemize}
With  the above two rules, one can compute  the coefficients  of $G_0 v^6 G_0$ in $\eqref{A1 graph}\sim \eqref{A6 graph}$ as 
\begin{equation}
  \label{caculation of coefficient of 111111 by graph} \underbrace{-2\sigma^2 \rho}_{from \ \eqref{A1 graph},\ } \underbrace{+3\sigma^2\rho}_{from \ \eqref{A2 graph},\ }  \underbrace{-\sigma^5}_{from \ \eqref{A3 graph},\ } \underbrace{+6\sigma^5}_{from \ \eqref{A4 graph},\ } \underbrace{-5\sigma^5}_{from \ \eqref{A5 graph},\ } \underbrace{+\sigma^5}_{from \ \eqref{A6 graph}}=\sigma^5+\rho \sigma^2.
\end{equation}
However, this term  (after discarding the $G_0$ at the beginning and the end, because the renormalization focuses on  term $G_0\widetilde{V}G_0$) is ``{\bf diagonal}''. And it is {\it uncontrollable} since  for $\frac 14<\alpha\leq \frac13,$
\begin{align*}
  |G_0 v^6 G_0(n,n')| &\lesssim \sum_{n_1\in \Z^d}\frac{1}{|n-n_1|^{d-2}|n_1|^{6\alpha}|n_1-n'|^{d-2}}\\
        &\lesssim_{d,\alpha} \frac{1}{|n-n'|^{d-4}(|n|\wedge |n'|)^{6\alpha}}. 
\end{align*}
This estimate is not the  desired bound of   $|n-n'|^{-(d-2)}(|n|\wedge|n'|)^{-6\alpha}$. So, this term must be  renormalized in the potential: 
\textcolor{blue}{
  \begin{flushright}
    $\Rightarrow {\rm \textbf{Renormalization:} } \ (\sigma^5+\rho\sigma^2) v^6.$
  \end{flushright}
}
\smallskip

\noindent \textcircled{2}{\it The cancelled summation tuple  has exactly two distinct vertices}. Then, since  the randomness is cancelled, all possible cases  are listed below (cf. the next page):
\newpage
\begin{center}
  \begin{xltabular}[h]{\textwidth}{|>{\centering\arraybackslash}X |>{\centering\arraybackslash}X |>{\centering\arraybackslash}X | >{\centering\arraybackslash}X|}
   \hline
   \textbf{Number of connected components}  & \textbf{Length of connected components} & \textbf{Tuple} &\textbf{Coefficient} \\
    \hline
     2 & 2,4 & $(n_1,n_1,n_2,n_2,n_2,n_2)$ $(n_2,n_2,n_2,n_2,n_1,n_1)$ & 0\\
     \hline
     \multirow{3}{*}{3} & 2,2,2 & $(n_2,n_2,n_1,n_1,n_2,n_2)$ & 0 \\
     \cline{2-4}
     \multirow{3}{*}{} & 1,2,3 & $(n_2,n_1,n_1,n_2,n_2,n_2)$ $(n_2,n_2,n_2,n_1,n_1,n_2)$ & 0  \\
     \cline{2-4}
     \multirow{3}{*}{} & 1,1,4 & $(n_1,n_2,n_2,n_2,n_2,n_1)$ & $\rho-\sigma^3$  \\
      \hline
     \multirow{2}{*}{4} & 1,1,1,3 & $(n_1,n_2,n_1,n_2,n_2,n_2)$ $(n_2,n_1,n_2,n_2,n_2,n_1)$ $(n_1,n_2,n_2,n_2,n_1,n_2)$ $(n_2,n_2,n_2,n_1,n_2,n_1)$  & $-\sigma^2$ \\
     \cline{2-4}
     \multirow{2}{*}{} & 1,1,2,2 & $(n_1,n_2,n_2,n_1,n_2,n_2)$ $(n_2,n_2,n_1,n_2,n_2,n_1)$ & 0  \\
     \hline
     5 & 1,1,1,1,2 & $(n_2,n_1,n_2,n_1,n_2,n_2)$ $(n_2,n_1,n_2,n_2,n_1,n_2)$ $(n_2,n_2,n_1,n_2,n_1,n_2)$ & 0\\
     \hline
   \end{xltabular}
\end{center}
The coefficients listed in the table above are  calculated by the same way as in  \eqref{caculation of coefficient of 111111 by graph}. For example, consider  the tuple $(n_1,n_2,n_1,n_2,n_2,n_2)$, whose characteristic graph is 
\begin{figure}[htbp]
  \centering

  \tikzset{every picture/.style={line width=0.75pt}} 

  \begin{tikzpicture}[x=0.75pt,y=0.75pt,yscale=-0.7,xscale=0.7]
  
  \draw  [fill={rgb, 255:red, 208; green, 2; blue, 27 }  ,fill opacity=1 ] (60.89,154) .. controls (60.89,151.32) and (62.82,149.14) .. (65.21,149.14) .. controls (67.59,149.14) and (69.52,151.32) .. (69.52,154) .. controls (69.52,156.68) and (67.59,158.86) .. (65.21,158.86) .. controls (62.82,158.86) and (60.89,156.68) .. (60.89,154) -- cycle ;
  \draw  [fill={rgb, 255:red, 208; green, 2; blue, 27 }  ,fill opacity=1 ] (165.79,152.86) .. controls (165.79,150.17) and (167.72,148) .. (170.11,148) .. controls (172.49,148) and (174.42,150.17) .. (174.42,152.86) .. controls (174.42,155.54) and (172.49,157.71) .. (170.11,157.71) .. controls (167.72,157.71) and (165.79,155.54) .. (165.79,152.86) -- cycle ;
  \draw  [fill={rgb, 255:red, 208; green, 2; blue, 27 }  ,fill opacity=1 ] (270.69,152.86) .. controls (270.69,150.17) and (272.62,148) .. (275,148) .. controls (277.39,148) and (279.32,150.17) .. (279.32,152.86) .. controls (279.32,155.54) and (277.39,157.71) .. (275,157.71) .. controls (272.62,157.71) and (270.69,155.54) .. (270.69,152.86) -- cycle ;
  \draw  [fill={rgb, 255:red, 208; green, 2; blue, 27 }  ,fill opacity=1 ] (375.58,151.71) .. controls (375.58,149.03) and (377.52,146.86) .. (379.9,146.86) .. controls (382.29,146.86) and (384.22,149.03) .. (384.22,151.71) .. controls (384.22,154.4) and (382.29,156.57) .. (379.9,156.57) .. controls (377.52,156.57) and (375.58,154.4) .. (375.58,151.71) -- cycle ;
  \draw  [fill={rgb, 255:red, 208; green, 2; blue, 27 }  ,fill opacity=1 ] (480.48,151.71) .. controls (480.48,149.03) and (482.42,146.86) .. (484.8,146.86) .. controls (487.18,146.86) and (489.12,149.03) .. (489.12,151.71) .. controls (489.12,154.4) and (487.18,156.57) .. (484.8,156.57) .. controls (482.42,156.57) and (480.48,154.4) .. (480.48,151.71) -- cycle ;
  \draw  [fill={rgb, 255:red, 208; green, 2; blue, 27 }  ,fill opacity=1 ] (581.06,150.57) .. controls (581.06,147.89) and (583,145.71) .. (585.38,145.71) .. controls (587.77,145.71) and (589.7,147.89) .. (589.7,150.57) .. controls (589.7,153.25) and (587.77,155.43) .. (585.38,155.43) .. controls (583,155.43) and (581.06,153.25) .. (581.06,150.57) -- cycle ;
  \draw  [dash pattern={on 0.84pt off 2.51pt}]  (65.21,154) -- (170.11,152.86) ;
  \draw  [dash pattern={on 0.84pt off 2.51pt}]  (170.11,154) -- (275,152.86) ;
  \draw  [dash pattern={on 0.84pt off 2.51pt}]  (275,152.86) -- (379.9,151.71) ;
  \draw    (379.9,152.86) -- (484.8,151.71) ;
  \draw    (484.8,151.71) -- (585.38,150.57) ;
   
  \end{tikzpicture}
  
\end{figure}
\\
It has 4 connected components of which  the length vector is  $(1,1,1, 3)$.  Similar to the computation of  \eqref{caculation of coefficient of 111111 by graph}, the coefficient of this summation tuple is 
\begin{equation}
  \label{caculation of coefficient of 121222 by graph} \underbrace{-2\sigma^2 }_{from \ \eqref{A5 graph},\ } \underbrace{+\sigma^2}_{from \ \eqref{A6 graph}} =-\sigma^2. 
\end{equation}
For this  tuple  $(n_1,n_2,n_1,n_2,n_2,n_2)$  (with coefficient $-\sigma^2$), its corresponding term is $G_0 v^2 M_4 v^4 G_0$ or $G_0 v^4 M_4 v^2 G_0$. By the argument of \cite[(3.5)--(3.10)]{Bou03},  it needs  to decompose 
\[M_4=M+(\sigma^3-\rho),\]
since the  control of   $MG_0$ via the convolution regularization technique requires  $\hat M(0)=0$. Hence,  we need to renormalize a  $(\sigma^3-\rho)v^6$ in the potential, so  totally 
\textcolor{blue}{
  \begin{flushright}
    $\Rightarrow {\rm \textbf{Renormalization:} } \ -4\sigma^2(\sigma^3-\rho) v^6. $
  \end{flushright}
}
\noindent Moreover, such operation causes  the non-random term occurring in the 6th order remaining terms:   
\textcolor{red}{
  \begin{flushright}
    $\Rightarrow {\rm \textbf{Non-random term:} } \ -2\sigma^2(\boxed{G_0 v^2 W G_0}+\boxed{G_0 W v^2 G_0}).$
  \end{flushright}
}
On the other hand, from the table  listed above, the tuple $(n_1,n_2,n_2,n_2,n_2,n_1)$  also has  non-zero coefficient $(\sigma^3-\rho)$ and keeps  remaining. It corresponds to  the term $G_0 R_6^{(1)} G_0$ with a diagonal operator
\begin{equation*}
  R_6^{(1)}(n_1)=v\wtg v^4 \wtg v(n_1,n_1)=v^2_{n_1}\sum_{n_2}\wtg(n_1,n_2)^2 v^4_{n_2}.
\end{equation*}
Hence, the renormalization in potential is 
\textcolor{blue}{
  \begin{flushright}
    $\Rightarrow {\rm \textbf{Renormalization:} } \ (\rho-\sigma^3) R_6^{(1)}.$
  \end{flushright}
}
\ \\

\noindent \textcircled{3}{\it The cancelled summation tuple  has exactly three distinct vertices}. Since  the randomness is cancelled,  one  observes  that   each  complete characteristic graph in this case  has at most  3 solid edges, and the length of connected  components  is at most 2.  Similar to the  computation of  \eqref{caculation of coefficient of 111111 by graph}, only the tuple with complete graph\\
\begin{figure}[htbp]
  \centering

\tikzset{every picture/.style={line width=0.75pt}} 

\begin{tikzpicture}[x=0.75pt,y=0.75pt,yscale=-0.7,xscale=0.7]

\draw  [fill={rgb, 255:red, 208; green, 2; blue, 27 }  ,fill opacity=1 ] (60.89,154) .. controls (60.89,151.32) and (62.82,149.14) .. (65.21,149.14) .. controls (67.59,149.14) and (69.52,151.32) .. (69.52,154) .. controls (69.52,156.68) and (67.59,158.86) .. (65.21,158.86) .. controls (62.82,158.86) and (60.89,156.68) .. (60.89,154) -- cycle ;
\draw  [fill={rgb, 255:red, 208; green, 2; blue, 27 }  ,fill opacity=1 ] (165.79,152.86) .. controls (165.79,150.17) and (167.72,148) .. (170.11,148) .. controls (172.49,148) and (174.42,150.17) .. (174.42,152.86) .. controls (174.42,155.54) and (172.49,157.71) .. (170.11,157.71) .. controls (167.72,157.71) and (165.79,155.54) .. (165.79,152.86) -- cycle ;
\draw  [fill={rgb, 255:red, 208; green, 2; blue, 27 }  ,fill opacity=1 ] (270.69,152.86) .. controls (270.69,150.17) and (272.62,148) .. (275,148) .. controls (277.39,148) and (279.32,150.17) .. (279.32,152.86) .. controls (279.32,155.54) and (277.39,157.71) .. (275,157.71) .. controls (272.62,157.71) and (270.69,155.54) .. (270.69,152.86) -- cycle ;
\draw  [fill={rgb, 255:red, 208; green, 2; blue, 27 }  ,fill opacity=1 ] (375.58,151.71) .. controls (375.58,149.03) and (377.52,146.86) .. (379.9,146.86) .. controls (382.29,146.86) and (384.22,149.03) .. (384.22,151.71) .. controls (384.22,154.4) and (382.29,156.57) .. (379.9,156.57) .. controls (377.52,156.57) and (375.58,154.4) .. (375.58,151.71) -- cycle ;
\draw  [fill={rgb, 255:red, 208; green, 2; blue, 27 }  ,fill opacity=1 ] (480.48,151.71) .. controls (480.48,149.03) and (482.42,146.86) .. (484.8,146.86) .. controls (487.18,146.86) and (489.12,149.03) .. (489.12,151.71) .. controls (489.12,154.4) and (487.18,156.57) .. (484.8,156.57) .. controls (482.42,156.57) and (480.48,154.4) .. (480.48,151.71) -- cycle ;
\draw  [fill={rgb, 255:red, 208; green, 2; blue, 27 }  ,fill opacity=1 ] (581.06,150.57) .. controls (581.06,147.89) and (583,145.71) .. (585.38,145.71) .. controls (587.77,145.71) and (589.7,147.89) .. (589.7,150.57) .. controls (589.7,153.25) and (587.77,155.43) .. (585.38,155.43) .. controls (583,155.43) and (581.06,153.25) .. (581.06,150.57) -- cycle ;
\draw  [dash pattern={on 0.84pt off 2.51pt}]  (65.21,154) -- (170.11,152.86) ;
\draw  [dash pattern={on 0.84pt off 2.51pt}]  (170.11,154) -- (275,152.86) ;
\draw  [dash pattern={on 0.84pt off 2.51pt}]  (275,152.86) -- (379.9,151.71) ;
\draw  [dash pattern={on 0.84pt off 2.51pt}]  (379.9,152.86) -- (484.8,151.71) ;
\draw  [dash pattern={on 0.84pt off 2.51pt}]  (484.8,151.71) -- (585.38,150.57) ;

\end{tikzpicture}

\end{figure} 

\noindent has non-zero coefficient $1$ (from \eqref{A6 graph}). In other words,  each adjacent pair contains distinct vertices. Tuples satisfying the above restrictions  must be 
\begin{itemize}
  \item $(n_1,n_2,n_3,n_2,n_3,n_1)$. It corresponds to  the diagonal operator 
       \begin{equation*}
        R_6^{(2)}(n_1)=v \wtg W_4 \wtg v(n_1,n_1) =v^2_{n_1}\sum_{n_2,n_3\in \Z^d} v^2_{n_2}v^2_{n_3}\wtg(n_1,n_2)\wtg(n_2,n_3)^3\wtg(n_3,n_1),
       \end{equation*} 
which leads to  the renormalization in potential: 
\textcolor{blue}{
  \begin{flushright}
    $\Rightarrow {\rm \textbf{Renormalization:} } \ R_6^{(2)}.$
  \end{flushright}
}
  \item $(n_1,n_2,n_3,n_1,n_2,n_3),(n_1,n_2,n_3,n_1,n_3,n_2),(n_1,n_2,n_1,n_3,n_2,n_3)$ and\\
      $(n_1,n_2,n_3,n_2,n_1,n_3)$.  Those tuples correspond to  the term $G_0 C_6 G_0$, where $C_6$ is a non-diagonal operator 
      \begin{equation}\label{A.10}
              C_6(n_1,n_3)=v^2_{n_1}v^2_{n_3}\wtg(n_1,n_3) \sum_{n_2}\wtg(n_1,n_2)^2 v^2_{n_2} \wtg(n_2,n_3)^2.
      \end{equation}
      Note  that $C_6$ is a non-convolutional  operator. If  discarding  all $v_{\cdot}$,   we will get a convolution operator 
      \[N(n_1,n_3)=\wtg(n_1,n_3) \sum_{n_2}\wtg(n_1,n_2)^2  \wtg(n_2,n_3)^2,\]
       which will enable us to control  $G_0(C_6-\eta v^6)G_0$ well. Thus, we can  decompose 
      \[C_6=(C_6-\eta v^6)+\eta v^6=C+\eta v^6,\]
      which leads to the  renormalization in potential 
      \textcolor{blue}{
  \begin{flushright}
    $\Rightarrow {\rm \textbf{Renormalization:} } \ 4\eta v^6,$
  \end{flushright}
    }
 and the non-random term occurring in the 6th-order remaining terms  
      \textcolor{red}{
  \begin{flushright}
    $\Rightarrow {\rm \textbf{Non-random term:} } \ 4\boxed{ G_0 C G_0}.$
  \end{flushright}
}

\end{itemize}

Combining  all the  above renormalizations in potential  (marked in blue color) and the non-random remaining terms  (marked in red color)  yields the 6th-order renormalization potential: 
\begin{align}\label{offset 1}
  V^{(6)}_{\omega} &= V^{(4)}_{\omega} +((\sigma^5+\rho \sigma^2)-4\sigma^2(\sigma^3-\rho)+4\eta)v^6+(\rho-\sigma^3)R_6^{(1)}+R_6^{(2)}\\
\notag &= V^{(4)}_{\omega} +(4\eta-3\sigma^5+5\rho\sigma^2)v^6 + R_6,
\end{align} 
which matches with \eqref{6-order renormalization of potential}. And the non-randomness term in the 6th-order remaining terms  is 
\[4\boxed{ G_0 C G_0}-2\sigma^2(\boxed{G_0 v^2 W G_0}+\boxed{G_0 W v^2 G_0}),\]
which also matches with \eqref{4.9}.

\subsection{Random part in the 6th-order remaining terms}\label{6thsub2}
In this subsection,  we will calculate the terms $\eqref{4.4}\sim \eqref{4.8}$.  Unlike the calculation of non-random terms, which focuses on the distribution of repeated vertices, here we primarily focus on the information about the connected components of complete characteristic graphs.

We now characterize each complete characteristic graph with a sequence $<a,b,\cdots>$, by writing down the lengths of all connected components in order. For example, the sequence $<3,1,2,1>$ stands for the following complete characteristic graph:
\begin{figure}[htbp]
  \centering

  \tikzset{every picture/.style={line width=0.75pt}} 

  \begin{tikzpicture}[x=0.75pt,y=0.75pt,yscale=-0.7,xscale=0.7]
  
  \draw  [fill={rgb, 255:red, 208; green, 2; blue, 27 }  ,fill opacity=1 ] (9.89,146.86) .. controls (9.89,144.17) and (11.82,142) .. (14.21,142) .. controls (16.59,142) and (18.52,144.17) .. (18.52,146.86) .. controls (18.52,149.54) and (16.59,151.71) .. (14.21,151.71) .. controls (11.82,151.71) and (9.89,149.54) .. (9.89,146.86) -- cycle ;
  \draw  [fill={rgb, 255:red, 208; green, 2; blue, 27 }  ,fill opacity=1 ] (114.79,145.71) .. controls (114.79,143.03) and (116.72,140.86) .. (119.11,140.86) .. controls (121.49,140.86) and (123.42,143.03) .. (123.42,145.71) .. controls (123.42,148.4) and (121.49,150.57) .. (119.11,150.57) .. controls (116.72,150.57) and (114.79,148.4) .. (114.79,145.71) -- cycle ;
  \draw  [fill={rgb, 255:red, 208; green, 2; blue, 27 }  ,fill opacity=1 ] (219.69,145.86) .. controls (219.69,143.17) and (221.62,141) .. (224,141) .. controls (226.39,141) and (228.32,143.17) .. (228.32,145.86) .. controls (228.32,148.54) and (226.39,150.71) .. (224,150.71) .. controls (221.62,150.71) and (219.69,148.54) .. (219.69,145.86) -- cycle ;
  \draw  [fill={rgb, 255:red, 208; green, 2; blue, 27 }  ,fill opacity=1 ] (324.58,144.57) .. controls (324.58,141.89) and (326.52,139.71) .. (328.9,139.71) .. controls (331.29,139.71) and (333.22,141.89) .. (333.22,144.57) .. controls (333.22,147.25) and (331.29,149.43) .. (328.9,149.43) .. controls (326.52,149.43) and (324.58,147.25) .. (324.58,144.57) -- cycle ;
  \draw  [fill={rgb, 255:red, 208; green, 2; blue, 27 }  ,fill opacity=1 ] (429.48,144.57) .. controls (429.48,141.89) and (431.42,139.71) .. (433.8,139.71) .. controls (436.18,139.71) and (438.12,141.89) .. (438.12,144.57) .. controls (438.12,147.25) and (436.18,149.43) .. (433.8,149.43) .. controls (431.42,149.43) and (429.48,147.25) .. (429.48,144.57) -- cycle ;
  \draw  [fill={rgb, 255:red, 208; green, 2; blue, 27 }  ,fill opacity=1 ] (530.06,143.57) .. controls (530.06,140.89) and (532,138.71) .. (534.38,138.71) .. controls (536.77,138.71) and (538.7,140.89) .. (538.7,143.57) .. controls (538.7,146.25) and (536.77,148.43) .. (534.38,148.43) .. controls (532,148.43) and (530.06,146.25) .. (530.06,143.57) -- cycle ;
  \draw    (14.21,146.86) -- (119.11,147) ;
  \draw    (119.11,147) -- (224,145.86) ;
  \draw  [dash pattern={on 0.84pt off 2.51pt}]  (224,145.86) -- (328.9,144.71) ;
  \draw  [dash pattern={on 0.84pt off 2.51pt}]  (328.9,144.57) -- (438.12,144.57) ;
  \draw    (433.8,144.71) -- (534.38,143.57) ;
  \draw  [fill={rgb, 255:red, 208; green, 2; blue, 27 }  ,fill opacity=1 ] (630.64,142.29) .. controls (630.64,139.6) and (632.58,137.43) .. (634.96,137.43) .. controls (637.35,137.43) and (639.28,139.6) .. (639.28,142.29) .. controls (639.28,144.97) and (637.35,147.14) .. (634.96,147.14) .. controls (632.58,147.14) and (630.64,144.97) .. (630.64,142.29) -- cycle ;
  \draw  [dash pattern={on 0.84pt off 2.51pt}]  (534.38,143.57) -- (634.96,142.29) ;

  \end{tikzpicture}
 
\end{figure}
\ \\ 
Obviously, if a summation tuple (with a complete characteristic graph) has randomness (i.e.,  does not cancel),  then there must be some connected  component (of its complete characteristic graph)  having  the  {\it odd} number  length.  By this fact, we  can  list
  \begin{center}
    \begin{xltabular}[h]{\textwidth}{|>{\centering\arraybackslash}X |>{\centering\arraybackslash}X | >{\centering\arraybackslash}X|}
     \hline
     \textbf{Number of connected  components}  & \textbf{Characteristic graphs} & \textbf{Coefficient} \\
      \hline
      2 & $<1,5>,<5,1>$ & $2\rho \sigma$ \\
      \hline
      \textcolor{orange}{3} & \textcolor{orange}{$<4,1,1>,<1,4,1>$, $<1,1,4>$} & \textcolor{orange}{$\rho-\sigma^3$} \\
      \hline
      2 & $<3,3>$ & $\sigma^4$ \\
      \hline
      3 & $<3,2,1>,<3,1,2>$, $<2,1,3>,<2,3,1>$, $<1,3,2>,<1,2,3>$ & 0\\
      \hline
      4 & $<3,1,1,1>,<1,3,1,1>$, $<1,1,3,1>,<1,1,1,3>$ & $-\sigma^2$\\
      \hline
      4 & $<2,2,1,1>,\cdots$  & 0 \\
      \hline
      5 & $<2,1,1,1,1>,\cdots$ & 0 \\
      \hline
      6 & $<1,1,1,1,1,1>$ & 1\\
      \hline
     \end{xltabular}
\end{center}
\ \\ 
Also, the coefficients are calculated similar to that of  \eqref{caculation of coefficient of 111111 by graph}. 
\begin{rmk}
All possible complete characteristic graphs can be exhausted as follows. We have already known  that at least one connected  component has an odd number  length.
\begin{itemize}
  \item[(1)] First, consider the graphs with at least a 5-length connected  component. The only possibilities  are  $<5,1>$ and its permutation $<1,5>$; 
  
  \item[(2)] Second, consider the graphs with no 5-length connected  component, but with at least a 4-length connected  component. The only possibilities  are $<4,1,1>$ and its permutations; 
  
  \item[(3)] Third, consider the graphs with no 4-length connected  component, but with at least a 3-length connected  component. The graphs consist of:     $<3,3>$,     $<3,2,1>$ with  its permutations,  and $<3,1,1,1>$ with  its permutations; 
  
  \item[(4)] Fourth, consider the graphs with no 3-length connected  component, but  with at least a 2-length connected  component. The graphs  consist of:  $<2,2,1,1>$ with  its permutations,  and $<2,1,1,1,1>$  with  its permutations; 
  
  \item[(5)] Finally, consider the graphs with only 1-length connected  components. It is just $<1,1,1,1,1,1>$.
\end{itemize} 
Such  exhaustions  argument also works  well for the 7th-order terms, but becomes more complicated.
\end{rmk}

Next,  we discuss the random parts   in   terms with non-zero coefficients listed  in the above table: 
\begin{itemize}
  \item $<1,5>,<5,1>$:  They are automatically random and correspond  to 
        \begin{align*} 
          2\rho\sigma &(G_0 V\wtg v^4VG_0+G_0 v^4V\wtg VG_0)\\
          &= 2\sigma\rho \big((G_0 v^4 VG_0 V G_0)^{\admissible} + (G_0  VG_0  v^4 V G_0)^{\admissible}  \big), 
        \end{align*}
        which is the first term in \eqref{4.4}.
        
  \item $<3,3>$: This is automatically random and corresponds to 
        \[\sigma^4 G_0 v^2V \wtg v^2VG_0=\sigma^4(G_0 v^2V G_0 v^2 V G_0)^{\admissible},\]
        which is the second term in \eqref{4.4}.
        
  \item $<3,1,1,1>,<1,3,1,1><1,1,3,1>,<1,1,1,3>$: The random part is exactly \eqref{4.5}. For example, the random part of $<3,1,1,1>$ is just the tuple $(n_2,n_2,n_2,n_1,n_2,n_1)$, which has been discussed in \eqref{caculation of coefficient of 121222 by graph}.
  
  \item $<1,1,1,1,1,1>$,  and $<4,1,1>$ with its permutations: This case is the most challenging because we need to determine the connections between  ``tuples with adjacent elements different''  and ``admissible'' ones. In other words, the problem  is how to rewrite
        \[ G_0 V\wtg V\wtg V\wtg V\wtg V\wtg VG_0 = (\cdots)^{\admissible}+(\cdots)^{\admissible}+\cdots.\]
        Such things  also cause the main obstacle in the calculations of  the 7th-order terms.\\ 
         Now,  since  we have already ensured that the 6-tuple does  not cancel,   the condition of  ``with adjacent elements different''   can guarantee all 1-subtuples, 2-subtuples, 3-subtuples, and 5-subtuples do not cancel  (note that odd (number)-tuple never cancels). Therefore, we only need to consider what tuples  could  carry the cancelled 4-tuples. 
        What's more, the cancelled 4-tuples  with adjacent elements different can only be $(n_1,n_2,n_1,n_2)$. Hence (we remark   that the tuples below  are all with adjacent elements different), we have 
        \begin{align}\label{<1,1,1,1,1,1>}
          <1,1,1,1,1,1> =&<\underbrace{1,1,1,1}_{ \admissible },1,1>+<\underbrace{1,1,1,1}_{cancel},1,1>\\
            \notag =&<(1,1,1,1)^{\admissible},1,1>+(n_1,n_2,n_1,n_2,X,X)\\
            \notag =&<(1,\underbrace{1,1,1)^{\admissible},1}_{\admissible},1>+<(1,\underbrace{1,1,1)^{\admissible},1}_{cancel},1>\\
              \notag& +(n_1,n_2,n_1,n_2,X,X)\\
            \label{explanation 1} =&<(1,1,1,1,1)^{\admissible},1>+<1,\underbrace{1,1,1,1}_{cancel},1>\\
            \notag & -(n_1,n_2,n_1,n_2,n_1,X)+(n_1,n_2,n_1,n_2,X,X)\\
            \notag =&<(1,1,\underbrace{1,1,1)^{\admissible},1}_{\admissible}>+<(1,1,\underbrace{1,1,1)^{\admissible},1}_{cancel}>\\
            \notag &+<1,\underbrace{1,1,1,1}_{cancel},1>-(n_1,n_2,n_1,n_2,n_1,X)+(n_1,n_2,n_1,n_2,X,X)\\
            \label{explanation 2} =&<(1,1,1,1,1,1)^{\admissible}>+(X,X,n_1,n_2,n_1,n_2)-(X,n_2,n_1,n_2,n_1,n_2)\\
            \notag &+(X,n_2,n_1,n_2,n_1,X)-(n_1,n_2,n_1,n_2,n_1,X)+(n_1,n_2,n_1,n_2,X,X). 
        \end{align}
        Here, we provide some explanations for the above computations.  The notation ``$X$"  indicates that, ensuring the tuple has different adjacent elements, the place can be any vertex. For the equality \eqref{explanation 1}, consider a tuple (with different adjacent elements) whose first 4-subtuple does not cancel, but the second 4-subtuple does. This can be understood as the tuple, whose second 4-subtuple cancels, minus the tuple whose first and second 4-subtuple both cancel. By this argument, we have 
        \begin{equation}\label{A.16}
          <(1,\underbrace{1,1,1)^{\admissible},1}_{cancel},1>=<1,\underbrace{1,1,1,1}_{cancel},1>-(n_1,n_2,n_1,n_2,n_1,X). 
        \end{equation}
        For the equality \eqref{explanation 2}, consider the tuple whose first and second 4-subtuples  do not cancel, but the third 4-subtuple does. Such a tuple must be of the  form 
        $$(X,\tilde{X},n_1,n_2,n_1, n_2).$$
        Since  the second 4-subtuple does not cancel and the adjacent elements are different,  the second $\tilde{X}$ can only be $n_3$ and hence the first 4-subtuple automatically does not cancel.   
        So it is equivalent to consider the tuple whose second 4-subtuple do not cancel, but the third 4-subtuple does. Then using the similar argument as in  \eqref{A.16} will give  us
        \[<(1,1,\underbrace{1,1,1)^{\admissible},1}_{cancel}>=(X,X,n_1,n_2,n_1,n_2)-(X,n_2,n_1,n_2,n_1,n_2).\]
        Finally, rewriting  \eqref{<1,1,1,1,1,1>} into  the operator summation form yields 
        \begin{align*}
          &\ \ \ G_0 V\wtg V\wtg V\wtg V\wtg V\wtg VG_0 \\
           =& \big[   (G_0 W_4 \widetilde{G_0} V G_0 V G_0)^{\admissible}  +(G_0 V \wtg W_4 \wtg V G_0)^{\admissible}\\
             & +(G_0 V G_0 V \wtg W_4 G_0)^{\admissible}      \big]+\eqref{4.6}+\eqref{4.8}. 
        \end{align*}
        Note that the first term on the RHS of the above equality  is not  \eqref{4.7}  (i.e., with $W$ replaced by $W_4$.) {\bf Fortunately, recall that we have not yet  considered   the graphs $<4,1,1>,<1,4,1>$ and $<1,1,4>$ with  the coefficient $\rho-\sigma^3$}. Indeed,  those terms take the form 
        \begin{equation}\label{A.17}
              (\rho-\sigma^3)\big[   (G_0  v^4 \widetilde{G_0} V G_0 V G_0)^{\admissible}  +(G_0 V \wtg v^4 \wtg V G_0)^{\admissible} +(G_0 V G_0 V \wtg v^4 G_0)^{\admissible}      \big]. 
        \end{equation}
        Hence,
        \begin{align}\label{offset 2}
          \big[   (G_0 W_4 \widetilde{G_0} V G_0 V G_0)^{\admissible}  & +(G_0 V \wtg W_4 \wtg V G_0)^{\admissible} +(G_0 V G_0 V \wtg W_4 G_0)^{\admissible}      \big]\\
        \notag     &+\eqref{A.17} =\eqref{4.7}. 
        \end{align}
        We finally get  \eqref{4.7} successfully!    Such  {\bf offset}  is amazing  and will also  appear in   the computations  of 7th-order remaining terms.
\end{itemize}
By summarizing all the above discussions, we prove that the random part of the 6th-order remaining terms  is given by  $\eqref{4.4}\sim \eqref{4.8}$.

\section{Computation of  the 7th-order remaining terms}\label{App7th}

The computations of the 7th-order remaining terms $\eqref{4.11}\sim \eqref{4.24}$ are  much more complicated, but follow a similar procedure as in the  6th-order case.  

In the 7th-order case,  all terms  are random.  Therefore, the 7th-order terms  do not require additional renormalizations on the potential.  Similar to Subsection \ref{6thsub2}, we list

 \begin{center}
    \begin{xltabular}[h]{\textwidth}{|>{\centering\arraybackslash}X |>{\centering\arraybackslash}X | >{\centering\arraybackslash}X|}
     \hline
     \textbf{Number of connected components}  & \textbf{Characteristic graphs/ terms} & \textbf{Coefficient} \\
      \hline
      \qquad & $\mathcolor{orange}{G_0 R_6\wtg V G_0, G_0 V\wtg R_6 G_0}$ & $\mathcolor{orange}{1}$ \\
      \hline
      \qquad &  $G_0 R_6 V G_0$  & $2\sigma$ \\
      \hline
      1 & $<7>$ & $8\eta\sigma-7\sigma^6+12\sigma^3\rho$ \\
      \hline
      \color{orange}{2} & $\mathcolor{orange}{<6,1>,<1,6>}$ & $\mathcolor{orange}{4\eta +4\sigma^2\rho-4\sigma^5}$ \\
      \hline
      2 & $<5,2>,<2,5>$ & 0\\
      \hline
      3 & $<5,1,1,>,<1,5,1>$, $<1,1,5>$ & $-2\sigma\rho$\\
      \hline
      \color{orange}{2} & $\mathcolor{orange}{<4,3>,<3,4>}$  & $\mathcolor{orange}{\sigma^2(\rho-\sigma^3)}$ \\
      \hline
      3 & $<4,2,1>,\cdots$ & 0 \\
      \hline
      \color{orange}{4} & $\mathcolor{orange}{<4,1,1,1>,<1,4,1,1>,}$  $\mathcolor{orange}{<1,1,4,1>,<1,1,1,4>}$ & $\mathcolor{orange}{\sigma^3-\rho}$ \\
      \hline
      3 & $<3,3,1>,<3,1,3>$, $<1,3,3>$ & $-\sigma^4$\\
      \hline
      3 & $<3,2,2>,\cdots$ & 0 \\
      \hline
      4 & $<3,2,1,1>,\cdots$ & 0 \\
      \hline
      5 & $<3,1,1,1,1>,\cdots$ & $\sigma^2$ \\
      \hline
      \qquad & graphs  have  only connected  components of length   2 or 1, and at least one  2-length connected  component & 0\\
      \hline
      7 & $<1,1,1,1,1,1,1>$ & -1 \\
      \hline
     \end{xltabular}
\end{center}
 
Next, we discuss the random part in  terms with non-zero coefficients  in the above table. We remark  that, by our definition, the vertices in two adjacent connected  components are different. The following items only consider \textbf{the graphs with  connected  components  of  an odd number}. 
\begin{itemize}
  \item $G_0 R_6 V G_0$: It is automatically random, and  is just the first term in \eqref{4.11}.
  \item $<7>$: It is automatically random, and corresponds to 
        \[(8\eta\sigma-7\sigma^6+12\sigma^3\rho)G_0 v^6 VG_0, \]
        which is the second term in \eqref{4.11}.
  \item $<5,1,1>,<1,5,1>,<1,1,5>$: They  are automatically random, admissible, and correspond exactly  to  \eqref{4.12}.
  \item $<3,3,1>,<3,1,3>,<1,3,3>$: They are automatically random, admissible, and correspond  exactly to  \eqref{4.13}.
  \item $<3,1,1,1,1>$ with  its permutations: In this case,  we also need do the same analysis as in  \eqref{<1,1,1,1,1,1>}. For example, figure out how to connect $G_0v^2 V\wtg V\wtg V\wtg V\wtg VG_0$ with summations  on  admissible tuples. For simplicity,  we first consider 
        \begin{align}\label{<3,1,1,1,1>}
          <1,1,1,1,1>  =& <(1,1,1,1)^{\admissible},1>+<\underbrace{1,1,1,1}_{cancel},1> \\
          \notag  =& <(1,1,1,1,1)^{\admissible}> + <(1,\underbrace{1,1,1)^{\admissible},1}_{cancel}>+ (n_1,n_2,n_1,n_2,X)\\
          \notag  =&<(1,1,1,1,1)^{\admissible}> + <1,\underbrace{1,1,1,1}_{cancel}>-(n_1,n_2,n_1,n_2,n_1) \\
          \notag & + (n_1,n_2,n_1,n_2,X)\\
          \notag =&<(1,1,1,1,1)^{\admissible}> + (n_1,n_2,n_1,n_2,X)+(X,n_1,n_2,n_1,n_2) \\
          \notag & -(n_1,n_2,n_1,n_2,n_1). 
        \end{align}
        Since the considered  graphs  are $<3,1,1,1,1>$ with  its permutations, we can  replace  one of the 1-length connected  component in \eqref{<3,1,1,1,1>}  with a 3-length connected  component. 
        
        If we replace each connected  component  of  $<(1,1,1,1,1)^{\admissible}>$   (remember  the coefficient is $\sigma^2$), we get  
        \begin{align*}
            \sigma^2 \big( <(3,1,1,1,1)^{\admissible}> + & <(1,3,1,1,1)^{\admissible}> + <(1,1,3,1,1)^{\admissible}> \\
               &+ <(1,1,1,3,1)^{\admissible}> + <(1,1,1,1,3)^{\admissible}> \big),  
        \end{align*}
        which  corresponds  exactly  to \eqref{4.14}.

        If we replace  each  connected  component in $(n_1,n_2,n_1,n_2,X)+(X,n_1,n_2,n_1,n_2)$, we obtain  
        \begin{align}\label{B.2}
          \sigma^2\left[ (n_1,n_2,n_1,n_2,X=X=X)+(X=X=X,n_1,n_2,n_1,n_2)  \right],
        \end{align}
        and
        \begin{align}\label{B.3}
          \sigma^2 \biggl[ &\big( (n_1,n_1,n_1,n_2,n_1,n_2,X)+ (n_1,n_2,n_1,n_1,n_1,n_2,X)   \big) \\
           \notag  &+  \big( (n_1,n_2,n_2,n_2,n_1,n_2,X)+ (n_1,n_2,n_1,n_2,n_2,n_2,X)   \big) \\
           \notag & + \big( (X,n_1,n_1,n_1,n_2,n_1,n_2)+ (X,n_1,n_2,n_1,n_1,n_1,n_2)   \big) \\
           \notag & +\big( (X,n_1,n_2,n_2,n_2,n_1,n_2)+ (X,n_1,n_2,n_1,n_2,n_2,n_2)   \big) \biggl]. 
        \end{align}
        Then we  observe that \eqref{B.2} corresponds to
        \begin{align}\label{B.2 correspond}
          &\ \ \ \sigma^2  (G_0 W_4 \wtg v^2V G_0 +G_0 v^2 V \wtg W_4 G_0) \\
          \notag &=\eqref{4.15} + \mathcolor{orange}{\sigma^2(\sigma^3-\rho)\cdot (G_0 v^4 \wtg v^2V G_0 +G_0 v^2 V \wtg v^4 G_0)}\\
          \notag &=\eqref{4.15} + \mathcolor{orange}{\sigma^2(\sigma^3-\rho)(<4,3>+<3,4>)}, 
        \end{align}
        and \eqref{B.3} corresponds to 
        \begin{align}\label{B.3 correspond}
          &\ \ \  2\sigma^2 (G_0 W_4 v^2 \wtg VG_0+ G_0 v^2 W_4 \wtg V G_0 +G_0 V \wtg v^2 W_4 G_0+ G_0 V \wtg W_4 v^2 G_0) \\
              \notag     &= \eqref{4.16} + \mathcolor{orange}{4\sigma^2(\sigma^3-\rho) (G_0 v^6 \wtg V G_0+G_0 V \wtg v^6 G_0) }\\
              \notag     &= \eqref{4.16} + \mathcolor{orange}{4\sigma^2(\sigma^3-\rho) (<6,1>+<1,6>)}. 
        \end{align}
        If we replace  each connected  component in $-(n_1,n_2,n_1,n_2,n_1)$, we get  
        \begin{align}\label{B.6}
          -\sigma^2 \big[ (n_1,n_1,n_1,n_2,n_1,n_2,n_1) &+(n_1,n_2,n_1,n_1,n_1,n_2,n_1) \\
           \notag &+(n_1,n_2,n_1,n_2,n_1,n_1,n_1) \big]
        \end{align}
        and 
        \begin{align}\label{B.7}
          -\sigma^2\big[  (n_1,n_2,n_2,n_2,n_1,n_2,n_1)+(n_1,n_2,n_1,n_2,n_2,n_2,n_1)        \big]. 
        \end{align}
        Then we can observe that \eqref{B.6} corresponds  exactly  to  the first term in \eqref{4.17}, and \eqref{B.7}   to   the second one   in \eqref{4.17}.  

        Finally,  combining computations concerning  $<3,1,1,1,1>$  with  its permutations yields  
        \begin{align*}
            &[\eqref{4.14} \sim \eqref{4.17} ]\\
            &  \ \ \ + \mathcolor{orange}{4\sigma^2(\sigma^3-\rho) (<6,1>+<1,6>)} +\mathcolor{orange}{\sigma^2(\sigma^3-\rho)(<4,3>+<3,4>)}. 
        \end{align*}
  \item $<1,1,1,1,1,1,1>$:  We try to rewrite the 7-tuples   with different adjacent elements into summations  on  admissible tuples. That is, to investigate 
        \[G_0 V \wtg V\wtg V\wtg V\wtg V\wtg V\wtg VG_0.\]
        First, we can  decompose 
        \begin{align*}
          <1,1,1,1,1,1,1> &= <(1,1,1,1,1,1,1)^{\admissible}> + \sum_{non \ admissible,  \atop with\  adjacent \ elements \ different}\cdots\\
          &=<(1,1,1,1,1,1,1)^{\admissible}>+A+B\\
          &\Rightarrow -\eqref{4.18}+A+B, 
        \end{align*}
        where   $A$ contains  tuples having cancelled 4-subtuples, and  $B$ contains  tuples having cancelled 6-subtuples, but no cancelled 4-subtuple. 

We first deal with $A$, and then $B$: \smallskip 

       \noindent \textbf{(Term A)}\\
       We decompose 
        \begin{align*}
          &A = <\underbrace{1,1,1,1}_{cancel},1,1,1> + < \mathcolor{red}{1,}\underbrace{\mathcolor{red}{1,1,1},1}_{cancel},1,1> + < \mathcolor{red}{1,1,}\underbrace{\mathcolor{red}{1,1,1},1}_{cancel},1> \\
            &\qquad + < \mathcolor{red}{1,1,1,}\underbrace{\mathcolor{red}{1,1,1},1}_{cancel}> \\
            &= (n_1,n_2,n_1,n_2,X,X,X) + [(X,n_1,n_2,n_1,n_2,X,X)-(n_2,n_1,n_2,n_1,n_2,X,X)] \\
            &\qquad +[(X,X,n_1,n_2,n_1,n_2,X)-(X,n_2,n_1,n_2,n_1,n_2,X)] \\
            &\qquad +[(X,X,X,n_1,n_2,n_1,n_2)-(X,X,n_2,n_1,n_2,n_1,n_2)+(n_2,n_1,n_2,n_1,n_2,n_1,n_2)\\
            &\qquad \qquad \qquad \qquad - < \underbrace{1,1,1,}_{cancel}1\underbrace{,1,1,1}_{cancel}>]\\
            &=\\
            \label{part 1}\tag{A-part 1}    &[(X,X,X,n_1,n_2,n_1,n_2)+(X,X,n_1,n_2,n_1,n_2,X)+\\
            &\qquad (X,n_1,n_2,n_1,n_2,X,X)+  (n_1,n_2,n_1,n_2,X,X,X) ]\\
            \label{part 2}\tag{A-part 2}&\qquad -[(n_2,n_1,n_2,n_1,n_2,X,X)+(X,n_2,n_1,n_2,n_1,n_2,X)+(X,X,n_2,n_1,n_2,n_1,n_2)]\\
            \label{part 3}\tag{A-part 3}&\qquad + (n_2,n_1,n_2,n_1,n_2,n_1,n_2)-< \underbrace{1,1,1,}_{cancel}1\underbrace{,1,1,1}_{cancel}>. 
        \end{align*}
        Here the red part of the graph means that there is no cancelled 4-subtuple.\\
        Now for \eqref{part 1}, we have 
        \begin{align}
         \label{B.8} (X,X,X,n_1,n_2,n_1,n_2)& = (X,X,X,\underbrace{n_1,n_2,n_1,n_2}_{W}) + \mathcolor{orange}{(\sigma^3-\rho)\cdot <1,1,1,4>}\\
        \notag  &\Rightarrow (G_0 VG_0 VG_0 V \wtg W_4 G_0)^{\admissible} +  \mathcolor{orange}{(\sigma^3-\rho)\cdot <1,1,1,4>}
        \end{align}
        and 
        \begin{align}
         \label{B.9} &(X,X,n_1,n_2,n_1,n_2,X) = (X,X,\underbrace{n_1,n_2,n_1,n_2}_{W},X) + \mathcolor{orange}{(\sigma^3-\rho)\cdot <1,1,4,1>}\\
         \notag &\Rightarrow (G_0 V G_0 V \wtg W \wtg V G_0)^{\admissible}+\mathcolor{orange}{(X,n_3,\underbrace{n_2,n_1,n_2,n_1}_{W},n_3)}+\mathcolor{orange}{(\sigma^3-\rho)\cdot <1,1,4,1>}\\
         \notag & =(G_0 V G_0 V \wtg W \wtg V G_0)^{\admissible}+\mathcolor{orange}{G_0 V \wtg R_6 G_0}+\mathcolor{orange}{(\sigma^3-\rho)\cdot <1,1,4,1>}. 
        \end{align}
        Similar argument also applies to  $(n_1,n_2,n_1,n_2,X,X,X)$ and $(X,n_1,n_2,n_1,n_2,X,X)$. Then we obtain
        \begin{align}
        \label{B.10}  \eqref{part 1} &\Rightarrow -\eqref{4.19} + \mathcolor{orange}{(G_0 V \wtg R_6 G_0+G_0 R_6 \wtg V G_0)}\\
             \notag                         &\qquad +\mathcolor{orange}{(\sigma^3-\rho)\cdot (<1,1,1,4>+<1,1,4,1>+\cdots)}
        \end{align}
For \eqref{part 2}, we have 
        \[(n_1,n_2,n_1,n_2,n_1,X,X)\Rightarrow (G_0 VD_4 G_0 V G_0VG_0)^{\admissible},\]
        so,
        \begin{align}
          \label{B.11} \eqref{part 2}\Rightarrow -\eqref{4.20}. 
        \end{align}
        For \eqref{part 3}, we have 
        \begin{align}
          \label{B.12} \eqref{part 3}\Rightarrow -\eqref{4.21}.
        \end{align}
 Hence, combining   \eqref{B.10}, \eqref{B.11} and \eqref{B.12} shows 
        \begin{align}\label{A generate}
          A \Rightarrow -[\eqref{4.19}\sim\eqref{4.21}]&+\mathcolor{orange}{(G_0 V \wtg R_6 G_0+G_0 R_6 \wtg V G_0)}\\
          \notag &+\mathcolor{orange}{(\sigma^3-\rho)\cdot (<1,1,1,4>+<1,1,4,1>+\cdots)}. 
        \end{align}
        
       \textbf{(Term B)}\\
        Remember that $B$ contains tuples with different adjacent elements,  having no cancelled 4-subtuple and  with at least one cancelled 6-subtuple. Hence, the cancelled 6-subtuples  it containing must be of the  form 
        \[(n_1,n_2,n_3,n_1,n_2,n_3),\ (n_1,n_3,n_2,n_1,n_2,n_3),\]
        \[(n_1,n_2,n_3,n_2,n_1,n_3),\ (n_1,n_2,n_1,n_3,n_2,n_3).\]
        The case $(n_1,n_2,n_3,n_2,n_3,n_1)$ can be  excluded, because it contains the  cancelled 4-subtuple $(n_2,n_3,n_2,n_3)$. Now we decompose (keeping in mind that $B$ contains  no cancelled 4-subtuple)
        \begin{align*}
          B& =<\underbrace{1,1,1,1,1,1}_{cancel},1>+<\mathcolor{green}{1,}\underbrace{\mathcolor{green}{1,1,1,1,1},1}_{cancel}>\\
           &= <\underbrace{1,1,1,1,1,1}_{cancel},1>+<1,\underbrace{1,1,1,1,1,1}_{cancel}> \\
           &\qquad -[(n_3,n_1,n_2,n_3,n_1,n_2,n_3)+(n_3,n_1,n_2,n_3,n_1,n_3,n_2)\\
           &\qquad \qquad +(n_3,n_1,n_2,n_3,n_2,n_1,n_3)+(n_3,n_1,n_2,n_1,n_3,n_2,n_3)]\\
           &=(\mathbf{all \ tuples \ below \ contain  \ no \ cancelled  \ 4-subtuple})\\
         \label{B-part 1}\tag{B-part 1} & \quad \bigg\{ [(n_1,n_2,n_3,n_1,n_2,n_3,X)+(X,n_1,n_2,n_3,n_1,n_2,n_3)] \\
          &\qquad \qquad + [(n_1,n_3,n_2,n_1,n_2,n_3,X)+(X,n_1,n_3,n_2,n_1,n_2,n_3)]\\
          &\qquad \qquad + [(n_1,n_2,n_3,n_2,n_1,n_3,X)+(X,n_1,n_2,n_3,n_2,n_1,n_3)]\\
          &\qquad \qquad + [(n_1,n_2,n_1,n_3,n_2,n_3,X)+(X,n_1,n_2,n_1,n_3,n_2,n_3)]\bigg\}\\
          \label{B-part 2}\tag{B-part 2} & \quad -[(n_3,n_1,n_2,n_3,n_1,n_2,n_3)+(n_3,n_1,n_2,n_3,n_1,n_3,n_2)\\
          &\qquad \qquad +(n_3,n_1,n_2,n_3,n_2,n_1,n_3)+(n_3,n_1,n_2,n_1,n_3,n_2,n_3)],\\
        \end{align*}
        where the green part of the graph means that there is no cancelled 6-subtuple.\\
        For \eqref{B-part 1}, recall that $C_6=C+\eta v^6$ (cf. \eqref{A.10}). Hence, we have 
        \[(n_1,n_2,n_3,n_1,n_2,n_3,X)\Rightarrow G_0 C_6 \wtg V G_0=G_0 C \wtg V G_0+\mathcolor{orange}{\eta(<6,1>)}.\]
        Such an argument also applies to  $(n_1,n_3,n_2,n_1,n_2,n_3,X),(n_1,n_2,n_3,n_2,n_1,n_3,X)$. However,  for $(n_1,n_2,n_1,n_3,n_2,n_3,X)$, since  it in fact contains  no cancelled 4-subtuple (i.e., $X\neq n_2$), we have 
        \begin{align*}
            (n_1,n_2,n_1,n_3,n_2,n_3,X) &\Rightarrow G_0 C_6 \wtg V G_0-(n_1,n_2,n_1,n_3,n_2,n_3,n_2)\\
               &=G_0 C \wtg V G_0-G_0 S^{\top} V G_0+\mathcolor{orange}{\eta(<6,1>)}. 
        \end{align*}
        From the above analysis,  it follows that 
        \begin{equation}\label{B.14}
          \eqref{B-part 1}\Rightarrow -[\eqref{4.22}+\eqref{4.24}]+\mathcolor{orange}{4\eta(<6,1>+<1,6>)}. 
        \end{equation}
         For \eqref{B-part 2}, we have  
        \begin{equation}\label{B.15}
          \eqref{B-part 2}\Rightarrow -\eqref{4.23}. 
        \end{equation}
       Hence, combining  \eqref{B.14}, \eqref{B.15} and \eqref{B.12} yields 
        \begin{align}\label{B generate}
          B &\Rightarrow -[\eqref{4.22}\sim \eqref{4.24}]+\mathcolor{orange}{4\eta(<6,1>+<1,6>)}. 
        \end{align}
        Finally, recall that $<1,1,1,1,1,1,1>\Rightarrow -\eqref{4.18}+A+B$, and there is the  coefficient $-1$ in front of the  graph $<1,1,1,1,1,1,1>$. By \eqref{A generate} and \eqref{B generate}, we get
        \begin{align*}
            -&<1,1,1,1,1,1,1> \Rightarrow [\eqref{4.18}\sim\eqref{4.24}] -\mathcolor{orange}{(G_0 V \wtg R_6 G_0+G_0 R_6 \wtg V G_0)}\\
            &\qquad -\mathcolor{orange}{(\sigma^3-\rho)\cdot (<1,1,1,4>+<1,1,4,1>+\cdots)}-\mathcolor{orange}{4\eta(<6,1>+<1,6>)}. 
        \end{align*}
   \end{itemize}
   \smallskip

  Now summarizing all analysis in the above items, we get that, for the 7th-order remaining terms, ``graphs  with  only odd number of   connected  components'' implies 
  \begin{align*}
    \label{offset 3}\tag{B.17}
    &\qquad  [\eqref{4.11}\sim\eqref{4.24}] + \mathcolor{orange}{4\sigma^2(\sigma^3-\rho) (<6,1>+<1,6>)} +\mathcolor{orange}{\sigma^2(\sigma^3-\rho)(<4,3>+<3,4>)} \\
    &\qquad -\mathcolor{orange}{(\sigma^3-\rho)\cdot (<1,1,1,4>+<1,1,4,1>+\cdots)}-\mathcolor{orange}{4\eta(<6,1>+<1,6>)}\\
    &\qquad -\mathcolor{orange}{(G_0 V \wtg R_6 G_0+G_0 R_6 \wtg V G_0)}\\
    &=[ \eqref{4.11}\sim\eqref{4.24}] -\mathcolor{orange}{(4\eta-4\sigma^5+4\sigma^2\rho) (<6,1>+<1,6>)} - \mathcolor{orange}{\sigma^2(\rho-\sigma^3)(<4,3>+<3,4>)} \\
    &\qquad -\mathcolor{orange}{(\sigma^3-\rho)\cdot (<1,1,1,4>+<1,1,4,1>+\cdots)} -\mathcolor{orange}{(G_0 V \wtg R_6 G_0+G_0 R_6 \wtg V G_0)}. 
  \end{align*}
  The part marked in  orange color is the singular one obtained  from graphs with an  odd number of  connected  components. Again, as we have seen in \eqref{offset 2}, \textbf{amazingly they offset with the remaining graphs with even number of  connected  components in the table!}  So, we finally prove that the 7th-order remaining term is exactly $\eqref{4.11}\sim \eqref{4.24}$.

\section{Proofs of  technical lemmas}\label{tecApp}

In this section, we provide detailed proofs of some technical lemmas. 
\begin{proof}[Proof of Lemma \ref{Lemma 2.1}]
Without loss of generality,  we assume $a\leq b$, so $b\neq d$.  The summation can be decomposed as  
  \begin{align}\label{2.4}
      (\sum_{n_1=0}+&\sum_{n_1=m}+ \sum_{n_1\neq 0,m \atop |n_1|> 2|m|}+\sum_{n_1\neq 0,m \atop \frac12 |m|<|n_1|\leq 2|m|}  +\sum_{n_1\neq 0,m \atop |n_1|\leq \frac{1}{2}|m|}) \frac{1}{|n_1|^a |m-n_1|^b} \\
   \notag  &\lesssim \frac{1}{|m|^a}+\frac{1}{|m|^b}+ ( \sum_{n_1\neq 0,m \atop |n_1|> 2|m|}+\sum_{n_1\neq 0,m \atop \frac12 |m|<|n_1|\leq 2|m|}  +\sum_{n_1\neq 0,m \atop |n_1|\leq \frac{1}{2}|m|}) \frac{1}{|n_1|^a |m-n_1|^b}. 
  \end{align}
  
  First, note that if $|n_1|>2|m|$, then $ |m-n_1|>\frac12 |n_1|$ and we have 
  \begin{align}\label{2.5}
    \sum_{n_1\neq 0,m \atop |n_1|>2|m|}\frac{1}{|n_1|^a|m-n_1|^b} &\lesssim_a\sum_{n_1\in \Z^d \atop |n_1|>2|m|}\frac{1}{|n_1|^{a+b}}\lesssim_{a,d}\sum_{L\in \Z_+ \atop L>2|m|}\frac{1}{L^{a+b+1-d}}\\
    \notag  &\lesssim_{a,b,d}\frac{1}{|m|^{a+b-d}},
  \end{align}
  where the last inequality needs  $a+b>d$.  Similarly,  from $\frac12|m|<|n_1|\leq 2|m|\Rightarrow |n_1-m|\leq 3|m|$,   it follows that  
  \begin{align*}
    \sum_{n_1\neq 0,m \atop \frac{1}{2}|m|<|n_1|\leq 2|m|}\frac{1}{|n_1|^a|m-n_1|^b} &\lesssim_a |m|^{-a}\sum_{n_1\in \Z^d \atop |n_1-m|\leq 3|m|}\frac{1}{|n_1-m|^{b}}\lesssim_{a,d} |m|^{-a}\sum_{L\in \Z_+ \atop L\leq 3|m|}\frac{1}{L^{b+1-d}}. 
  \end{align*}
Thus,
  \begin{itemize}
    \item  if  $b< d$, we have 
            \[\sum_{L\in \Z_+ \atop L\leq 3|m|}\frac{1}{L^{b+1-d}} \lesssim_{b,d} |m|^{d-b}.\]
    \item if  $b=d$, we have 
            \[\sum_{L\in \Z_+ \atop L\leq 3|m|}\frac{1}{L} \lesssim \log |m|.\]
    \item  if  $b> d$, we have 
            \[\sum_{L\in \Z_+ \atop L\leq 3|m|}\frac{1}{L^{b+1-d}} \leq \sum_{L\in \Z_+}\frac{1}{L^{b+1-d}}\lesssim_{b,d} 1.\]
  \end{itemize}
Since  $b\neq d$ and by taking account of estimates in  the above cases, we get  
  \begin{align}\label{2.6}
    \sum_{n_1\neq 0,m \atop \frac{1}{2}|m|<|n_1|\leq 2|m|}\frac{1}{|n_1|^a|m-n_1|^b} \lesssim_{a,d,b}\frac{1}{|m|^{\min\{a,a+b-d\}}}. 
  \end{align}

  Next, note that $|n_1|\leq \frac{1}{2}|m|\Rightarrow |m-n_1|\geq \frac{1}{2}|m|$ and hence,
  \begin{align*}
    \sum_{n_1\neq 0,m \atop |n_1|\leq \frac{1}{2}|m|}\frac{1}{|n_1|^a|m-n_1|^b} \lesssim_{b}|m|^{-b}\sum_{n_1\in \Z^d \atop |n_1|\leq \frac12|m|}\frac{1}{|n_1|^{a}}. 
  \end{align*}
  If $a=b$, using same argument as above shows 
  \begin{align*}
    \sum_{n_1\neq 0,m \atop |n_1|\leq \frac{1}{2}|m|}\frac{1}{|n_1|^a|m-n_1|^b}  \lesssim_{a,d,b}\frac{1}{|m|^{\min\{b,a+b-d\}}}. 
  \end{align*}
  If $a<b$, the only difference is the  case of $a=d$, where we still have  
  \begin{align*}
    \sum_{n_1\neq 0,m \atop |n_1|\leq \frac{1}{2}|m|}\frac{1}{|n_1|^a|m-n_1|^b} \lesssim_{a,d,b}\frac{\log |m|}{|m|^b}\lesssim_{a,d,b} \frac{1}{|m|^{b-}}\lesssim_{a,d,b} \frac{1}{|m|^{a}}.  
  \end{align*}
  Hence, we have for $a\leq b,$
  \begin{align}\label{2.7}
    \sum_{n_1\neq 0,m \atop |n_1|\leq \frac{1}{2}|m|}\frac{1}{|n_1|^a|m-n_1|^b}  \lesssim_{a,d,b}\frac{1}{|m|^{\min\{a,b,a+b-d\}}}. 
  \end{align}

Finally,  combining all estimates  \eqref{2.4}--\eqref{2.7} together  concludes   the proof of Lemma \ref{Lemma 2.1}. 
\end{proof}

\begin{proof}[Proof of Lemma \ref{Lemma 2.2}]
Consider  first  the special cases  of  $n_1=n$ or $n_1=n'$ or $n_1=0$ in the summation.  Indeed, we have 
  \begin{itemize}
    \item if $n_1=n$, then  by $a\leq  b,$ 
          \[\frac{1}{|n|^{\varepsilon}|n-n'|^b}\leq \frac{1}{|n-n'|^a(|n|\wedge |n'|)^{\varepsilon}}.\]
    \item if $n_1=n'$, then  
          \[\frac{1}{|n'|^{\varepsilon}|n-n'|^a}\leq \frac{1}{|n-n'|^a(|n|\wedge |n'|)^{\varepsilon}}.  \]
    \item if $n_1=0$, by $a\leq b$, 
          \begin{align*}
            \frac{1}{|n|^a|n'|^b} \leq \frac{1}{(|n|\cdot|n'|)^a}=\frac{1}{(|n|\vee |n'|)^a(|n|\wedge|n'|)^a}.
          \end{align*}
         In this case,  since  $|n|+|n'|\geq |n-n'|$,  we have  $|n|\vee|n'|\geq \frac{1}{2}|n-n'|$. This implies  
          \begin{align*}
            \frac{1}{|n|^a|n'|^b} \lesssim_a \frac{1}{|n-n'|^a(|n|\wedge|n'|)^{a}}. 
          \end{align*}
  \end{itemize}

  Next,  we aways assume $n_1\neq 0,n,n'$. 
  From $|n-n_1|+|n_1-n'|\geq |n-n'|$, it follows that  either $|n_1-n|\geq \frac{1}{2}|n-n'|$ or $|n_1-n'|\geq \frac12|n-n'|$.  Then we can decompose the summation as 
  \begin{equation}\label{2.10}
    \sum_{n_1\neq 0,n,n'}= \sum_{n_1\neq 0,n,n' \atop |n_1-n|\geq \frac{1}{2}|n-n'|}+\sum_{n_1\neq 0,n,n' \atop |n_1-n'|\geq \frac{1}{2}|n-n'|}. 
  \end{equation}
  Applying Lemma \ref{Lemma 2.1} with $b+\varepsilon>d,b\neq d$ yields 
  \begin{align}\label{2.11}
    \sum_{n_1\neq 0,n,n' \atop |n_1-n|\geq \frac12 |n-n'|}\frac{1}{|n-n_1|^a |n_1|^{\varepsilon} |n_1-n'|^b} &\lesssim_a |n-n'|^{-a}\sum_{n_1\neq 0,n'} \frac{1}{|n_1|^{\varepsilon}|n_1-n'|^b}\\
    \notag   &\lesssim_{a,b,\varepsilon,d}|n-n'|^{-a}|n'|^{-\min\{\varepsilon,b+\varepsilon-d\}}. 
  \end{align}
 So, it remains to deal with the summation satisfying $ |n_1-n'|\geq \frac12 |n-n'|.$  In fact, we have  
  \begin{align}
  \nonumber \ \ \  &\sum _{n_1\neq 0,n,n' \atop |n_1-n'|\geq \frac12 |n-n'|}\frac{1}{|n-n_1|^a |n_1|^{\varepsilon} |n_1-n'|^{b}} \\
   \label{2.12}& \lesssim_a |n-n'|^{-a}\sum_{n_1\neq 0,n,n'} \frac{1}{|n-n_1|^a|n_1|^{\varepsilon}|n_1-n'|^{b-a}}. 
  \end{align}
  We divide the discussion  into the following two cases. 
  \begin{itemize}
    \item[{\bf Case 1}:]  $b>d$. In this case, we decompose
       \begin{equation}\label{2.13}
        \sum_{n_1\neq 0,n,n'} \frac{1}{|n-n_1|^a|n_1|^{\varepsilon}|n_1-n'|^{b-a}} =(\sum_{n_1\neq 0,n,n' \atop |n_1|\leq \frac12(|n|\wedge|n'|)} + \sum_{n\neq 0,n,n' \atop |n_1|> \frac12 (|n|\wedge |n'|)}) \cdots.
       \end{equation}
  On one hand, by Lemma \ref{Lemma 2.1}, $b>d$ and Remark  \ref{rmk 2.1}  (1), we have
  \begin{align}\label{2.14}
    \sum_{n_1\neq 0,n,n' \atop |n_1|> \frac12(|n|\wedge|n'|)} \frac{1}{|n-n_1|^a|n_1|^{\varepsilon}|n_1-n'|^{b-a}} &\lesssim_{\varepsilon} (|n|\wedge |n'|)^{-\varepsilon}\sum_{n_1\neq n,n'}\frac{1}{|n-n_1|^a|n_1-n'|^{b-a}}\\
     \notag  &\lesssim_{a,b,\varepsilon,d} (|n|\wedge|n'|)^{-\varepsilon}\frac{1}{|n-n'|^{\min\{a,b-a,b-d\}-}}\\
     \notag  &\lesssim_{a,b,\varepsilon,d} (|n|\wedge|n'|)^{-\varepsilon}. 
  \end{align}
  On the other hand, from  $|n_1|\leq \frac{1}{2}(|n|\wedge|n'|)$, it follows that  $|n-n_1|\geq \frac12 |n|$ and $|n'-n_1|\geq \frac{1}{2}|n'|$. As a result, we obtain  
  \begin{align}\label{2.15}
    \sum_{n_1\neq 0,n,n' \atop |n_1|\leq \frac12(|n|\wedge|n'|)} \frac{1}{|n-n_1|^a|n_1|^{\varepsilon}|n_1-n'|^{b-a}} &\lesssim_{a,b} |n|^{-a}|n'|^{-(b-a)}\sum_{0<|n_1|\leq \frac12(|n|\wedge|n'|)}\frac{1}{|n_1|^{\varepsilon}}\\
    \notag   &\lesssim_{a,b,\varepsilon,d} \frac{1}{(|n|\wedge|n'|)^{\varepsilon+b-d}}\\
    \notag   &\lesssim_{a,b,\varepsilon,d} (|n|\wedge|n'|)^{-\varepsilon},
  \end{align}
  where  the second inequality above  depends on $\varepsilon<d$.  Finally, combining  \eqref{2.13}--\eqref{2.15} together  leads to 
  \begin{equation}\label{2.16}
    \sum_{n_1\neq 0,n,n'} \frac{1}{|n-n_1|^a|n_1|^{\varepsilon}|n_1-n'|^{b-a}} \lesssim_{a,b,\varepsilon,d} (|n|\wedge|n'|)^{-\varepsilon} \ \ {\rm if } \ \ b>d. 
  \end{equation}
  \item[{\bf Case 2}:] $a\leq b<d$. In this case,  if  $a=b$, then the estimate   is totally the  same as that of  \eqref{2.12}.  Thus, we only need to consider the case of  $a<b$.  The main difficulty here is  that  we do not necessarily have $a+\varepsilon>d$,  so we  cannot apply Lemma \ref{Lemma 2.1} directly. Fortunately,  since  $a+\varepsilon+(b-a)>d, \varepsilon<d$ and $b-a<b<d$, we can find   $p,q>0 $ such that $\frac{1}{p}+\frac{1}{q}=1$ and 
    \[d-\varepsilon<pa<d,\ d-\varepsilon<q(b-a)<d.\]
    In fact,  $p=\frac ba$ suffices for the purpose.  Then applying H\"older inequality and Lemma \ref{Lemma 2.1} shows 
    \begin{align}\label{2.17}
      \sum_{n_1\neq 0,n,n'} & \frac{1}{|n-n_1|^a|n_1|^{\varepsilon}|n_1-n'|^{b-a}} \\
      \notag &\leq (\sum_{n_1\neq 0,n,} \frac{1}{|n-n_1|^{pa}|n_1|^{\varepsilon}})^{\frac{1}{p}} (\sum_{n_1\neq 0,n'}  \frac{1}{|n_1|^\varepsilon|n_1-n'|^{q(b-a)}})^{\frac{1}{q}} \\
      \notag &\lesssim_{a,b,\varepsilon,d}|n|^{-\frac{pa+\varepsilon-d}{p}}|n'|^{-\frac{\varepsilon+q(b-a)-d}{q}}\\
      \notag &\lesssim_{a,b,\varepsilon,d} (|n|\wedge |n'|)^{-(b+\varepsilon-d)} \ \ {\rm if } \ \ b<d. 
    \end{align}
\end{itemize}
Thus, combining estimates  \eqref{2.16}, \eqref{2.17} and \eqref{2.12} in the above two cases together implies 
\begin{equation}\label{2.18}
  \sum _{n_1\neq 0,n,n' \atop |n_1-n'|\geq \frac12 |n-n'|}\frac{1}{|n-n_1|^a |n_1|^{\varepsilon} |n_1-n'|^{b}} \lesssim_{a,b,\varepsilon,d}  |n-n'|^{-a}(|n|\wedge|n'|)^{-\min\{\varepsilon,a, b+\varepsilon-d\}}. 
\end{equation}

Finally,  the proof of Lemma \ref{Lemma 2.2} follows by combining  estimates \eqref{2.10}, \eqref{2.11} and \eqref{2.18} together.
\end{proof}

\begin{proof}[Proof of Lemma \ref{difference property}]
 Without loss of generality, we can assume $|n_1|\geq |n_2|$. Note that 
    \[ ||n_1|^{-\alpha}-|n_2|^{-\alpha}| =\frac{||n_1|^{\alpha}-|n_2|^{\alpha}|}{|n_1|^{\alpha}|n_2|^{\alpha}}.\]
  When $\alpha\geq 1$, it's easy to check that $(\alpha-1)t+\frac{1}{t^{\alpha-1}}\geq \alpha,t\geq 1$. That is,
    \[t^{\alpha}-1\leq \alpha (t-1)t^{\alpha-1}.\]
    By taking $t=\frac{|n_1|}{|n_2|}$,  we get 
    \[|n_1|^{\alpha}-|n_2|^{\alpha}\leq \alpha (|n_1|-|n_2|)|n_1|^{\alpha-1}.\]
    Hence,
    \begin{align*}
      ||n_1|^{-\alpha}-|n_2|^{-\alpha}|&\leq \alpha \frac{|n_1|-|n_2|}{|n_1||n_2|^{\alpha} }\leq\frac{|n_1-n_2|}{|n_1||n_2|^{\alpha} } \\
         &\leq \frac{|n_1-n_2|}{\frac{|n_1|+|n_2|}{2}|n_2|^{\alpha} }\lesssim_{\alpha}\frac{|n_1-n_2|}{(|n_1|+|n_2|)\cdot (|n_1|\wedge|n_2|)^{\alpha}}.
    \end{align*}
 When $0<\alpha<1$, one can check that $t^{\alpha}-1\leq 10 \frac{t-1}{(t+1)^{1-\alpha}},t\geq 1$. We again take $t=\frac{|n_1|}{|n_2|}$ and  get 
    \[|n_1|^{\alpha}-|n_2|^{\alpha}\leq 10\frac{|n_1|-|n_2|}{(|n_1|+|n_2|)^{1-\alpha}}.\]
    Hence,
    \begin{align*}
      ||n_1|^{-\alpha}-|n_2|^{-\alpha}|&\leq 10 \frac{|n_1|-|n_2|}{(|n_1|+|n_2|)^{1-\alpha}|n_1|^{\alpha}|n_2|^{\alpha} } \\
         &\leq 10 \frac{|n_1|-|n_2|}{(|n_1|+|n_2|)^{1-\alpha}(\frac{|n_1|+|n_2|}{2})^{\alpha}|n_2|^{\alpha} }\\
         &\lesssim_{\alpha}\frac{|n_1-n_2|}{(|n_1|+|n_2|)\cdot (|n_1|\wedge|n_2|)^{\alpha}}.
    \end{align*}

\end{proof}

\section{Proof of Lemma \ref{FGNlem}}\label{GNineq}

We begin with  the  fractional Gagliardo-Nirenberg inequality  proven in  \cite{BM18}.
\begin{lem}[\cite{BM18}]\label{GNlem}
  Let \(\Omega \subset \mathbb{R}^d\) be either the whole space, a half-space or a bounded Lipschitz domain. Let \(1 \leq p, \, p_1, \, p_2 \leq +\infty\) be three positive extended real quantities and let \(s, \, s_1, \, s_2\) be non-negative real numbers. Furthermore, let \(\theta \in (0, 1)\) and assume that
\[
s_1 \leq s_2, \  s = \theta s_1 + (1 - \theta)s_2, \  \frac{1}{p} = \frac{\theta}{p_1} + \frac{1 - \theta}{p_2}
\]
hold. Then
\[
\|u\|_{W^{s,p}(\Omega)} \leq C\|u\|^{\theta}_{W^{s_1,p_1}(\Omega)}\|u\|^{1-\theta}_{W^{s_2,p_2}(\Omega)}
\]
for any \(u \in W^{s_1,p_1}(\Omega) \cap W^{s_2,p_2}(\Omega)\) if and only if at least one of
\[
\begin{cases}
s_2 \in \mathbb{N} \text{ and } s_2 \geq 1,\\
p_2 = 1, \\
0 < s_2 - s_1 \leq 1 - \frac{1}{p_1},
\end{cases}
\]
is false. The constant \(C > 0\) depends on the parameters \(p, \, p_1, \, p_2, \, s, \, s_1, \, s_2, \, \theta\), on the domain \(\Omega\), but not on \(u\). Here $W^{s,p}(\Omega)$ can be  both the Bessel potential space and the Sobolev-Slobodeckij space.

\end{lem}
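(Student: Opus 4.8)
The plan is to reduce the inequality on an admissible domain $\Omega$ to the case $\Omega=\R^d$, prove it there by passing to the Littlewood--Paley (Besov) scale where the interpolation is essentially Hölder's inequality, then transfer back to the $W^{s,p}$-scale through the standard embeddings between Sobolev and Besov spaces, and finally rule out the exceptional configurations by explicit counterexamples. Here ``admissible'' means $\Omega$ is the whole space, a half-space, or a bounded Lipschitz domain.

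\textbf{Step 1 (reduction to $\R^d$).} For a half-space or a bounded Lipschitz domain one invokes a universal extension operator $E$ (Stein's extension, or Rychkov's extension in the Lipschitz case) which is bounded $W^{s_i,p_i}(\Omega)\to W^{s_i,p_i}(\R^d)$ for $i=1,2$ \emph{simultaneously}, together with the obvious bounded restriction $W^{s_i,p_i}(\R^d)\to W^{s_i,p_i}(\Omega)$. Applying the $\R^d$-inequality to $Eu$ and restricting gives the inequality on $\Omega$, with a constant depending on $\Omega$ only through the extension norms. Hence it suffices to treat $\Omega=\R^d$.

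\textbf{Step 2 (the Besov-level inequality).} Let $u=\sum_{j\in\Z}\Delta_j u$ be a Littlewood--Paley decomposition. For each $j$, $L^p$-interpolation (Hölder, valid since $\tfrac1p=\tfrac\theta{p_1}+\tfrac{1-\theta}{p_2}$) gives $\|\Delta_j u\|_{L^p}\le\|\Delta_j u\|_{L^{p_1}}^{\theta}\|\Delta_j u\|_{L^{p_2}}^{1-\theta}$; multiplying by $2^{js}=2^{j\theta s_1}2^{j(1-\theta)s_2}$ (using $s=\theta s_1+(1-\theta)s_2$) and then applying Hölder in the index $j$ with $\tfrac1q=\tfrac\theta{q_1}+\tfrac{1-\theta}{q_2}$ yields
\begin{equation*}
\|u\|_{\dot B^{s}_{p,q}(\R^d)}\le\|u\|_{\dot B^{s_1}_{p_1,q_1}(\R^d)}^{\theta}\,\|u\|_{\dot B^{s_2}_{p_2,q_2}(\R^d)}^{1-\theta},
\end{equation*}
and the analogous statement on the inhomogeneous scale, for \emph{any} choice of $q,q_1,q_2$ satisfying the above affine relation. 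This scaling-invariant estimate is the core of the lemma.

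\textbf{Step 3 (transfer to $W^{s,p}$ and the dichotomy).} We now insert the elementary embeddings: $W^{s,p}=B^{s}_{p,p}$ for $s\notin\N$, $1\le p<\infty$ (and $C^{s}=B^{s}_{\infty,\infty}$ when $p=\infty$); $W^{s,p}=H^{s,p}=F^{s}_{p,2}$ for $s\in\N$, $1<p<\infty$, hence $B^{s}_{p,\min(p,2)}\hookrightarrow W^{s,p}\hookrightarrow B^{s}_{p,\max(p,2)}$; and, at the critical corner $p=1$, $s\in\N$, only the one-sided embedding $W^{s,1}\hookrightarrow B^{s}_{1,\infty}$, with no embedding into any $B^{s}_{1,q}$, $q<\infty$. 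Choosing $q_1,q_2$ in Step 2 as the smallest indices for which $W^{s_i,p_i}\hookrightarrow\dot B^{s_i}_{p_i,q_i}$, and $q$ as the largest for which $\dot B^{s}_{p,q}\hookrightarrow W^{s,p}$, the $W$-level inequality follows whenever the affine relation can be met. A bookkeeping shows it can be met except precisely when $s_2\in\N$, $s_2\ge1$ and $p_2=1$ (forcing $q_2=\infty$, which makes $\tfrac1q=\tfrac\theta{q_1}+\tfrac{1-\theta}{\infty}=\tfrac\theta{q_1}$ too small unless there is an extra smoothness surplus) and, simultaneously, there is no room to route the estimate through an intermediate non-integer order, which happens exactly when $0<s_2-s_1\le1-\tfrac1{p_1}$. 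If any one of these three fails --- $p_2>1$, so $W^{s_2,p_2}$ is an $F$-space; or $s_2\notin\N$ or $s_2<1$, so $W^{s_2,p_2}=B^{s_2}_{p_2,p_2}$; or $s_2-s_1>1-\tfrac1{p_1}$, so one may interpolate the inequality in two steps through some $\sigma\in(s_1,s_2)\setminus\N$ where the Sobolev--Besov identification is available --- the missing room is recovered and the inequality holds.

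\textbf{Step 4 (the non-inequalities; the main obstacle).} For the converse, in the excluded regime one builds a function with finite right-hand side and infinite left-hand side: a lacunary superposition $u=\sum_k c_k\,\varphi(\lambda_k\,\cdot)$ of rescaled bumps along a rapidly increasing sequence $\lambda_k\to\infty$, with $c_k$ tuned so that the (summable-in-$k$) $W^{s_1,p_1}$- and $W^{s_2,p_2}$-norms converge while the $W^{s,p}$-norm, forced into the weak-type $B^{s}_{p,\infty}$-scale, diverges (logarithmically). The genuinely delicate points --- and the main obstacle of the proof --- are (i) the precise endpoint analysis of Step 3, i.e. showing that it is exactly the third condition $0<s_2-s_1\le1-\tfrac1{p_1}$ that is lost or regained, and (ii) the reduction of Step 1 for Lipschitz domains, which rests on a nontrivial extension operator bounded uniformly across the whole Sobolev scale; by contrast the Besov computation of Step 2 is routine.
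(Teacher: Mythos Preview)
The paper does not prove this lemma; it is quoted verbatim as a known result from Brezis--Mironescu \cite{BM18} and is used as a black box in Appendix~\ref{GNineq} to establish Lemma~\ref{FGNlem}. There is therefore nothing in the paper to compare your argument against.

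That said, your sketch is a reasonable outline of the Brezis--Mironescu strategy: reduce to $\mathbb{R}^d$ via a universal extension operator, prove the interpolation at the Besov level where it is H\"older's inequality twice, and transfer back to the $W^{s,p}$ scale using the standard identifications $W^{s,p}=B^s_{p,p}$ for non-integer $s$ and $W^{s,p}=F^s_{p,2}$ for integer $s$ with $1<p<\infty$. The identification of the exceptional regime as exactly the conjunction of the three listed conditions, and the construction of counterexamples there, are indeed the nontrivial content of \cite{BM18}. Your Step~3 bookkeeping is a bit loose --- in particular the claim that the obstruction appears exactly when $0<s_2-s_1\le 1-\tfrac{1}{p_1}$ would need a more careful case analysis than you give --- but since the paper only invokes the result and does not reproduce the proof, a full verification is beyond what is required here.
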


\begin{proof}[Proof of  Lemma \ref{FGNlem}]
  (1)   Note first  that when $\beta\in \Z_+^d$, we have near $\xi=0,$
  \[\partial^{\beta}\hat{G_0}(\xi)=\mcO(\frac{1}{\|\xi\|^{2+|\beta|_1}}).\]
  Hence, for $0\leq k\leq d-3, k\in\Z$, we have that $p(2+k)<d \Rightarrow \nm \hat{G_0}\nm_ {W^{k,p}}<\infty$.
  
When $|\beta|_1\notin \Z_+,\ |\beta|_1<d-3$, we set  $k=\lfloor |\beta|_1\rfloor$ (i.e., the smallest integer larger than  $ |\beta|_1$) and hence,
  \[k-1<|\beta|_1<k,\ p(2+|\beta|_1)<d.\]
  By Lemma \ref{GNlem}, we have
  \begin{equation}\label{BM1}
       \|\partial^{\beta} \hat{G_0}\|_{L^p} \lesssim \|\hat{G_0}\|_{W^{k-1,p_1}}^{\theta}\cdot \| \hat{G_0}\|_{W^{k,p_2}}^{1-\theta}, 
  \end{equation} 
  where $|\beta|_1=(k-1)\theta+k(1-\theta),\frac{1}{p}=\frac{\theta}{p_1}+\frac{1-\theta}{p_2}\Rightarrow\theta=k-|\beta|_1$. Thus, if we can find  $p_1,p_2$ such that 
  \[
  \begin{cases}
    p_1(k-1+2)<d, \\
    p_2 (k+2)<d, \\
    \frac{1}{p}=\frac{k-|\beta|_1}{p_1}+\frac{1-k+|\beta|_1}{p_2}. 
  \end{cases}
  \]
  Then by Lemma \ref{GNlem}, we can prove that $\|\partial^{\beta} \hat{G_0}\|_{L^p}<\infty$. The system can be rewritten as 
  \[
  \begin{cases}
    \frac{1}{p_1}=(-\frac{1-k+|\beta|_1}{p_2}+\frac{1}{p})/ (k-|\beta|_1),\\
    \frac{k+1}{d}<\frac{1}{p_1}<(-(1-k+|\beta|_1)\frac{k+2}{d}+\frac{1}{p})/ (k-|\beta|_1).
  \end{cases}
  \]
  The system has a solution if and only if 
  \[    \frac{k+1}{d}<(-(1-k+|\beta|_1)\frac{k+2}{d}+\frac{1}{p})/ (k-|\beta|_1)  \Leftrightarrow p(2+|\beta|_1)<d.\]
 Hence, we have proven  the result for $|\beta|_1\leq d-3$.

  The  harder case is  $d-3<|\beta|_1<d-2$, since it may be   $\|\hat{G_0}\|_{W^{d-2,1}}=+\infty$.  We  need to make the dyadic   decomposition. Define the  non-negative $\phi(\xi)\in C^{\infty}_0(\R^d)$ as 
  \[\phi(\xi)=\begin{cases}
    (1+e^{\frac{1}{1-\|\xi\|}+\frac{1}{2-\|\xi\|}})^{-1},\ 1\leq \|\xi\|\leq 2, \\
    1-(1+e^{\frac{1}{1-2\|\xi\|}+\frac{1}{2-2\|\xi\|}})^{-1},\ \frac{1}{2}\leq \|\xi\|\leq 1, \\
    0, \ {\rm otherwise}. 
  \end{cases}\]
  Then $\sum_{j=-\infty}^{\infty}\phi(2^j\xi)\equiv 1$ and $\supp\  \phi=[\frac{1}{2},2]$ (i.e., the support of $\phi$). For $\alpha\in\Z_+^d$ with $|\alpha|_1=d-3, $
  \[\partial^{\alpha}\hat{G_0}(\xi)=\sum_{j=1}^{\infty}\phi(2^j\xi)\partial^{\alpha}\hat{G_0}(\xi)\ {\rm for}\  \forall \xi\in [0,1]^d.  \]
  Denote $g=\partial^{\alpha}\hat{G_0}$ and $g^{(1)}=\partial g$. Now assume 
  \[|\beta|_1\in (0,1),\ (d-3+|\beta|_1+2)p<d.\]
  Then  
  \begin{align*}
    \| \partial^{\beta}g \|_{L^p} &=\|\sum_{j=1}^{\infty} \partial^{\beta} (\phi(2^j\xi)g(\xi))\|_{L^p}  \\
      &\leq  \sum_{j=1}^{\infty} \| \partial^{\beta} (\phi(2^j\xi)g(\xi))\|_{L^p}. 
  \end{align*}
  By Lemma \ref{GNlem} again, we have
  \begin{align*}
    \| \partial^{\beta} (\phi(2^j\xi)g(\xi))\|_{L^p} \lesssim \| \phi(2^j\xi)g(\xi)\|_{L^{p_1}}^{\theta} \cdot \| \partial (\phi(2^j\xi)g(\xi))\|_{L^1}^{1-\theta},
  \end{align*}
  where  we take 
  \[
  \begin{cases}
    |\beta|_1=0\cdot \theta +1\cdot (1-\theta)\Rightarrow \theta=1-|\beta|_1, \\
    \frac{1}{p}=\frac{\theta}{p_1}+\frac{1-\theta}{1}\Rightarrow \frac{1}{p_1}=(\frac{1}{p}-|\beta|_1)/ (1-|\beta|_1). 
  \end{cases}\]
  Now, as $g=\mcO(\frac{1}{\|\xi\|^{d-1}})$, we get 
  \begin{align*}
    \| \phi(2^j\xi)g(\xi)\|_{L^{p_1}} &\lesssim \left(\int \phi^{p_1}(2^j\xi) \frac{1}{\|\xi\|^{(d-1)p_1}}{\rm d}\xi\right)^{\frac{1}{p_1}} \\
          & \lesssim \left \|\phi(\xi)\frac{1}{\|\xi\|^{d-2}}\right \|_{L^{p_1}} \cdot [2^{\frac{1}{p_1}(d-(d-1)p_1)}]^{-j}. 
  \end{align*}
  On the other hand, due to $g^{(1)}=\mcO(\frac{1}{\|\xi\|^{d}})$, using the  similar argument shows
  \begin{align*}
   \sup_{j} \| \partial (\phi(2^j\xi)g(\xi))\|_{L^1} & \leq \sup_{j}(\| \phi(2^j\xi)g^{(1)}(\xi)\|_{L^1} +\| 2^j (\partial\phi) (2^j\xi)g(\xi)\|_{L^1}) \\
     &<+\infty.
  \end{align*}
  Combining  all above  estimates gives 
  \begin{equation*}
  \| \partial^{\beta}g \|_{L^p} \lesssim \sum_{j=1}^{\infty} [2^{\frac{1}{p_1}(d-(d-1)p_1)}]^{-j}. 
  \end{equation*}
  To ensure the convergence of the above series, it requires
  \[d-(d-1)p_1>0 \Leftrightarrow (d-1+|\beta|_1)p<d.\]
  This proves  the result for $d-3<|\beta|_1< d-2$.\medskip

  \noindent (2) By Lemma \ref{GNlem} again, we obtain 
  \[\| \partial^{\beta}(\hat{G_0}-\sigma)  \|_{L^p}<\infty \ {\rm if} \ |\beta|_1<d-2,\ p\geq 1, \ (2+|\beta|_1)p<d. \]
  Now assume $\alpha, \beta \in \Z_+^d$ with $|\alpha|_1+|\beta|_1=k\leq 2(d-2)-1,k\in \Z_+$. Then  
  \[\|\partial^{\alpha+\beta}(f^2)\|_{L^1} \leq \sum_{|\alpha|_1 +|\beta|_1 =k} \|\partial^\alpha  f\cdot  \partial^{\beta} f \|_{L^1}.  \]
  Hence, from  H\"older's inequality and Young's  inequality, it follows that 
  \begin{align*}
    \|\partial^\alpha  f\cdot  \partial^{\beta} f \|_{L^1}  \leq& \|\partial^\alpha f\|_{L^{r_{\alpha}}}\|\partial^\beta f \|_{L^{r_{\beta}}} \\
          =& \|\partial^{\alpha_1}(\hat{G_0}-\sigma) * \partial^{\alpha_2}(\hat{G_0}-\sigma)\|_{L^{r_{\alpha}}}  \|\partial^{\beta_1}(\hat{G_0}-\sigma) *\partial^{\beta_2} (\hat{G_0}-\sigma) \|_{L^{r_{\beta}}} \\
          \leq& \|\partial^{\alpha_1}(\hat{G_0}-\sigma)\|_{L^{p_{\alpha}}} \cdot \|\partial^{\alpha_2} (\hat{G_0}-\sigma)\|_{L^{q_{\alpha}}} \\
          &\cdot \|\partial^{\beta_1}(\hat{G_0}-\sigma)\|_{L^{p_{\beta}}} \cdot \|\partial^{\alpha_2}(\hat{G_0}-\sigma)\|_{L^{q_{\beta}}} \\
          <&\infty,
  \end{align*}
  where $(\alpha_1,\alpha_2,\beta_1,\beta_2,r_\alpha,r_\beta,p_\alpha,p_\beta,q_\alpha,q_\beta)$ must satisfy  (we remark that $\alpha_1,\alpha_2,\beta_1,\beta_2$ need not be integer vectors)  
  \[\begin{cases}
    |\alpha|_1+|\beta|_1 =k \leq 2d-5\ (d\geq 5), \\
    \alpha_1+ \alpha_2=\alpha, \\
    \beta_1+ \beta_2=\beta, \\
    \frac{1}{r_\alpha}+\frac{1}{r_\beta}=1, \\
    \frac{1}{r_\alpha}+1 =\frac{1}{p_\alpha}+\frac{1}{q_\alpha},\\
    \frac{1}{r_\beta}+1 =\frac{1}{p_\beta}+\frac{1}{q_\beta},
  \end{cases}\]
  and \[
   \begin{cases}
    p_\alpha(2+|\alpha_1|_1)<d, \\
    q_\alpha(2+|\alpha_2|_1)<d, \\
    p_\beta(2+|\beta_1|_1)<d, \\
    q_\beta(2+|\beta_2|_1)<d. 
   \end{cases} 
  \]
  Considering the  symmetry, we can  take
  \begin{align*}
  \alpha_1&=\alpha_2=\frac{\alpha}{2}, \ \beta_1=\beta_2=\frac{\beta}{2},\\
  p_\alpha&=q_\alpha=\frac{2r_\alpha}{r_\alpha+1},\  p_\beta=q_\beta=\frac{2r_\beta}{r_\beta+1}=\frac{2r_\alpha}{2r_\alpha-1}.
  \end{align*}
  So, we need to find $r_\alpha>1$ such that 
  \[\frac{2r_\alpha}{r_\alpha+1}(2+\frac{|\alpha|_1}{2})<d,\ \frac{2r_\alpha}{2r_\alpha-1}(2+\frac{|\beta|_1}{2})<d.\]
  Denote $c=\frac{2r_\alpha}{r_\alpha+1}\in(1,2)$. We need to  find $c$ such that 
  \[c(2+\frac{|\alpha|_1}{2})<d,\ \frac{2c}{3c-2}(2+\frac{|\beta|_1}{2})<d \]
  which is equivalent to find $c$ such that 
  \[1<c<2,\ \frac{2d}{3d-4-|\beta|_1}<c<\frac{2d}{4+|\alpha|_1}.\]
  Such a $c$ exists if and only if
  \[\begin{cases}
    \frac{2d}{3d-4-|\beta|_1}<2, \\
    \frac{2d}{4+|\alpha|_1} >1, \\
    \frac{2d}{3d-4-|\beta|_1}<\frac{2d}{4+|\alpha|_1},
  \end{cases}
  \qquad \Longleftrightarrow \qquad
  \begin{cases}
    |\beta|_1<2(d-2),\\
    |\alpha|_1<2(d-2),\\
    8+k<3d.
  \end{cases}\]
  This can  be ensured by $k=|\alpha|_1+|\beta|_1\leq 2(d-2)-1<3d-8$  since  $d\geq 5$.  We have proven (by carefully selecting  parameters as above)  that 
  \begin{equation}\label{BM3}
    \| f^2 \|_{W^{k, 1}}<\infty  \ {\rm for} \ k\in \Z_+,\ k\leq 2(d-2)-1. 
  \end{equation}

  Finally, for  the non-integer $|\beta|_1\leq 2(d-1)-1$, if we take $k=\lfloor|\beta|_1\rfloor$,  using Lemma \ref{GNlem} again shows that
  \begin{equation}\label{BM4}
    \| \partial^{\beta} f^2\|_{L^1}\lesssim \| f^2\|_{W^{k-1, 1}}^{\theta}\cdot\|  f^2\|_{W^{k, 1}}^{1-\theta}<\infty.  
  \end{equation}
  Combining  \eqref{BM3} and \eqref{BM4}  concludes the proof of Lemma \ref{FGNlem}.
\end{proof}

\section*{Acknowledgement}
This work  is  supported by  the National Key R\&D Program
of China under Grant 2023YFA1008801.
 Y. Shi is supported by the NSFC(12271380).   Z. Zhang is  supported by the NSFC (12288101).   
The authors would like to thank Prof. T. Spencer  for sharing with them the references \cite{Spe93,Elg09}.

\section*{Data Availability}
		The manuscript has no associated data.
		\section*{Declarations}
		{\bf Conflicts of interest} \ The authors  state  that there is no conflict of interest.


\end{document}